\newtheorem {example}{Example}[section]
\newtheorem{assumption}{Assumption}[section] 
\newtheorem{theo}{Theorem}[section]
\newtheorem{coro}{Corollary}[theo]
\newtheorem{lem}[theo]{Lemma}
\newcommand{\norm}[1]{\left\lVert#1\right\rVert}
\renewenvironment{abstract}
{\small
	\begin{center}
		\bfseries \abstractname\vspace{-.5em}\vspace{0pt}
	\end{center}
	\list{}{%
		\setlength{\leftmargin}{6mm}% <---------- CHANGE HERE
		\setlength{\rightmargin}{\leftmargin}%
	}%
	\item\relax}
{\endlist}
\begin{document}

\title{\Large  Weak (Proxy) Factors Robust Hansen-Jagannathan Distance For Linear Asset Pricing Models}
\author{Lingwei Kong  
	\thanks{%
		University of Groningen; l.kong@rug.nl.
	} 
%\thanks{Acknowledgment: Many thanks for my advisor, Frank Kleibergen, who not only helps me with his scientific advice and knowledge and many many insightful discussions and suggestions but also provides many useful tips for how to enjoy life along with research work. I am very grateful to Bart Keijsers, Eleni Aristodemou, Peter Boswijk, Yi He, and seminar participants at UvA for their comments on this project. I also want to thank Eva Janssens, and Emilie Berkhout for their organized brown-bag Ph.D. lunch seminars and my former colleague Merrick Li for his suggestions on writing well. I am also grateful for the comments from Laura Liu, Florian Gunsilius. I also thank Tim A. Kroencke, Martin Lettau, and Sydney Ludvigson for sharing their data, Nikolay Gospodinov, Raymond Kan, and Cesare Robotti for sharing their codes. 
%} 
}
%\author{Frank Kleibergen\thanks{%
%Amsterdam School of Economics, University of Amsterdam, Roetersstraat 11,
%1018 WB Amsterdam, The Netherlands. Email: f.r.kleibergen@uva.nl.} \and Lingwei Kong\thanks{%
%Amsterdam School of Economics, University of Amsterdam, Roetersstraat 11,
%1018 WB Amsterdam, The Netherlands. Email: l.kong@uva.nl.} \and %
%Zhaoguo Zhan\thanks{%
%Department of Economics, Finance and Quantitative Analysis, Coles College of
%Business, Kennesaw State University, GA 30144, USA. Email:
%zzhan@kennesaw.edu.}}

%\date{This version: \today. \\ Please click  \href{https://lingwei-kong.github.io/files/temp.pdf}{\underline{here}} for latest version}

\date{Preliminary version. December 20, 2020}
%\date{This version:  \textbf{December 4, 2019}}
 \linespread{1.3}

\maketitle

\thispagestyle{empty}
\begin{abstract} { \indent 
The Hansen-Jagannathan (HJ) distance statistic is one of the most dominant measures of model misspecification. However, the conventional HJ specification test procedure has poor finite sample performance, and we show that it can be size distorted even in large samples when (proxy) factors exhibit small correlations with asset returns. In other words, applied researchers are likely to falsely reject a model even when it is correctly specified. 
We provide two alternatives for the HJ statistic and two corresponding novel procedures for model specification tests, which are robust against the presence of weak (proxy) factors, and we also offer a novel robust risk premia estimator. Simulation exercises support our theory. Our empirical application documents the non-reliability of the traditional HJ test since it may produce counter-intuitive results, when comparing nested models, by rejecting a four-factor model but not the reduced three-factor model, while our proposed methods are practically more appealing and show support for a four-factor model for Fama French portfolios.  
}%	\\~~\\  
%{\scriptsize Here we list unsolved issues:(-7) {\color{red} would be better with a simulation example showing the selection based on the HJ distance unreliable}   (-6) how do we relate our studies to the well-known  weak identification issue?? How to link to the Stock GMM setting??   
%	(-5) can we improve the $R^2$ using similar idea???(-4) if the SDF is linear, itself is a factor, how do we deal with it? (-3) what if we add more factors? the dimension of $g_t$ and $f_t$ (-2) can we construct confidence interval with known $\theta$ ?? (-1) The under-rejection of the J-test? over-rejection of the HJ-test, why the weighting matrix matter so much??  (0) Model comparision?? what would happen if more factors are added??? 
% 	(1) why when N is small (such as N=5) the simulation results does not converge to the 5\%; (2) why over-reject6ion? can we provide some proofs? (3) if the proxy factors are derived based on the PCA approaches, can we link our results to the existing result??}
\end{abstract}

 \small
\textbf{Keywords}: asset pricing; identification robust statistics; reduced-rank models; model misspecification; rank test  

 ~
 \bigskip

\doublespace 

\setcounter{page}{1}
 \pagenumbering{arabic}

\newpage
\section{Introduction}Linear factor models have gained tremendous popularity in the empirical asset pricing literature, see e.g.,  \cite{fama1993common},  \cite{lettau2001consumption}, \cite{kan2004hansen}, \cite{kan2008model}. The low dimensional factor structure in asset returns is well-documented  (e.g., \cite{kleibergen2015unexplained}), and \cite{harvey2016and} list hundreds of papers with factors that attempt to explain the cross-section of expected returns. Since so many factors are introduced, the proposed factors are at best proxies for some unobserved common factors, and the asset pricing models are at best approximations. Therefore, it is more appealing to determine whether or not the data reject a model, namely how good a model can approximate the data than to identify important factors  (or factors with significant risk premia). The assessment of model performance is where specification tests play a role. To evaluate these factors and diagnose the model specifications, the HJ distance,  proposed in \cite{hansen1997assessing}, has emerged as one of the most dominant measures of model misspecification in the empirical asset pricing literature (e.g., \cite{jagannathan1996conditional}, \cite{kan2004hansen}).

However, this paper shows that the conventional HJ statistic can be unreliable. Previous studies have shown that the lack of model identification can lead to spuriously significant risk premia (\cite{kleibergen2009tests}, \cite{2016},  \cite{anatolyev2018factor}), and a misleading gauge of model fit based on the second pass $R^2$ (\cite{kleibergen2015unexplained}). This paper demonstrates that when models are weakly identified the HJ statistic does not measure model fit satisfactorily and the specification test via the HJ statistic, which we refer to as the HJ specification test in the sequel, is not reliable. In an empirically relevant setting where proxy factors weakly correlate with the unobserved common factors, the HJ specification test can be size distorted even in large samples. The boundary, determined via the HJ specification test, between correct model specifications and misspecifications begins to blur in these so-called weak identification cases. Another potential issue would be the omitted-strong-factor problem. The resulting strong cross-sectional dependence in the error term can exaggerate all sorts of distortions when some included (proxy)\footnote{Sometimes researchers consider a latent factor structure in asset returns and regard included factors in empirical studies as proxies for priced latent common factors (e.g., \cite{kleibergen2015unexplained}, \cite{giglio2017inference}), sometimes factors are assumed to be directly observed and priced weak common factors lead to problems (e.g., \cite{kleibergen2009tests}, \cite{anatolyev2018factor}).  
	We mostly adopt the former idea, but our discussions and methods are also valid in the latter case, and we emphasize this by enclose the term proxy in brackets.} factors are weak (\cite{kleibergen2015unexplained}). One of the reasons for these failures is that sampling errors are no longer negligible asymptotically in the presence of weak (proxy) factors. Therefore, the conventional asymptotic justification may fail in empirically relevant settings, as weak (proxy) factors are commonly observed in many recent studies (e.g., \cite{kleibergen2009tests}, \cite{anatolyev2018factor}).

This paper not only shows the potential failure of the HJ test, but also aims to improve the performance of specification tests. This contributes to the literature on providing the identification robust statistical tools. Recent papers have developed different techniques to incorporate some of these aforementioned issues, most of which focus on the identification and inference of risk premia.  \cite{2016} provides an estimation approach using shrinkage-based dimension-reduction technique which excludes weak/useless (proxy) factors. \cite{anatolyev2018factor} propose an estimation procedure based on split-sample instrumental variables regression with proxies for the missing factor structure. \cite{giglio2017inference} propose a three-pass estimation procedure and bypass the omitted factors bias by projecting risk premia of observed factors on those of strong factors extracted via principal components analysis (PCA). Alongside with these estimation techniques there are identification robust test statistics
to correct for the overly optimistic statistical inference of the risk premia (e.g. \cite{kleibergen2009tests}, \cite{kleibergen2019Consumption}, \cite{kong2018}). As for the specification tests of asset pricing models, \cite{gospodinov2017spurious} discuss the potential power loss of the $\mathcal{J}$ specification test when spurious/useless factors, which are completely uncorrelated with asset returns, are present.

This paper focuses on the robust model specification tests, and provides two easy-to-implement specification test procedures to remedy the size distortion of the HJ test resulting from weak (proxy) factors that are minorly correlated with asset returns. The \textit{first proposed test procedure} is a two-step Bonferroni-type method, and it is robust against identification issues when the number of asset returns is limited. This method takes into account the identification strength via a first-step confidence set, and we verify that it improves power compared with the $\mathcal{J}$ test. The \textit{second approach} relies on a novel four-pass estimator, and the test procedure provides valid inference results in an asymptotic framework where the number of assets is comparable to the number of the observation periods. Our proposed four-pass estimator directly leads to a novel risk premia estimator, and thus we also contribute to the literature on estimation of risk premia in the presence of weak (proxy) factors and omitted factors. For linear asset pricing models, the conventional approach for estimating risk premia is known as the Fama-Macbeth (FM) two-pass estimation procedure (\cite{fama1973risk}), where risk premia estimates result from regressing average asset returns on first-pass estimated risk exposures (factor loadings $\beta$'s). The two-pass procedure is easy to implement but can result in unreliable estimates and inference when some included factors are not strongly correlated with asset returns such that their risk exposures do not dominate corresponding sampling errors (\cite{kleibergen2009tests}, \cite{anatolyev2018factor}), which resembles the failure of the 2SLS estimator in instrumental variable regression when instruments are weak. Besides,  \cite{anatolyev2018factor} show that the missing factor structure exacerbates the weak (proxy) factor problem. We show our risk premia estimator is robust to the presence of both weak (proxy) factors and missing factors. 

Our empirical application documents the strange behavior of the HJ test. Counter-intuitively, it can reject a four-factor model but not the corresponding three-factor model nested within the four-factor model.  We attribute this behavior to the additional fourth factor being a weak proxy factor which leads to a undesirably high rejection rate of the HJ test. Our proposed procedures do not have this problem and reflect the factor structure in asset returns in a more informative way.  \\

%
%Besides the weak factors bias, the asymptotic properties of the FM two-pass estimator are mostly studied with fixed number of assets, and it is more reasonable to allow the number of assets increase along with time periods for better finite sample properties given the richness of data employed in recent studies. \cite{kong2018} show that for better finite sample performances, statistical inferences have to take into account the randomness in covariance estimator when number of assets are considerably the same size of the time-series dimensions. 
%
%

%Along with issues raised by small $\beta$'s and growing number of assets, another potential problem would be omitted strong factors when there are omitted string factors that have risk exposures correlated with those of observed (proxy) factors. 
%
%

% 

The paper is organized as follows: Section \ref{sec: HJ} reviews the basic model setting and shows the drawbacks of the HJ statistic;  Section \ref{sec HJS} and \ref{sec:HJN} introduce our proposed model specification test procedures, where Section \ref{sec HJS} discusses our two-step Bonferroni-like method and Section \ref{sec:HJN} considers an approach that is valid with a double-asymptotic framework; Section \ref{sec:emp} presents results of our empirical application.

This paper uses the following notation: $P_X$ stands for $X(X'X)^{-1}X'$ for a full column rank matrix $X$, $M_X$ for $I-P_X$, $X^{\frac{1}{2}}$ for the upper triangular matrix from the Cholesky decomposition of the positive definite matrix $X$  such that $X=(X^{\frac{1}{2}})'X^{\frac{1}{2}}$. Besides, in the following discussion, the notation would be more precise with $N, T$ in the sub- or superscripts. For example to model the weak (proxy) factors, it might be reasonable to use notation such as $\beta_{T,N}, d_{g,T,N}, \gamma_{N,T}, \theta_{g,N,T}$  since the parameter values may change according to the sample dimensions in order to model the local to zero behavior. To avoid a more cumbersome notation, we ignore these subscripts when there can be no misunderstanding.

\section{Models and Problems} \label{sec: HJ}
This section introduces linear asset pricing models and  the conventional model selection and specification test procedure based on the HJ distance. We use the term HJ statistic to denote the conventional squared HJ distance estimator, and to distinguish it from the other two estimators, our so-called HJN and HJS statistics, proposed in Sections 3 and 4. We start by introducing our baseline model setting. We next derive asymptotic properties of the HJ statistic in the presence of weak (proxy) factors to clarify the problems we focus on.      
\subsection{Baseline model setting} 
We work with the linear asset pricing model because of its popularity in empirical studies. It imposes that all asset returns share common risk factors described by a small set of proposed factors. We regard the proposed factors in empirical studies as proxies for latent ones in the form as suggested in (\cite{kleibergen2015unexplained}). Assumption \ref{ass} summarizes the baseline model setting:
\begin{assumption} \label{assum: factor structure in r}  \label{ass} \label{ass: delta}
	For the $N\times 1$ vector of asset gross returns $r_t$, we assume that 
	\begin{align}
		r_t  = c + \beta f_t + u_t, \label{eq1}
	\end{align}
	with $f_t$ a $K\times 1 $ vector of (possibly) unobserved zero-mean factors, $u_t$ an $N\times 1$ vector of idiosyncratic components, and with a $K\times 1$  vector of proxy factors $g_t$ 
	\begin{align}
		f_t = &  d_g\left( g_t - \mu_g \right) + v_t,   
	\end{align}   
	where $\mu_g=\mathbb{E}g_t$, $g_t$ is uncorrelated with $u_t,v_t$, $d_g$ is of full rank and $g_t,v_t,u_t$ are stationary with finite fourth moments. Furthermore, 
	\begin{align}
		c= \iota_N \lambda_0  + \beta \lambda_f,\label{eq:res on c}
	\end{align}
	where $\lambda_0\neq 0 $ is the zero-beta return, $\lambda_f$ is a $K\times 1$ vector of risk premia, and the parameter space of $\lambda_0,\lambda_f$ is compact.  
\end{assumption} 
Assumption \ref{ass} describes the beta representation of linear asset pricing models, and the DGP is similar to the one employed in \cite{kleibergen2015unexplained}. The moment conditions (or in other words the structural assumptions imposed on the constant term), $c= \iota_N \lambda_0  + \beta \lambda_f$, are commonly used in linear asset pricing model (e.g. \cite{cochrane2009asset}). If $f_t$ is observed, then we would have perfect proxies with $d_g=I_K$, $v_t=0$. Therefore, this model setting also embeds the model specification used in e.g. \cite{kleibergen2009tests}, \cite{anatolyev2018factor} where factors $f_t$ are assumed to be observed. Using the observed factors $g_t$, model (\ref{eq1}) can be rewritten as 
\begin{align}
	r_t = c+ \beta_g \bar{g}_t +  \widetilde{u}_{g,t} , \label{1}
\end{align}
with $\beta_g = \beta d_g, \bar{g}_t= g_t-\bar{g}, \bar{g}= \sum_{t=1}^T g_t /T, \widetilde{u}_{g,t} = {u}_{g,t}+ \beta_g \left( \bar{g}- \mu_g \right) u_{g,t} = \beta v_t + u_t$. The estimation of the risk premia $\lambda_g$ is usually accomplished by the FM two-pass estimator (\cite{fama1973risk}, \cite{shanken1992ebp}). In the first pass, the risk exposures ${\beta}_g$ are estimated by regressing asset returns $r_t$ on a constant and factors $\bar{g}_t$, and in the second pass the FM estimator $\widehat{\lambda}_g$ results from regressing average asset returns $\bar{r}= \sum_{t=1}^T r_t/T$ on an $N\times 1$ unity vector $\iota_N$ and the risk exposure estimates $\widehat{\beta}_g$.

Another well-known representation of asset pricing models is the stochastic discount factor (SDF) representation, based on which the HJ distance is defined. \cite{cochrane2009asset} shows that for linear asset pricing models, there is a corresponding SDF $m_t$ that is linearly spanned by the latent factors
\begin{align}
	m_t(\theta)  =  F_t'\theta,   \label{eq:linear SDF}
\end{align}
with $F_t=(1,f_t')'$, and re-scaled risk premia $\theta= (1/\lambda_0, -V_{ff}^{-1}\lambda_f/ \lambda_0)$.  The moment conditions (\ref{eq:res on c}) are then equivalent to the following ones 
\begin{align}
	\mathbb{E}\left(m_{t}(\theta_{0}) r_{t}\right) = \iota_N, 	\label{eq:SDF_r}
\end{align}
with $\iota_N$ an $N\times 1 $ unity vector with all entries equal to one. The population pricing errors which are the deviations from the moment conditions (\ref{eq:SDF_r}) are denoted by 
\begin{align}
	e(\theta) =\iota_N - \mathbb{E}\left(m_{t}(\theta_{}) r_{t}\right) , \label{eq:pricing error}
\end{align}
With a linear SDF (\ref{eq:linear SDF}) $e(\theta)=\iota_N -q\theta$, for $q=\mathbb{E}\left(r_tF_t' \right) $ a full column rank $N\times K$ matrix.

\cite{hansen1997assessing} (HJ) propose the minimum distance between the SDF of an asset pricing model and a set of correct SDFs as a measure of model misspecification. It also serves as a measure of goodness-of-fit. A smaller value of the HJ distance indicates a better model fit, and this is used for model selection. The population squared HJ distance $\delta$  has an explicit expression:  
\begin{align}
	\delta^2 = \inf_\theta e(\theta)' Q_r^{-1} e(\theta),  
\end{align}
with $Q_r=  \mathbb{E}\left(r_tr_t' \right)$ a full column rank $N\times N$ matrix. With a linear SDF, after some simple algebra, we can write the squared HJ distance explicitly as $\iota_N' \left(Q_r^{-1}-Q_r^{-1} q \left( q'Q_r^{-1} q \right)^{-1} q'Q_r^{-1}  \right) \iota_N,$ which is also numerically equal to  $\iota_N' (Q_r^{-1}-Q_r^{-1} B (B'Q_r^{-1}B)^{-1} B'Q_r^{-1}   ) \iota_N$ with  $B=(c,\beta)$, and it is zero if and only if moment conditions (\ref{eq:SDF_r}) hold. Given the observed proxy factors $g_t$, the sample counterpart of the squared HJ distance, the HJ statistic, is  
\begin{align}
	\widehat{\delta}^2_g = &\inf_{\theta_{G}}  e_{g,T}(\theta_G)'\widehat{Q}_{r}^{-1} e_{g,T}(\theta_G) %\nonumber \\ =& 
	. 
	%\nonumber \\
	%=& \iota_N' \left(\widehat{Q}_r^{-1}-\widehat{Q}_r^{-1} \widehat{B}  \left( \widehat{B} '\widehat{Q}_r^{-1} \widehat{B}  \right)^{-1} \widehat{B} '\widehat{Q}_r^{-1} \right)\iota_N  
\end{align}

In a linear asset pricing model, the sample pricing errors $ e_{g,T}(\theta_G) =\sum_{t=1}^{T} e_{g,t}(\theta_G)/T =\iota_N - q_{G,T}\theta_G,~     e_{g,t}(\theta_G)= \iota_N -r_tG_t'\theta_G,~    q_{G,T}= \sum_{t=1}^{T} r_t G_t'/T,~       \widehat{Q}_{r} = \sum_{t=1}^{T} r_tr_t'/T$,   $G_t=(1,\bar{g}_t')'$, and the estimator resulting from this quadratic optimization problem is 
\begin{align}
	\widehat{\theta}_G = \left(q_{G,T}'\widehat{Q}_r^{-1} q_{G,T} \right)^{-1}q_{G,T}'\widehat{Q}_r^{-1}\iota_N. 	
\end{align}
%resulting from which the  explicit form of the HJ statistic is $$\iota_N' \left(\widehat{Q}_r^{-1}-\widehat{Q}_r^{-1} q_{G,T} \left( q_{G,T}'\widehat{Q}_r^{-1} q_{G,T} \right)^{-1} q_{G,T}'\widehat{Q}_r^{-1}  \right) \iota_N.$$ 

\cite{jagannathan1996conditional} propose the HJ specification test by testing the moment conditions (\ref{eq:SDF_r}) via the HJ statistic. Under the null hypothesis that the moment conditions (\ref{eq:SDF_r}) hold, \cite{jagannathan1996conditional} show that the asymptotic distribution of the HJ statistic follows a weighted sum of $\chi^2(1)$ random variables, which is because of the weighting matrix used in the HJ statistic. If we weight by the long-run covariance matrix of the sample pricing errors, we would have a regular chi-square-type limiting distribution. However, since we weight the HJ statistic differently with the second moment of asset returns as the weighting matrix, each of these $\chi^2(1)$ random variables has a weight different from one. Therefore,  
the critical values for the HJ statistic are obtained from the weighted sum of $\chi^2(1)$ random variables 
\begin{align}
	\sum_{i=1}^{N-K-1}  {p}_i x_i,
	\label{eq:critical values}
\end{align}with $x_i$ being independent $\chi^2(1)$ distributed random variables,  and $p_i$ being the positive eigenvalues of the matrix  
\begin{align*}
	\widehat{S}^{\frac{1}{2}} \left(\widehat{Q}_r^{-1}-\widehat{Q}_r^{-1}q_{G,T}  \left(q_{G,T} '\widehat{Q}_r^{-1} q_{G,T}  \right)^{-1}q_{G,T} '\widehat{Q}_r^{-1}  \right)  S_T^{\frac{1}{2}'}, 
\end{align*}  
with $\widehat{S} = S_T(\widehat{\theta}_G)$ and $S_T({\theta}_G)$ a consistent estimator 
of the long-run variance
matrix of the sample pricing errors $e_{g,T}(\theta_G)$ (for example, one may simple choose $S_T({\theta}_G) = \frac{1}{T}\sum_{t=1}^T e_{g,t} ({\theta}_G)e_{g,t} ({\theta}_G)'$, $e_{g,t}({\theta}_G)= \iota_N -r_tG_t'{\theta}_G$ provided that Assumptions \ref{assum: factor structure in r}-\ref{assum:factor loading strength} hold).

\subsection{Problems and asymptotic properties}
Our interest lies in the performance of the HJ statistic in the presence of weak identification issues, in particular, when observed proxies $g_t$ are only weakly correlated with asset returns. \cite{ahn2004small} document the poor finite sample performance of the HJ specification test, and they argue that the size distortion is due to the critical value of the test which requires the estimation of the covariance matrix of $e_t(\theta)$ that performs badly with a limited number of observation periods. This is also consistent with the findings in \cite{kleibergen2019Consumption}, \cite{kong2018}. In later parts, we show that not only in finite samples but also in large samples, the HJ specification test can be severely size distorted in the presence of weak identification issues. We focus on \textit{two issues} that can cause the potential deficiency of the HJ statistic.

\textit{1. Weak (proxy) factors}. The HJ statistic depends on the estimator $\widehat{\theta}_G$. This is a GMM estimator based on the moment conditions (\ref{eq:SDF_r}) with weighting matrix $\widehat{Q}_r$. Similar to the FM risk premia estimator, this estimator can be constructed in two steps. In the first step the regressor in the SDF, $q_g$, is estimated via the estimator $q_{G,T}$, and in the second step $\widehat{\theta}_G$ results from regressing $\widehat{Q}_r^{-\frac{1}{2}}\iota_N $ on the first stage estimates $\widehat{Q}_r^{-\frac{1}{2}}q_{G,T}$. This close link with the FM estimator raises concern for the quality of the $\theta_G$ estimator.  

The FM estimator is unreliable under weak identification (e.g., \cite{kan1999two},  \cite{kleibergen2009tests},  \cite{kleibergen2015unexplained}, \cite{kleibergen2019Consumption}, \cite{kong2018}, \cite{anatolyev2018factor}). For linear asset pricing models, the identification strength is reflected by the rank of $B_g=(c,\beta_g)$ (e.g. \cite{kleibergen2019Consumption}). The weak identification issues result from the empirical observation that the matrix $B_g$ might be of reduced rank or near reduced rank. For example, this can happen when some (proxy) factors used in the estimation are weakly correlated with the asset returns. One way of modeling these weak (proxy) factors is to consider a sequence of models (or a sequence of parameter values) such that along the sequence, factor loadings are smaller and thus less informative for identifying risk premia. For example, 
suppose the $\beta_g$ matrix is small, modeled by a drifting to zero sequence of order $O(1/\sqrt{T})$, then the sampling errors in the first stage estimator $\widehat{\beta}_g$, which are of the same order, are no longer negligible. These non-negligible sampling errors lead to the asymptotic invalidity of the FM estimator under weak (proxy) factors.

Following the same reasoning, the asymptotic justification for the estimator $\widehat{\theta}_G$ fails when $q_g$ is small and comparable to its sampling error (see Theorem \ref{theo:asymptotic for theta}).  Since $q_g=\beta_g V_g$ with $V_g =\mathbb{E}g_tg_t'$,  we model weak (proxy) factors using drifting to zero risk exposures (see Assumption \ref{assum:factor loading strength}) to mimic the behavior of a small $q_g$, which is in line with the literature on weak factors (e.g. \cite{kleibergen2009tests}).

\textit{2. The missing factor structure}. Omitted factors have received attention in recent studies (e.g., \cite{kleibergen2015unexplained},  \cite{giglio2017inference},  \cite{anatolyev2018factor}).   When we work with observed factors $g_t$ in a latent factor setting, equation (\ref{1}) suggests that the omitted factors $v_t$ contribute to the error term  $u_{g,t}$, and we also allow that unobserved factors explain most of the cross-sectional dependence in $u_{g,t}$. Similar to the discussion of  the FM two-pass risk premia estimator in \cite{anatolyev2018factor} ,
the missing factor structure could exacerbate the problem caused by the weak (proxy) factors and enlarge the bias in the estimator $\widehat{\theta}_g$ (see Theorem \ref{theo:asymptotic for theta}), as the presence of an unobserved (missing) factor structure in the error terms creates the classical omitted-variables problem in the second step regression of the $\widehat{\theta}_g$ estimator when some (proxy) factors are weak.

Therefore, the HJ statistic may use an estimate that is potentially far away from the true value, and thus selection and inference based on the HJ statistic can be misleading. Before we continue to verify this, we make two assumptions.

\begin{assumption} \label{assumption: weak cross-sectional correlation of the idiosyncratic component } \label{assum: forth moments of ee}
	Suppose $u_t$ can be decomposed into two parts: a missing  factor structure with a $K_z\times 1$ ($K_z\geq 0$) vector of unobserved strong factors $z_t$  and  weakly cross-sectional correlated  noise $e_t$;  
	\begin{align}
		u_t =  \gamma z_t +  e_t, 
	\end{align}  
	where (i) $e_t$ (with mean zero and bounded fourth moment $\sup_{i}\mathbb{E} e_{it}^4 < L< \infty$) is independent from $e_s, s\neq t$ and $g_{t'}, \forall t'$; (ii) denote $\Omega_e = \mathbb{E}e_te_t'  $, then $\lim_{N,T} \text{tr}\left(  \Omega_e \right)/N =a>0$  and $ 0<l<\lim\inf_{N,T}  \lambda_{\min} \left(  \Omega_e \right) <\lim\sup_{N,T} \lambda_{\max}\left(  \Omega_e \right)  < L < \infty$ with $\lambda_{\min}(X)$ the smallest eigenvalues of matrix $X$ and $\lambda_{\max}(X)$ the largest eigenvalues of matrix $X$; (iii) $\mathbb{E}\left| \frac{1}{\sqrt{N}} \sum_{i=1}^{N}  \left(e_{it}^2-  \mathbb{E}e_{it}^2\right) \right|^4 < L < \infty$.
	%$\mathbb{E}\left| \frac{1}{\sqrt{N}} \sum_{i=1}^{N}  \left(e_{it}e_{is}-  \mathbb{E}e_{it}e_{is}\right) \right|^4 < L < \infty$.
	% 	, and $N^{-1}\left(\sqrt{T} \eta\right)'\left(\sqrt{T} \eta\right) $ converges to a positive definite matrix $n'n$.
\end{assumption} 

\begin{assumption} \label{assum:factor loading strength} \label{assum:factor loading strength 2} Denote $Q_r$ by the second moments of $r_t$, $\eta =\left(\eta_{B_g},  \gamma \right) $ with $\eta_{B_g}=\left(c,  \eta_{\beta_g} \right), \eta_{\beta_g}=\left(\beta_{g,1} , \sqrt{T}\beta_{g,2} \right), \beta_{g,i} = \beta d_{g,i}, i=1,2$ being of dimension $N\times K_{g,i}, i=1,2$  ($K_{g,1}+K_{g,2}=K, K_{g,2}\geq 0$) and full column rank matrices.  
	(i) for fixed $N$,  we assume $ \eta     '  \eta   $ is a $(1+K+K_z )\times (1+K+K_z )$  positive definite matrix; 
	(ii) as $N,T$ approach to infinity, $ N^{-1}\eta'\eta   $ converges to a $(1+K+K_z )\times (1+K+K_z )$  positive definite matrix.  
\end{assumption} 

Our framework involves the observed factors $g_t$, and the omitted ones $v_t, z_t$. They have factor loadings $\beta_g, \beta, \gamma$ respectively. Assumption \ref{assum:factor loading strength} specifies the strengths of these factors. The loadings, $\beta_{g,2}$,  of the $K_{g,2}$ proxy factors $g_{2,t}$ are modeled as drifting to zero sequences, so we call $g_{2,t}$ weak proxy factors. We do not restrict the strength of the priced latent factors $f_t$, and allow for weak priced latent factors.   This  assumption resembles the factor loading assumption in \cite{anatolyev2018factor}, but our risk exposure matrix $(\beta_g, \beta, \gamma)$ is of reduced rank.

Assumption \ref{assumption: weak cross-sectional correlation of the idiosyncratic component }, which is similar to assumptions in \cite{onatski2012asymptotics} and \cite{anatolyev2018factor}, does not fully rule out the cross-sectional dependence in the idiosyncratic error term $e_t$. This assumption allows the explanatory power of the cross-sectional variation in $e_t$ to be comparable to the weak proxy factors when N,T increase proportionally, and the weak identification issue appears when the explanatory power of the proxy factors are roughly of the same order as $e_t$.    
When the cross sectional size $N$ is fixed, Assumptions  \ref{assumption: weak cross-sectional correlation of the idiosyncratic component }.(ii)-(iii) imposed on the noise term $e_t$ hold naturally as long as Assumption \ref{assumption: weak cross-sectional correlation of the idiosyncratic component }.(i) holds and we can not really distinguish weak and strong factors.  In later parts of this paper when $N$ is fixed, we do not make use of the assumptions imposed on $e_t$ but only the independence assumption (Assumption \ref{assumption: weak cross-sectional correlation of the idiosyncratic component }.(i)). We assume that $e_t$ is independent across periods which is consistent with the efficient market hypothesis, and since empirical studies mostly use monthly or even less frequent data, this is not a unrealistic assumption.

%It is common to model weak factors with drifting to zero factor loadings. 

%
%
% 
%%% 
% 
%When dealing with possible weak risk exposures, missing factors need to be taken care of as well (\cite{anatolyev2018factor},). Take a close look at the factor loading vector with respect to the $i$th asset 
%\begin{align}
%	\widehat{\beta}_{g,i} =  \beta_{g,i} + \frac{\widehat{V}_g^{-1}  \xi_{v g,T}'}{\sqrt{T}} \beta_{i} + \frac{\widehat{V}_g^{-1}  \xi_{z g,T}'}{\sqrt{T}}  \gamma_{i}+ \frac{\widehat{V}_g^{-1}  \xi_{e_i g,T}'}{\sqrt{T}} 
%\end{align}  with  $\beta_{g,i}=d_g\beta_{i}, \xi_{v g, T} = \frac{1}{\sqrt{T}}\left(\sum_{t=1}^T v_t \bar{g}_t' \right) ,\xi_{z g, T} = \frac{1}{\sqrt{T}}\left(\sum_{t=1}^T z_t \bar{g}_t' \right) ,\xi_{e_i g, T} = \frac{1}{\sqrt{T}}\left(\sum_{t=1}^T e_{it} \bar{g}_t' \right)$. The correlations between $\beta_{g,i}$  and risk exposures with respect to missing factors $\left(\beta_i, \gamma_i \right)$ would further aggravate the problem. Assuming 

\begin{lem} \label{lem:B_g}
	Suppose Assumption \ref{ass}, \ref{assum:CLT for the proxy factors} hold, let T increase to infinity then 
	\begin{align*}
		\widehat{B}_g =_d \left(c, \beta_g\right) + \psi_{B_g}/\sqrt{T} 
	\end{align*}
	where $\widehat{B}_g = q_{G,T}\widehat{Q}_G^{-1},  \widehat{Q}_G=\sum_{t=1}^TG_tG_t'/T, \psi_{B_g}=\left(\psi_c,  \psi_{\beta_{g,1}},  \psi_{\beta_{g,2}}\right) $ and $\text{vec}(\psi_{B_g})$ being zero-mean normal random vectors. 	
	
	\noindent
	Proof: See Appendix  \ref{Proof of Lemma  {lem:B_g dist}}.
\end{lem}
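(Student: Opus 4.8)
\emph{Proof proposal.} The plan is to recognize $\widehat B_g=q_{G,T}\widehat Q_G^{-1}$ as the first-pass OLS estimator from regressing $r_t$ on $G_t=(1,\bar g_t')'$ under model (\ref{1}), to expand its sampling error by an exact identity, and then to read off the Gaussian limit from a joint central limit theorem together with a law of large numbers for $\widehat Q_G$. First I would write $r_t=B_gG_t+\widetilde u_{g,t}$ with $B_g=(c,\beta_g)$, and observe that, because $\sum_{t=1}^T\bar g_t=0$, the matrix $\widehat Q_G=\tfrac1T\sum_tG_tG_t'$ is block diagonal, $\widehat Q_G=\mathrm{diag}(1,\widehat\Sigma_g)$ with $\widehat\Sigma_g:=\tfrac1T\sum_t\bar g_t\bar g_t'$. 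This yields the exact identity
\begin{align*}
	\widehat B_g=B_g+\Big(\tfrac1T\sum_{t=1}^T\widetilde u_{g,t}G_t'\Big)\widehat Q_G^{-1},
\end{align*}
so that the whole task reduces to showing that $\sqrt T$ times this sampling error converges in distribution to a zero-mean normal random matrix, which will be named $\psi_{B_g}$.

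Next I would split the numerator. Substituting $\widetilde u_{g,t}=u_{g,t}+\beta_g(\bar g-\mu_g)$, $u_{g,t}=\beta v_t+u_t$, and $\bar g_t=(g_t-\mu_g)-(\bar g-\mu_g)$, and using $\tfrac1T\sum_tG_t=(1,0')'$, the constant column of $\tfrac1T\sum_t\widetilde u_{g,t}G_t'$ becomes $\tfrac1T\sum_tu_{g,t}+\beta_g(\bar g-\mu_g)$, while its $\bar g_t$-block becomes $\tfrac1T\sum_tu_{g,t}(g_t-\mu_g)'-\big(\tfrac1T\sum_tu_{g,t}\big)(\bar g-\mu_g)'$ (the $\beta_g(\bar g-\mu_g)$ piece drops out of this block since $\sum_t\bar g_t=0$). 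Then I would supply the two limiting ingredients: under Assumption \ref{ass} (stationarity, finite fourth moments) $\widehat\Sigma_g\xrightarrow{p}\Sigma_g:=\mathrm{Var}(g_t)$, which is positive definite because $d_g$ has full rank, so $\widehat Q_G^{-1}\xrightarrow{p}\mathrm{diag}(1,\Sigma_g^{-1})$; and Assumption \ref{assum:CLT for the proxy factors} supplies the joint CLT making $\tfrac1{\sqrt T}\sum_tu_{g,t}$, $\tfrac1{\sqrt T}\sum_t(g_t-\mu_g)$ and $\tfrac1{\sqrt T}\sum_t\mathrm{vec}\big(u_{g,t}(g_t-\mu_g)'\big)$ jointly asymptotically normal with mean zero --- the last two centered at zero precisely because $g_t$ is uncorrelated with $u_t,v_t$ (Assumption \ref{ass}) and $\mathbb Eu_{g,t}=0$.

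To assemble the limit I would vectorize, using $\mathrm{vec}(AB)=(B'\otimes I_N)\mathrm{vec}(A)$ and symmetry of $\widehat Q_G^{-1}$, to obtain $\sqrt T\,\mathrm{vec}(\widehat B_g-B_g)=(\widehat Q_G^{-1}\otimes I_N)\,\mathrm{vec}\big(\tfrac1{\sqrt T}\sum_t\widetilde u_{g,t}G_t'\big)$. By the split above, the $\sqrt T$-scaled numerator is a fixed affine image of the CLT vector of the previous paragraph (with coefficient matrix involving $\beta_g$) plus the remainder $\big(\tfrac1T\sum_tu_{g,t}\big)\sqrt T(\bar g-\mu_g)'=o_p(1)$, and the term $\sqrt T\beta_g(\bar g-\mu_g)=\beta_g\cdot\tfrac1{\sqrt T}\sum_t(g_t-\mu_g)$ entering the constant column remains asymptotically normal for any convergent (possibly drifting-to-zero) sequence $\beta_g$; hence the scaled numerator converges jointly in distribution to a zero-mean normal vector, and by Slutsky's theorem with $\widehat Q_G^{-1}\otimes I_N\xrightarrow{p}\mathrm{diag}(1,\Sigma_g^{-1})\otimes I_N$ so does its product. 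Calling the limit $\mathrm{vec}(\psi_{B_g})$ and partitioning the columns of $\psi_{B_g}$ as $(\psi_c,\psi_{\beta_{g,1}},\psi_{\beta_{g,2}})$ along $B_g=(c,\beta_{g,1},\beta_{g,2})$ gives $\widehat B_g=_d(c,\beta_g)+\psi_{B_g}/\sqrt T$ with $\mathrm{vec}(\psi_{B_g})$ zero-mean normal.

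The part requiring genuine care --- the main obstacle --- is the bookkeeping forced by the in-sample demeaning $\bar g_t=g_t-\bar g$: it makes $\widetilde u_{g,t}$ depend on the whole sample and injects $(\bar g-\mu_g)$-type corrections, and one must check that these are either $o_p(1)$ after $\sqrt T$-scaling or absorbed exactly into $\psi_c$. The second delicate point is verifying that Assumption \ref{assum:CLT for the proxy factors} is strong enough that the \emph{levels and the cross-products} of $g_t$ and $u_{g,t}$ are jointly, not merely marginally, Gaussian in the limit, so that $\mathrm{vec}(\psi_{B_g})$ is a bona fide multivariate normal vector. Everything else --- consistency of $\widehat\Sigma_g$, eventual invertibility of $\widehat Q_G$ via positive definiteness of $\Sigma_g$, and the final Slutsky/continuous-mapping step --- is routine.
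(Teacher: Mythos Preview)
Your proposal is correct and follows the same OLS-error-expansion route as the paper, with only cosmetic differences in bookkeeping. The paper's proof is slightly more direct: it decomposes $u_{g,t}=\beta v_t+\gamma z_t+e_t$ and writes
\[
\widehat B_g=(c,\beta_g)+\tfrac{1}{\sqrt T}\big(\beta_g\xi_{g,T},\,0\big)+\tfrac{1}{\sqrt T}(\beta,\gamma)\xi_{vzG,T}\widehat Q_G^{-1}+\tfrac{1}{\sqrt T}\xi_{eG,T}\widehat Q_G^{-1},
\]
and then simply reads off the Gaussian limit from Assumption~\ref{assum:CLT for the proxy factors}. Because that assumption is phrased in terms of $G_t=(1,\bar g_t')'$ rather than $(1,(g_t-\mu_g)')'$, the paper avoids the demeaning algebra you carry out (your $o_p(1)$ remainder $\big(\tfrac1T\sum_tu_{g,t}\big)\sqrt T(\bar g-\mu_g)'$ never appears). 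Your version is a valid alternative and arguably more explicit, but the extra steps are not needed.

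One remark on your stated ``main obstacle'': your worry about whether Assumption~\ref{assum:CLT for the proxy factors} is strong enough for \emph{joint} normality of the levels and cross-products is unfounded. The assumption explicitly postulates joint convergence of $(\xi_{g,T},\xi_{gg,T},\xi_{vG,T},\xi_{zG,T},\xi_{eG,T})$ to a Gaussian limit, so the quantities you need --- $\tfrac1{\sqrt T}\sum_t u_{g,t}G_t'=\beta\xi_{vG,T}+\gamma\xi_{zG,T}+\xi_{eG,T}$ and $\sqrt T(\bar g-\mu_g)=\xi_{g,T}$ --- are fixed linear images of a jointly Gaussian vector, hence jointly Gaussian by construction. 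There is nothing to verify here beyond invoking the assumption.
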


\begin{theo}\label{theo: charaterization of the delta}
	Suppose Assumptions \ref{assum:CLT for the proxy factors} and \ref{assum: factor structure in r}-\ref{assum:factor loading strength} hold, let T increase to infinity with fixed N, then the behavior of the HJ statistic is characterized by: 
	\begin{align*}
		T	\widehat{\delta}^2_g \rightarrow_d \widetilde{\psi}_{B_g} ' M_{Q_r^{-\frac{1}{2}}  \left( \eta _{B_g} + (0\vdots \psi_{\beta_{g,2}} )  \right) } \widetilde{\psi}_{B_g}  
	\end{align*} 
	with $\widetilde{\psi}_{B_g}= \psi_{B_g} Q_G \theta_G $.\\
	
	\noindent
	Proof: See Appendix \ref{Proof of {theo:fixed N asymptotic for theta}}. 
\end{theo}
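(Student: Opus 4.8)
The plan is to reduce $\widehat{\delta}^2_g$ to an \emph{exact} quadratic form in the first-step sampling error $\psi_{B_g}$ supplied by Lemma~\ref{lem:B_g}, and then pass to the limit; the delicate point is that the subspace onto which the HJ weighting projects degenerates as $T\to\infty$ because the weak loadings $\beta_{g,2}$ vanish. \textbf{Step 1 (projection form).} Solving the quadratic program defining $\widehat{\delta}^2_g$ gives, for $T$ large enough that $q_{G,T}$ has full column rank,
\[
\widehat{\delta}^2_g=\iota_N'\,\Pi_T\,\iota_N,\qquad \Pi_T=\widehat{Q}_r^{-1}-\widehat{Q}_r^{-1}q_{G,T}\left(q_{G,T}'\widehat{Q}_r^{-1}q_{G,T}\right)^{-1}q_{G,T}'\widehat{Q}_r^{-1},
\]
i.e.\ $\Pi_T$ is the $\widehat{Q}_r^{-1}$-weighted residual-maker for $q_{G,T}$, which in the paper's square-root notation is $M_{\widehat{Q}_r^{-\frac{1}{2}}q_{G,T}}$ conjugated by $\widehat{Q}_r^{-\frac{1}{2}}$. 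Since $q_{G,T}=\widehat{B}_g\widehat{Q}_G$ with $\widehat{Q}_G$ invertible, $\mathrm{col}(q_{G,T})=\mathrm{col}(\widehat{B}_g)$, so $\Pi_T\widehat{B}_g=0$ and $\Pi_T$ depends on the data only through $\widehat{Q}_r$ and $\mathrm{col}(\widehat{B}_g)$.

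\textbf{Step 2 (annihilating the signal).} The restriction $c=\iota_N\lambda_0+\beta\lambda_f$ of Assumption~\ref{ass}, together with $\beta=\beta_g d_g^{-1}$, yields the finite-sample identity $\iota_N=B_g\theta_B$ with $\theta_B=(1/\lambda_0,\,-(d_g^{-1}\lambda_f)'/\lambda_0)'=Q_G\theta_G$, where $\theta_G$ is the pseudo-true SDF parameter of the proxy model and $Q_G$ is the limit of $\widehat{Q}_G$. By Lemma~\ref{lem:B_g} substitute $B_g=\widehat{B}_g-\psi_{B_g}/\sqrt{T}$ (in distribution), so $\iota_N=\widehat{B}_g\theta_B-\widetilde{\psi}_{B_g}/\sqrt{T}$ with $\widetilde{\psi}_{B_g}=\psi_{B_g}\theta_B=\psi_{B_g}Q_G\theta_G$. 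Plugging this into $\widehat{\delta}^2_g=\iota_N'\Pi_T\iota_N$, the symmetry of $\Pi_T$ and $\Pi_T\widehat{B}_g=0$ kill the leading term and both cross terms, leaving the exact identity
\[
T\widehat{\delta}^2_g=\widetilde{\psi}_{B_g}'\,\Pi_T\,\widetilde{\psi}_{B_g}.
\]

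\textbf{Step 3 (limiting the projector; continuous mapping).} Writing $\widehat{B}_g=(c+\psi_c/\sqrt{T},\ \beta_{g,1}+\psi_{\beta_{g,1}}/\sqrt{T},\ \beta_{g,2}+\psi_{\beta_{g,2}}/\sqrt{T})$, the weak columns are $O_p(1/\sqrt{T})$, so $\widehat{B}_g$ converges to the rank-deficient $(c,\beta_{g,1},0)$ and $\Pi_T$ has no limit in the naive sense. The remedy is that $\mathrm{col}(\widehat{B}_g)$, hence $\Pi_T$, is invariant under right-multiplication of $\widehat{B}_g$ by $D_T=\mathrm{diag}(1,\dots,1,\sqrt{T},\dots,\sqrt{T})$ (the $\sqrt{T}$'s on the $K_{g,2}$ weak columns); with $\eta_{\beta_g}=(\beta_{g,1},\sqrt{T}\beta_{g,2})$, Lemma~\ref{lem:B_g} gives $\widehat{B}_gD_T=\eta_{B_g}+(0\vdots\psi_{\beta_{g,2}})+o_p(1)$ in distribution, jointly with $\psi_{B_g}$, and combining with $\widehat{Q}_r\rightarrow_p Q_r$ yields $\widehat{Q}_r^{-\frac{1}{2}}\widehat{B}_gD_T=Q_r^{-\frac{1}{2}}(\eta_{B_g}+(0\vdots\psi_{\beta_{g,2}}))+o_p(1)$ in distribution. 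Since $\eta_{B_g}=(c,\beta_{g,1},\sqrt{T}\beta_{g,2})$ has full column rank by Assumption~\ref{assum:factor loading strength}(i) and $\psi_{\beta_{g,2}}$ is Gaussian (its law is absolutely continuous and the $\psi_{\beta_{g,2}}$'s that destroy rank form a proper algebraic subvariety), $\eta_{B_g}+(0\vdots\psi_{\beta_{g,2}})$ has full column rank with probability one; the map sending a full-rank matrix to its residual-maker is continuous there, so the continuous mapping theorem applied jointly to $(\widehat{Q}_r,\widehat{B}_gD_T,\psi_{B_g})$ in Step~2's identity gives
\[
T\widehat{\delta}^2_g\rightarrow_d \widetilde{\psi}_{B_g}'\,M_{Q_r^{-\frac{1}{2}}\left(\eta_{B_g}+(0\vdots\psi_{\beta_{g,2}})\right)}\,\widetilde{\psi}_{B_g},
\]
with the outer $Q_r^{-\frac{1}{2}}$ weighting folded into the $M_{Q_r^{-1/2}(\cdot)}$ notation exactly as in the statement.

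\textbf{Main obstacle.} The one genuinely non-routine point is Step~3: because the weak block of $\widehat{B}_g$ collapses to zero, $\Pi_T$ does not converge in the ordinary sense, so one must (i) exploit the scale invariance of the column space to ``blow up'' the weak columns \emph{before} passing to the limit, making the limiting projection direction the \emph{random} matrix $\eta_{B_g}+(0\vdots\psi_{\beta_{g,2}})$ rather than a deterministic rank-deficient one (this is precisely why the limiting law ceases to be a simple weighted sum of $\chi^2(1)$'s, hence why the usual HJ critical values are size-distorting under weak proxies), and (ii) rule out the probability-zero event on which that random direction is itself rank deficient so that the continuous mapping theorem applies. The projection algebra of Step~1, the cancellation of Step~2, and the Slutsky-type arguments are routine once Lemma~\ref{lem:B_g} is in hand.
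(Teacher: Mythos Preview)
Your proof is correct and follows essentially the same route as the paper's: both write $T\widehat{\delta}^2_g$ as the quadratic form $\sqrt{T}\,(\iota_N-\widehat{B}_gQ_G\theta_G)'\,W\,\sqrt{T}\,(\iota_N-\widehat{B}_gQ_G\theta_G)$ with $W$ the $\widehat{Q}_r^{-1}$-weighted residual maker for $\widehat{B}_g$, invoke Lemma~\ref{lem:B_g} and $\iota_N=B_gQ_G\theta_G$ for the vector part, rescale the weak columns by $\sqrt{T}$ (your $D_T$, the paper's $Q_{B_g,T}$) to obtain the non-degenerate limit $\eta_{B_g}+(0\vdots\psi_{\beta_{g,2}})$ of the projection direction, and pass to the limit by continuous mapping. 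Your write-up is somewhat more explicit than the paper's on two points (the almost-sure full rank of the limiting random direction and the continuity of the projector map there), and you correctly note that the paper's stated limit tacitly absorbs the outer $Q_r^{-1/2}$ factors into the $M_{Q_r^{-1/2}(\cdot)}$ notation.
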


Theorem \ref{theo: charaterization of the delta} is derived assuming that the linear model is correctly specified, and it suggests that with strong factors, $K_{g,2}=0$, the weighted sum of $\chi^2$'s provides a reasonable approximation (Corollary \ref{theo: charaterization of the delta 2}).

\begin{coro}\label{theo: charaterization of the delta 2}
	Suppose Assumption \ref{assum:2.2.2} and the assumptions in  Theorem \ref{theo: charaterization of the delta} hold with $K_{g,2}=0$, then  	
	\begin{align*}
		T	\widehat{\delta}^2_g \sim_d \sum_{i=1}^{N-K-1}  {p}_i x_i   
	\end{align*}  
	with $x_i$ being independently $\chi^2(1)$ distributed random variables and $p_i$ being the positive eigenvalues of the matrix $\widehat{S}^{\frac{1}{2}} \left(\widehat{Q}_r^{-1}-\widehat{Q}_r^{-1}q_{G,T}  \left(q_{G,T} '\widehat{Q}_r^{-1} q_{G,T}  \right)^{-1}q_{G,T} '\widehat{Q}_r^{-1}  \right)  \widehat{S}^{\frac{1}{2}'}$. \\
	
	\noindent
	Proof: This is a direct result of Theorem \ref{theo: charaterization of the delta}. 
\end{coro}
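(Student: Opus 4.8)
The plan is to read the limit off Theorem~\ref{theo: charaterization of the delta} in the case $K_{g,2}=0$ and then recognise it as the classical weighted $\chi^{2}$ law. When $K_{g,2}=0$ there are no weak proxy factors, so the stochastic perturbation block $\left(0\vdots\psi_{\beta_{g,2}}\right)$ is empty and $\eta_{B_g}=\left(c,\eta_{\beta_g}\right)=\left(c,\beta_{g,1}\right)=\left(c,\beta_g\right)=B_g$, a fixed $N\times(1+K)$ matrix of full column rank (full rank because $\eta'\eta$ is positive definite by Assumption~\ref{assum:factor loading strength}(i), and here $\eta_{B_g}=B_g$). Hence the projector $M_{Q_r^{-\frac{1}{2}}(\eta_{B_g}+(0\vdots\psi_{\beta_{g,2}}))}$ of Theorem~\ref{theo: charaterization of the delta} collapses to the \emph{deterministic} one $M_{Q_r^{-\frac{1}{2}}B_g}$, of rank $N-(1+K)=N-K-1$, and Theorem~\ref{theo: charaterization of the delta} reads $T\widehat{\delta}^2_g\rightarrow_d\widetilde{\psi}_{B_g}'M_{Q_r^{-\frac{1}{2}}B_g}\widetilde{\psi}_{B_g}$ with $\widetilde{\psi}_{B_g}=\psi_{B_g}Q_G\theta_G$. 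By Lemma~\ref{lem:B_g}, $\mathrm{vec}(\psi_{B_g})$ is a zero--mean Gaussian vector, so $\widetilde{\psi}_{B_g}$, being a fixed linear image of it, is a zero--mean Gaussian $N\times1$ vector; write $\Sigma:=\mathrm{Var}(\widetilde{\psi}_{B_g})$.

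Next I would apply the classical manipulation behind \cite{jagannathan1996conditional}'s representation, namely that for $\xi\sim N(0,\Sigma)$ and a symmetric idempotent $W$, diagonalising the symmetric matrix $W\Sigma W$ by an orthogonal transform (which preserves standard normality) gives $\xi'W\xi=_d\sum_i\pi_i x_i$ with $x_i$ i.i.d.\ $\chi^{2}(1)$ and $\pi_i$ the nonzero eigenvalues of $W\Sigma W$, equivalently of $W\Sigma$. With $W=M_{Q_r^{-\frac{1}{2}}B_g}$ of rank $N-K-1$ there are at most $N-K-1$ positive weights, and Assumption~\ref{assum:2.2.2} (nondegeneracy of the relevant limiting covariance) forces the rank to be exactly $N-K-1$. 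This already yields $T\widehat{\delta}^2_g\rightarrow_d\sum_{i=1}^{N-K-1}p_i^{0}x_i$ with $p_i^{0}$ the positive eigenvalues of $M_{Q_r^{-\frac{1}{2}}B_g}\Sigma$; and, using that under correct specification $\iota_N=q_G\theta_G$ with $q_G=B_gQ_G$ so that $\sqrt{T}\,e_{g,T}(\theta_G)=-\widetilde{\psi}_{B_g}+(\text{a term in the column span of }B_g)+o_p(1)$ while $M_{Q_r^{-\frac{1}{2}}B_g}$ annihilates $Q_r^{-\frac{1}{2}}B_g$, the $p_i^{0}$ are (after the usual Cholesky bookkeeping) also the positive eigenvalues of $S^{\frac{1}{2}}\!\left(Q_r^{-1}-Q_r^{-1}q_G(q_G'Q_r^{-1}q_G)^{-1}q_G'Q_r^{-1}\right)\!S^{\frac{1}{2}'}$, where $S$ is the long--run variance of $e_{g,T}(\theta_G)$.

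Finally I would pass to the sample weights. Since $\widehat{\theta}_G\rightarrow_p\theta_G$, the estimator $\widehat{S}=S_T(\widehat{\theta}_G)$ is consistent for $S$, and (with $K_{g,2}=0$ there are no drifting loadings, so) $\widehat{Q}_r\rightarrow_p Q_r$ and $q_{G,T}\rightarrow_p q_G=B_gQ_G$; by continuity of eigenvalues in the matrix entries the positive eigenvalues $p_i$ of $\widehat{S}^{\frac{1}{2}}(\widehat{Q}_r^{-1}-\widehat{Q}_r^{-1}q_{G,T}(q_{G,T}'\widehat{Q}_r^{-1}q_{G,T})^{-1}q_{G,T}'\widehat{Q}_r^{-1})\widehat{S}^{\frac{1}{2}'}$ then satisfy $p_i\rightarrow_p p_i^{0}$, and since the law of $\sum_{i=1}^{N-K-1}c_i x_i$ depends continuously on $(c_1,\dots,c_{N-K-1})$, $\sum_{i=1}^{N-K-1}p_i x_i$ shares the limit distribution of $T\widehat{\delta}^2_g$, which is the claim. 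I expect the main (though mild) obstacle to be exactly this weight identification: $\widehat{S}$ estimates the \emph{full} long--run variance of the pricing errors (which also contains the contribution from estimating the factor second moments $Q_G$, not just the $\widehat{B}_g$--sampling term encoded in $\widetilde{\psi}_{B_g}$), and one must check — via the elementary identities for nonzero eigenvalues of products, being careful with the Cholesky transposes — that this extra contribution lies in the range of $Q_r^{-\frac{1}{2}}B_g$ and is therefore killed by the projector, so that $S$ and $\mathrm{Var}(\widetilde{\psi}_{B_g})$ yield the same weighted $\chi^{2}$ limit. Everything else is the reduction in the first paragraph together with textbook quadratic--form and continuous--mapping arguments.
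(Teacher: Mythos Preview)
Your proposal is correct and follows the same route the paper intends: the paper's own proof is literally the single line ``This is a direct result of Theorem~\ref{theo: charaterization of the delta},'' and your argument is precisely the natural unpacking of that line --- collapse the random projector to the deterministic $M_{Q_r^{-1/2}B_g}$ when $K_{g,2}=0$, recognise the Gaussian quadratic form, and pass to sample weights by consistency. The one place you go beyond the paper is the weight identification: your observation that $\sqrt{T}\,e_{g,T}(\theta_G)$ differs from $-\widetilde{\psi}_{B_g}$ only by a term in the column span of $B_g$, which is annihilated by the projector, is exactly what is needed (and an easier way to close that step is to note that $T\widehat{\delta}^2_g$ can equally be written as $(\sqrt{T}\,e_{g,T}(\theta_G))'\widehat{A}(\sqrt{T}\,e_{g,T}(\theta_G))$ with $\widehat{A}$ the sample projector, so the limiting weights are directly the eigenvalues of $S^{1/2}AS^{1/2'}$ without any reconciliation).
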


However, the conventional specification test procedure can be unreliable and suffer from severe size distortion even in large samples due to the irregular distribution of the HJ statistic (Corollary \ref{theo: strong beta asymptotics}).  
\begin{coro} \label{theo: strong beta asymptotics}
	Suppose the assumptions in Corollary \ref{theo: charaterization of the delta 2} hold with $0< K_{g,2} \leq K$,      
	\begin{align}
		\lim\sup_{N} \lim\sup_{T}~~		 \mathbb{P} \left(  T\widehat{\delta}_g \geq \widehat{c}_{1-\alpha}  \right) =1 \nonumber
	\end{align}
	where $\widehat{c}_{1-\alpha} $ is the conventional critical value derived from the distribution (\ref{eq:critical values}).\\ 
	
	\noindent
	Proof: See Appendix \ref{Proof of Corollary {theo: strong beta asymptotics}}. 
\end{coro}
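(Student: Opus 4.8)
The plan is to prove the claim through the iterated limit built into the statement: hold $N$ fixed and let $T\to\infty$, then let $N\to\infty$. The critical value $\widehat{c}_{1-\alpha}$ is a measurable function of $(\widehat{S},\widehat{Q}_r,q_{G,T})$ that is independent of the auxiliary $\chi^2(1)$ variables, so the first task is to upgrade Theorem \ref{theo: charaterization of the delta} to a joint statement: as $T\to\infty$ with $N$ fixed, $(T\widehat{\delta}^2_g,\widehat{c}_{1-\alpha})\to_d (D_N,C_N)$, where $D_N$ is the limiting quadratic form of Theorem \ref{theo: charaterization of the delta} and $C_N$ is the conditional $(1-\alpha)$-quantile of $\sum_{i=1}^{N-K-1}p_i^{(N)}x_i$ given the joint limit of $(\widehat{S},\widehat{Q}_r,q_{G,T})$, with $p_i^{(N)}$ the corresponding limiting eigenvalues. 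This rests on the same CLT/continuous-mapping machinery underlying Lemma \ref{lem:B_g} and Theorem \ref{theo: charaterization of the delta}, together with the observation that $\widehat{S}=S_T(\widehat{\theta}_G)$ converges in distribution by a uniform law of large numbers — note $\widehat{\theta}_G$ is only distributionally convergent (to a random limit $\theta^\ast_G$) under weak identification, since it is inconsistent there. It then suffices to show $\mathbb{P}(D_N\ge C_N)\to 1$ as $N\to\infty$.

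The heart of the argument is that $D_N$ carries an order-$N$ noncentrality that the central weighting $\sum p_i^{(N)}x_i$ cannot match. Because the weak block $\psi_{\beta_{g,2}}$ enters both the residual vector $\widetilde{\psi}_{B_g}$ and the perturbed projector $M_{Q_r^{-1/2}(\eta_{B_g}+(0\vdots\psi_{\beta_{g,2}}))}$ of Theorem \ref{theo: charaterization of the delta}, I would use the fact that $Q_r^{-1/2}(\eta_{\beta_{g,2}}+\psi_{\beta_{g,2}})$ lies in the projected-out subspace to rewrite the contribution of that block, isolating a \emph{deterministic} vector $W_N$ proportional to $Q_r^{-1/2}\eta_{\beta_{g,2}}$ times the weak sub-vector of $Q_G\theta_G$, plus a Gaussian remainder $V_N$ built from $\psi_c$ and $\psi_{\beta_{g,1}}$; schematically $D_N=\|M(W_N-V_N)\|^2$ with $M$ the (random) projector. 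Since $\eta_{\beta_{g,2}}=\sqrt{T}\beta_{g,2}$ is a strong-scale object and, by Assumption \ref{assum:factor loading strength}(ii), $N^{-1}\eta'\eta$ converges to a positive definite matrix, $\eta_{\beta_{g,2}}$ stays asymptotically non-collinear with $(c,\beta_{g,1},\gamma)$, the directions carrying the large eigenvalues of $Q_r$; together with the lower bound on $\lambda_{\min}(\Omega_e)$ in Assumption \ref{assum: forth moments of ee} this gives $\|W_N\|^2\asymp N$, and since the perturbation $\psi_{\beta_{g,2}}$ is of the same order $\sqrt N$ and generic, it keeps $W_N$ from being annihilated by the rank-$(K+1)$ projector, i.e. $\liminf_N\|MW_N\|^2/N>0$ a.s. Conditionally on $\psi_{\beta_{g,2}}$, $D_N$ is then a noncentral weighted $\chi^2$ with noncentrality of exact order $N$ and bounded weights, hence concentrated at $\|MW_N\|^2+\text{tr}\big(M\,\mathrm{Var}[V_N\mid\psi_{\beta_{g,2}}]\big)+O_p(\sqrt N)$.

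On the other side, the critical value grows only linearly in $N$ with a coefficient set by bounded-eigenvalue traces. The $p_i^{(N)}$ are the eigenvalues of $(S^\ast)^{1/2}\big(Q_r^{-1}-Q_r^{-1}q_G(q_G'Q_r^{-1}q_G)^{-1}q_G'Q_r^{-1}\big)(S^\ast)^{1/2\prime}$; wherever $S^\ast$ is large — the strong directions of $r_t$ — $Q_r^{-1}$ is of order $1/N$, so (using finite fourth moments and the $\Omega_e$ eigenvalue bounds) the $p_i^{(N)}$ are uniformly bounded and there are at most $N$ of them. Thus $\sum_i p_i^{(N)}=\text{tr}(Q_r^{-1}S^\ast)+O(1)\asymp N$, $\sum_i (p_i^{(N)})^2\asymp N$, and by a Lindeberg CLT for the weighted $\chi^2$, $C_N=\text{tr}(Q_r^{-1}S^\ast)+O_p(\sqrt N)$, so $C_N/N$ converges to the finite (possibly random) limit $\lim N^{-1}\text{tr}(Q_r^{-1}S^\ast)$.

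Putting the two sides together, $\mathbb{P}(D_N\ge C_N)\to 1$ will follow once
\[
\liminf_N \tfrac1N\Big( \|MW_N\|^2 + \text{tr}\big(M\,\mathrm{Var}[V_N\mid\psi_{\beta_{g,2}}]\big) - \text{tr}\big(M\,Q_r^{-1/2}S^\ast Q_r^{-1/2\prime}\big) \Big) > 0 \quad \text{a.s.},
\]
and then chaining $T\to\infty$, $N\to\infty$ and $\mathbb{P}(D_N\ge C_N)\le 1$ yields $\limsup_N\limsup_T \mathbb{P}(T\widehat{\delta}^2_g\ge\widehat{c}_{1-\alpha})=1$. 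I expect this inequality to be the main obstacle: it pits the genuinely present per-asset noncentrality $N^{-1}\|MW_N\|^2>0$ against the per-asset mismatch between the weighting matrix the test actually uses — $Q_r^{-1/2}S^\ast Q_r^{-1/2\prime}$, built from the \emph{inconsistent} $\widehat{\theta}_G$ — and the conditional covariance of the regular Gaussian part of the statistic. The delicate sub-steps are to pin down $\theta^\ast_G$ from the projection characterization and bound $\text{tr}(Q_r^{-1}S^\ast)$ from above, and to track the dependence of $M$ on the Gaussian block $\psi_{\beta_{g,1}}$ (which generates the conditional mean inside $V_N$); these are exactly where Assumptions \ref{assum:factor loading strength} and \ref{assum: forth moments of ee} do the work. (One may also exploit that the statement asks only for $\limsup_N$, so it is enough to verify the inequality along a subsequence of $N$.)
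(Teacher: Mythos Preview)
Your overall architecture---fix $N$, take $T\to\infty$, then let $N\to\infty$, and compare the limiting statistic $D_N$ against the limiting critical value $C_N$---matches the paper's. But the mechanism you rely on is different from the paper's, and it contains a gap.

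Your noncentrality argument hinges on the claim that $\|W_N\|^2\asymp N$, where $W_N$ is proportional to $Q_r^{-1/2}\eta_{\beta_{g,2}}$. You justify this by saying that ``$\eta_{\beta_{g,2}}$ stays asymptotically non-collinear with $(c,\beta_{g,1},\gamma)$, the directions carrying the large eigenvalues of $Q_r$.'' The non-collinearity is correct, but the identification of the large-eigenvalue subspace is not. In this model $Q_r=cc'+\beta V_f\beta'+\gamma V_z\gamma'+\Omega_e$, and the \emph{latent} factor block $\beta V_f\beta'$ (with $V_f$ full rank because of the $v_t$ contribution) gives all $K$ columns of $\beta$---not just the $K_{g,1}$ columns of $\beta_{g,1}$---eigenvalues of order $N$. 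Since $\eta_{\beta_{g,2}}=\sqrt{T}\beta_{g,2}=\beta(\sqrt{T}d_{g,2})\in\mathrm{col}(\beta)$, it lies \emph{inside} the strong eigenspace of $Q_r$, so $\|Q_r^{-1/2}\eta_{\beta_{g,2}}\|=O(1)$ and your $\|W_N\|^2=O(1)$, not $O(N)$. The trick $MQ_r^{-1/2}\psi_{\beta_{g,2}}=-MQ_r^{-1/2}\eta_{\beta_{g,2}}$ is valid, but it actually shows that the dependence between $\widetilde\psi_{B_g}$ and $M$ \emph{kills} an order-$N$ contribution rather than creating one: the $\psi_{\beta_{g,2}}a_2$ piece, which would contribute $\asymp N$ to a naive quadratic form, collapses to an $O(1)$ object after the projection. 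With the noncentrality gone, your displayed inequality reduces to the ``delicate'' trace comparison you flag at the end---and that is now the entire argument, not a sub-step.

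The paper's route is different in emphasis. It does not look for a noncentrality in $D_N$; instead it isolates the discrepancy between the covariance $S_{B_g}=p\!\lim S_T(\theta_G)$ that governs the statistic and the covariance $S=p\!\lim S_T(\widehat\theta_G)$ that feeds the critical-value weights. Expanding $S_T(\widehat\theta_G)$ around $\theta_G$ and using the inconsistency of $\widehat\theta_G$ under $K_{g,2}>0$ (Theorem~\ref{theo:asymptotic for theta}), the paper obtains $S\approx S_{B_g}+T^{-1}\sum_t\big(r_tG_t'(\widehat\theta_G-\theta_G)\big)\big(r_tG_t'(\widehat\theta_G-\theta_G)\big)'$; the extra rank-deficient term, together with the observation that the projection pieces $\widetilde\psi_{B_g}'P\widetilde\psi_{B_g}$ and $\psi_S'P\psi_S$ are $O_p(1)$ uniformly in $N$, is what the paper invokes for the $N\to\infty$ limit. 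If you want to repair your argument, the productive direction is to replace the $\|W_N\|^2\asymp N$ claim by a careful comparison of $\mathrm{tr}\big(MQ_r^{-1/2}S_{B_g}Q_r^{-1/2\prime}\big)$ against $\mathrm{tr}\big(MQ_r^{-1/2}SQ_r^{-1/2\prime}\big)$, tracking how the low-rank distortion in $S$ interacts with the random projector $M$ (which, as you noted, depends on the same $\psi_{\beta_{g,2}}$ that drives the inconsistency of $\widehat\theta_G$).
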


%This deficiency is caused by the irregular asymptotic distribution of the HJ statistic . 

Corollary \ref{theo: strong beta asymptotics}  shows  that under certain conditions the conventional specification test rejects the model specification with probability converging to one  even when the moment conditions hold. Thus the conventional specification testing procedure based on the HJ statistic may mistake the "weak identification" resulting from the weak (proxy) factors for model misspecification, and  leads to over-rejection when models are correctly specified.

%\begin{coro} \label{theo: strong beta asymptotics 3}
%	Suppose the assumptions \ref{assum: factor structure in r} -\ref{assum:CLT for the proxy factors} hold,   
%	\noindent
%	Proof: This is a direct result of Theorem \ref{theo: charaterization of the delta}. 
%\end{coro}
% 
%Corollary \ref{theo: strong beta asymptotics 3}  shows that adding additional factors lead to 

%\begin{example} \label{theo: strong beta asymptotics 2}
%Suppose the assumptions \ref{assum: factor structure in r} -\ref{assum:CLT for the proxy factors} hold, then under the null hypothesis that $\iota_N$ is perfectly spanned by $B$,  and we add one useless factor  in the analysis, then the $HJ$ statistic is characterized as 
%\begin{align}
%\widetilde{\psi}_B ' M_{Q_r^{-\frac{1}{2}} (B \vdots \psi  ) } \widetilde{\psi}_B = \widetilde{\psi}_B ' \left(M_{Q_r^{-\frac{1}{2}} B }-P_{M_{Q_r^{-\frac{1}{2}} B }  Q_r^{-\frac{1}{2}} \psi } \right) \widetilde{\psi}_B 
%\end{align}  
%where the $N\times 1$ random vector $\widetilde{\psi}_B$ is independent from $\psi$, and we have 
%\begin{align}
%  \mathbb{P}\left(\widetilde{\psi}_B ' M_{Q_r^{-\frac{1}{2}} (B \vdots \psi  ) } \widetilde{\psi}_B \leq \widehat{c}_{1-\alpha}( {\psi})   \right)    =\alpha
%\end{align} 
% \noindent
%Proof: See Appendix \ref{Proof of Theorem {theo: strong beta asymptotics}}. 
%\end{example}

\subsection{Simulation exercises}
We conduct simulation exercises to show that the HJ distance statistic as a model selection criterion might favor the presence of useless factors, and the HJ specification test suffers from severe size distortions.   

In the first simulation exercise (Figures \ref{fig:delota density}, \ref{fig:scaled_diss}), we calibrate the data generating process to match the data set of monthly gross asset returns on 25 size and book to market sorted portfolios from 1963 to 1998 and the three Fama French (FF) factors used by \cite{lettau2001consumption}. The data is simulated in the following way: we simulate three proxy factors $g_t \sim ~i.i.d~ N(0,V_F)$, three omitted factors $v_t~ i.i.d~\sim N(0.99 V_F)$ and three strong factors are then generated by $f_t=0.1*g_t + v_t,$. $V_F$ is calibrated to the sample covariance of the FF factors. We also generate three completely useless factors $w_t\sim i.i.d. N(0,V_F)$.  We then generate returns via $r_t=\iota_N + \beta \lambda+\beta f_t +u_t$, $u_t\sim ~i.i.d ~N(0,V_u)$, 
where we set $\lambda$ to be the sample risk premia estimated via the FM two-pass estimator, $\beta$ is the sample slope parameter between the assets returns and FF factors,
and  $V_u$ is the sample covariance of the residuals resulting from regressing asset returns on a constant and FF factors from the data.

Figure \ref{fig:delota density} compares the density functions of the simulated HJ statistics evaluated with various combinations of the factors $g_t, f_t, w_t$. For example, the black solid curve is drawn using three strong factors $f_t$, the black dashed curve is drawn with two strong factors $f_{1t}, f_{2t}$. Ideally, the black solid curve should be the most left, since the model with three strong factors should be most likely to be selected by the HJ statistic. However,  comparing the red solid and black solid curves shows that adding additional useless factors leads to a shift of the distribution to the left, so it reduces the HJ statistic and leads to a "preferred model".  The blue sold curve illustrates the density function of the HJ statistic of the model with three weak proxy factors. By construction, the moment conditions (\ref{eq:SDF_r}) are satisfied by the three weak proxy factors, and this model is correctly specified. If we compare the blue solid curve with the blacked dashed one which is constructed with only two strong factors, then the misspecified model with two strong factors is more likely to be selected.  These observations imply that the HJ statistic is not a satisfying model selection tool.   
\begin{figure}[h!] 
	\centering	\includegraphics[width=0.8\textwidth,height=0.3\textheight]{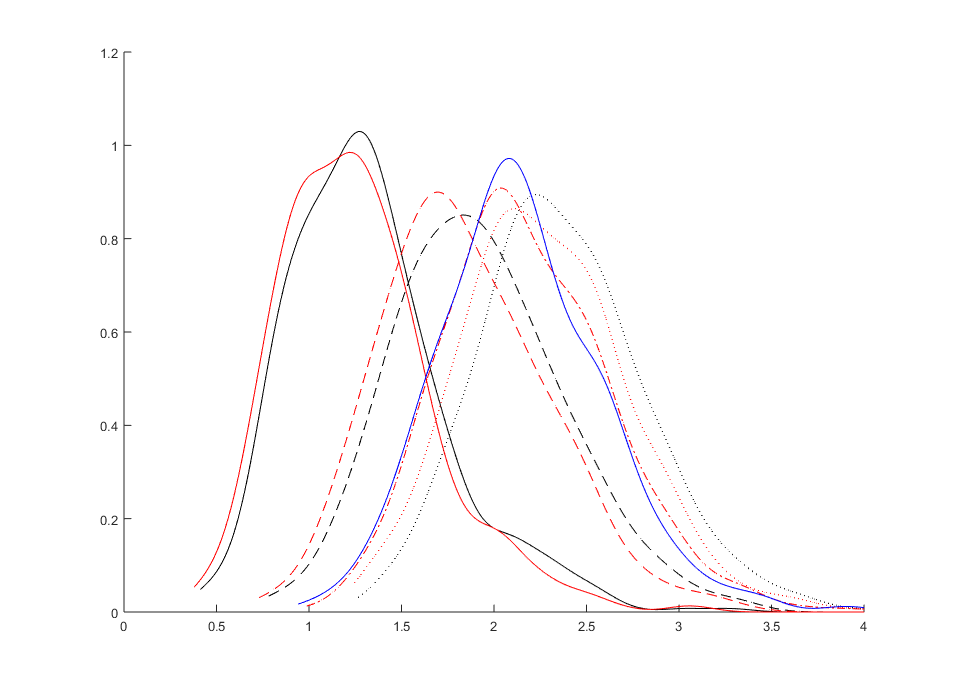} 
	\caption{Density functions of the HJ statistic: (1) black solid: three strong factors; (2) black dashed: two strong factors; (3) black dotted: one strong factors; (4) blue solid: three weak proxy factors; (5) red solid: three strong factors and one useless factor; (6) red dashed: two strong factors and one useless factor; (7) red dotted: one strong factors  and one useless factor; (8) red dash-dotted:one strong factors  and two useless factors }
	\label{fig:delota density}
\end{figure}

The observations in Figure \ref{fig:delota density} show that values of the HJ statistic can not properly distinguish between weakly identified models  and misspecified models. This is what motivates us to look further into the HJ specification test. Figure \ref{fig:scaled_diss} compares two different approaches for approximating the distribution of the HJ statistic: one uses the conventional weighted sum 
of $\chi^2$'s, from which the critical values of the HJ statistic result, and another one uses the infeasible distribution from Theorem \ref{theo: charaterization of the delta}. The left-hand side panels of Figure \ref{fig:scaled_diss} use three strong factors, while the right-hand side panels use three week ones.

The upper panels of Figure \ref{fig:scaled_diss} show that both approximations for the distribution of the HJ statistic (the conventional weighted sum of $\chi^2(1)$s and the infeasible one from Theorem \ref{theo: charaterization of the delta}) are bad when T is small, and shift to the left compared with the density function of the HJ statistic. This observation is consistent with the one in \cite{ahn2004small} that the HJ specification test over-rejects correct model specifications in small samples. 
With a limited number of observation periods, not only sampling errors in the $q_G$ estimators need to be taken into account but also those of other estimators such as the covariance estimator $\widehat{S}$ (e.g. \cite{kleibergen2019Consumption},  \cite{kong2018}). The infeasible distribution improves slightly by taking into account the sampling errors in the $q_G$ estimators. When T is large, the randomness in the covariance estimators becomes small, but the sampling errors in $q_{G,T}$ still matter when proxy factors are weak. 
As shown in the lower panels of Figure \ref{fig:scaled_diss}, with a larger sample size the conventional approximation works fine when factors are strong but not when weak proxy factors are present. With weak proxy factors, the distribution of the HJ statistic is not properly approximated by the weighted sum of $\chi^2$'s even in large samples, and the HJ specification test is still likely to over-reject models when moment conditions do hold (Corollary \ref{theo: strong beta asymptotics}).

\begin{figure}[h!] 
	\begin{center}
		\includegraphics[scale=0.4]{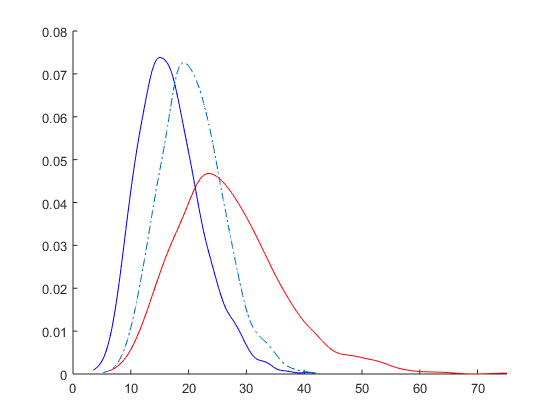}
		\includegraphics[scale=0.4]{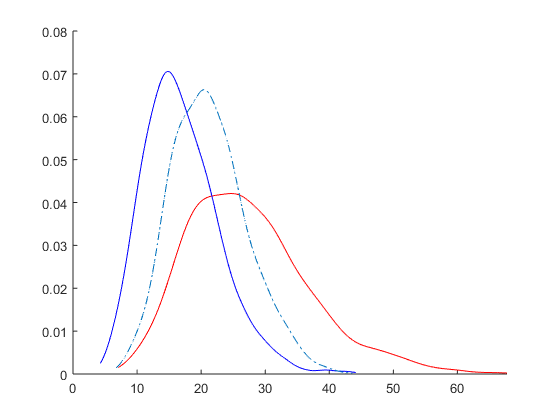}	\includegraphics[scale=0.4]{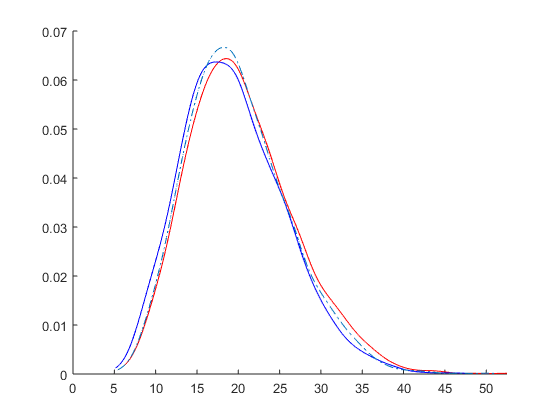}
		\includegraphics[scale=0.4]{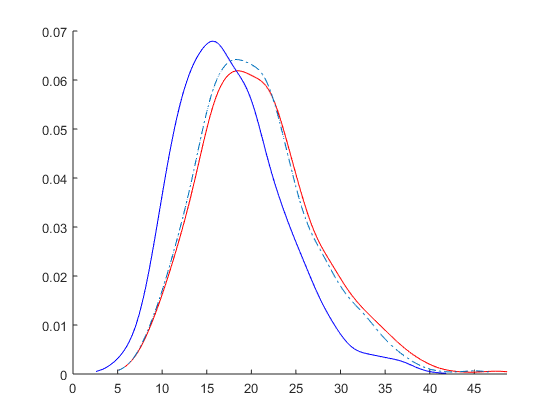}
		\caption{Density functions of the HJ statistic (red solid curve), weighted sum of $\chi^2$'s approximation (blue solid curves), asymptotic distribution based on Theory \ref{theo: charaterization of the delta} (dot-dashed curve).  
			Top  left-hand side panel:  $T=100,$ three strong factors $f_t$;  top right-hand side panel:  $T=100,$ three weak factors $g_t$; bottom left-hand side panel:  $T=1000,$ three strong factors $f_t$; bottom  right-hand side panel:  $T=1000,$ three weak factors $g_t$. } \label{fig:scaled_diss}
	\end{center}
\end{figure} 

Our second simulation exercise considers a simple single factor model in order to further illustrate the size distortion of the HJ specification test. We calibrated parameters to the data set from \cite{kroencke2017asset}. 
We simulate the proxy factor $g_t\sim ~i.i.d.~N(0,V_f/4)$, the omitted factor $v_t \sim ~i.i.d.~ N(0, Vff- d_g*(V_f/4)*d_g')$, the latent factor $f_t =d_g g_t + v_t$ and thus the variance of the $f_t$ remain unchanged to different values of the $d_g$. We calibrate $V_f$ to match the sample variance of the consumption growth factor from the data. The factor proposed in \cite{kroencke2017asset} has been shown to be weak (e.g. \cite{kleibergen2019Consumption}). Therefore,
we choose $\beta$ to be of $10\widehat{\beta}$ with $\widehat{\beta}$ the sample regression parameter from \cite{kroencke2017asset}, and thus $r_t$ is generated with one single strong factor $f_t$  via $r_t=\iota_N + \beta \lambda +  \beta f_t +u_t$, $u_t\sim ~i.i.d ~N(0,V_u)$, 
where we match $\lambda$ to the estimated risk premium from \cite{kroencke2017asset}.  We arbitrarily choose $d_g=1.9;0.9$ to mimic a strong and weak proxy factor.

Table \ref{table:1} shows that the HJ specification tests have poor finite sample performances, size distortions increase with the number of assets and with relative weak proxy factors the distortion is more severe, and these observations support Corollary \ref{theo: strong beta asymptotics}.     
%\subsubsection{Densities of $\widehat{\delta}$}

\begin{table}[h!] \centering
	%\ra{1.3}
	\begin{small}
		\begin{tabular}{@{}lrrrrrrrrrrrr@{}}\toprule
			\textbf{$\beta$} & \textbf{$N=5$} & \textbf{$N=10$}&\textbf{$N=15$}& \textbf{$N=31$}&   \\ \midrule
			\textbf{T=100}, $d_g=1.9$ &  0.5032 &  0.8824&  0.9711&  0.9992\\  
			\textbf{T=10000}, $d_g=1.9$ & 0.1210   & 0.1486 &  0.2298 &  0.2238  \\  
			\midrule 
			\textbf{T=100}, $d_g=0.9$ &  0.7132 &  0.9330& 0.9814&  1 \\  
			\textbf{T=10000}, $d_g=0.9$ &0.5174  & 0.8906 & 0.8834 &0.9978  \\   
			\bottomrule
		\end{tabular} 
	\end{small}
	\caption{Rejection frequency table under the null  (at 0.05 significance level) via the HJ specification test with critical values drawn from weighted sum of $\chi^2$'s}\label{table:1}
\end{table}

\section{Specification test with limited N: HJS} \label{sec HJS}
As in previous discussions, the HJ specification test can not provide valid inference when weak (proxy) factors are present, and this is because the HJ specification test procedure ignores some non-negligible sampling errors in the estimates of parameters that can not be properly identified in the presence of weak identification issues. In this section, we  suggest a  numerically simple and identification robust test procedure which replaces the estimates of these parameters with potential identification issues by those lying in a robust confidence set.  This approach is related to the widely studied weak instrument problem, where confidence sets with asymptotically correct coverage can be constructed for parameters with potential identification issues (e.g. \cite{kleibergen2005testing}, \cite{mikusheva2010robust}). 

\subsection{HJS specification test}
Our proposed HJS specification test procedure is conducted in three steps:  

\noindent
\textbf{Step (1)}:  Construct an identification robust confidence set, $CS_{r, \alpha_1}  $, for $\theta_G$ by inverting an Anderson-Rubin (AR) type test statistic (e.g., \cite{kleibergen2009tests}, \cite{gospodinov2017spurious}):   
\begin{align}
	CS_{r, \alpha_1} &= \left\{\theta\in \Theta: AR(\theta) \leq  c_{1-\alpha_1}  \right\};   ~~ 
	AR(\theta) = T e_T(\theta)' S_T^{-1}( {\theta})   e_T(\theta). 
\end{align}   
with $c_{1-\alpha_1}$ the $100(1-\alpha_1)\%$ percentile of the $\chi^2(N)$ distribution. 

\noindent
\textbf{Step (2)}: Compute the HJS statistic: 
\begin{align}
	\widehat{\delta}^*_{g}=  \inf_{\theta \in 	CS_{r, \alpha_1}     } \delta_{g,T}(\theta), 
\end{align}
with $\delta_{g,T} = (\theta) e_{g,T}(\theta_G)'\widehat{Q}_{r}^{-1} e_{g,T}(\theta_G) $. 
To complete the construction of the HJS statistic we set $\widehat{\delta}_g^* = \infty$ when the confidence set $	CS_{r, \alpha_1}  $ is empty.   

\noindent
\textbf{Step (3)}: This test would then reject the null hypothesis that moment conditions (\ref{eq:SDF_r}) hold if $$T\widehat{\delta}^*_{g}> c^*_{1-\alpha},$$ 
and the critical value is:  
\begin{align}
	c^*_{1-\alpha} = \sup_{\theta \in 	CS_{r, \alpha_1}   } c^*_{1-\alpha_2}(\theta),  
\end{align}
where $c^*_{1-\alpha}(\theta)$ is the $100(1-\alpha_1)\%$ percentile of the weighted sum of  $N$ $\chi^2(1)$ random variables with weights being the non-zero eigenvalues of $
S^{\frac{1}{2}}_T( {\theta})  \widehat{Q}_r^{-1}   S^{\frac{1}{2}'}_T( {\theta})$, 
$\alpha_1,$  $ \alpha_2$ are chosen such that $\alpha_1>0,$  $ \alpha_2>0$ and $\left(1-\alpha_1 \right)\left(1-\alpha_2 \right)=1-\alpha$ with $\alpha$ the overall significance level.

Our HJS specification test procedure combines a less powerful but robust statistic (AR) with a non-robust one (HJ) to incorporate the model identification strength in our testing procedure, and in later discussion we show that this test improves performance in size (compared with the HJ test) and power (compared with the $\mathcal{J}$ test).

Before we proceed to show the size and power performances of the HJS specification test (Theory \ref{theo:2} and Theory \ref{theo:power of HJS} ), we first discuss the properties of the robust confidence set $CS_{r,\alpha}$.  \cite{kleibergen2019Consumption} study a similar robust risk premia confidence set using the GRS-FAR statistic. They show that this kind of set can be unbounded in certain cases. Therefore, for practical reason, we restrict the parameter space $\Theta$ to be a compact set (Assumption \ref{ass}), of which the robust confidence set is a subset. By construction, when the model is strongly identified we would expect the confidence set to shrink to a point as sample size grows, and when the model is weakly identified or even unidentified the diameter of this set can be arbitrarily large.  

\begin{lem}\label{lem: }
	Suppose the assumptions in Corollary \ref{theo: charaterization of the delta 2} hold,  then 
	\begin{align*}
		\lim\inf_T			\mathbb{P}\left(\theta_G\in  CS_{r,\alpha} \right) \geq  1- \alpha
	\end{align*}
	\noindent
	\textit{Proof:} See Appendix \ref{Proof of Lemma reflem: }.	
\end{lem}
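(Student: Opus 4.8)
The plan is to read the claim as a coverage-of-the-truth statement for an Anderson--Rubin-type confidence set: $CS_{r,\alpha}$ covers $\theta_G$ with asymptotic probability at least $1-\alpha$ because $AR(\theta_G)$ uses only the sample moment condition at the fixed value $\theta_G$, whose limiting behaviour is governed by a central limit theorem regardless of the strength of the (proxy) factors. First I would pin down $\theta_G$ as the value at which the population pricing errors vanish, i.e. $\mathbb{E}[e_{g,t}(\theta_G)]=0$, equivalently $q_g\theta_G=\iota_N$ with $q_g=(c,\beta_g V_g)$ the population counterpart of $q_{G,T}$ and $V_g$ the covariance matrix of $g_t$. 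Using Assumption~\ref{ass} --- in particular $c=\iota_N\lambda_0+\beta\lambda_f$ with $\lambda_0\neq0$, $\beta$ of full column rank, and $d_g$ of full rank (all rank conditions understood for each fixed $T$ along the drifting sequence) --- one checks that $\iota_N\in\mathrm{col}(q_g)=\mathrm{col}(c,\beta_g)$, so $\theta_G$ exists, is unique since $q_g$ has full column rank, and (by compactness of the parameter space in Assumption~\ref{ass}) belongs to $\Theta$; explicitly $\theta_G=(1/\lambda_0,\,-(d_gV_g)^{-1}\lambda_f/\lambda_0)$. Since $q_g\theta_G=\iota_N$, one obtains the exact identity
\begin{align*}
 \sqrt{T}\,e_{g,T}(\theta_G)=\sqrt{T}\bigl(\iota_N-q_{G,T}\theta_G\bigr)=-\sqrt{T}\,(q_{G,T}-q_g)\,\theta_G ,
\end{align*}
which already absorbs the replacement of $\mu_g$ by $\bar g$ inside $G_t$.

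Next I would invoke the central limit theorem for the proxy-factor sample moments (Assumption~\ref{assum:CLT for the proxy factors} and Lemma~\ref{lem:B_g}): since $e_t$ is independent across $t$ and $g_t,v_t,u_t$ are stationary with finite fourth moments, $e_{g,t}(\theta_G)$ forms a mean-zero array whose partial sums satisfy a CLT, so that with $S$ the long-run variance of $\sqrt{T}\,e_{g,T}(\theta_G)$,
\begin{align*}
 S^{-1/2}\sqrt{T}\,e_{g,T}(\theta_G)\ \to_d\ \mathcal{N}(0,I_N),\qquad S_T(\theta_G)\ \to_p\ S ,
\end{align*}
the latter being consistent for $S$ under the maintained assumptions (so the simple choice $S_T(\theta_G)=\tfrac1T\sum_t e_{g,t}(\theta_G)e_{g,t}(\theta_G)'$ is legitimate) and $S$ positive definite for fixed $N$ by Assumptions~\ref{assum: forth moments of ee} and~\ref{assum:factor loading strength}. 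The robustness point is that this holds irrespective of the factor-strength scaling in Assumption~\ref{assum:factor loading strength}: if some columns of $\beta_g$ drift to zero, the corresponding coordinates of $\theta_G$ and the matrix $S^{1/2}$ inherit the same rate, so $S_T(\theta_G)^{-1/2}\sqrt{T}\,e_{g,T}(\theta_G)$ still has a standard normal limit. By the continuous mapping theorem,
\begin{align*}
 AR(\theta_G)=\bigl(\sqrt{T}\,e_{g,T}(\theta_G)\bigr)'S_T(\theta_G)^{-1}\bigl(\sqrt{T}\,e_{g,T}(\theta_G)\bigr)\ \to_d\ \chi^2(N).
\end{align*}

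Finally, because $c_{1-\alpha}$ is the $(1-\alpha)$ quantile of $\chi^2(N)$ and hence a continuity point of its distribution function, the Portmanteau lemma gives $\mathbb{P}\bigl(AR(\theta_G)\le c_{1-\alpha}\bigr)\to 1-\alpha$; since $\theta_G\in\Theta$, the event $\{AR(\theta_G)\le c_{1-\alpha}\}$ equals $\{\theta_G\in CS_{r,\alpha}\}$, whence $\liminf_T\mathbb{P}(\theta_G\in CS_{r,\alpha})\ge 1-\alpha$. (If along the weak sequence one can only show that the limit law of $AR(\theta_G)$ is stochastically dominated by $\chi^2(N)$, the conclusion in the $\ge$ form still follows.)

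The step I expect to be the main obstacle is the central limit theorem for the normalized pricing error under weak (proxy) factors: one must verify that $S_T(\theta_G)^{-1/2}\sqrt{T}\,e_{g,T}(\theta_G)$ remains asymptotically standard normal (or at worst has a limit dominated by $\mathcal{N}(0,I_N)$) when $\beta_{g,2}$ --- and therefore parts of $\theta_G$ and of the weighting matrix $S$ --- drift with the sample size, i.e. that the $\sqrt{T}$-scaled pricing error and its estimated variance rescale compatibly; this is exactly the feature that makes the AR set identification-robust. A secondary technical point is ensuring that $\theta_G$ does not escape the compact set $\Theta$ along the weak sequence, which is where the standing assumption that $\Theta$ contains the pseudo-true value enters. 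The remaining ingredients --- existence and uniqueness of $\theta_G$, consistency of $S_T(\theta_G)$, and the Portmanteau step --- are routine given Assumptions~\ref{ass}, \ref{assum: forth moments of ee}, \ref{assum:factor loading strength} and~\ref{assum:CLT for the proxy factors}.
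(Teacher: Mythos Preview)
Your proposal is correct and follows essentially the same route as the paper: the paper's proof is a single line invoking Assumption~\ref{assum:2.2.2} (imposed as part of the hypotheses of Corollary~\ref{theo: charaterization of the delta 2}) to conclude $AR(\theta_G)\to_d\chi^2(N)$, from which the coverage bound is immediate. Your argument expands this into its constituent pieces --- existence of $\theta_G$, the CLT for $\sqrt{T}\,e_{g,T}(\theta_G)$, consistency of $S_T(\theta_G)$, and the Portmanteau step --- and your flagged ``main obstacle'' (compatibility of the drifting $\theta_G$ and $S$ under weak factors) is precisely what the paper packages into the high-level Assumption~\ref{assum:2.2.2} rather than verifying from primitives.
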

Lemma \ref{lem: } implies that the confidence set covers the true value with the requested probability asymptotically even   in the presence of weak (proxy) factors,      
which is essential for the correct size performance of the HJS test. This result holds under more general cases, for example it holds even when the model is not identified, and  this correct coverage probability of the confidence set directly results from the correct size of the identification robust AR test statistic.

\begin{theo} \label{theo:2}
	Suppose the assumptions in Lemma \ref{lem: } hold, 
	\begin{align*}
		\lim\sup_T			\mathbb{P}\left(T\widehat{\delta}^*_g \geq  c_{1-\alpha}^* \right) \leq \alpha
	\end{align*}  	\noindent
	\textit{Proof:} See Appendix \ref{proof of theorem theo:2}.
	%	\begin{align}
	%	\mathbb{P}_{H_0}\left(T\widehat{\delta}^* < c_{1-\alpha}^* \right) & \leq_a  	\mathbb{P}_{H_0}\left( \delta(\theta_0) \leq  c_{1-\alpha}^*  | \theta_0 \in CS_{1-\alpha_1} \right)	\mathbb{P}_{H_0}\left( \theta_0 \in CS_{1-\alpha_1} \right)  \nonumber  \\ & \leq_a (1-\alpha_2)(1-\alpha_1) = 1-\alpha
	%	\end{align}
\end{theo}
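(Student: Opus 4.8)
\noindent The plan is to run a Bonferroni-type argument built on the $\inf$/$\sup$ structure of the HJS statistic and its critical value, combined with the coverage guarantee of Lemma~\ref{lem: } and the standard limit theory for the HJ quadratic form evaluated at the true parameter. Write $\theta_G$ for the value of the re-scaled risk premia at which the population pricing error vanishes under the null, so that $e_{g,T}(\theta_G)=\iota_N-q_{G,T}\theta_G=(q_G-q_{G,T})\theta_G$ and $\theta_G$ lies in the compact parameter space of Assumption~\ref{ass}; recall $\delta_{g,T}(\theta)=e_{g,T}(\theta)'\widehat{Q}_r^{-1}e_{g,T}(\theta)$. The first, elementary step is that on the event $\{\theta_G\in CS_{r,\alpha_1}\}$, because both the $\inf$ defining $\widehat{\delta}^*_g$ and the $\sup$ defining $c^*_{1-\alpha}$ run over a set that contains $\theta_G$, one has
\begin{align*}
	T\widehat{\delta}^*_g \le T\delta_{g,T}(\theta_G), \qquad c^*_{1-\alpha} \ge c^*_{1-\alpha_2}(\theta_G),
\end{align*}
so a rejection forces $T\delta_{g,T}(\theta_G)>c^*_{1-\alpha_2}(\theta_G)$. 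Hence
\begin{align*}
	\{T\widehat{\delta}^*_g\ge c^*_{1-\alpha}\} \;\subseteq\; \{\theta_G\notin CS_{r,\alpha_1}\}\;\cup\;\{T\delta_{g,T}(\theta_G)>c^*_{1-\alpha_2}(\theta_G)\}.
\end{align*}

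Next I would control the two events on the right separately. For the first, Lemma~\ref{lem: } applied at level $\alpha_1$ gives $\limsup_T\mathbb{P}(\theta_G\notin CS_{r,\alpha_1})\le\alpha_1$; this is the step that carries the weak-factor robustness, since it rests on the identification-robust $AR$ statistic and not on any estimate of $\beta_g$. For the second, I would show $\mathbb{P}\big(T\delta_{g,T}(\theta_G)>c^*_{1-\alpha_2}(\theta_G)\big)\rightarrow\alpha_2$. Here the argument is clean precisely because at the true value $e_{g,T}(\theta_G)$ is a scaled sample average of mean-zero terms, so no estimated loadings enter and weak factors are immaterial to its limit: under the maintained assumptions a central limit theorem gives $\sqrt{T}\,e_{g,T}(\theta_G)\rightarrow_d N(0,S)$ with $S$ the long-run variance of the pricing errors, while $\widehat{Q}_r\rightarrow_p Q_r$ and $S_T(\theta_G)\rightarrow_p S$. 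By the continuous mapping theorem $T\delta_{g,T}(\theta_G)\rightarrow_d \zeta'S^{\frac{1}{2}}Q_r^{-1}S^{\frac{1}{2}'}\zeta$ with $\zeta\sim N(0,I_N)$, i.e.\ a sum of $N$ independent $\chi^2(1)$'s weighted by the strictly positive eigenvalues of $S^{\frac{1}{2}}Q_r^{-1}S^{\frac{1}{2}'}$; those eigenvalues are the in-probability limits of the eigenvalues of $S_T^{\frac{1}{2}}(\theta_G)\widehat{Q}_r^{-1}S_T^{\frac{1}{2}'}(\theta_G)$ that define $c^*_{1-\alpha_2}(\theta_G)$. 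Since the limiting law has a continuous, strictly increasing distribution function on $(0,\infty)$, the plug-in quantile $c^*_{1-\alpha_2}(\theta_G)$ converges in probability to the true $(1-\alpha_2)$-quantile $\kappa_{1-\alpha_2}$, and together with the joint convergence of $\big(T\delta_{g,T}(\theta_G),c^*_{1-\alpha_2}(\theta_G)\big)$ this delivers the claimed $\alpha_2$.

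A crude union bound over the two events already yields $\limsup_T\mathbb{P}(T\widehat{\delta}^*_g\ge c^*_{1-\alpha})\le\alpha_1+\alpha_2$; to reach the stated $\alpha=1-(1-\alpha_1)(1-\alpha_2)$ I would exploit the positive dependence of the two asymptotic regions. In the limit $\{\theta_G\notin CS_{r,\alpha_1}\}$ and $\{T\delta_{g,T}(\theta_G)>c^*_{1-\alpha_2}(\theta_G)\}$ become the complements of the two origin-symmetric ellipsoids $\{\bar Z'S^{-1}\bar Z\le c_{1-\alpha_1}\}$ and $\{\bar Z'Q_r^{-1}\bar Z\le\kappa_{1-\alpha_2}\}$ in the single Gaussian vector $\bar Z\sim N(0,S)$, the weak limit of $\sqrt{T}\,e_{g,T}(\theta_G)$. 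These are two origin-symmetric convex sets, so by the Gaussian correlation inequality their intersection has probability at least the product $(1-\alpha_1)(1-\alpha_2)$ of their probabilities, hence the union of their complements has probability at most $\alpha$; passing to the limit via the joint weak convergence above and the portmanteau theorem (the boundaries of the two ellipsoids are null sets for the limit law) gives $\limsup_T\mathbb{P}(T\widehat{\delta}^*_g\ge c^*_{1-\alpha})\le\alpha$, as desired. The two places I expect to spend the most effort are (i) the plug-in consistency of the data-dependent critical value $c^*_{1-\alpha_2}(\theta_G)$ --- that is, the joint convergence of the statistic and of its critical value, which hinges on continuity of the weighted-$\chi^2$ quantile in its weights together with $S_T(\theta_G)\rightarrow_p S$ and $\widehat{Q}_r\rightarrow_p Q_r$; and (ii) the sharpening from $\alpha_1+\alpha_2$ down to $\alpha$, which requires recognizing both asymptotic rejection regions as centered ellipsoids and invoking the Gaussian correlation inequality.
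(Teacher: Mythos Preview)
Your argument follows the same skeleton as the paper's: both use the $\inf/\sup$ structure of the HJS statistic to reduce, on the event $\{\theta_G\in CS_{r,\alpha_1}\}$, to the inequality $T\widehat{\delta}^*_g\le T\delta_{g,T}(\theta_G)$ and $c^*_{1-\alpha}\ge c^*_{1-\alpha_2}(\theta_G)$, then control $\{\theta_G\notin CS_{r,\alpha_1}\}$ via Lemma~\ref{lem: } and $\{T\delta_{g,T}(\theta_G)>c^*_{1-\alpha_2}(\theta_G)\}$ via the weighted-$\chi^2$ limit of the quadratic form at the true $\theta_G$.

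The one substantive difference is in how the two pieces are combined. The paper's appendix proof is extremely terse: after recording the same two observations it simply writes that Lemma~\ref{lem: } and $(1-\alpha_1)(1-\alpha_2)=1-\alpha$ ``lead to conclusion,'' without saying how the product $(1-\alpha_1)(1-\alpha_2)$---rather than the crude union bound $\alpha_1+\alpha_2$---arises. You correctly flag that the union bound alone only delivers $\alpha_1+\alpha_2>\alpha$, and you close the gap by observing that both limiting acceptance regions are origin-symmetric ellipsoids $\{\bar Z'S^{-1}\bar Z\le c_{1-\alpha_1}\}$ and $\{\bar Z'Q_r^{-1}\bar Z\le\kappa_{1-\alpha_2}\}$ in the same centered Gaussian vector $\bar Z$, so the Gaussian correlation inequality yields $\mathbb{P}(K_1\cap K_2)\ge(1-\alpha_1)(1-\alpha_2)$. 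This is a valid and genuinely sharper justification than what the paper supplies; your treatment of the plug-in critical value (continuity of the weighted-$\chi^2$ quantile in the weights, joint convergence via Slutsky) is also more explicit than the paper's. In short, same route, but you have filled in a step the paper leaves as an assertion.
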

Theorem \ref{theo:2} shows that $c_{1-\alpha}^*$ provides a upper bound for the HJS statistic, which is also a upper bound for the HJ statistic as the HJ statistic is smaller than the HJS statistic by construction, and that the HJS specification test is size correct in the presence of weak (proxy) factors. The proof of it implies that the size property of the HJS specification test is a direct result of Lemma \ref{lem: }, and given that the lemma holds for more general conditions, we know that the HJS specification test can be extended to more general cases as well. Theorem \ref{theo:2} also implies that the HJS specification test is conservative, which is understandable as we use the infimum to construct the HJS statistic instead of the supremum. However given the diameter of the robust confidence set can be arbitrarily large, using the supremum can lead to size distortion (see Example \ref{example}).

Even though it is conservative, the HJS specification test has better power performance compared with another well-know specification test, the $\mathcal{J}$ specification test. The $\mathcal{J}$ specification test statistic is also constructed based on the AR statistic such that  
\begin{align*}
	\mathcal{J} = \inf_{\theta} AR(\theta) 
\end{align*}  
\cite{gospodinov2017spurious} show that the $\mathcal{J}$ specification test is size correct in the presence of spurious/useless factors, which means $q_G$ is of reduced rank and the model is not identified, but it has a complete power loss in such cases. We extend their results to weakly identified models. Theory \ref{theo:power of HJS} shows that in both unidentified and weakly identified models, the $\mathcal{J}$ specification test suffers from power loss, while our HJS test still maintains proper power performance.      
\begin{theo}\label{theo:power of HJS} 
	Suppose Assumptions \ref{assum:CLT for the proxy factors}, \ref{assum: factor structure in r}, \ref{assum: forth moments of ee} hold, but instead of the correct proxy factors $g_t$, proxy factors $\widetilde{g}_t$ are used such that $\widetilde{g}_t$ is a $\widetilde{K}\times 1$ vector, $\norm{\iota_N -q_{\widetilde{G}} \theta_{\widetilde{G}}}>a > 0, \forall \theta_{\widetilde{G}} \in \Theta $ with $q_{\widetilde{G}} =\mathbb{E}(r_t\widetilde{G}_t'), \widetilde{G}_t=(1,\widetilde{g}_t)$ and the model is misspecified. In addition, assume that the $\psi_{B_{{g}}}$, when we replace ${G}_t$ with $\widetilde{G}_t$, in Lemma \ref{lem:B_g} satisfies that $\psi_{B_{{g}}}\sim N(0, Q_{\widetilde{G}} \otimes \Sigma)$ with $Q_{\widetilde{G}} = \mathbb{E}(\widetilde{G}_t\widetilde{G}_t')$ and $\Sigma$ the covariance matrix of $\widetilde{u}_{g,t}$. Let $H=\left(\iota_N, q_{\widetilde{G}} \right)$.   
	\begin{itemize}
		\item[(i)] (\textit{\cite{gospodinov2017spurious}},  Theorem 2, unidentified model under misspecification) Suppose $H$ has a column rank $K+1-k$ for an integer $k\geq 1$, then we have 
		\begin{align*}
			\mathcal{J} 	\preceq_d w_{k,i} 
		\end{align*}   
		where $w_{k,i}$ is the smallest eigenvalue of $\mathcal{W}_k\sim \mathcal{W}_k\left(N-K-1+k,I_k \right)$ and $\mathcal{W}_k\left(N-K-1+k,I_k \right)$ denotes the Wishart distribution with $N-K-1+k$ degrees of freedom and a scaling matrix $I_k$. Furthermore,
		\begin{align*}
			&	\lim\sup_T			\mathbb{P}\left( \mathcal{J} \geq  c_{\chi_{N-K}^2, 1-\alpha} \right) \leq \alpha, \\
			& 	\lim\inf_T			\mathbb{P}\left(T\widehat{\delta}^*_g \geq  c_{1-\alpha}^* \right) =1,
		\end{align*} 
		with $c_{\chi_{N-K}^2, 1-\alpha}$ the $1-\alpha$ quantile of $\chi_{N-K}^2$. 
		\item[(ii)] (\textit{weakly identified model under misspecification}) Suppose $(HQ_{x})'*(HQ_{x})$ with $Q_x=\textit{diag}(I_{K+1-k},\sqrt{T}I_{k})$ converges to a positive definite matrix, then we have 
		\begin{align*}
			\mathcal{J} \rightarrow_d w_{k,ii} 
		\end{align*}   
		where $w_{k,ii}$ is the smallest eigenvalue of     
		$\mathcal{W}_k\sim \mathcal{W}_k\left(\mu, N-K-1+k,I_k \right)$ and $\mathcal{W}_k\left(\mu,N-K-1+k,I_k \right)$ denotes the non-central Wishart distribution with $N-K-1+k$ degrees of freedom and a scaling matrix $I_k$, a location parameter $\mu$ ($\mu$ is specified in the proof). Furthermore,
		\begin{align*}
			&	\lim\inf_T			\mathbb{P}\left( \mathcal{J} \geq  c_{\chi_{N-K}^2, 1-\alpha} \right) <1, \\
			& 	\lim\inf_T			\mathbb{P}\left(T\widehat{\delta}^*_g \geq  c_{1-\alpha}^* \right) =1,
		\end{align*} 
		with $c_{\chi_{N-K}^2, 1-\alpha}$ the $1-\alpha$ quantile of $\chi_{N-K}^2$. 
	\end{itemize}

	\noindent
	\textit{Proof:} See Appendix \ref{Proof of Theorem theo:power of HJS}.
\end{theo}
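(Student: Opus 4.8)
The plan is to handle the four assertions separately: the limit behaviour of $\mathcal J$ and the behaviour of $T\widehat\delta^*_g$, in each of the two scenarios. The first display of (i) is Theorem 2 of \cite{gospodinov2017spurious} applied with $H$ rank-deficient, so I would invoke it directly. What then remains is (a) the analogous limit $\mathcal J\to_d w_{k,ii}$ under weak identification, (b) the two consequences for $\mathbb P(\mathcal J\ge c_{\chi^2_{N-K},1-\alpha})$, and (c) consistency of the HJS test, $\liminf_T\mathbb P(T\widehat\delta^*_g\ge c^*_{1-\alpha})=1$, which is identical in (i) and (ii).

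For (a) I would adapt the argument behind Theorem 2 of \cite{gospodinov2017spurious} with a drifting reparametrization. By Lemma \ref{lem:B_g} (with $G_t$ replaced by $\widetilde G_t$), $q_{\widetilde G,T}$ equals its population value plus a $T^{-1/2}$-scaled, asymptotically $N(0,Q_{\widetilde G}\otimes\Sigma)$ term. Using the scaling condition on $HQ_x$, pick an invertible rotation $R$ of the SDF-coefficient space isolating the $k$ weak directions of $H=(\iota_N,q_{\widetilde G})$, write $\theta=R(\theta_{(1)}',\theta_{(2)}')'$ with $\theta_{(2)}\in\mathbb R^{k}$, and substitute $\theta_{(2)}=\vartheta_{(2)}/\sqrt T$. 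Along this path $(\sqrt T\,e_T(\theta),\,S_T(\theta))$, viewed as a function of $(\theta_{(1)},\vartheta_{(2)})$, converges jointly and uniformly on compacta to a Gaussian process whose mean in the weak block is a fixed full-column-rank linear image of $\vartheta_{(2)}$ (the limit of the rescaled weak columns of $H$), while $S_T(\theta)$ grows like $T\norm{\theta_{(2)}}^2$ as $\norm{\theta_{(2)}}\to\infty$ --- precisely the mechanism that keeps $\mathcal J$ bounded. Concentrating out $\theta_{(1)}$ and then $\vartheta_{(2)}$, both weighted least-squares projections with a limiting positive-definite metric, turns $\mathcal J$ in the limit into the smallest eigenvalue of a $k\times k$ noncentral Gaussian quadratic form, i.e.\ $w_{k,ii}$, with the mean term supplying the noncentrality $\mu$. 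I expect this last step to be the main obstacle: the profiling runs over the \emph{non-compact} set of $\vartheta_{(2)}$, so one needs a coercivity/tightness argument (from the $T\norm{\theta_{(2)}}^2$ growth of $S_T$) to interchange the infimum with the weak limit, together with some care to pin down $\mu$.

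Given the two limits, (b) is short. In (i), a classical Wishart fact --- the reciprocal of the first diagonal entry of an inverse $\mathcal W_k(N-K-1+k,I_k)$ is $\chi^2_{N-K}$ and dominates $\lambda_{\min}$ from above --- gives $w_{k,i}\preceq_d\chi^2_{N-K}$, hence $\mathbb P(w_{k,i}\ge c_{\chi^2_{N-K},1-\alpha})\le\alpha$; with $\mathcal J\preceq_d w_{k,i}$ and continuity of the limit law at the critical value, $\limsup_T\mathbb P(\mathcal J\ge c_{\chi^2_{N-K},1-\alpha})\le\alpha$. In (ii), $w_{k,ii}$ is a continuous random variable supported on $(0,\infty)$, so $\mathbb P(w_{k,ii}<c_{\chi^2_{N-K},1-\alpha})>0$, and $\mathcal J\to_d w_{k,ii}$ gives $\liminf_T\mathbb P(\mathcal J\ge c_{\chi^2_{N-K},1-\alpha})=\mathbb P(w_{k,ii}\ge c_{\chi^2_{N-K},1-\alpha})<1$; neither version of the $\mathcal J$-test is consistent against the alternative.

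For (c), note that under the maintained misspecification and compactness of $\Theta$ we have $\inf_{\theta\in\Theta}\norm{\iota_N-q_{\widetilde G,T}\theta}\ge a-o_p(1)$ (in (ii) the weak columns of $q_{\widetilde G,T}$ are $O_p(1/\sqrt T)$ uniformly on the compact $\Theta$, so they do not affect this lower bound). Since $\widehat Q_r^{-1}\to_p Q_r^{-1}\succ0$ and $CS_{r,\alpha_1}\subseteq\Theta$, it follows that $\widehat\delta^*_g=\inf_{\theta\in CS_{r,\alpha_1}}\delta_{g,T}(\theta)\ge\tfrac12\lambda_{\min}(Q_r^{-1})a^2>0$ with probability tending to one (and $\widehat\delta^*_g=\infty$ when $CS_{r,\alpha_1}=\emptyset$, which only helps), so $T\widehat\delta^*_g\to_p\infty$. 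At the same time $c^*_{1-\alpha}=\sup_{\theta\in CS_{r,\alpha_1}}c^*_{1-\alpha_2}(\theta)\le\sup_{\theta\in\Theta}c^*_{1-\alpha_2}(\theta)$, where $c^*_{1-\alpha_2}(\theta)$ is a $(1-\alpha_2)$-quantile of a sum of $N$ $\chi^2(1)$'s whose weights are the eigenvalues of $S_T^{\frac12}(\theta)\widehat Q_r^{-1}S_T^{\frac12\prime}(\theta)$ (equivalently, of $\widehat Q_r^{-1}S_T(\theta)$); these are bounded above by $\norm{\widehat Q_r^{-1}}\cdot\sup_{\theta\in\Theta}\norm{S_T(\theta)}$, which is $O_p(1)$ by a law of large numbers applied to the coefficients of the quadratic form $\theta\mapsto S_T(\theta)$ (finite fourth moments, Assumption \ref{assum: forth moments of ee}). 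Hence $\sup_{\theta\in\Theta}c^*_{1-\alpha_2}(\theta)=O_p(1)$, and therefore $\mathbb P(T\widehat\delta^*_g\ge c^*_{1-\alpha})\to1$, i.e.\ $\liminf_T\mathbb P(T\widehat\delta^*_g\ge c^*_{1-\alpha})=1$ in both cases.
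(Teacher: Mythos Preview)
Your proposal is correct and follows essentially the same route as the paper's proof. For the $\mathcal J$ statistic, the paper carries out explicitly the rotation you describe: it passes to $Z=(P_1'\Sigma P_1)^{-1/2}P_1'\widehat B_g L$ (with $P_1$ the projector off $\iota_N$ and $L$ the Cholesky factor of $Q_{\widetilde G}$), splits $Z$ via an orthogonal matrix $(C_1,C_2)$ into weak and strong blocks $\widetilde Z_1,\widetilde Z_2$, and identifies the limit of $\mathcal J$ with the smallest eigenvalue of $\widetilde W=T\widetilde Z_1'M_{\widetilde Z_2}\widetilde Z_1$, which is central Wishart in case (i) (by direct citation of \cite{gospodinov2017spurious}) and noncentral Wishart with $\mu=\widetilde\mu'M_{\widetilde M_2}\widetilde\mu$ in case (ii); this is exactly your profiling-after-rotation argument made concrete, and your anticipated tightness issue is precisely what the explicit $\widetilde W$ representation sidesteps.

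For the HJS consistency, the paper is considerably terser than you are: it simply observes that $\norm{\iota_N-q_{\widetilde G}\theta}>a$ uniformly on $\Theta$ forces $T\widehat\delta^*_g\to\infty$, without discussing the critical value. Your additional step---bounding $c^*_{1-\alpha}\le\sup_{\theta\in\Theta}c^*_{1-\alpha_2}(\theta)$ via the eigenvalues of $\widehat Q_r^{-1}S_T(\theta)$ and a uniform $O_p(1)$ bound on $S_T(\theta)$ over the compact $\Theta$---is a genuine (and necessary) completion of the argument that the paper omits.
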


%Instead of evaluating at $\theta \in 	CS_{r, \alpha} $, we consider a smaller set $	CS_{r, \alpha}\setminus \Phi_{T,N}$. This allows us to incorporate external restrictions, such as non-zero, sign restrictions, into the testing process, and thus improves the test power. 

%\cite{dufour1997some} points out that when model is weakly identified any identification robust confidence set would be unbounded with non-zero probability. Therefore, for Theory \ref{theo:2}  to restrict $\theta_G$ in a bounded set $\theta_G$ which is also reasonable empirically. Two potential drawbacks of using $\widehat{\delta}^*$ are longer computation time given its construction involves optimization procedure and it does not serve as a selection tool even though $\widehat{\delta}^*$ provides us with a identification robust specification test (see figure \ref{fig:delota density 2}).  

%!!!!!!!!!!!!!!!!!!!!!!!!!!!!!!
%!!!!!!!!!!!!!!!!!!!!!!!!!!!!!!
%!!!!!!!!!!!!!!!!!!!!!!!!!!!!!!
%!!!!!!!!!!!!!!!!!!!!!!!!!!!!!!
%!!!!!!!!!!!!!!!!!!!!!!!!!!!!!!
%!!!!!!!!!!!!!!!!!!!!!!!!!!!!!!
%!!!!!!!!!!!!!!!!!!!!!!!!!!!!!!
%  \cite{gospodinov2017spurious} point out that AR statistic suffers from power loss when spurious factors are present, and misspecification is hard to detect.  

% PPWER ISSUE???
% Need to discuss the power property!!!
%!!!!!!!!!!!!!!!!!!!!!!!!!!!!!!
%!!!!!!!!!!!!!!!!!!!!!!!!!!!!!!
%!!!!!!!!!!!!!!!!!!!!!!!!!!!!!!
%!!!!!!!!!!!!!!!!!!!!!!!!!!!!!!

\subsection{Simulation exercises} \label{sec:simu HJS}
In this section, we conduct a simple simulation exercise with a single-factor model to evaluate the empirical rejection rates (the size and power performance) of our proposed HJS specification test. We calibrate the data generating process in our simulations to match the data set from \cite{kroencke2017asset}. We simulate the factor $f_t\sim ~i.i.d.~N(0,V_f)$, where we set $V_f$ to match the sample variance of the consumption growth factor. $r_t$ is generated with one factor $f_t$ via $r_t=\iota_N + \beta \lambda+ \beta_\perp d +\beta d_g f_t +u_t$, $u_t\sim ~i.i.d ~N(0,V_u)$, where we match $\lambda$ to the estimated risk premium, $\beta$ is the sample slope parameter between the assets returns and  consumption growth factor, and  $V_u$ is the sample covariance of the residuals resulting from regressing asset returns on a constant and the consumption growth factor. $\beta_\perp$ is a vector which is orthogonal to $\iota_N, \beta$ and $\norm{\sqrt{T}\beta_\perp}=1$. We set $T=100$.
We use $d_g$ to tune the identification strength of the factors in our simulation exercise where a larger $d_g$ means  a stronger factor, and $d$ to tune the model misspecification level where a lager $d$ means a larger deviation from the moment conditions (\ref{eq:SDF_r}) for our simulated data.

For the size performance comparison, we set $d=0$ and thus moment conditions (\ref{eq:SDF_r}) hold for our simulated data. Figure \ref{fig:size_curve_HJHJS} shows that the HJ specification test is highly size distorted and the distortion only drops down slightly when we increase the identification strength of the factor, while the HJS specification test has a better finite sample behavior and remains size correct.    
\begin{figure}[h!] 
	\centering	\includegraphics[scale=0.65]{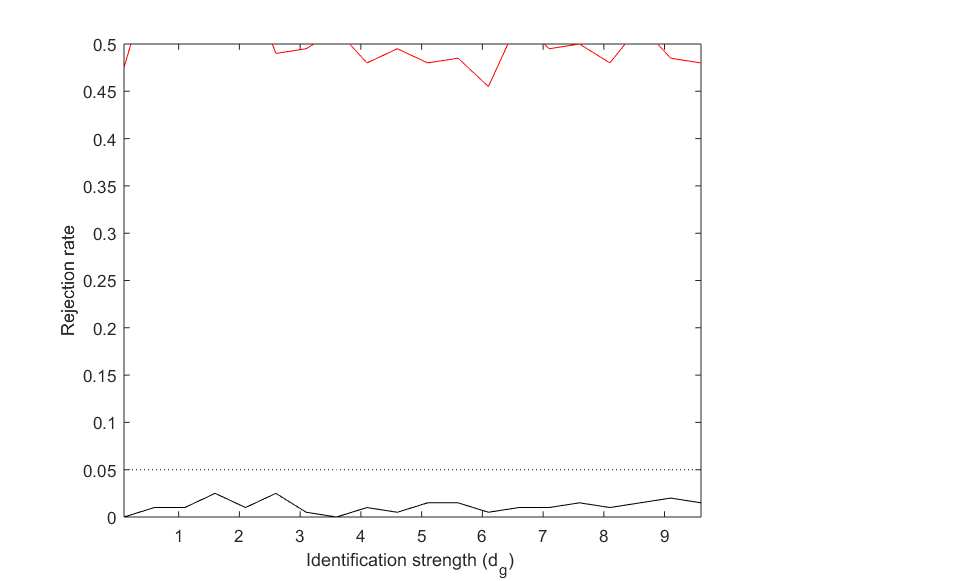} 	% \includegraphics[scale=0.3]{graph/size_HJSHJT1000} 
	\caption{Size against strength of the factor ($d_g$): rejection frequency of  the HJ specification test (red) and  rejection frequency of  the HJS specification test (black)}
	\label{fig:size_curve_HJHJS}
\end{figure} 

For the power performance comparison, we set $d_g=0$ which means $f_t$ only serves as a spurious factor. Figure \ref{fig:power_curve_HJHJS T=100} shows that the rejection frequency of the HJS specification test increases much faster compared with the one of the $\mathcal{J}$ specification test when the level of model misspecification ($d$) increases. The rejection frequency of the $\mathcal{J}$ specification test remains relatively small even when the HJS specification test rejection frequency is close to one, and this implies the HJS specification test has better power performance.

These observations support our theory and show that the HJS specification test has good performance in both size and power.  
\begin{figure}[h!] 
	\centering	\includegraphics[scale=0.5]{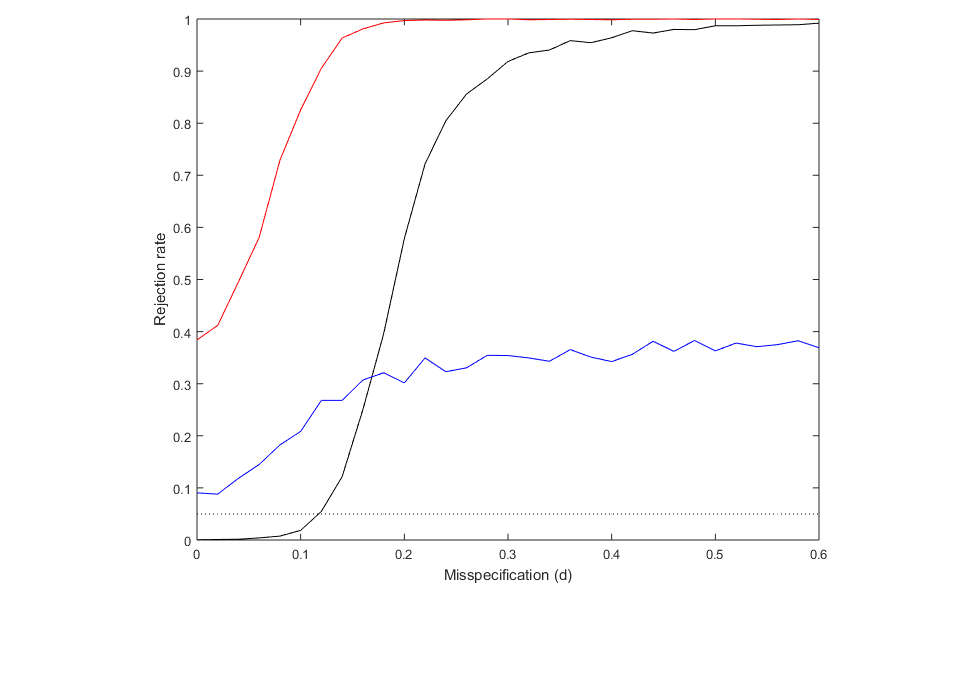} 	%\includegraphics[scale=0.3]{graph/power_HJS_beta0_weak_T100} 
	\caption{Power against the level of the model misspecificaion ($d$) with single spurious factor ($d_g=0$): rejection frequency of  the HJ specification test (red), rejection frequency of  the HJS specification test (black) and  rejection frequency of  the $\mathcal{J}$ specification test (blue)}
	\label{fig:power_curve_HJHJS T=100}
\end{figure}

\section{Specification testing with large N: HJN} \label{sec:HJN}
In the previous section, we construct the HJS statistic using a robust confidence set of $\theta_G$ since it is only weakly identified with a limited number of asset returns. The HJS specification testing procedure involves optimization steps, which is commonly done  in practice  through a grid search procedure. In this section, we provide another novel valid specification test statistic, the HJN statistic, which does not involve any time consuming optimization procedure. The construction of the HJN statistic uses a consistent $\theta_G$ estimator, and thus we first introduce our $\theta_G$ estimator and then the HJN statistic.

\subsection{Four-pass estimator} 
When we work within a double-asymptotic framework such that both the number of time periods and the number of asset returns grow, weak (proxy) factors do not necessarily lead to a weak identification problem (\cite{anatolyev2018factor}), which is similar to the case of many weak instruments that information about some parameters though limited,
aggregates slowly. Even though  $\widehat{\theta}_G$ is not consistent (Theory \ref{theo:asymptotic for theta}), another consistent estimator for $\theta_G$ can be constructed. With an extended number of asset returns, we can
estimate $\theta_G$ consistently by removing the missing factor structure via PCA and using an IV-type technique to correct for the remaining issues. The consistent estimator gives another way to construct a statistic for the HJ distance, based on which we propose a
novel specification test statistic, our HJN statistic. In the following, we first introduce our four-pass $\theta_G$ estimator with the extended number of asset returns, and thereafter provide the motivation for this four-pass procedure.

We propose the following steps to estimate $\theta_G$ with $N$ base portfolios of gross returns $r_t$:  

\noindent
\textbf{Step (1)}: Estimate $\widehat{c}, \widehat{\beta}_g$ in the linear observed-(proxy)-factor model (\ref{1}) via OLS with $N$ base portfolios of returns.   

\noindent
\textbf{Step (2)}: Determine the omitted factor structure using the following two steps: 

\indent
\textbf{(2.1)}
Determine the number of factors, $K_{vz}$, in $\widehat{u}_{g,t}=r_t - \widehat{c}-\widehat{\beta}_g \bar{g}_t$ by  
\begin{align}
	\widehat{K}_{vz} = \arg \min_{0\leq j \leq K_{vz,\max}} \left( N^{-1}T^{-1} \lambda_j\left( \widehat{u}_{g}\widehat{u}_{g}' \right)  + j\phi(N,T)  \right) -1   \nonumber 
\end{align}
where $\lambda_j(A)$ is the $j$-th largest eigenvalue of a given matrix $A$, 
$\widehat{u}_{g}$ is $T\times N$ matrix stacked with the OLS residuals $\widehat{u}_{g,t}$, $K_{vz,\max}$ is an arbitrary upper bound for $K_{vz}$ and $\phi(N,T)$ is a penalty function with the properties $\phi(N,T)\rightarrow 0, \phi(N,T)/(N^{-\frac{1}{2}}+T^{-\frac{1}{2}}) \rightarrow \infty$ (e.g., in later simulation exercise and empirical application we simply choose $\phi(N,T)= N^{-\frac{1}{4}}+T^{-\frac{1}{4}}$); 

\indent
\textbf{(2.2)}: Estimate the $T\times N$  common component matrix ${cc}=   {x}  {b}'$ stacked with the common components $cc_{t}$, $\widetilde{cc} = \widehat{x} \widehat{b}$, such that $ \widehat{x} $ is equal to $\sqrt{T}$ times the eigenvector associated with the $\widehat{K}_{vz}$ largest eigenvalues of the matrix $\widehat{u}_g\widehat{u}_g' $, and $\widehat{b}= \widehat{x}' \widehat{u}_g/T$ corresponds with the OLS estimator regressing $\widehat{u}_g$ on $\widehat{x}$:
$$ \left( \widehat{x},  \widehat{b} \right) = \arg \min_{b_i, x_t  ~s.t.~\sum_{t=1}^{T} x_tx_t'  /T = I_{\widehat{K}_{vz}}  } \sum_{i,t}  \left( \widehat{u}_{g,it} -   b_{i}'  x_{t} \right)^2  $$

\noindent 
\textbf{Step (3)}: Split the sample into two non-overlapping subsamples along the time index  and remove the missing factor structure from the regressors  in the SDF of both subsamples: 
$$\widetilde{q}_{G,T}^{(i)}= q_{G,T}^{(i)}-  \widetilde{cc}_{G,T}^{(i)'} , i=1,2$$ where
\begin{align*}
	& q_{G,T}^{(1)} = \frac{1}{\lfloor\frac{T}{2} \rfloor} \sum_{t=1}^{\lfloor\frac{T}{2} \rfloor} r_t G_t';~ q_{G,T}^{(1)} = \frac{1}{T-\lfloor\frac{T}{2} \rfloor} \sum_{t=\lfloor\frac{T}{2} \rfloor+1}^{T} r_t G_t' \\
	& \widetilde{cc}_{G,T}^{(2)} =\frac{1}{\lfloor\frac{T}{2} \rfloor} \sum_{t=1}^{\lfloor\frac{T}{2} \rfloor} \widetilde{cc}_t G_t';~ \widetilde{cc}_{G,T}^{(2)} =\frac{1}{T-\lfloor\frac{T}{2} \rfloor} \sum_{t=\lfloor\frac{T}{2} \rfloor+1}^{T} \widetilde{cc}_t G_t'. 
\end{align*}

\noindent
\textbf{Step (4)}: We then use IV regression to derive two estimators, $\widetilde{\theta}_{G}^{(i)}, i=1,2$, where we use $\widetilde{q}_{G,T}^{(1)}$ as instrument for $\widetilde{q}_{G,T}^{(2)}$ and vice versa. Thereafter, our proposed four-pass estimator is derived by taking the average of both estimators:   
$$\widetilde{\theta}_G = \sum_{i=1}^2 \widetilde{\theta}_{G}^{(i)}/2, $$ 
with $
\widetilde{\theta}_{G}^{(1)} = \left(\widetilde{q}^{(1)' }_{G,T} P_{\widetilde{q}^{(2)}_{G,T}} \widetilde{q}^{(1)}_{G,T} \right)^{-1}  \widetilde{q}^{(1)'}_{G,T}   P_{ \widetilde{q}^{(2)}_{G,T}  } \iota_N 
$ and $
\widetilde{\theta}_{G}^{(2)} = \left(\widetilde{q}^{(2)' }_{G,T} P_{\widetilde{q}^{(1)}_{G,T}} \widetilde{q}^{(2)}_{G,T} \right)^{-1}  \widetilde{q}^{(2)'}_{G,T}   P_{ \widetilde{q}^{(1)}_{G,T}  } \iota_N$.\\

Our estimation approach for $\theta_G$ resolves the problems of the missing factor structure and the weak (proxy) factors simultaneously. We make use of the results from  \cite{bai2002determining}, \cite{bai2003inferential} and \cite{giglio2017inference} in step (2) to recover the common components in the error terms using principal component analysis, and we use the instrumental variable idea applied for the factor models, which is used in \cite{anatolyev2018factor}, in step (4) to solve potential endogeneity issues. Compared with the estimator proposed in \cite{anatolyev2018factor}, our proposed estimator relaxes the restrictions on the number of omitted factors and the restrictions on the rank of the loadings of all the factors present in the model. 

To illustrate why our proposed procedure is robust against weak (proxy) factors and a missing factor structure, we start by comparing it with the conventional $\theta_G$ estimator. To do so, we first rewrite equation (\ref{eq:SDF_r}) ($\iota_N = \mathbb{E}q_{G,T} \theta_G$) as
\begin{align*}
	\iota_N=	 {q}_{G,T}  {\theta}_{G}  - \epsilon_{q_G}{\theta}_{G},
\end{align*}
with $ \epsilon_{q_G} = \left({q}_{G,T} - \mathbb{E}{q}_{G,T} \right)$ which is correlated with $ q_{G,T}$. The term $\epsilon_{q_G}$ vanishes asymptotically, and so it is dominated by $q_{G,T}$ when all proxy factors are strong. The conventional estimator, which results from regressing $\iota_N$ on $ q_{G,T}$, is then valid in large samples, since $\epsilon_{q_G}$ becomes negligible. However, if some (proxy) factors are weak, some columns of $q_{G,t}$ are of the same order as $\epsilon_{q_G}$, then there would be a classic endogeneity problem if we simply regress $\iota_N$ on  ${q}_{G,T} $. 

To solve the endogeneity problem, a valid instrument can be constructed in our framework with a split-sample technique and this idea is also employed in \cite{anatolyev2018factor}. Given the independence of the $e_t$ from non-overlapping sub-samples,  $q_{G,T}^{(1)}$ can serve as an instrument for $q_{G,T}^{(2)}$ and vice versa  when there is no missing factor structure ($K_{vz}=0$) and this is the starting point of our proposed procedure. When there is a missing factor structure with factors that might be correlated across time,  $q_{G,T}^{(1)}$ is no longer a valid instrument for $q_{G,T}^{(2)}$. Therefore, we use $\widetilde{q}_{G,T}^{(i)}, i=1,2$ which results from removing the missing factor structure from $q_{G,T}^{(i)},i=1,2$. By doing so, $\widetilde{q}_{G,T}^{(1)}$ is asymptotically uncorrelated with  $\widetilde{\epsilon}_{q_G}^{(2)}$, and is a valid instrument.

As shown in Theorem \ref{theo:consistency of theta tilde}, our estimation procedure provides $\min \{  \sqrt{T}, \sqrt{N}   \}$-consistent results for $\theta_G$, of which a non-linear transformation leads to a consistent risk premia estimator (Corollary \ref{theo:consistency of theta tilde}).

%   
%   Under the beta representation of linear factor models, \cite{anatolyev2018factor} provide an identification robust estimator for risk premia in a large N and large T framework and they solve the omitted factor issue by proposing proxies for the risk exposures with respect to the omitted factors. They require that there are fewer omitted factors than proxy factors, and the risk exposures with respect to both omitted factors and proxy factors are not perfectly correlated. We consider a more flexible setting where we do not restrict the number of omitted factors and their risk exposures, and
%our procedure is applicable with and without a missing factor structure  as in step (1) we first determine the number of missing strong factors (\cite{bai2002determining}, \cite{giglio2017inference}) and then carry out the following steps.   
%
\begin{theo}\label{theo:consistency of theta tilde} 
	Suppose Assumptions \ref{assum: factor structure in r} - \ref{assum:factor loading strength 2}, \ref{ass: factor loadings additional requirement }  - \ref{assum: (New, products of factor and e)} hold,  and $N/T \rightarrow c$. 
	\begin{align*}
		\sqrt{NT}	Q^{-1}_{B_g,T}\left(\widetilde{\theta}_{G} - \theta_G\right) \rightarrow O_p(1),  
	\end{align*}	
	with $Q_{B_g,T} = \text{diag}(I_{1+K_{g,1}},\sqrt{T}I_{1+K_{g,1}})$.\footnote{With some additional regular assumptions, we can construct $\widehat{\Sigma}_{\theta_G}$ such that $	\widehat{\Sigma}_{\theta_G}^{1/2}\left(\widetilde{\theta}_{g} - \theta_G\right) \rightarrow_d N(0,I)$. See  Appendix \ref{Proof of Theorem theo:consistency of theta tilde}. } 	 \\

	\noindent
	Proof: See Appendix \ref{Proof of Theorem theo:consistency of theta tilde} .
\end{theo}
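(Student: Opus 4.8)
The plan is to treat each half-sample estimator $\widetilde\theta_G^{(i)}$ as a just-identified IV estimator and work from the population identity $\iota_N=q_G\theta_G$ with $q_G=B_gQ_G$, $Q_G=\mathbb{E}(G_tG_t')$. Since $\widetilde q_{G,T}^{(1)}$ and $\widetilde q_{G,T}^{(2)}$ are both $N\times(K+1)$, the 2SLS expression collapses to the just-identified form $\widetilde\theta_G^{(1)}=(\widetilde q_{G,T}^{(2)\prime}\widetilde q_{G,T}^{(1)})^{-1}\widetilde q_{G,T}^{(2)\prime}\iota_N$, so that $\widetilde\theta_G^{(1)}-\theta_G=-(\widetilde q_{G,T}^{(2)\prime}\widetilde q_{G,T}^{(1)})^{-1}\widetilde q_{G,T}^{(2)\prime}\xi^{(1)}\theta_G$ with $\xi^{(i)}:=\widetilde q_{G,T}^{(i)}-q_G$. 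Writing $r_t=B_gG_t+\widetilde u_{g,t}$, $\widetilde u_{g,t}=cc_t+e_t$ with $cc_t=\beta v_t+\gamma z_t$ the common component of the missing factors, one gets $q_{G,T}^{(i)}=B_g\widehat Q_G^{(i)}+cc_{G,T}^{(i)}+e_{G,T}^{(i)}$ and hence the decomposition $\xi^{(i)}=B_g(\widehat Q_G^{(i)}-Q_G)+\Delta_{cc}^{(i)}+e_{G,T}^{(i)}$, where $\Delta_{cc}^{(i)}=cc_{G,T}^{(i)}-\widetilde{cc}_{G,T}^{(i)}$ is the PCA error aggregated against $G_t$. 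The proof then has three ingredients: (I) control of the PCA error; (II) the ``first stage'' $\widetilde q_{G,T}^{(2)\prime}\widetilde q_{G,T}^{(1)}$, rescaled, converges to a positive definite matrix; (III) the ``instrument$\times$error'' term $\widetilde q_{G,T}^{(2)\prime}\xi^{(1)}\theta_G$ is $O_p(\sqrt{N/T})$ after rescaling.

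For (I) I would invoke the standard PCA theory of \cite{bai2002determining} and \cite{bai2003inferential} (as used in \cite{giglio2017inference}): under the penalty conditions on $\phi(N,T)$ and Assumptions \ref{assum:factor loading strength 2}, \ref{ass: factor loadings additional requirement }--\ref{assum: (New, products of factor and e)} one has $\widehat K_{vz}=K_{vz}$ w.p.a.\,1 and $\|\widehat{cc}_{it}-cc_{it}\|=O_p(\delta_{NT}^{-1})$ with $\delta_{NT}=\min\{\sqrt N,\sqrt T\}$; with $N/T\to c$ this bounds $\Delta_{cc}^{(i)}$ at the required order. The delicate extra point is that the principal components are extracted from the \emph{full}-sample residual matrix, so $\Delta_{cc}^{(1)}$ is not exactly independent of the subsample-$2$ noise; I would use the second-order expansion of the PCA estimator to show that the leading part of $\Delta_{cc}^{(i)}$ is, up to lower order, a function of subsample-$i$ quantities, so that $\mathbb{E}\,\Delta_{cc}^{(i)}$, the cross term $\mathbb{E}[\Delta_{cc}^{(2)\prime}\Delta_{cc}^{(1)}]$, and the cross-sample correlation $\mathbb{E}[e_{G,T}^{(2)\prime}\Delta_{cc}^{(1)}]$ are all $o_p$ of the $\sqrt{N/T}$ term that governs the rate.

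For (II), expand $\widetilde q_{G,T}^{(2)\prime}\widetilde q_{G,T}^{(1)}=q_G'q_G+q_G'\xi^{(1)}+\xi^{(2)\prime}q_G+\xi^{(2)\prime}\xi^{(1)}$ and conjugate by $Q_{B_g,T}$, dividing by $N$: the leading block is $N^{-1}Q_{B_g,T}Q_GB_g'B_gQ_GQ_{B_g,T}$, which converges to a positive definite matrix because $N^{-1}\eta_{B_g}'\eta_{B_g}$ does (Assumption \ref{assum:factor loading strength}) and $Q_G$ together with the $Q_{B_g,T}$-conjugation are bounded and invertible; the remaining cross terms have negligible means (zero-mean of $\widehat Q_G^{(i)}-Q_G$ and independence of $e_{G,T}^{(1)},e_{G,T}^{(2)}$ conditional on the $g$'s) and relatively $o_p(1)$ fluctuations, the weak columns — where $\beta_{g,2}$ and the idiosyncratic noise are both $O_p(\sqrt{N/T})$ — being controlled because their cross term is down by a relative factor $1/\sqrt N$. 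For (III), $\widetilde q_{G,T}^{(2)\prime}\xi^{(1)}\theta_G=q_G'\xi^{(1)}\theta_G+\xi^{(2)\prime}\xi^{(1)}\theta_G$; in the first term the $B_g(\widehat Q_G^{(1)}-Q_G)\theta_G$ and $e_{G,T}^{(1)}\theta_G$ pieces have mean zero, the $\Delta_{cc}$ piece is negligible by (I), and the leading stochastic part is the i.i.d./martingale sum $Q_GB_g'\,T_1^{-1}\sum_{t\le T_1}e_t(G_t'\theta_G)$, which a Lindeberg CLT (bounded fourth moments of $e_t$) makes $O_p(\sqrt{N/T})$ after rescaling; in the second term the whole point of the split is that the endogenous own-noise quadratic $e_{G,T}^{(1)\prime}e_{G,T}^{(1)}\theta_G$, whose expectation is of order $\mathrm{tr}(\Omega_e)/T=O(N/T)$ and which would block consistency of the non-split estimator along the weak columns, never appears — it is replaced by $e_{G,T}^{(2)\prime}e_{G,T}^{(1)}\theta_G$, which has zero mean and fluctuation of order $\sqrt N/T$, dominated by the first term.

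Combining, $Q_{B_g,T}^{-1}(\widetilde\theta_G^{(1)}-\theta_G)=-\bigl(N^{-1}Q_{B_g,T}\widetilde q_{G,T}^{(2)\prime}\widetilde q_{G,T}^{(1)}Q_{B_g,T}\bigr)^{-1}N^{-1}Q_{B_g,T}\widetilde q_{G,T}^{(2)\prime}\xi^{(1)}\theta_G=O_p(1)\cdot N^{-1}O_p(\sqrt{N/T})=O_p((NT)^{-1/2})$, i.e.\ $\sqrt{NT}\,Q_{B_g,T}^{-1}(\widetilde\theta_G^{(1)}-\theta_G)=O_p(1)$; the same holds for $\widetilde\theta_G^{(2)}$ by symmetry, and averaging preserves the bound, giving the claim. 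The footnote's asymptotic-normality refinement follows by identifying the asymptotic variance from the two subsample leading terms $Q_GB_g'\,T_i^{-1}\sum_{t}e_tG_t'\theta_G$ and verifying the remainders are $o_p$. The main obstacle is the PCA step, specifically showing that the full-sample construction of $\widetilde{cc}$ does not create cross-sample correlation or bias of order comparable to $\sqrt{N/T}$ — this is what makes the split-sample IV ``instrument uncorrelated with error'' argument go through only approximately and must be repaired via the expansion of the principal-component estimator; a secondary but routine difficulty is keeping the weak columns tracked through the $Q_{B_g,T}$ rescaling in (II)--(III).
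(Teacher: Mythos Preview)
Your proposal is correct and follows essentially the same architecture as the paper's proof: decompose $\widetilde q_{G,T}^{(i)}$ into signal $(c,\beta_g)\widehat Q_G^{(i)}$, PCA error $(\widetilde{cc}-cc)'\bar G^{(i)}/\tau$, and idiosyncratic $e^{(i)\prime}\bar G^{(i)}/\tau$; use the Bai--Ng common-component expansion to control the PCA piece, the split-sample independence to kill the cross-noise quadratic, and the $Q_{B_g,T}$ rescaling to keep the weak columns on equal footing in the first-stage matrix. Two minor deltas worth noting: (i) your just-identified collapse $\widetilde\theta_G^{(1)}=(\widetilde q_{G,T}^{(2)\prime}\widetilde q_{G,T}^{(1)})^{-1}\widetilde q_{G,T}^{(2)\prime}\iota_N$ is valid and spares you the paper's separate analysis of $\widetilde q^{(2)\prime}\widetilde q^{(2)}$; (ii) the paper recenters at a perturbed $\widetilde\theta=\theta_G+O_p(T^{-1/2})$ to absorb the sample-mean centering $\bar g-\mu_g$ and $\widehat V_g^{-1}V_g-I_K$, which you fold into the $B_g(\widehat Q_G^{(i)}-Q_G)$ term---either bookkeeping gives the same rate.
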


%\begin{align}
%& q_{G,T}= \frac{1}{T} \sum_{t=1}^T r_t G_t = \left( c+ \beta_g \left(\bar{g}-\mu_g \right) +\beta \bar{v} + \gamma  \bar{z} + \bar{e},  \beta_g \widehat{V}_g +  \beta \sum_{t=1}^Tv_t\bar{g}_t'/T+ \gamma \sum_{t=1}^Tz_t\bar{g}_t'/T+  \sum_{t=1}^Te_t\bar{g}_t'/T     \right)   \nonumber \\
%& q_{G,T} - q =\left(  \beta_g \left(\bar{g}-\mu_g \right) +\beta \bar{v} + \gamma  \bar{z} + \bar{e},  \beta_g \left(\widehat{V}_g-V_g\right)  +  \beta \sum_{t=1}^Tv_t\bar{g}_t'/T+ \gamma \sum_{t=1}^Tz_t\bar{g}_t'/T+  \sum_{t=1}^Te_t\bar{g}_t'/T     \right)  \nonumber 
%\end{align}

\begin{coro} Suppose the assumptions in Theorem \ref{theo:consistency of theta tilde} hold, then
	\begin{align*}
		\sqrt{NT}	Q^{-1}_x\left(\widetilde{\lambda}_{g} - \lambda_g\right) \rightarrow O_p(1),  
	\end{align*}
	with $\widetilde{\lambda}_{g}=-V_g \text{diag}(0_{K\times 1},I_{K})\widetilde{\theta }_{G}/\widetilde{\theta }_{G,1}$.\\ 
	
	\noindent
	Proof: This is a direct result of Theorem \ref{theo:consistency of theta tilde}.  
\end{coro}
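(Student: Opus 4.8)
The plan is to obtain the result as a straightforward delta-method consequence of Theorem~\ref{theo:consistency of theta tilde}. First I would record that the risk premia are a smooth function of $\theta_G$ near the truth. Recall that $\theta_G=\bigl(1/\lambda_0,\,-V_g^{-1}\lambda_g/\lambda_0\bigr)'$, so writing $\theta=(\theta_1,\theta_{(2)}')'$ with $\theta_1\in\mathbb{R}$, $\theta_{(2)}\in\mathbb{R}^{K}$ (and $\text{diag}(0_{K\times 1},I_K)\theta=\theta_{(2)}$), we have $\theta_{G,1}=1/\lambda_0$ and $\lambda_g=h(\theta_G)$ for the map $h(\theta):=-V_g\,\theta_{(2)}/\theta_1$, which is precisely the transformation defining $\widetilde\lambda_g$, i.e. $\widetilde\lambda_g=h(\widetilde\theta_G)$. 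Since $\lambda_0\neq 0$ by Assumption~\ref{ass}, $\theta_{G,1}$ is a fixed non-zero scalar; combined with $\widetilde\theta_{G,1}\to_p\theta_{G,1}$ from Theorem~\ref{theo:consistency of theta tilde}, with probability approaching one $\widetilde\theta_{G}$ lies in a compact neighborhood $\mathcal N$ of $\theta_G$ on which $\theta_1$ stays bounded away from $0$, so that $h$ is continuously differentiable there (indeed real-analytic) with Jacobian
\begin{align*}
J(\theta)=\frac{\partial h}{\partial\theta'}=\Bigl[\;V_g\,\theta_{(2)}/\theta_1^{2}\quad\ -V_g/\theta_1\;\Bigr].
\end{align*}

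Second, on that event I would linearize componentwise: for each $i$, $h_i(\widetilde\theta_G)-h_i(\theta_G)=J_i(\bar\theta^{(i)})(\widetilde\theta_G-\theta_G)$ for some $\bar\theta^{(i)}$ on the segment joining $\theta_G$ to $\widetilde\theta_G$; stacking rows gives $\widetilde\lambda_g-\lambda_g=\bar J\,(\widetilde\theta_G-\theta_G)$ with $\bar J\to_p J(\theta_G)$, since each $\bar\theta^{(i)}$ is sandwiched between $\theta_G$ and $\widetilde\theta_G$ and $J$ is continuous at $\theta_G$. Inserting the normalizing matrices and using $\sqrt{NT}\,Q_{B_g,T}^{-1}(\widetilde\theta_G-\theta_G)=O_p(1)$ from Theorem~\ref{theo:consistency of theta tilde},
\begin{align*}
\sqrt{NT}\,Q_x^{-1}\bigl(\widetilde\lambda_g-\lambda_g\bigr)=\Bigl(Q_x^{-1}\,\bar J\,Q_{B_g,T}\Bigr)\Bigl(\sqrt{NT}\,Q_{B_g,T}^{-1}(\widetilde\theta_G-\theta_G)\Bigr),
\end{align*}
so it suffices to show the deterministic-times-$\bar J$ factor $Q_x^{-1}\,\bar J\,Q_{B_g,T}$ is $O_p(1)$.

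Third, for this I would exploit that $J(\theta_G)=\bigl[\,V_g\theta_{G,(2)}/\theta_{G,1}^{2}\ \ -V_g/\theta_{G,1}\,\bigr]$ is a \emph{fixed} $K\times(1+K)$ matrix with bounded entries that does not grow with $N$ — because $V_g$ and $\theta_{G,(2)}=-V_g^{-1}\lambda_g/\lambda_0$ are fixed and $\theta_{G,1}=1/\lambda_0$ is bounded away from $0$ — so that $\bar J=J(\theta_G)+o_p(1)$ and the only potential divergence in $Q_x^{-1}\bar J\,Q_{B_g,T}$ comes from the $\sqrt{T}$-block of $Q_{B_g,T}$; pre-multiplying by $Q_x^{-1}$ supplies the matching $T^{-1/2}$ factors, and since $N/T\to c$ makes $\sqrt N\asymp\sqrt T$ the two time-normalizations are comparable, so $Q_x^{-1}\bar J\,Q_{B_g,T}$ converges in probability to a finite matrix and in particular is $O_p(1)$. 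This closes the argument.

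\textbf{Main obstacle.} There is no deep difficulty here: the claim is the delta method applied to Theorem~\ref{theo:consistency of theta tilde}. The only genuine point of care is making the linearization legitimate despite the \emph{random} denominator $\widetilde\theta_{G,1}$, i.e. guaranteeing that $h$ and $J$ are bounded on the relevant (random) neighborhood; this is handled by the consistency of $\widetilde\theta_{G,1}$ together with $\lambda_0\neq 0$ and the compactness of the parameter space in Assumption~\ref{ass}. If one wishes to avoid treating $V_g$ as known, the same bound goes through with $\widehat V_g=\frac{1}{T}\sum_{t}\bar{g}_t\bar{g}_t'$ in place of $V_g$, since $\widehat V_g-V_g=O_p(T^{-1/2})$ does not affect the rate.
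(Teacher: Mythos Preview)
Your approach is exactly what the paper intends: the corollary is a delta-method consequence of Theorem~\ref{theo:consistency of theta tilde}, and your linearization together with the treatment of the random denominator $\widetilde\theta_{G,1}$ via consistency and $\lambda_0\neq 0$ is the right way to make ``direct result'' precise.

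There is, however, a genuine gap in your third step. The assertion that ``pre-multiplying by $Q_x^{-1}$ supplies the matching $T^{-1/2}$ factors'' tacitly assumes the Jacobian respects the strong/weak block structure, which it does \emph{not} in general: the block of $J(\theta_G)$ acting on $\theta_{(2)}$ is $-V_g/\theta_{G,1}$, and with $Q_x=\text{diag}(I_{K_{g,1}},\sqrt{T}\,I_{K_{g,2}})$ one has
\[
Q_x^{-1}\,V_g\,Q_x=\begin{pmatrix} V_{11} & \sqrt{T}\,V_{12}\\ T^{-1/2}V_{21} & V_{22}\end{pmatrix},
\]
so the upper-right block diverges whenever the strong and weak proxy factors are contemporaneously correlated ($V_{12}\neq 0$). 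In that case the $O_p(1/\sqrt{N})$ error in the weak-factor entries of $\widetilde\theta_{G,(2)}$ contaminates the \emph{strong}-factor entries of $\widetilde\lambda_g$, and the first $K_{g,1}$ coordinates of $\sqrt{NT}\,Q_x^{-1}(\widetilde\lambda_g-\lambda_g)$ are $O_p(\sqrt{T})$, not $O_p(1)$. To close the argument you must either impose block-diagonality of $V_g$ along the strong/weak partition (equivalently, zero contemporaneous covariance between $g_{1,t}$ and $g_{2,t}$), or redefine $Q_x$ to reflect the slower rate that the mixing induces. The paper's one-line proof glosses over this same point.
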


\subsection{HJN specification test}
\cite{kleibergen2018identification} study risk premia on mimicking portfolios by projecting non-traded factors on traded base portfolios, and then carry out identification robust tests using a set of testing portfolios. We use a similar idea
to construct the HJN specification test: \\
\textbf{Step (1)} Estimate $\theta_G$ from a set of N \textit{base portfolios} $r_t$ of asset returns using our proposed four-pass estimator $\widetilde{\theta}_G$

\noindent
\textbf{Step (2)} Estimate $q_G$ from a set of \textit{testing portfolios} $R_t$ of n asset returns and $n$ is fixed such that $\widetilde{q}_G=  \frac{1}{T}\sum_{t=1}^{T} R_tG_t'$

\noindent
\textbf{Step (3)} The HJN statistic is  $$
\widetilde{\delta}_{g}^{2} = \widetilde{e}_T' \widehat{Q}_R^{-1} \widetilde{e}_T,$$  
with sample pricing errors $\widetilde{e}_T= \iota_n - \widetilde{q}_G\widetilde{\theta}_G$,  $\widehat{Q}_R= \frac{1}{T}\sum_{t=1}^{T} R_tR_t'$.\\

\noindent
\textbf{Step (4)} This test would reject the null hypothesis that moment conditions (\ref{eq:SDF_r}) hold if $$T\widetilde{\delta}_{g}> \widetilde{c}_{1-\alpha},$$ 
where $\widetilde{c}_{1-\alpha}$ is the $1-\alpha$ quantile of the weighted sum of  $N$ $\chi^2(1)$ random variables with weights being the positive eigenvalues of the matrix  $\widetilde{S}^{\frac{1}{2}} (\widehat{Q}_R)^{-1}\widetilde{S}^{\frac{1}{2}'}$ with $\widetilde{S}$ a consistent estimator 
of the long-run variance
matrix of the sample pricing errors ${e}_{T,R}(\theta_G)=\iota_n - \widetilde{q}_G {\theta}_G$. \\

\noindent
\textbf{Remark:} the HJN specification test does not require  the base portfolios and testing portfolios to be non-overlapping.  

\begin{coro} \label{coro:novel test 1} Suppose the assumptions in Theorem \ref{theo:consistency of theta tilde} hold, then
	\begin{align*}
		T 	\widetilde{\delta}_{g}^{2}  \sim_d  \sum_{i=1}^{n}  {p}_i x_i   
	\end{align*}
	with $x_i$ being independently $\chi^2(1)$ distributed random variables and $p_i$ being the positive eigenvalues of the matrix  $\widetilde{S}^{\frac{1}{2}} (\widehat{Q}_R)^{-1}\widetilde{S}^{\frac{1}{2}'}$ with $\widetilde{S}$ a consistent estimator 
	of the long-run variance
	matrix of the sample pricing errors ${e}_{T,R}(\theta_G)$ (for example, one may simple choose $\widetilde{S}= \frac{1}{T}\sum_{t=1}^T e_{g,t,R} (\widetilde{\theta}_G)e_{g,t,R} (\widetilde{\theta}_G)', e_{g,t,R}({\theta}_g)= \iota_N -R_tG_t'{\theta}_g$). \\
	
	\noindent
	Proof: See Appendix \ref{proof coro:novel test 1}.
\end{coro}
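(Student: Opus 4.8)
The plan is to show that, under the null hypothesis that the moment conditions (\ref{eq:SDF_r}) hold, the scaled HJN statistic $T\widetilde\delta_g^2$ is asymptotically a quadratic form in an asymptotically Gaussian vector, and then to diagonalize that quadratic form to read off the weighted-$\chi^2(1)$ representation. First I would write $\widetilde e_T = \iota_n - \widetilde q_G \widetilde\theta_G$ and decompose it around the population pricing error $e_{T,R}(\theta_G) = \iota_n - \widetilde q_G\theta_G$, which is exactly zero in expectation under the null. This gives the linearization
\begin{align*}
\widetilde e_T = \bigl(\iota_n - \widetilde q_G\theta_G\bigr) - \widetilde q_G\bigl(\widetilde\theta_G - \theta_G\bigr),
\end{align*}
so that $\sqrt T\,\widetilde e_T = \sqrt T\,e_{T,R}(\theta_G) - \widetilde q_G\sqrt T(\widetilde\theta_G-\theta_G)$. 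The key inputs are then: (i) a central limit theorem for $\sqrt T\,e_{T,R}(\theta_G)$, which follows from Assumption \ref{assum:CLT for the proxy factors} applied to the $n$ fixed testing portfolios, giving $\sqrt T\,e_{T,R}(\theta_G)\to_d N(0,S)$ with $S$ the long-run variance consistently estimated by $\widetilde S$; (ii) the rate result of Theorem \ref{theo:consistency of theta tilde}, which delivers $\sqrt{NT}\,Q_{B_g,T}^{-1}(\widetilde\theta_G-\theta_G)=O_p(1)$; and (iii) the consistency of $\widetilde q_G \to_p q_{\widetilde G}$ and of $\widehat Q_R\to_p Q_R$ (the second moment matrix of the testing returns) by a law of large numbers under the stationarity and fourth-moment conditions.

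The second step is to argue that the estimation-error term $\widetilde q_G\sqrt T(\widetilde\theta_G-\theta_G)$ is asymptotically negligible relative to the leading term, so that $\sqrt T\,\widetilde e_T \to_d N(0,S)$. This is where the scaling in Theorem \ref{theo:consistency of theta tilde} matters: $\sqrt T(\widetilde\theta_G-\theta_G)$ is $O_p(1/\sqrt N)$ along the strongly identified coordinates and $O_p(\sqrt T/\sqrt{N})\cdot(1/\sqrt T)=O_p(1/\sqrt N)$ after accounting for the $Q_{B_g,T}$ rotation on the weak coordinates — in either case it vanishes as $N\to\infty$ with $N/T\to c$. One must check that multiplying by the fixed-dimensional, $O_p(1)$ matrix $\widetilde q_G$ does not undo this; since $\widetilde q_G$ has $n$ rows (fixed) and its columns corresponding to weak factors are themselves drifting to zero at the rate that exactly compensates the $\sqrt T$ blow-up in the weak coordinates of $\widetilde\theta_G-\theta_G$, the product stays $o_p(1)$. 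This matching-of-rates bookkeeping is the main obstacle: one has to partition $\widetilde q_G$ and $\widetilde\theta_G-\theta_G$ conformably into strong and weak blocks and verify that every block of the product is $o_p(1)$, using that the weak block of $\widetilde q_G$ is $O_p(1/\sqrt T)$ while the corresponding block of $\sqrt T(\widetilde\theta_G-\theta_G)$ is $O_p(1/\sqrt N)$ only after the $\sqrt T$ factor in $Q_{B_g,T}$, so their product is $O_p(1/\sqrt N)$.

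Finally, with $\sqrt T\,\widetilde e_T\to_d N(0,S)$ and $\widehat Q_R\to_p Q_R$, continuous mapping gives
\begin{align*}
T\widetilde\delta_g^2 = \bigl(\sqrt T\,\widetilde e_T\bigr)'\widehat Q_R^{-1}\bigl(\sqrt T\,\widetilde e_T\bigr) \to_d \xi' Q_R^{-1}\xi, \qquad \xi\sim N(0,S).
\end{align*}
Writing $\xi = S^{1/2\prime}\zeta$ with $\zeta\sim N(0,I_n)$, the quadratic form becomes $\zeta' S^{1/2}Q_R^{-1}S^{1/2\prime}\zeta$; diagonalizing the symmetric matrix $S^{1/2}Q_R^{-1}S^{1/2\prime}$ via its spectral decomposition and using the rotational invariance of the standard Gaussian, this equals $\sum_i p_i x_i$ with $x_i$ i.i.d.\ $\chi^2(1)$ and $p_i$ the (positive) eigenvalues of $S^{1/2}Q_R^{-1}S^{1/2\prime}$, hence of $\widetilde S^{1/2}\widehat Q_R^{-1}\widetilde S^{1/2\prime}$ in the limit by the consistency of $\widetilde S$ and $\widehat Q_R$. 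Note that $Q_R^{-1}$ is full rank (it is a proper second-moment matrix of $n$ returns), so the count of positive eigenvalues equals $\operatorname{rank}(S)$; absorbing this into the indexing gives the stated sum over $i=1,\dots,n$. Slutsky's theorem then lets us replace the population weights by their consistent sample analogues for the feasible critical value.
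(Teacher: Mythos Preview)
Your proposal is correct and follows essentially the same route as the paper's own proof: decompose $\sqrt{T}\,\widetilde e_T$ into the infeasible pricing error $\sqrt{T}\,e_{T,R}(\theta_G)$ plus the estimation-error piece $\widetilde q_G\sqrt{T}(\widetilde\theta_G-\theta_G)$, show the latter is $O_p(1/\sqrt{N})$ using the rate in Theorem~\ref{theo:consistency of theta tilde}, and conclude that the limit is driven by the CLT for the pricing errors, yielding the weighted-$\chi^2$ representation after diagonalization. The paper's proof is a three-sentence sketch of exactly this argument; your version simply spells out the rate-matching and the spectral step in more detail. One stylistic suggestion: your block-by-block bookkeeping in the second paragraph can be compressed into the single identity $\widetilde q_G\sqrt{T}(\widetilde\theta_G-\theta_G)=(\widetilde q_G Q_{B_g,T})\cdot\bigl(\sqrt{T}\,Q_{B_g,T}^{-1}(\widetilde\theta_G-\theta_G)\bigr)$, where the first factor is $O_p(1)$ (the $\sqrt{T}$ in $Q_{B_g,T}$ rescales the weak columns of $\widetilde q_G$ to order one) and the second is $O_p(1/\sqrt{N})$ directly from Theorem~\ref{theo:consistency of theta tilde}; this avoids the slightly tangled phrasing around ``only after the $\sqrt{T}$ factor.''
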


Theorem \ref{theo:3} shows that our HJN specification test is size correct even with weak (proxy) factors.

\begin{theo} \label{theo:3}
	Suppose the assumptions in Theorem \ref{theo:consistency of theta tilde} hold, 
	\begin{align*}
		\lim\sup_T			\mathbb{P}\left(T\widetilde{\delta}_g \geq  \widetilde{c}_{1-\alpha} \right) =\alpha
	\end{align*}  
	with $\widetilde{c}_{1-\alpha}$ the $1-\alpha$ quantile of the weighted sum of  $N$ $\chi^2(1)$ random variables with being the positive eigenvalues of the matrix  $\widetilde{S}^{\frac{1}{2}} (\widehat{Q}_R)^{-1}\widetilde{S}^{\frac{1}{2}'}$ with $\widetilde{S}$ a consistent estimator 
	of the long-run variance
	matrix of the sample pricing errors ${e}_{T,R}(\theta_G)$ (for example, one may simple choose $\widetilde{S}= \frac{1}{T}\sum_{t=1}^T e_{g,t,R} (\widetilde{\theta}_G)e_{g,t,R} (\widetilde{\theta}_G)', e_{g,t,R}({\theta}_g)= \iota_N -R_tG_t'{\theta}_g$).\\

	\noindent
	\textit{Proof:} This is a direct result from  Corollary \ref{coro:novel test 1}.
	%	\begin{align}
	%	\mathbb{P}_{H_0}\left(T\widehat{\delta}^* < c_{1-\alpha}^* \right) & \leq_a  	\mathbb{P}_{H_0}\left( \delta(\theta_0) \leq  c_{1-\alpha}^*  | \theta_0 \in CS_{1-\alpha_1} \right)	\mathbb{P}_{H_0}\left( \theta_0 \in CS_{1-\alpha_1} \right)  \nonumber  \\ & \leq_a (1-\alpha_2)(1-\alpha_1) = 1-\alpha
	%	\end{align}
\end{theo}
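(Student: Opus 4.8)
The plan is to obtain Theorem \ref{theo:3} as an immediate consequence of Corollary \ref{coro:novel test 1} together with the standard fact that the data-dependent critical value converges in probability to the corresponding population quantile. Write $Q_R$ for the probability limit of $\widehat Q_R$ and $S$ for the probability limit of $\widetilde S$, let $p_1^0,\dots,p_m^0>0$ be the positive eigenvalues of $S^{\frac12}Q_R^{-1}S^{\frac12'}$, and put $W=\sum_{i=1}^m p_i^0 x_i$ with the $x_i$ i.i.d.\ $\chi^2(1)$. Corollary \ref{coro:novel test 1} gives $T\widetilde\delta_g^2\to_d W$ for the HJN statistic, and let $c^0_{1-\alpha}$ denote the $(1-\alpha)$-quantile of $W$. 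The proof then has three steps: (i) the weak-convergence statement $T\widetilde\delta_g^2\to_d W$, already in hand; (ii) $\widetilde c_{1-\alpha}\to_p c^0_{1-\alpha}$; (iii) combining (i) and (ii) through Slutsky's theorem and continuity of the limiting distribution function.

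For step (ii), the assumptions of Theorem \ref{theo:consistency of theta tilde} yield $\widehat Q_R\to_p Q_R$ by a law of large numbers over the fixed number $n$ of testing portfolios, and they yield $\widetilde S\to_p S$: since $e_{g,t,R}(\theta)=\iota_n-R_tG_t'\theta$ is affine in $\theta$, $\widetilde S$ is a finite combination of sample second moments of $(R_t,G_t)$ evaluated at $\widetilde\theta_G$, so $\widetilde\theta_G\to_p\theta_G$ (Theorem \ref{theo:consistency of theta tilde}) and the continuous mapping theorem give the claim. Since the eigenvalues of a symmetric matrix are continuous in its entries, the feasible weights converge to the population ones, with any of the $n$ sample ``positive'' eigenvalues not among $p_1^0,\dots,p_m^0$ collapsing to zero; such vanishing weights contribute an $o_p(1)$ term and do not affect the limiting law. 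The distribution function of $W$ is continuous and strictly increasing on $(0,\infty)$ — a non-degenerate weighted sum of independent $\chi^2(1)$ variables is absolutely continuous with positive density there — so the map from the weight vector to its $(1-\alpha)$-quantile is continuous at $(p_1^0,\dots,p_m^0)$, whence $\widetilde c_{1-\alpha}\to_p c^0_{1-\alpha}$.

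For step (iii), Slutsky's theorem gives $T\widetilde\delta_g^2-\widetilde c_{1-\alpha}\to_d W-c^0_{1-\alpha}$; since $F_W$ is continuous at $c^0_{1-\alpha}$, the origin is a continuity point of the distribution function of $W-c^0_{1-\alpha}$, so $\mathbb P\!\left(T\widetilde\delta_g^2\ge \widetilde c_{1-\alpha}\right)\to \mathbb P\!\left(W\ge c^0_{1-\alpha}\right)=1-F_W(c^0_{1-\alpha})=\alpha$, and taking $\lim\sup_T$ delivers the stated equality. Note that one obtains an exact asymptotic size of $\alpha$, not merely an upper bound as for the HJS test, precisely because purging the missing-factor structure and using the consistent four-pass estimator $\widetilde\theta_G$ render the HJN statistic asymptotically pivotal even under weak proxy factors.

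The substantive difficulty is entirely upstream of this argument. It resides in Theorem \ref{theo:consistency of theta tilde}, which secures $\min\{\sqrt T,\sqrt N\}$-consistency of $\widetilde\theta_G$ despite weak (proxy) factors and the omitted-factor structure, and in Corollary \ref{coro:novel test 1}, where under the null one expands $\widetilde e_T=-(\widetilde q_G-q_G)\theta_G-\widetilde q_G(\widetilde\theta_G-\theta_G)$, shows the second term is $o_p(T^{-1/2})$ (the weak columns of $\widetilde q_G$ are themselves $O_p(T^{-1/2})$, which offsets the slower rate of the corresponding components of $\widetilde\theta_G-\theta_G$), and identifies $\sqrt T\,\widetilde e_T\to_d N(0,S)$ with $S$ the long-run variance of the testing-portfolio pricing errors. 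Granting those inputs, the only point in the present proof needing care is the quantile-convergence step (ii), and it goes through because $S^{\frac12}Q_R^{-1}S^{\frac12'}$ has at least one strictly positive eigenvalue — guaranteed since $Q_R$ is positive definite and $S\neq 0$ — so that $F_W$ is well-behaved at the $(1-\alpha)$-quantile.
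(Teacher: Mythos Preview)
Your proposal is correct and follows the same route as the paper, which simply records that the theorem is ``a direct result from Corollary~\ref{coro:novel test 1}''; you have merely made explicit the standard Slutsky/quantile-convergence argument (consistency of $\widetilde S$ and $\widehat Q_R$, eigenvalue continuity, absolute continuity of $W$) that the paper leaves implicit. Your sketch in the final paragraph of why $\widetilde q_G(\widetilde\theta_G-\theta_G)=o_p(T^{-1/2})$ also matches the paper's own proof of Corollary~\ref{coro:novel test 1}, where it is noted that $\sqrt{T}\,\widetilde q_G(\widetilde\theta_G-\theta_G)=O_p(N^{-1/2})$ under $N/T\to c$.
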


\subsection{Simulation exercise and empirical application}
Similar to section \ref{sec:simu HJS}, we again evaluate the empirical rejection rates of our HJS specification test via simulation exercises. 

We calibrate to the data set used in  \cite{anatolyev2018factor}: the monthly returns on 100 Fama-French portfolios sorted by size and book-to-market and three Fama-French factors ($g_t$). From the portfolio returns we obtain the first four principal components (PC), and we regard the first three PCs as priced latent common factors, $f_t$, and the fourth one as the omitted factor, $z_t$. With normalization $\left(f,z\right)'\left(f,z\right)/T =I_{4}$, we set the variance of these factors to be $1$. We regress demeaned returns on $f_t, z_t$ for their risk exposures, and calculate the sample mean $\mu_{\beta\gamma}$  and  the sample variance $V_{\beta\gamma}$  of the risk exposures. We compute the sample variance $\sigma_e^2 I_N$ of the residuals after regressing returns on $f_t,z_t$. To maintain the relation between observed factors $g_t$ and PCs $f_t$, we regress $f_t$ on the three Fama French factors to obtain the slope $d_g$ and residual covariance matrix $V_v$, and $d_g$ captures the quality of the proxy factors. 

We then simulate our data in the following way. In the first step we simulate observed factors from i.i.d $N(0,I_3)$ and latent factors $f_t$ are generated by $Ad_g g_t + v_t$ with $v_t$ simulated as i.i.d $N\left(0, (I-A)d_gd_g'(I-A)+V_v \right)$. $A$ is a diagonal matrix which we use to adjust the strengths of our (proxy) factors, and we set $A=\text{diag}(I_2,d_{\alpha})$ in our simulations with $d_{\alpha}$ tuning the strength of the simulated factors. As for the corresponding risk exposures,  we use $(\beta_i', \gamma_i')'\sim$ i.i.d $N(\mu_{\beta\gamma}, V_{\beta\gamma})$. Then in the end, we generate $r_t=\iota_N+\beta_\perp d+\beta \lambda+\beta f_t +\gamma z_t + e_t$, $e_t\sim ~i.i.d ~N(0,\sigma_e^2 I_N)$, 
where we match $\lambda$ to the estimated risk premia resulting from the data. $\beta_\perp$ is a vector which is orthogonal to $\iota_N, \beta$ and $\norm{\beta_\perp}=1$. Similar to the previous simulation setting in Figure \ref{fig:power_curve_HJHJS T=100}  we use $d$ to tune the model misspecification level and $d=0$ when we simulate size curves.

In our simulations, we fix $N=100$. For the HJN specification test we use all the simulate 100 asset gross returns to form the base portfolios and the first 25 to form the testing portfolios, and we use the testing portfolios for the conventional HJ specification test.

Figure \ref{fig:size_curve_HJHJN} compares the size curves of the HJ specification test and of the HJN specification test. It shows that the HJ specification test is highly size distorted even when $d_{\alpha}$ is large  (left hand side panel of Figure \ref{fig:size_curve_HJHJN}) and the size distortion increases when proxy factors become weaker (smaller $d_{\alpha}$), while the HJN specification test roughly remains size correct. Even with relatively large number of time periods, the conventional HJ specification test still over-reject. Observations from Figure \ref{fig:size_curve_HJHJN} also seem to imply that our HJN specification test tends to under-reject in finite samples, and to show it performs well when the model is misspecified we also simulate power curves in Figure \ref{fig:power_curve_HJHJN T300 T1000}.    

\begin{figure}[h!] 
	\centering	\includegraphics[scale=0.55]{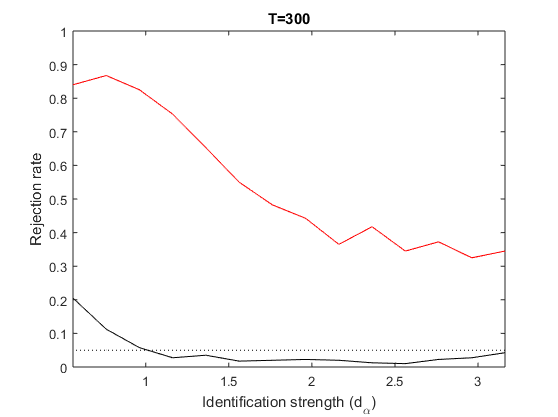} 	\includegraphics[scale=0.55]{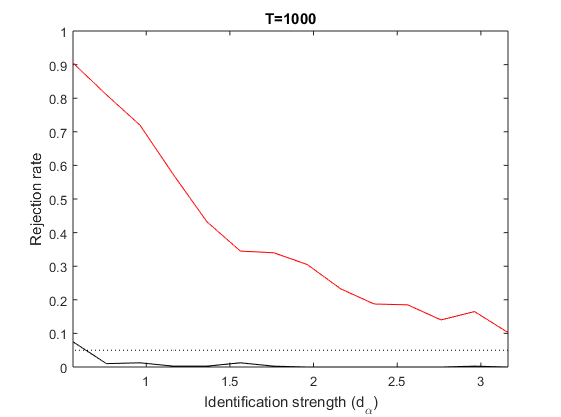} 
	\caption{Size against strength of the factor ($d_{\alpha}$): rejection frequency of  the HJ specification test (red) and  rejection frequency of  the HJN specification test (black). Left hand side panel: T=300;  right hand side panel: T=1000 }
	\label{fig:size_curve_HJHJN}
\end{figure}

Figure \ref{fig:power_curve_HJHJN T300 T1000} shows power curves of the HJ specification test and the HJN specification test respectively. The left hand side panel of Figure \ref{fig:power_curve_HJHJN T300 T1000}  uses $d_{\alpha}=0.5$ to mimic one weak proxy factor, while all proxy factors in the right hand side panel are strong with a larger value of $d_{\alpha}$.  Figure \ref{fig:power_curve_HJHJN T300 T1000}
shows that the HJN specification test has proper power performance regardless of the presence of weak (proxy) factors, and rejection frequency increases faster when the proxy factor is stronger. 
\begin{figure}[h!] 
	\centering	%\includegraphics[scale=0.5]{graph/power_HJN_beta0_weak_T300} 	\includegraphics[scale=0.5]{graph/power_HJN_beta0_strong_T300} 
	\includegraphics[scale=0.55]{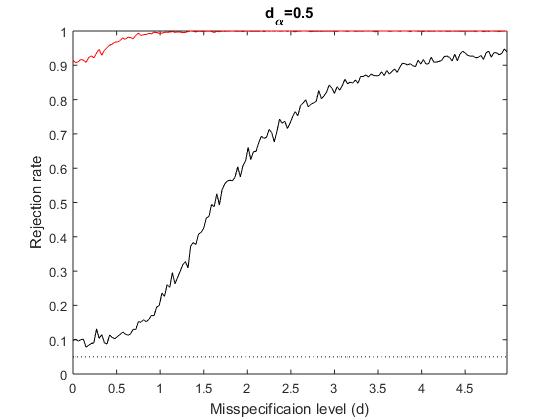} 	\includegraphics[scale=0.55]{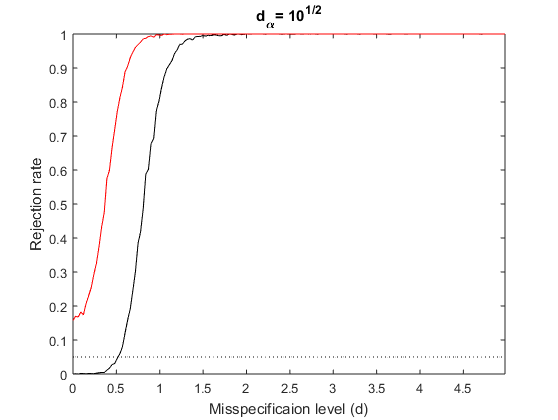} 
	\caption{Power against the level of the model misspecificaion ($d$): rejection frequency of  the HJ specification test (red) and rejection frequency of  the HJS specification test (black). Left hand side panel: with one weak proxy factor $d_{\alpha}=0.5$; Right hand side panel: with strong factors $d_{\alpha}=\sqrt{10}$.}
	\label{fig:power_curve_HJHJN T300 T1000}
\end{figure}

\section{Empirical application}\label{sec:emp}
We apply our proposed test procedures on the data set of monthly returns on 100 Fama-French portfolios sorted by size and book-to-market and the three Fama-French factors (market, SmB, HmL) and the momentum factor.    

An intuitive measure for the factor structure in asset returns is total variation of the asset returns explained by the principal components\footnote{This corresponds with the nuclear norm of the demeaned asset returns} (\cite{kleibergen2015unexplained}). We
construct the spectral decomposition of the sample covariance matrix of
the 100 portfolio returns, and denote $\lambda_1 > \lambda_2 > \cdots $ the characteristic roots (or eigenvalues of the PCs of asset returns) in descending
order. We use the characteristic roots ratios (CRRs)   ${\lambda_i}/{\left(\sum_{j}\lambda_j\right)}, i=1,2,3,4$, which represent the total variation of the portfolio returns explained by the first four PCs respectively, to check the factor structure of portfolio returns (see Figures \ref{fig:emp20yearwindow1yearstep} and \ref{fig:emp10yearwindow1yearstep}).

Figures \ref{fig:emp20yearwindow1yearstep} and \ref{fig:emp10yearwindow1yearstep} also report the p-values of specification tests (HJ and HJN) with respect to a three-FF-factor model and a four-factor (adding the momentum) model from 1963-09 to 2019-08 using rolling windows of 240 and 120 months respectively\footnote{We first choose the window size of 240 months in Figure \ref{fig:emp20yearwindow1yearstep},  because our simulations suggest sample size around 300 seems to be enough for carrying out our tests properly.}.
For the HJN specification test we use all the 100 asset returns to form the base portfolios and the first 25 to form the testing portfolios, and we use the testing portfolios for the conventional HJ specification test. They also report measures for the presence of a factor structure in the asset returns: the fraction of the total variation of the portfolio
returns that is explained by their principal components.

Figures \ref{fig:emp20yearwindow1yearstep} shows that when 
comparing nested models,  the HJ test can produce counter-intuitive results by rejecting a four-factor model but not the reduced three-factor model (see points near the coordinate '2015-01').  This is an unfortunate outcome since the four-factor mode apparently embeds the three-factor model and if the four-factor model is rejected we would expect the three-factor model to be rejected as well. We attribute this strange behavior to the momentum factor having only weak correlation with the returns and thus inducing a larger rejection rate of the HJ test, while our HJN specification test does not have such problem.     

The HJN specification test also captures changes in the factor structure of asset returns in a more sensible way compared with the HJ test. As shown in Figure \ref{fig:emp20yearwindow1yearstep}, when CRRs vary in different time periods (e.g., the total variation of the portfolio returns is mostly explained by the first PC for points near the  coordinate '2000-01' while the other PCs only account for a much lower percentage of the variation),  the HJN specification tests reflect the changes in the factor structure of asset returns with variations in p-values of tests of a four-factor model, while the HJ specification tests reject both three-factor and four-factor models for most time periods and is not informative for the factor structure of asset returns.

Both of the HJ and the HJN tests in Figure \ref{fig:emp20yearwindow1yearstep} seem to have larger p values near the coordinate '2015-01' while the patterns of characteristic roots ((a) and (b)) seem to be rather stable. This is because a 240-month window size is a bit too long, and some changes in the factor structure might be averaged out and thus not detected by CRRs. We choose a smaller rolling window size (120 months) in Figure \ref{fig:emp10yearwindow1yearstep}, and it shows the change in the factor structure  (Figure \ref{fig:emp10yearwindow1yearstep}.(a)) after the coordinate '2010-01'. Similar to what we observe in Figure \ref{fig:emp20yearwindow1yearstep}, Figure \ref{fig:emp10yearwindow1yearstep} also shows our HJN specification test respond to the factor structure in the asset returns in a more informative way, while the HJ tests only report small p values for most time periods for both three- and four-factor models.

\begin{sidewaysfigure}[h!] 
	\centering	\includegraphics[scale=0.55]{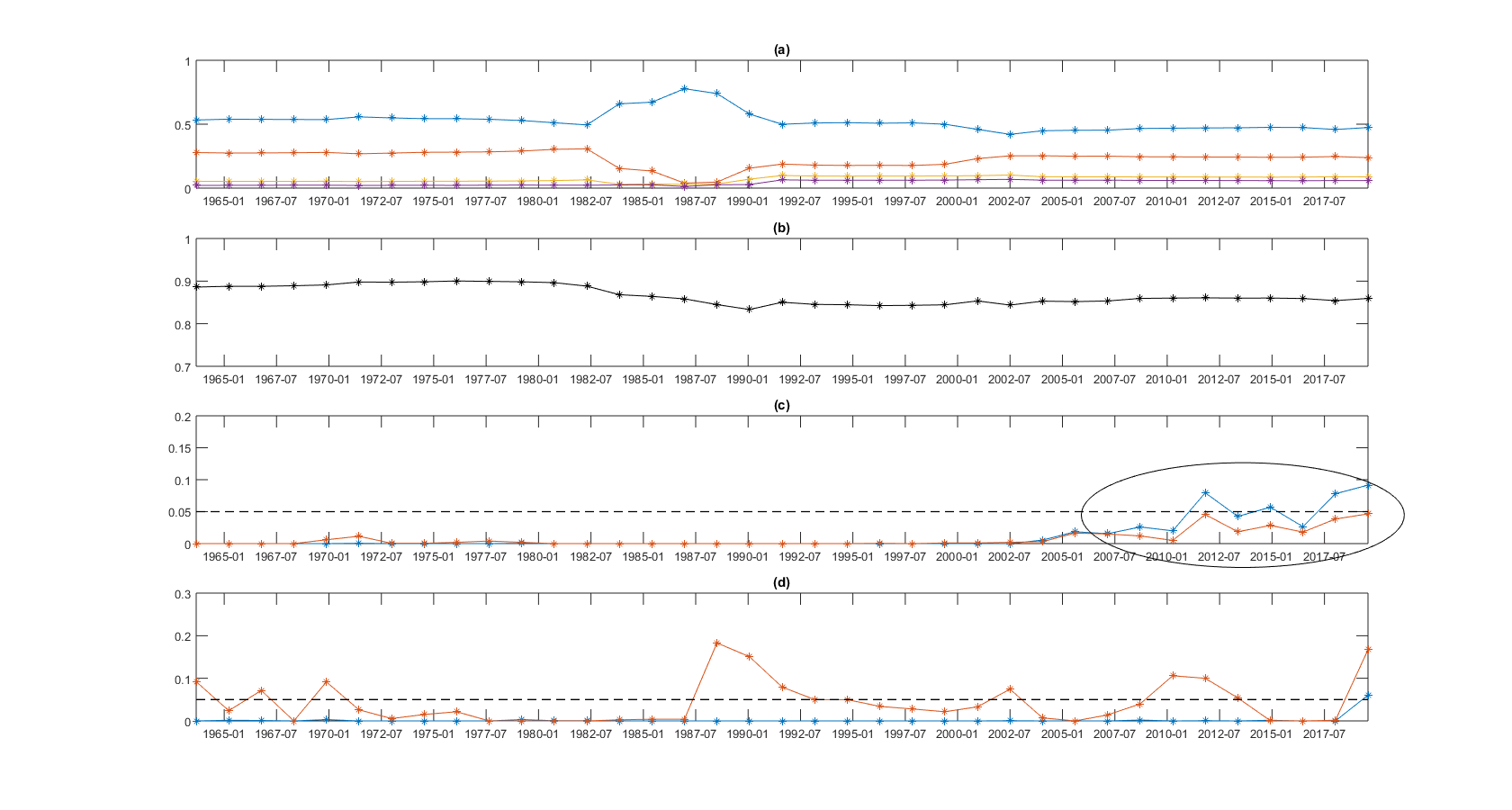} 
	\caption{\footnotesize  The time series of estimates from 1963-09 to 2019-08 with rolling windows of size T = 240 (20 years) and  the number of increments between successive rolling windows being 12 (1 year)  (x-axis is labeled with the ending period of each rolling window, for example, the first rolling window uses data from 1963-09 to 1983-08, so x-axis should label the first point with '1983-08').  (a) The fraction of the total variation of the portfolio returns that is explained by their four largest principal components respectively (CRRs) ($ \lambda_i/(\sum_{j=1}^N \lambda_j), i=1,\cdots,4$); (b) The fraction of the total variation of the portfolio returns that is explained by the sum of the four largest principal components ($ \sum_{i=1}^4\lambda_i/(\sum_{j=1}^N \lambda_j)$); (c) p-values of the HJ specification test of the three-FF-factor model (blue), p-values of the HJ specification test of the four-factor (three FF and momentum factors) model (red);(d) p-values of the HJN specification test of the three-FF-factor model (blue), p-values of the HJN specification test of the four-factor model (red).}
	\label{fig:emp20yearwindow1yearstep}
\end{sidewaysfigure} 

\begin{sidewaysfigure}[h!] 
	\centering	\includegraphics[scale=0.85]{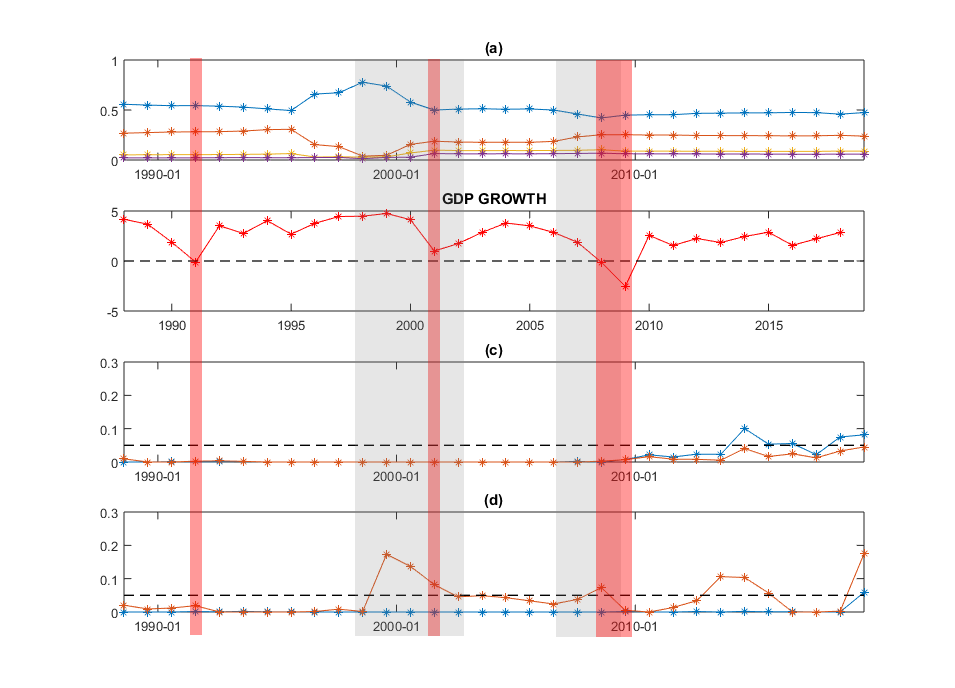} 
	\caption{The time series of estimates from 1968-09 to 2019-08 with rolling windows of size T = 240 (20 years) and  the number of increments between successive rolling windows being 12 (1 year)  (x-axis is labeled with the ending period of each rolling window, for example, the first rolling window uses data from 1968-09 to 1988-08, so x-axis should label the first point with '1988-08').  (a) The fraction of the total variation of the portfolio returns that is explained by their four largest principal components respectively (CRRs) ($ \lambda_i/(\sum_{j=1}^N \lambda_j), i=1,\cdots,4$); (b) (GDP growth) US GDP
		annual growth rate 1988-2018 (downloaded from the World Bank); (c) p-values of the HJ specification test of the three-FF-factor model (blue), p-values of the HJ specification test of the four-factor (three FF and momentum factors) model (red);(d) p-values of the HJN specification test of the three-FF-factor model (blue), p-values of the HJN specification test of the four-factor model (red).}
	\label{fig:emp20yearwindow1yearstep_2}
\end{sidewaysfigure}

\begin{sidewaysfigure}[h!] 
	\centering
	\includegraphics[scale=0.2]{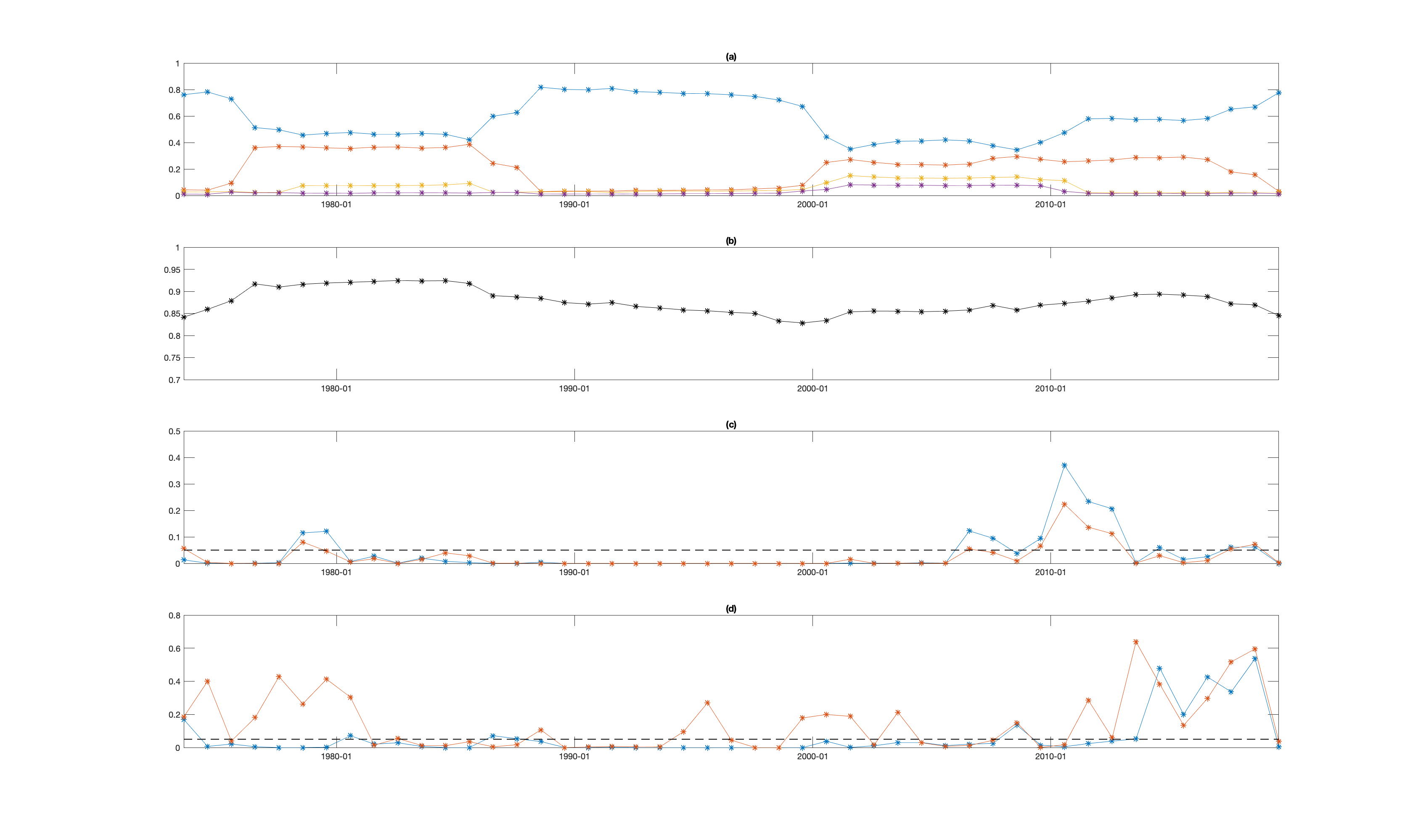}
	\caption{\footnotesize  The time series of estimates from 1963-09 to 2019-08 with rolling windows of size T = 240 (20 years) and  the number of increments between successive rolling windows being 12 (1 year)  (x-axis is labeled with the ending period of each rolling window).  (a) The fraction of the total variation of the portfolio returns that is explained by their four largest principal components respectively (CRRs) ($ \lambda_i/(\sum_{j=1}^N \lambda_j), i=1,\cdots,4$); (b) The fraction of the total variation of the portfolio returns that is explained by the sum of the four largest principal components ($ \sum_{i=1}^4\lambda_i/(\sum_{j=1}^N \lambda_j)$); (c) p-values of the HJ specification test of the three-FF-factor model (blue), p-values of the HJ specification test of the four-factor (three FF and momentum factors) model (red);(d) p-values of the HJN specification test of the three-FF-factor model (blue), p-values of the HJN specification test of the four-factor model (red). }
	\label{fig:emp10yearwindow1yearstep}
\end{sidewaysfigure} 
\clearpage

We observe in Figures \ref{fig:emp20yearwindow1yearstep} and \ref{fig:emp10yearwindow1yearstep} that the HJN specification tests in some rolling windows do not reject a four-factor model. To further study this observation, Table \ref{table:new} reports results based on the data from 1977-08 to 2019-08. We see in Table \ref{table:new} that both the HJ and HJN specification tests reject the three-factor model, and while the HJ specification test rejects the four-factor model, the HJN specification test  does not reject it. Our HJS specification test seems to be a bit conservative and does not reject both models in this application.  The estimates for the four-factor model using our proposed approach indicate a larger change in values corresponding to the momentum factor, and this might result from the momentum factor being weak. Our specification tests support a four-factor model for Fama French portfolios, and observations show that the momentum factor might only serve as a weak proxy factor which can explain the difference between the HJ and the HJN specification test results and the differences in estimated parameter values. \\

\begin{table}[h!] \centering
	%\ra{1.3}
	\begin{small}
		\begin{tabular}{@{}cccccccc@{}}\toprule\toprule
			& HJ(p-val) & HJN(p-val) & HJS(rejected)  \\ \midrule
			n=25\\\midrule
			Three factor &  {0.000}&  {0.000 } & No\\
			\midrule 
			Four factor & \textbf{0.000}&\textbf{0.0694} &No\\
			\midrule 
			\bottomrule
		\end{tabular} 
	\end{small}
	\caption{Tests of specification using monthly returns on 100 portfolios sorted by size and book-to-market and the three Fama-French factors and the momentum factor from 1977-08 to 2019-08}\label{table:new}
\end{table}

\section{Conclusions}
We show that the HJ statistic is not a valid model selection tool and model specification test statistic when weak (proxy) factors are present. We propose two novel approaches that provide size-correct model specification tests, alongside with which we also propose novel weak (proxy) factors robust risk premia estimators. Our empirical application supports a four factor structure for Fama French portfolios despite that the momentum factor is a weak proxy factor.
\clearpage
\newpage

\bibliographystyle{plainnat}
 \bibliography{References}
 \appendix
 
 	%
% \title{\Large Supplementary Material For \\\textit{Identification Robust Hansen-Jagannathan Distance For Linear Asset Pricing Models}}
% \author{Lingwei Kong}
% %\author{Frank Kleibergen\thanks{%
 %Amsterdam School of Economics, University of Amsterdam, Roetersstraat 11,
 %1018 WB Amsterdam, The Netherlands. Email: f.r.kleibergen@uva.nl.} \and Lingwei Kong\thanks{%
 %Amsterdam School of Economics, University of Amsterdam, Roetersstraat 11,
 %1018 WB Amsterdam, The Netherlands. Email: l.kong@uva.nl.} \and %
 %Zhaoguo Zhan\thanks{%
 %Department of Economics, Finance and Quantitative Analysis, Coles College of
 %Business, Kennesaw State University, GA 30144, USA. Email:
 %zzhan@kennesaw.edu.}}
 
% \date{\today }
% \maketitle
\clearpage
\newpage
 \appendix
 \small
  \section{Additional figures and tables}
This section provides additional figures and tables. Mainly, we provide more results for our empirical application in Section \ref{sec:emp}, and further illustrate how the conventional tests may fail in some empirically relevant settings. 

Figure \ref{fig:delota density 2} illustrates the density function of  the HJS statistic under same settings as in Figure \ref{fig:delota density}.  We use our previous simulation experiment calibrated to data from \cite{lettau2001consumption} to illustrate properties of the HJS statistic. The black dotted curve in Figure \ref{fig:delota density 2} suggests the value of the HJS statistic could be small when the model is not correctly specified. These smaller values are due to much smaller confidence sets $CS_r$ when model is not correctly specified. Therefore, using the HJS statistic as a model selection tool may not be a good idea even though it provides size correct specification test.   

%% \noindent
% Figure \ref{fig:theta lambda} compares conventional estimators for $\theta_g,\lambda_g$  with the ones proposed in Section 4 under same settings as in Figure \ref{fig:size_curve_HJHJN} with $T=300$. As suggested in this figure, the conventional estimators have a large drawback in 'bias' as it soon converge to biased values.  
%  
% 
% \begin{figure}[h!] 
% 	%	\includegraphics[width=0.9\linewidth,height=8cm]{graph/density_delta_1} 
% 	\centering	\includegraphics[scale=0.8]{graph/fig_theta_lambda} 
% 	\caption{Sample mean (black solid curve) of the corresponding estimates with respect to the proxy factor $g_{3,t}$ against the strength of the proxy factor $d_{\alpha}$, and the shadowed area covers one standard deviation from the sample mean. Red solid curves are the true values of the corresponding parameters. Left panel: conventional approach; right panel: our proposed approach in Section \ref{sec HJS}.}
% 	\label{fig:theta lambda}
% \end{figure} 

Figures \ref{fig:emp20year} and \ref{fig:emp10year} follow the same settings as in Figures \ref{fig:emp20yearwindow1yearstep} and \ref{fig:emp10yearwindow1yearstep}, and they provide additional results by adding the p-values of the rank test (\cite{kleibergen2006generalized}) of $q_{G}$ (the rank of $q_{G}$ reflects the identification strength of the model, e.g., \cite{kleibergen2019Consumption}, \cite{kong2018}, as it shows whether the sample pricing errors vary enough as a function of $\theta$), and the p-values of the $\mathcal{J}$ specification test.

Figures \ref{fig:emp20year} and \ref{fig:emp10year} show that the $\mathcal{J}$ specification tests tend to give larger p-values, and thus it is less informative. When we use 10-year window size (Figure \ref{fig:emp10year}) instead of the 20-year window size (Figure \ref{fig:emp20year}), the p-values of the rank test increase and the lack of identification strength also increases the p-values of the $\mathcal{J}$ test. For points near the coordinate '1990-01' in Figure \ref{fig:emp10year}, the rank test can not reject that the $q_{G}$ is of reduced rank (lack of identification strength), and the corresponding $\mathcal{J}$-test p-values increase while the HJ tests tend to have larger p-values for testing the reduced three-factor model than the four-factor model. In short, in the presence of weak (proxy) factors, both well-know test statistics can not provide satisfying inference results.

Tables \ref{table:new new} and \ref{table:new 2} follow the same settings as in  Table \ref{table:new}  and reports additional results: tests with different value of n and  parameter estimates. If n is getting too large, the HJN test also suffers from finite sample issues and tends to have smaller p-values, as its validity requires $n/N\rightarrow 0$. We leave the construction of a high-dimensional robust test statistic for further study.

~\\

\begin{table}[h!] \centering
	%\ra{1.3}
	\begin{small}
		\begin{tabular}{@{}cccccccc@{}}\toprule\toprule
			& HJ(p-val) & HJN(p-val) & HJS(rejected)  \\ \midrule
			n=30\\\midrule
			Three factor &  {0.000}&  {0.000 } & No\\
			\midrule 
			Four factor & \textbf{0.000}&\textbf{0.054} &No\\
			\midrule 
			n=35\\\midrule
			Three factor &  {0.000}&  {0.000 } & No \\
			\midrule 
			Four factor & \textbf{0.000}&\textbf{0.027} & No\\
			\midrule 
			n=40\\\midrule
			Three factor &  {0.000}&  {0.000 } &No \\
			\midrule 
			Four factor & \textbf{0.000}&\textbf{0.026} & No\\
			\midrule 
			\bottomrule
		\end{tabular} 
	\end{small}
	\caption{\footnotesize Tests of specification using monthly returns on 100 portfolios sorted by size and book-to-market and the three Fama-French factors and the momentum factor from 1977-08 to 2019-08}\label{table:new 2}
\end{table}

\begin{table}[h!] \centering
	%\ra{1.3}
	\begin{small}  
		\begin{tabular}{@{}cccccccc@{}}\toprule\toprule
			&   Market & SMB & HML & MOM&   \\ \midrule
			(1)  &&&&&   \\
			\midrule 
			$\widehat{\theta}_G$ &    0.0480& 
			-0.0013& 
			-0.0127& $-$\\  
			$\widetilde{\theta}_G$   &    0.3081& 
			-0.1296& 
			0.1370&   $-$\\  
			$\widehat{\lambda}_g$ &  -5.2108& 
			0.4037& 
			-0.0347 &$-$\\  
			$\widetilde{\lambda}_g$ &    -5.4740& 
			0.3250& 
			-0.2261 & $-$\\  
			\midrule
			(2)&&&&&  \\
			\midrule 
			$\widehat{\theta}_G$ &     0.0472& 
			0.0174& 
			-0.0172& 
			0.0240
			\\  
			$\widetilde{\theta}_G$ &     0.1146& 
			-0.4108& 
			0.2233& 
			-0.8945& 
			\\  
			$\widehat{\lambda}_g$ &    -3.7893& 
			0.6822& 
			-0.3995& 
			1.5882\\  
			$\widetilde{\lambda}_g$ &    -2.8368&
			0.9104&
			-0.6567&
			3.6223 \\   
			\bottomrule
		\end{tabular} 
	\end{small}
	\caption{\footnotesize  Estimates with Fama-French factors and the momentum factor using monthly returns on 100 portfolios sorted by size and book-to-market from 1977-08 to 2019-08}\label{table:new new}
\end{table}

\begin{figure}[h!] 
	\centering	\includegraphics[scale=0.3]{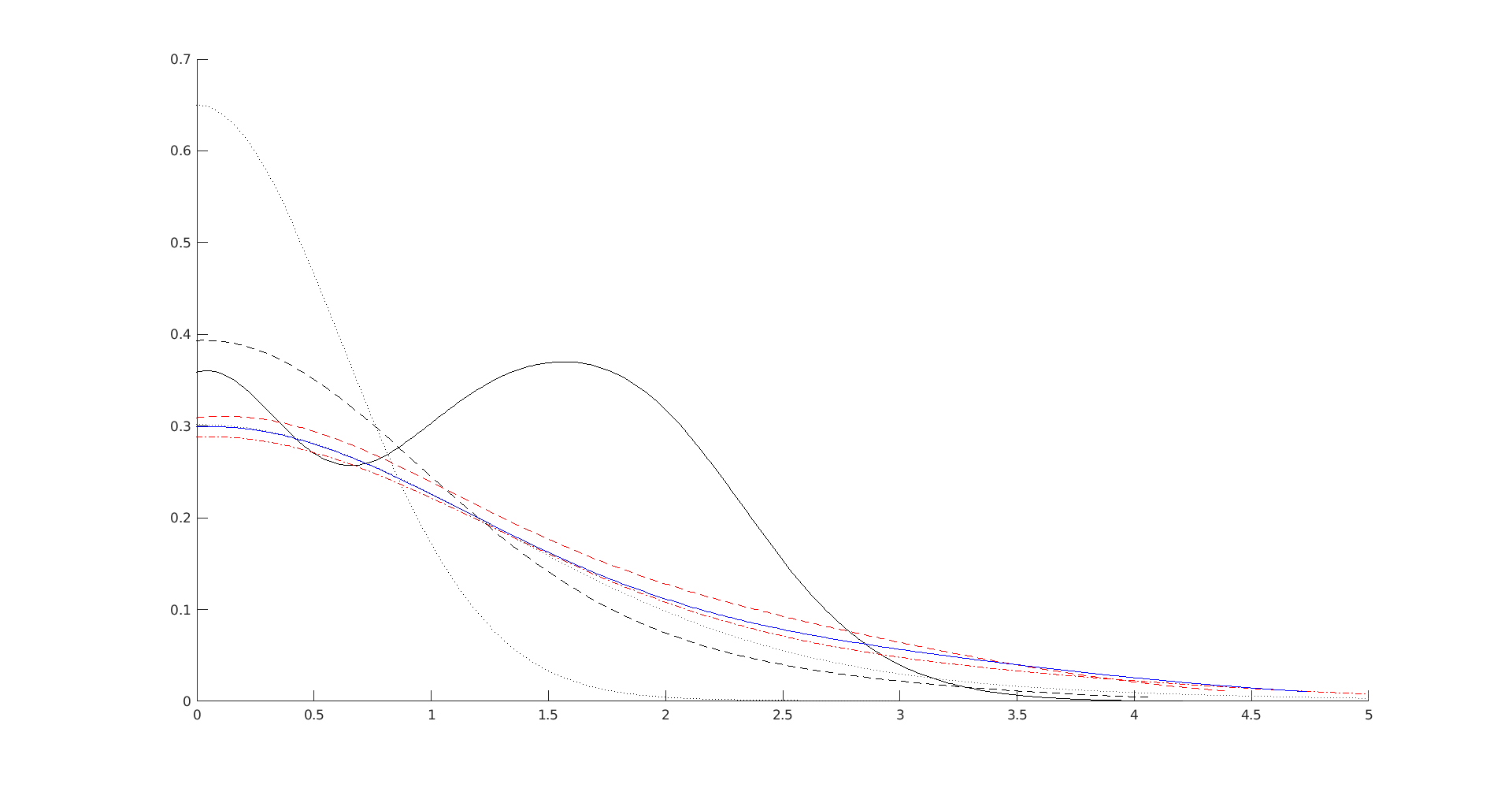} 
	\caption{\footnotesize  Densities of the HJS statistic: (1) black solid: three strong factors; (2) black dashed: two strong factors; (3) black dotted: one strong factors; (4) blue solid: three weak proxy factors; (5) red solid: three strong factors and one useless factor; (6) red dashed: two strong factors and one useless factor; (7) red dotted: one strong factors  and one useless factor; (8) red dash-dotted:one strong factors  and two useless factors }
	\label{fig:delota density 2}
\end{figure}

\begin{sidewaysfigure}[h!] 
	\centering	\includegraphics[scale=0.5]{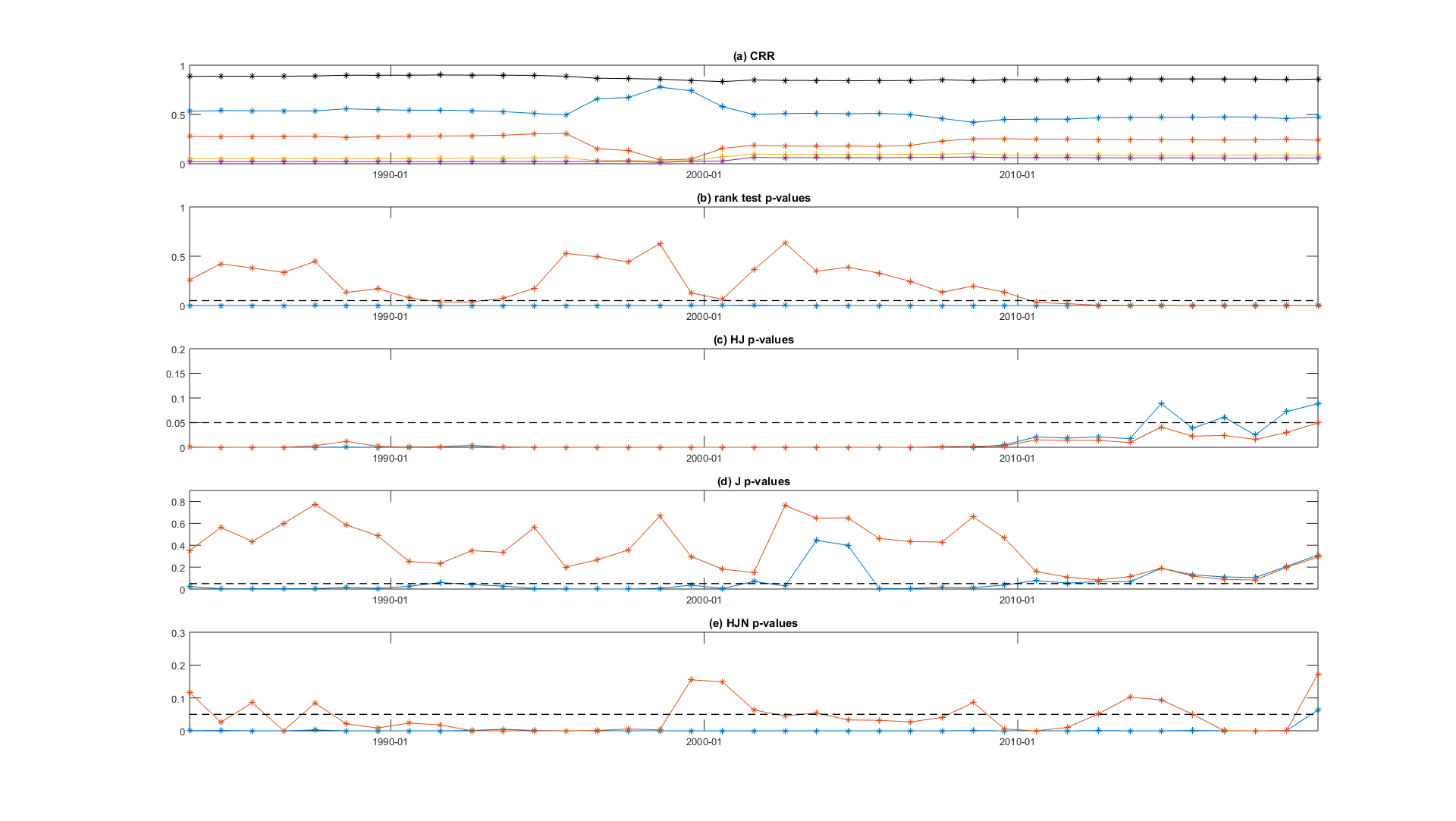} 
	\caption{\scriptsize The time series of estimates from 1963-09 to 2019-08 with rolling windows of size T = 240 (20 years) and  the number of increments between successive rolling windows being 12 (1 year)  (x-axis is labeled with the ending period of each rolling window, for example, the first rolling window uses data from 1963-09 to 1983-08, so x-axis should label the first point with '1983-08').  (a) The fraction of the total variation of the portfolio returns that is explained by their four largest principal components respectively (CRRs) ($ \lambda_i/(\sum_{j=1}^N \lambda_j), i=1,\cdots,4$); the black curve is the fraction of the total variation of the portfolio returns that is explained by the sum of the four largest principal components (sum of the first four CRRs, $ \sum_{i=1}^4\lambda_i/(\sum_{j=1}^N \lambda_j)$); (b) p-values of the rank test (\cite{kleibergen2006generalized}) of $q_{G}$ using the estimator $q_{G,T}$ with $G$ constructed by the three FF factors (blue), and the four factors (three FF and momentum factors) (red) respectively; (c) p-values of the HJ specification test of the three-FF-factor model (blue), p-values of the HJ specification test of the four-factor (three FF and momentum factors) model (red); (d) p-values of the $\mathcal{J}$ specification test of the three-FF-factor model (blue), p-values of the HJN specification test of the four-factor (three FF and momentum factors) model (red); (e) p-values of the HJN specification test of the three-FF-factor model (blue), p-values of the HJN specification test of the four-factor (three FF and momentum factors) model (red).}
	\label{fig:emp20year}
\end{sidewaysfigure} 

\begin{sidewaysfigure}[h!] 
	\centering
	\includegraphics[scale=0.5]{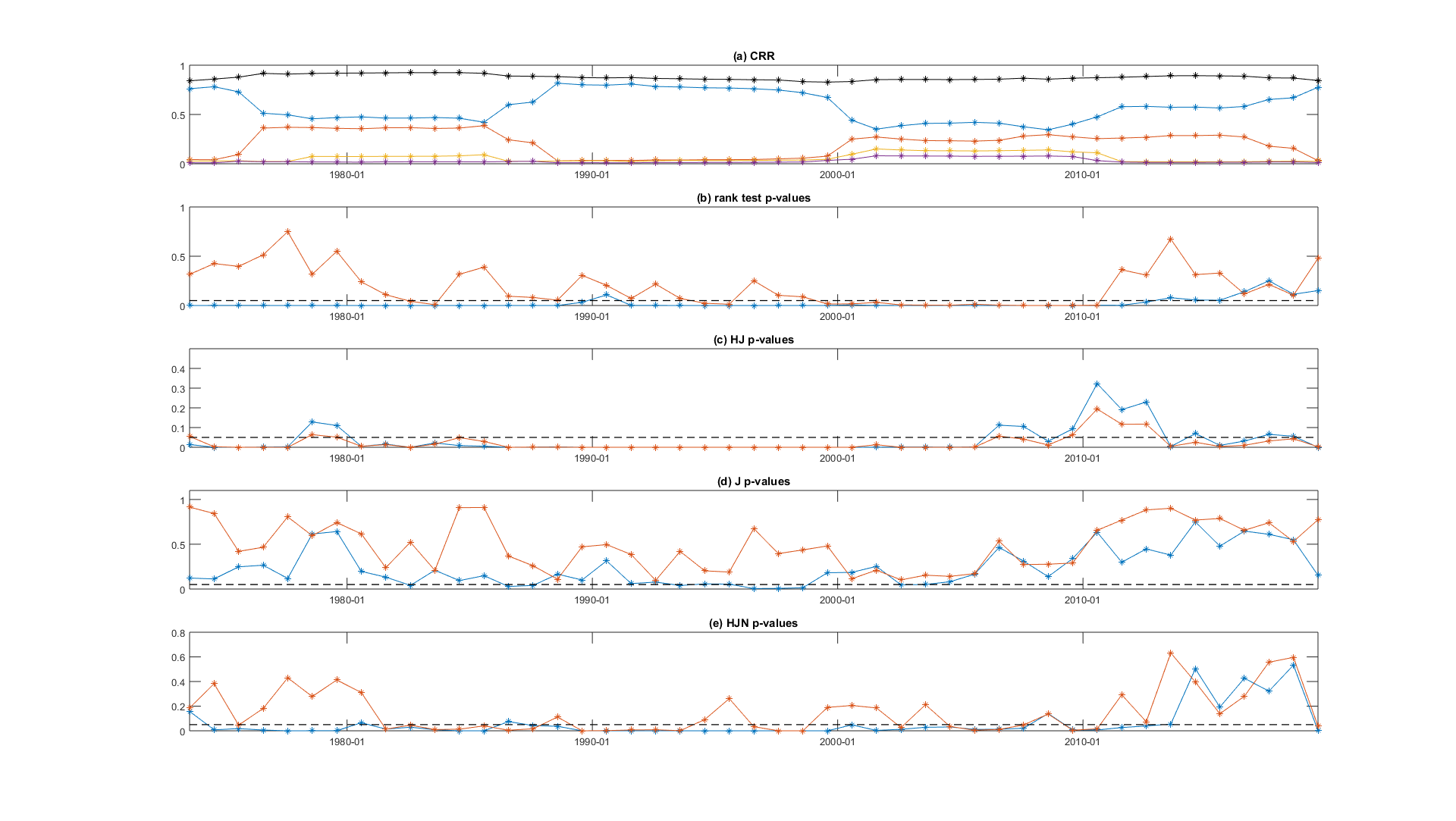}
	\caption{\scriptsize The time series of estimates from 1963-09 to 2019-08 with rolling windows of size T = 240 (20 years) and  the number of increments between successive rolling windows being 12 (1 year)  (x-axis is labeled with the ending period of each rolling window).  (a) The fraction of the total variation of the portfolio returns that is explained by their four largest principal components respectively (CRRs) ($ \lambda_i/(\sum_{j=1}^N \lambda_j), i=1,\cdots,4$); the black curve is the fraction of the total variation of the portfolio returns that is explained by the sum of the four largest principal components (sum of the first four CRRs, $ \sum_{i=1}^4\lambda_i/(\sum_{j=1}^N \lambda_j)$); (b) p-values of the rank test (\cite{kleibergen2006generalized}) of $q_{G}$ using the estimator $q_{G,T}$ with $G$ constructed by the three FF factors (blue), and the four factors (three FF and momentum factors) (red) respectively; (c) p-values of the HJ specification test of the three-FF-factor model (blue), p-values of the HJ specification test of the four-factor (three FF and momentum factors) model (red); (d) p-values of the $\mathcal{J}$ specification test of the three-FF-factor model (blue), p-values of the HJN specification test of the four-factor (three FF and momentum factors) model (red); (e) p-values of the HJN specification test of the three-FF-factor model (blue), p-values of the HJN specification test of the four-factor (three FF and momentum factors) model (red).}
	% 	\label{fig:emp10yearwindow1yearstep} 
	\label{fig:emp10year}
\end{sidewaysfigure}

% \begin{table}[h!] \centering
% 	%\ra{1.3}
% 	\begin{small}
% 		\begin{tabular}{@{}cccccccc@{}}\toprule\toprule
% 			&   Market & SMB & HML & MOM& HJ(p-val) & HJN(p-val) & HJS(rejected)  \\ \midrule
% 			(1)  &&&&&  {0.000}&  {0.000 } & No\\
% 			\midrule 
% 			$\widehat{\theta}_G$ &     0.0480&
% 			-0.0013&
% 			-0.0127& $-$\\  
% 			$\widetilde{\theta}_G$   &    0.3081&
% 			-0.1296&
% 			0.1370&  $-$\\  
% 			$\widehat{\lambda}_g$ &  -5.2108 &
% 			0.4037&
% 			-0.0347 &$-$\\  
% 			$\widetilde{\lambda}_g$ &   -5.4740&
% 			0.3250&
% 			-0.2261 & $-$\\  
% 			\midrule
% 			(2)&&&&& \textbf{0.000}&\textbf{0.054} &No\\
% 			\midrule 
% 			$\widehat{\theta}_G$ &   0.0472&
% 			0.0174 &
% 			-0.0172&
% 			0.0240\\  
% 			$\widetilde{\theta}_G$ &   0.1146& 
% 			-0.4108& 
% 			0.2233& 
% 			-0.8945 \\  
% 			$\widehat{\lambda}_g$ &   -3.7893&
% 			0.6822&
% 			-0.3995&
% 			1.5882\\  
% 			$\widetilde{\lambda}_g$ &    -2.8368&
% 			0.9104&
% 			-0.6567&
% 			3.6223 \\   
% 			\bottomrule
% 		\end{tabular} 
% 	\end{small}
% 	\caption{Estimates with Fama-French factors and the momentum factor and tests of specification using monthly returns on 100 portfolios sorted by size and book-to-market from 1977-08 to 2019-08  (n=30)}\label{table:new 2}
% \end{table}
% 
\clearpage
\newpage 
\section{Proofs related to sections \ref{sec: HJ} and \ref{sec HJS} } \label{apend}
Appendices \ref{apend} and \ref{a2} use $\lambda_j(A)$ to denote the $j$th largest eigenvalue of a given matrix A, $\lambda_{\min}(A), \lambda_{\max}(A)$ the minimum and the maximum eigenvalues. With $A=(a_{ij})$, multiple matrix norms are denoted as $\norm{A}=\sqrt{\lambda_{\max}(A'A)},  \norm{A}_{1}= \max_j \sum_i |a_{ij}|, \norm{A}_{\infty}= \max_i \sum_j |a_{ij}|,  \norm{A}_{F}= \sqrt{tr(A'A)},  \norm{A}_{\max}= \max_{i,j} |a_{ij}|$. \\

\begin{assumption}\label{assum:CLT for the proxy factors}
	%$g_t, v_t$ are stationary with finite fourth moments, and 
	The following asymptotic distributions hold jointly: $\left(  \xi_{g,T}, \xi_{gg,T},  
	\xi_{v G, T},
	\xi_{z G, T},
	\xi_{e G, T},  
	\right) \rightarrow_d 	    \left( 
	\xi_g, \xi_V, 
	\xi_{v G},
	\xi_{z G},
	\xi_{e G}  \right)$, where 
	% 	$\xi_{x G} = \left(\xi_{x}, \xi_{x g}\right), \xi_{x G, T} = \left(\xi_{x, T}, \xi_{x g, T} \right)$, 	
	$\xi_{g,T}= \frac{1}{\sqrt{T}}\left(\sum_{t=1}^T g_t - \mu_g \right), $
	$\xi_{gg,T}= \frac{1}{\sqrt{T}}\sum_{t=1}^T \bar{g}_t \bar{g}_t', $ $\xi_{vG,T}= \frac{1}{\sqrt{T}}\sum_{t=1}^T v_tG_t', $$\xi_{zG,T}= \frac{1}{\sqrt{T}}\sum_{t=1}^T z_tG_t', $ $\xi_{eG,T}= \frac{1}{\sqrt{T}}\sum_{t=1}^T e_tG_t', $  and  
	$\xi_g,$ $\text{vec}(\xi_{vG}), \text{vec}(\xi_{zG}),$ $\text{vec}(\xi_{eG})$ are zero-mean normal random vectors, $\xi_V$ is a Gaussian random matrix.  
\end{assumption}  
Assumption \ref{assum:CLT for the proxy factors} is a central limit theorem for the different components in  equation (\ref{1}) interacted with a constant and the proxy
factors. Some of the statements such as $\xi_{g,T} \rightarrow_d \xi_g$ would hold if proxy factors are stationary with finite fourth moments and satisfy some strong mixing conditions (see e.g. \cite{peligrad2006central}). We specify a relatively strong assumption here instead of dealing with heavy technical details, but our results can be extended to general cases.    \\
%\begin{assumption} \label{ass:temp}
%	 $\xi_g,$ $\text{vec}(\xi_{vG}), \text{vec}(\xi_{zG}),$ $\text{vec}(\xi_{eG})$ are independent from each other, and $\xi_V$ is independent from  
%\end{assumption}
%Assumption \ref{ass:temp} is a 

\noindent \begin{proof}[Proof of Lemma \ref{lem:B_g}] 
	\label{Proof of Lemma  {lem:B_g dist}} 
	Assumption \ref{ass} implies $r_t=c    + \beta_g  \bar{g}_t + \beta_g \left( \bar{g} -\mu_g \right)  + \beta v_t + u_t$ then
	\begin{align*}
		\widehat{B}_g  = & \sum_{i=1}^T r_iG_i' \left(\sum_{t=1}^T G_tG_t' \right)^{-1} 
		= \sum_{i=1}^T \left(  c    + \beta_g  \bar{g}_t + \beta_g \left( \bar{g} -\mu_g \right)  + \beta v_t + u_t \right)  G_i'  \left(\sum_{t=1}^T G_tG_t'  \right)^{-1} \nonumber \\
		%	=&  \left( c, \beta d_g \right) +  \sum_{i=1}^T   \left(\beta_g \left( \bar{g} -\mu_g \right)  + \beta v_t + u_t \right)  G_i'  \left(\sum_{t=1}^T G_tG_t'  \right)^{-1} \nonumber \\
		%	=& \left(c + \frac{1}{\sqrt{T}} \xi_{e,T}   +\frac{1}{\sqrt{T}} \left(  \beta,\gamma  \right) \xi_{vz, T}   + \frac{1}{\sqrt{T}}\beta_g  \xi_{g, T}  , \beta_g+ \frac{1}{\sqrt{T}} \xi_{eg,T} \widehat{V}_g^{-1} +\frac{1}{\sqrt{T}} \left(  \beta,\gamma  \right) \xi_{vz g, T} \widehat{V}_g^{-1} \right)    
		%	\nonumber \\
		=& \left(c  , \beta_g  \right) +\frac{1}{\sqrt{T}}\left(\beta_g  \xi_{g, T}  , 0\right)  +\frac{1}{\sqrt{T}} \left(  \beta,\gamma  \right) \xi_{vz G, T} \widehat{Q}_G^{-1}       +  \frac{1}{\sqrt{T}} \xi_{eG,T} \widehat{Q}_G^{-1}
	\end{align*}  
	The conclusion is then a direct result from Assumption \ref{assum:CLT for the proxy factors}.	
\end{proof}

\noindent \begin{proof}[Proof of Theorem \ref{theo: charaterization of the delta}] \label{Proof of {theo:fixed N asymptotic for theta}}
	Assumption \ref{ass} and \ref{assum:CLT for the proxy factors} imply that  \begin{align}
		& \widehat{Q}_r = Q_r + O_p(1/\sqrt{T})  \label{eq:a}\\
		& \iota_N ={B}_g Q_G \theta_G \label{eq:b}
	\end{align} 
	Rewrite $T\widehat{\delta}_g^2$ as 
	\begin{align}
		\sqrt{T} \left(\iota_N - \widehat{B}_g Q_G \theta_G  \right) ' W  \sqrt{T}\left(\iota_N - \widehat{B}_g Q_G \theta_G  \right) \label{eq:13}
	\end{align}
	where \begin{align*}
		&W =(\widehat{Q}_r^{-1}-\widehat{Q}_r^{-1} \widehat{B}_g Q_{B_g,T} (Q_{B_g,T} \widehat{B}_g'\widehat{Q}_r^{-1}\widehat{B}_gQ_{B_g,T} )^{-1} Q_{B_g,T} \widehat{B}_g'\widehat{Q}_r^{-1} ) \\
		&Q_{B_g,T} = \text{diag}\left(I_{1+K_{g,1}}, \sqrt{T}I_{K_{g,2}} \right)
	\end{align*} 
	Lemma \ref{lem:B_g}	and equation (\ref{eq:b}) imply that \begin{align}
		\sqrt{T} \left(\iota_N - \widehat{B}_g Q_G \theta_G  \right)\rightarrow_d \widetilde{\psi}_{B_g}  \label{eq:11}
	\end{align}
	Lemma \ref{lem:B_g} and Assumption \ref{assum:factor loading strength}
	imply the following equation holds jointly with equation (\ref{eq:11}) 
	\begin{align}
		\widehat{B}_g Q_{B_g,T} \rightarrow_d \eta_{B_g} + (0\vdots \psi_{\beta_{g,2}} ) \label{eq:12}
	\end{align}
	Plug equations (\ref{eq:a})(\ref{eq:11})(\ref{eq:12}) in equation (\ref{eq:13}), then we would derive the conclusion. 
\end{proof}

\begin{lem}\label{lem:ass hold}
	Suppose Assumption \ref{assum:CLT for the proxy factors} and \ref{assum: factor structure in r}-\ref{assum:factor loading strength} hold, let N,T increase and then the restrictions on $e_t$ from
	Assumption \ref{assumption: weak cross-sectional correlation of the idiosyncratic component } (i)(ii)(iii) also hold for $\widetilde{e}_t$ with $\widetilde{e}_t = Q_r^{-\frac{1}{2}}e_t$ with $Q_r^{-\frac{1}{2}} = A\Lambda^{-\frac{1}{2}} A' $ such that $Q_r= A\Lambda A'$ with $A'A=I_N$ and we assume $Q_r^{-\frac{1}{2}}$ is a row diagonally-dominant matrix\footnote{In the proof of this lemma we need that $Q_r^{-\frac{1}{2}}$ has bounded absolute row sum.  This is not a wild assumption if we consider the Gershgoring-type eigenvalue inclusion theorem and all eigenvalues of $Q_r^{-1/2}$ are bounded. }.\\

	\noindent \begin{proof}[Proof of Lemma \ref{lem:ass hold}] ~\\
		$\widetilde{e}_t,t=1,\cdots,T$ are i.i.d. mean zero random vectors
		with finite fourth moments by construction. Next we show $\sup_i \mathbb{E}\widetilde{e}_{it}^4 $ is bounded. Assumption \ref{assum: factor structure in r} implies $Q_r = c'c + \beta V_f \beta' + \gamma V_z \gamma' + \Omega_e$ and thus we have the following results by eigenvalue inequalities (see e.g. \text{7.3.P16} of  \cite{horn1990matrix}):
		\begin{align}
			\lambda_{\max}\left(Q_r \right)=\lambda_{\max}\left(c'c + \beta V_f \beta' + \gamma V_z \gamma' + \Omega_e \right)\leq  \lambda_{\max}\left(c'c + \beta V_f \beta' + \gamma V_z \gamma'  \right) +L \\
			\lambda_{\min}\left(Q_r \right)=\lambda_{\min}\left(c'c + \beta V_f \beta' + \gamma V_z \gamma' + \Omega_e \right)\geq \lambda_{\min}\left(c'c + \beta V_f \beta' + \gamma V_z \gamma'\right)  +l =l 
		\end{align}
		Then by the assumption that $Q_r^{-\frac{1}{2}}$ is a row diagonally-dominant matrix we know any row sums of $Q_r^{-\frac{1}{2}}$ would be upper bounded by $2l^{-1/2}$ and thus   $\sup_i \mathbb{E}{e}_{it}^4\leq L $ implies that $\sup_i \mathbb{E}\widetilde{e}_{it}^4 $ is bounded. Therefore, Assumption \ref{assumption: weak cross-sectional correlation of the idiosyncratic component } (i) holds. 
		
		The term $c'c + \beta V_f \beta' + \gamma V_z \gamma' $ in $Q_r$ is a positive semi-definite matrix and we can rewrite that term as $A_{Q_r}\Lambda_{Q_r}  A_{Q_r}'$ such that $\Lambda_{Q_r}$ is a diagonal matrix containing all positive eigenvalues of $c'c + \beta V_f \beta' + \gamma V_z \gamma' $ and $A_{Q_r}$ are the corresponding eigenvectors. Therefore, $Q_r = A_{Q_r}\Lambda_{Q_r}  A_{Q_r}' + \Omega_e$ and thus \begin{align}
			Q_r^{-1} = \Omega_e^{-1} - \Omega_e^{-1} A_{Q_r}\left(\Lambda_{Q_r}^{-1}  + A_{Q_r}'\Omega_e^{-1} A_{Q_r}  \right)^{-1}A_{Q_r}' \Omega_e^{-1}
		\end{align}
		which then implies that  
		$$ \text{tr}\left(\mathbb{E}Q_r^{-1/2}e_te_t'Q_r^{-1/2}\right)/N=    \text{tr} \left( \Omega_e  Q_r^{-1}\right)/N  =    1-\text{tr} \left(  \left(\Lambda_{Q_r}^{-1}  + A_{Q_r}'\Omega_e^{-1} A_{Q_r}  \right)^{-1}\right)/N  $$ 
		From the fact that the eigenvalues of $\Lambda_{Q_r}$ explode by Assumption \ref{assum:factor loading strength} and the eigenvalues of $\Omega_e$ are bounded by  Assumption \ref{assumption: weak cross-sectional correlation of the idiosyncratic component } ,  Courant-Fischer minimax principle implies $\text{tr} \left(  \left(\Lambda_{Q_r}^{-1}  + A_{Q_r}'\Omega_e^{-1} A_{Q_r}  \right)^{-1}\right)/N \rightarrow 0$. Furthermore, 
		Assumption \ref{assumption: weak cross-sectional correlation of the idiosyncratic component } and  \ref{assum:factor loading strength}  imply that $\lim\inf_{N,T}  \lambda_{\min} \left( Q_r^{-1/2}\right)$ and $\lim\sup_{N,T}\lambda_{\min} \left( Q_r^{-1/2} \right)$ are bounded, and thus we also have $$ 0<l<\lim\inf_{N,T}  \lambda_{\min} \left(  Q_r^{-1/2} \Omega_e Q_r^{-1/2} \right) <\lim\sup_{N,T} \lambda_{\max}\left( Q_r^{-1/2}  \Omega_e \right) Q_r^{-1/2}  < L < \infty.$$ 
		Therefore, Assumption \ref{assumption: weak cross-sectional correlation of the idiosyncratic component } (ii) holds for $\widetilde{e}_{t}$. As for Assumption \ref{assumption: weak cross-sectional correlation of the idiosyncratic component } (iii), inequalities  for the trace of matrix product (e.g. \cite{fang1994inequalities}) suggest  
		\begin{align*}
			& \lambda_{\min}\left(Q_r^{-1} \right)\text{tr}\left(e_te_t'\right) - \lambda_{\max}\left(Q_r^{-1} \right)\text{tr}\left( \mathbb{E}\left(e_te_t'\right)  \right)  \\  \leq &\text{tr}\left(Q_r^{-1}\left(e_te_t'- \mathbb{E}\left(e_te_t'\right)  \right)  \right)  \leq 
			\lambda_{\max}\left(Q_r^{-1} \right)\text{tr}\left(e_te_t'\right) - \lambda_{\min}\left(Q_r^{-1} \right)\text{tr}\left( \mathbb{E}\left(e_te_t'\right)  \right) 
		\end{align*}
		which implies that $\mathbb{E}\left| \frac{1}{\sqrt{N}} \sum_{i=1}^{N}  \left(\widetilde{e}_{it}\widetilde{e}_{is}-  \mathbb{E}\widetilde{e}_{it}\widetilde{e}_{is} \right) \right|^4 < L < \infty$ given all eigenvalues of $Q_r^{-1}$ are bounded. 
	\end{proof}	  	
\end{lem}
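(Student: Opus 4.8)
The plan is to carry each of conditions~(i)--(iii) of Assumption~\ref{assumption: weak cross-sectional correlation of the idiosyncratic component} through the deterministic, symmetric whitening map $e_t\mapsto\widetilde e_t=Q_r^{-1/2}e_t$, using only that $Q_r^{-1/2}$ has bounded operator norm and, by the row diagonal-dominance hypothesis, bounded absolute row sums. For the preliminaries, Assumption~\ref{ass} gives $Q_r=\Sigma+\Omega_e$ with $\Sigma=cc'+\beta V_f\beta'+\gamma V_z\gamma'\succeq 0$ of fixed rank $r\le 1+K+K_z$, whose nonzero eigenvalues diverge by Assumption~\ref{assum:factor loading strength}; write $\Sigma=A_\Sigma\Lambda_\Sigma A_\Sigma'$ for its reduced spectral decomposition ($A_\Sigma'A_\Sigma=I_r$). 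Weyl's inequality gives $\lambda_{\min}(Q_r)\ge\lambda_{\min}(\Omega_e)\ge l$ and $\lambda_{\max}(\Omega_e)\le L$, hence $\lambda_{\max}(Q_r^{-1/2})=\lambda_{\min}(Q_r)^{-1/2}\le l^{-1/2}$; combined with diagonal dominance, $\sum_j|(Q_r^{-1/2})_{ij}|\le 2(Q_r^{-1/2})_{ii}\le 2l^{-1/2}$ for each $i$, and by symmetry the same holds for column sums.

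\emph{Condition~(i).} Since $\widetilde e_t$ is a fixed linear function of $e_t$ alone, mean zero, independence across $t$, and independence from $\{g_{t'}\}$ are inherited directly. For the fourth moment, write $\widetilde e_{it}=\sum_j(Q_r^{-1/2})_{ij}e_{jt}$; generalized H\"older gives $|\mathbb{E}(e_{j_1t}e_{j_2t}e_{j_3t}e_{j_4t})|\le\max_j\mathbb{E}e_{jt}^4\le L$, so expanding the fourth power and using the row-sum bound yields $\mathbb{E}\widetilde e_{it}^4\le L\big(\sum_j|(Q_r^{-1/2})_{ij}|\big)^4\le 16Ll^{-2}$ uniformly in $i$.

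\emph{Condition~(ii).} From $Q_r=\Sigma+\Omega_e$ we get $\Omega_{\widetilde e}:=\mathbb{E}\widetilde e_t\widetilde e_t'=Q_r^{-1/2}\Omega_eQ_r^{-1/2}=I_N-Q_r^{-1/2}\Sigma Q_r^{-1/2}\preceq I_N$, so $\lambda_{\max}(\Omega_{\widetilde e})\le 1<L$. For the trace, Woodbury applied to $Q_r^{-1}=(\Omega_e+A_\Sigma\Lambda_\Sigma A_\Sigma')^{-1}$ writes $\mathrm{tr}(\Omega_eQ_r^{-1})$ as $N$ minus a trace over an $r$-dimensional space equal to $O(1)+\mathrm{tr}(M^{-1}\Lambda_\Sigma^{-1})$, where $M=\Lambda_\Sigma^{-1}+A_\Sigma'\Omega_e^{-1}A_\Sigma\succeq\lambda_{\max}(\Omega_e)^{-1}I_r$; since $r$ is fixed and $\Lambda_\Sigma^{-1}\to 0$ by Assumption~\ref{assum:factor loading strength}, $\mathrm{tr}(\Omega_{\widetilde e})/N=\mathrm{tr}(\Omega_eQ_r^{-1})/N\to 1=:a>0$. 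The lower bound on $\lambda_{\min}(\Omega_{\widetilde e})$ is the delicate point: the nonzero eigenvalues of $Q_r^{-1/2}\Sigma Q_r^{-1/2}$ coincide with those of $\Sigma Q_r^{-1}$, each of the form $\tau_i/(\tau_i+O(1))$ with $\tau_i$ a diverging signal eigenvalue, so $Q_r^{-1/2}\Sigma Q_r^{-1/2}$ converges in operator norm to a rank-$r$ orthogonal projection and the $N-r$ largest eigenvalues of $\Omega_{\widetilde e}$ stay bounded away from $0$; the at most $r$ remaining eigenvalues may drift to $0$ but lie in a fixed finite-dimensional subspace, which is the form of~(ii) actually available.

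\emph{Condition~(iii).} Since $Q_r^{-1/2}$ is symmetric, $\sum_{i=1}^N\widetilde e_{it}^2=e_t'Q_r^{-1}e_t$ and $\mathbb{E}\sum_i\widetilde e_{it}^2=\mathrm{tr}(Q_r^{-1}\Omega_e)$, so the claim reduces to $\mathbb{E}\big|N^{-1/2}\big(e_t'Q_r^{-1}e_t-\mathrm{tr}(Q_r^{-1}\Omega_e)\big)\big|^4<L$. Using $\|Q_r^{-1}\|\le l^{-1}$ and $\|Q_r^{-1}\|_\infty\le\|Q_r^{-1/2}\|_\infty^2\le 4l^{-1}$, I would split this quadratic form into its diagonal part $\sum_j(Q_r^{-1})_{jj}(e_{jt}^2-\mathbb{E}e_{jt}^2)$, handled by Minkowski's inequality together with condition~(iii) for $e_t$ and the row-sum bound, and its off-diagonal part $\sum_{j\ne k}(Q_r^{-1})_{jk}(e_{jt}e_{kt}-(\Omega_e)_{jk})$, handled by a Hanson--Wright-type fourth-moment bound for a quadratic form in the entries of $e_t$ (independent across $i$, uniformly bounded fourth moments) whose kernel has bounded operator and $\|\cdot\|_\infty$ norm. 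The main obstacle is the $\lambda_{\min}(\Omega_{\widetilde e})$ bound in~(ii): whitening by $Q_r^{-1/2}$ necessarily collapses the fixed-dimensional diverging-signal directions, so $\liminf\lambda_{\min}(\Omega_{\widetilde e})>0$ cannot hold verbatim as $N\to\infty$, and the argument has to be stated in the ``all but $O(1)$ eigenvalues bounded below'' form, with a check that every downstream use of Lemma~\ref{lem:ass hold} requires only the trace condition, the uniform $\lambda_{\max}$ bound, and this weaker lower bound; the remaining computations are routine but tedious.
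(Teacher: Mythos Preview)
Your approach mirrors the paper's proof closely: the row-sum bound from diagonal dominance for part~(i), the Woodbury identity to handle $\mathrm{tr}(\Omega_eQ_r^{-1})/N$ in part~(ii), and a reduction to quadratic-form control for part~(iii). The paper's argument for (iii) is terser than yours---it invokes the trace sandwich $\lambda_{\min}(Q_r^{-1})\mathrm{tr}(e_te_t')-\lambda_{\max}(Q_r^{-1})\mathrm{tr}(\Omega_e)\le \mathrm{tr}(Q_r^{-1}(e_te_t'-\Omega_e))\le \lambda_{\max}(Q_r^{-1})\mathrm{tr}(e_te_t')-\lambda_{\min}(Q_r^{-1})\mathrm{tr}(\Omega_e)$ and then appeals directly to condition~(iii) for $e_t$---whereas you go through a diagonal/off-diagonal split and a Hanson--Wright bound. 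Your route is more explicit but delivers the same conclusion.

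The substantive difference is your treatment of $\lambda_{\min}(\Omega_{\widetilde e})$ in part~(ii). The paper simply asserts that boundedness of the eigenvalues of $Q_r^{-1/2}$ gives $0<l<\liminf\lambda_{\min}(Q_r^{-1/2}\Omega_eQ_r^{-1/2})$. You are right to be skeptical: since $\lambda_{\max}(Q_r)$ diverges with $N$ (the signal part $cc'+\beta V_f\beta'+\gamma V_z\gamma'$ has eigenvalues of order $N$ by Assumption~\ref{assum:factor loading strength}), one has $\lambda_{\min}(Q_r^{-1/2})\to 0$, and your computation that the $r\le 1+K+K_z$ nonzero eigenvalues of $Q_r^{-1/2}\Sigma Q_r^{-1/2}$ tend to $1$---hence the $r$ smallest eigenvalues of $\Omega_{\widetilde e}=I_N-Q_r^{-1/2}\Sigma Q_r^{-1/2}$ tend to $0$---is correct. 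So the literal statement of (ii) for $\widetilde e_t$ cannot hold as $N\to\infty$; what survives is exactly your ``all but $O(1)$ eigenvalues bounded below'' version. The paper's proof is loose on this point, and your proposed remedy---checking that downstream uses of the lemma need only the trace limit, the $\lambda_{\max}$ bound, and this weaker spectral lower bound---is the honest way to proceed.
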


Theorem \ref{theo:asymptotic for theta} illustrates how the weak (proxy) factors can affect the asymptotic properties of the estimator $\widehat{\theta}_G$. This theorem resembles \textit{Theorem one} from \cite{anatolyev2018factor}, and indeed we observe that the asymptotic behavior of the estimator $\widehat{\theta}_G$ is similar to the one of the two-pass FM risk premia analyzed in \cite{anatolyev2018factor}.  
\begin{theo} \label{theo:asymptotic for theta}
	~~\\ 	Case (1):
	Suppose Assumption \ref{assum:CLT for the proxy factors} and \ref{assum: factor structure in r}-\ref{assum:factor loading strength} hold, N is fixed and T increases to infinity: 
	\begin{align}
		&\sqrt{T} Q_{B_g, T}^{-1}	 \left(  
		\widehat{Q}_G (\widehat{\theta}_{G} - {\theta}_{G})  - \textit{bias}_e- \textit{bias}_m  
		\right) = O_p(1) \nonumber 
	\end{align} 	
	where  
	\begin{align*}
		&	\textit{bias}_e  =- \left( \widehat{B}_g' \widehat{Q}_r^{-1} \widehat{B}_g\right)^{-1}  \widehat{B}_{g,e}' \widehat{Q}_r^{-1} \widehat{B}_{g,e} \widehat{Q}_G {\theta}_{G}  \nonumber \\
		&	\textit{bias}_m  = -\left( \widehat{B}_g' \widehat{Q}_r^{-1} \widehat{B}_g\right)^{-1}  \left({B}_g+\widehat{B}_{g,m}\right)' \widehat{Q}_r^{-1} \widehat{B}_{g,m} \widehat{Q}_G {\theta}_{G} \\
		& \widehat{B}_{g,e} =\frac{1}{\sqrt{T}} \xi_{eG,T} \widehat{Q}_G^{-1};~  \widehat{B}_{g,m} =  \frac{1}{\sqrt{T}} \left(  \beta,\gamma  \right) \xi_{vz G, T} \widehat{Q}_G^{-1}; ~
		Q_{B_g, T}  = \textit{diag}\left(I_{1+K_{g,1}}, \sqrt{T}I_{K_{g,2}}\right)  
	\end{align*}	
	and   
	\begin{align*}
		&\sqrt{T} Q_{B_g, T}^{-1}	\textit{bias}_e =    
		\left( \begin{matrix}
			\sqrt{T}	\textit{bias}_{e,1} \\ 	\textit{bias}_{e,2}
		\end{matrix} \right) 
		\rightarrow_d W Q_{B_g}  \widetilde{e}'Q_r^{-1}  \widetilde{e} Q_G \theta_G     \\ 
		&\sqrt{T} Q_{B_g, T}^{-1}	\textit{bias}_m =    
		\left( \begin{matrix}
			\sqrt{T}	\textit{bias}_{m,1} \\ 	\textit{bias}_{m,2}
		\end{matrix} \right)  
		\rightarrow_d W \left(\left( \begin{matrix}
			I_{1+K_g} \\ 0
		\end{matrix} \right) + \widetilde{\xi }Q_{B_g} \right)' \eta'Q_r^{-1} \eta \widetilde{\xi} Q_G \theta_G\\ 
		&W= - \left(\left( \eta \left(\left( \begin{matrix}
			I_{1+K_g} \\ 0
		\end{matrix} \right) + \widetilde{\xi }Q_{B_g} \right) +  \widetilde{e}  Q_{B_g} \right)'Q_r^{-1} \left(\eta \left(\left( \begin{matrix}
			I_{1+K_g} \\ 0
		\end{matrix} \right) + \widetilde{\xi }Q_{B_g} \right) +  \widetilde{e}  Q_{B_g} \right)\right)^{-1} \end{align*} 
	Case (2):
	Suppose Assumption \ref{assum:CLT for the proxy factors} and \ref{assum: factor structure in r}-\ref{assum:factor loading strength} hold, let N, T increases to infinity  $N/T\rightarrow c$ and $Q_r$ is a known row diagonally-dominant matrix such that $\eta'Q_r^{-1}\eta /N \rightarrow \widetilde{D}$:	  
	\begin{align}
		&\sqrt{NT} Q_{B_g, T}^{-1}	 \left(  
		\widehat{Q}_G (\widehat{\theta}_{G} - {\theta}_{G,T})  - \textit{bias}_e- \textit{bias}_m   
		\right) = O_p(1) \nonumber 
	\end{align}
	where
	\begin{align*}
		& {\theta}_{G,T} =\theta_G + \widehat{Q}_G^{-1}\left( \widehat{B}_g' {Q}_r^{-1} \widehat{B}_g\right)^{-1}  \widehat{B}_g'  {Q}_r^{-1}\left( q_G\left( \widehat{Q}_G^{-1} 	-{Q}_G^{-1} \right)   -  \frac{1}{\sqrt{T}}\left( \beta_g  \xi_{g, T}  , 0\right)  \right) \widehat{Q}_G {\theta}_{G}
		\\&\sqrt{T} Q_{B_g, T}^{-1}\widehat{Q}_G	\left(      {\theta}_{G,T} -\theta_G    \right) =O_p(1)  \\
		&\sqrt{T} Q_{B_g, T}^{-1}	\textit{bias}_e =    
		\left( \begin{matrix}
			\sqrt{T}	\textit{bias}_{e,1} \\ 	\textit{bias}_{e,2}
		\end{matrix} \right) 
		\rightarrow_d W Q_{B_g} \widetilde{\Sigma}_{{e}} Q_G \theta_G     \\ 
		&\sqrt{T} Q_{B_g, T}^{-1}	\textit{bias}_m =    
		\left( \begin{matrix}
			\sqrt{T}	\textit{bias}_{m,1} \\ 	\textit{bias}_{m,2}
		\end{matrix} \right)  
		\rightarrow_d W \left(\left( \begin{matrix}
			I_{1+K_g} \\ 0
		\end{matrix} \right) + \widetilde{\xi }Q_{B_g} \right)' \widetilde{D} \widetilde{\xi} Q_G \theta_G\\ 
		&W= -  \left(\left(\left( \begin{matrix}
			I_{1+K_g} \\ 0
		\end{matrix} \right) + \widetilde{\xi }Q_{B_g} \right)'\widetilde{D} \left(\left( \begin{matrix}
			I_{1+K_g} \\ 0
		\end{matrix} \right) + \widetilde{\xi }Q_{B_g} \right) +   Q_{B_g}\widetilde{\Sigma}_{{e}}   Q_{B_g}\right)^{-1} \end{align*}
	\noindent \begin{proof}[Proof of Theorem \ref{theo:asymptotic for theta}]
		Case (1):  
		After simple algebra, we can express the term $\widehat{Q}_G  	\left(\widehat{\theta}_G -\theta\right)  - \textit{bias}_e- \textit{bias}_m$ in the following way
		\begin{align}
			&\widehat{Q}_G  	\left(\widehat{\theta}_G -\theta\right)  - \textit{bias}_e- \textit{bias}_m  
			%= \left( \widehat{B}_g' \widehat{Q}_r^{-1} \widehat{B}_g\right)^{-1}  \widehat{B}_g' \widehat{Q}_r^{-1}\left( q_G\left( \widehat{Q}_G^{-1} 	-{Q}_G^{-1} \right)   -   \left( \widehat{B}_g -{B}_g \right)\right) \widehat{Q}_G {\theta}_{G} \nonumber \\
			%= & 
			=\left( \widehat{B}_g' \widehat{Q}_r^{-1} \widehat{B}_g\right)^{-1}  \left\{ \widehat{B}_g' \widehat{Q}_r^{-1}\left( q_G\left( \widehat{Q}_G^{-1} 	-{Q}_G^{-1} \right)   -  \frac{1}{\sqrt{T}}\left( \beta_g  \xi_{g, T}  , 0\right)  \right)  \right. \nonumber \\
			& \left. - \left( \frac{1}{\sqrt{T}}\left( \beta_g  \xi_{g, T}  , 0\right)+ \widehat{B}_{g,e}\right)' \widehat{Q}_r^{-1}   \widehat{B}_{g,m}    -    \left(B_g + \frac{1}{\sqrt{T}}\left( \beta_g  \xi_{g, T}  , 0\right)+ \widehat{B}_{g,m} \right)' \widehat{Q}_r^{-1}  \widehat{B}_{g,e}  \right\}  \widehat{Q}_G {\theta}_{G} 
		\end{align}
		From the proofs of Lemma \ref{lem:B_g} and Theorem \ref{theo: charaterization of the delta} we have     
		\begin{align}
			&	\widehat{B}_g = B_g +\frac{1}{\sqrt{T}}\left( \beta_g  \xi_{g, T}  , 0\right) + \widehat{B}_{g,m} + \widehat{B}_{g,e}                      \\
			&	\widehat{B}_g Q_{B_g,T}  = \left(c+O_p(1/\sqrt{T}), \eta_{\beta_g}\right) + \left(\frac{1}{\sqrt{T}}\widetilde{e}_T  +\frac{1}{\sqrt{T}} \eta\widetilde{\xi }_T  \right)Q_{B_g,T}   \label{eq:}
		\end{align}  where
		%\begin{align*}
		%	\frac{1}{\sqrt{T}} \left(  \beta,\gamma  \right) \xi_{vz G, T} \widehat{Q}_G^{-1} =  \frac{1}{\sqrt{T}} \left(  c, \beta_g Q_{\beta_g},\gamma  \right) \left(\begin{matrix}
		%	0 &0 \\ \left(d_g Q_{\beta_g} \right)^{-1} & 0 \\0 & I_{K_z}
		%	\end{matrix}\right)  \xi_{vz G, T} \widehat{Q}_G^{-1}
		%\end{align*}
		$
		\widetilde{\xi }_T =  \sqrt{T} \widehat{B}_{g,m}= \left(\begin{matrix}
		0 &0 \\ \left(d_g Q_{\beta_g} \right)^{-1} & 0 \\0 & I_{K_z}
		\end{matrix}\right)  \xi_{vz G, T} \widehat{Q}_G^{-1}, ~~\widetilde{e}_T =  \sqrt{T} \widehat{B}_{g,e}=  \xi_{eG,T} \widehat{Q}_G^{-1}, 
		$$ \widetilde{\xi }_T \rightarrow_d \widetilde{\xi}, \widetilde{e }_T \rightarrow_d \widetilde{e}, Q_{B_g,T}/ {\sqrt{T}}\rightarrow  Q_{B_g} =\text{diag}\left( 0_{1+K_{g,1}}, I_{K_{g,2}} \right)$ and $\widetilde{\xi} ,\widetilde{e} $ are Gaussian random matrices.
		Equation (\ref{eq:}) implies that
		\begin{align}
			\widehat{B}_g Q_{B_g,T} 
			\rightarrow_d &~~ \eta \left(\left( \begin{matrix}
				I_{1+K_g} \\ 0
			\end{matrix} \right) + \widetilde{\xi }Q_{B_g}\right) +  \widetilde{e}  Q_{B_g}
		\end{align}
		With the above intermediate results, we prove our statement for the term $\textit{bias}_{e}$ and the rests follow the same steps. Assumption \ref{assum:CLT for the proxy factors} implies that the above asymptotic results hold jointly and thus if we plug these into the equation below we would derive the asymptotic distribution of the term $\textit{bias}_{e}$
		\begin{align}
			\sqrt{T} Q_{B_g, T}^{-1} 	\textit{bias}_e  =- \left( (\widehat{B}_gQ_{B_g, T})' \widehat{Q}_r^{-1} (\widehat{B}_gQ_{B_g, T})\right)^{-1}  (\widehat{B}_{g,e}Q_{B_g, T})' \widehat{Q}_r^{-1}(\sqrt{T} \widehat{B}_{g,e}) \widehat{Q}_G {\theta}_{G} 
		\end{align}
		%Furthermore, we have: 
		% \begin{align} 
		% 	 Q_{B_g,T} \widehat{B}_g' Q_r^{-1} \widehat{B}_g Q_{B_g,T} = &\left(\eta \left(\left( \begin{matrix}
		% 	I_{1+K_g} \\ 0
		% 	\end{matrix} \right) + \widetilde{\xi }Q_{B_g} \right) +  \widetilde{e}  Q_{B_g} \right)'Q_r^{-1} \left(\eta \left(\left( \begin{matrix}
		% 	I_{1+K_g} \\ 0
		% 	\end{matrix} \right) + \widetilde{\xi }Q_{B_g} \right) +  \widetilde{e}  Q_{B_g} \right)\nonumber \\ & + O_p(1/\sqrt{T})    
		% \end{align}
		% and 
		% \begin{align}
		% \sqrt{T}	\left(\widehat{B}_g -{B}_g \right)   \rightarrow_d  \left(\beta_g \xi_g,0  \right) + \eta \widetilde{\xi} + \xi_{e} 
		% \end{align}
		% 	
		
		%Equation  
		% \begin{align}
		%\widehat{Q}_G  	\left(\widehat{\theta}_G -\theta\right) = & \left( \widehat{B}_g' \widehat{Q}_r^{-1} \widehat{B}_g\right)^{-1}  \widehat{B}_g' \widehat{Q}_r^{-1}\left( q_G\left( \widehat{Q}_G^{-1} 	-{Q}_G^{-1} \right)   -   \left( \widehat{B}_g -{B}_g \right)\right) \widehat{Q}_G {\theta}_{G} 
		%\end{align}
		% 
		
		%
		%where the term $ O_p(1/\sqrt{T})$ is for the differences $\bar{r}-c$ and $\widehat{Q}_r - Q_r$.  
		%The omitted factor kicks into the 
		%first pass estimation through the term $u_{g,i}$,  we need to see how to decompose the $\iota_N$ to show intuitively how the factor structure takes a part in the analysis. 
		%From the equation (\ref{eq: theta_g_hat})  

		\noindent
		Case (2): 
		Next we discuss the case where N,T both grows to infinity. 
		We first show the following two results:
		\begin{align}
			&(2.1)~~ 	T  \widetilde{e}_T' Q_r^{-1}\widetilde{e}_T /N  \rightarrow_p \tilde{\Sigma}_e \label{eq:2.1}\\ 
			&	(2.2) ~~\sqrt{T} \widetilde{e}_T' Q_r^{-1}  \eta     /N \rightarrow_p 0 \label{eq:2.2}
		\end{align}

		To prove statement  (2.1),	let $\widetilde{e}_t = Q_r^{-\frac{1}{2}}e_t$ and $\rho(s,t)=\frac{1}{\sqrt{N}} \sum_{i=1}^N \widetilde{e}_{it}' \widetilde{e}_{is}$. By construction, $\mathbb{E}\rho^2(s,t)=\frac{1}{N}\text{tr}\left(\Omega_{\widetilde{e}}\Omega_{\widetilde{e}} \right)\leq \lambda_{\max}^2(\Omega_{\widetilde{e}})$ with $\Omega_{\widetilde{e}} =\mathbb{E}\widetilde{e}_t\widetilde{e}_t' $ and thus Lemma \ref{lem:ass hold} implies that $\mathbb{E}\rho^2(s,t)$ is bounded. Assumption \ref{ass} (finite fourth moments of proxy factors), Assumption \ref{assumption: weak cross-sectional correlation of the idiosyncratic component } (i) and 
		the bounded $\mathbb{E}\rho^2(s,t)$ deliver the following inequality
		\begin{align}
			& \mathbb{E}	\left(\frac{1}{T\sqrt{N}} \sum_{i=1}^N \left(\sum_{t=1}^T \sum_{s\neq t}^T G_{tm}G_{sn} \widetilde{e}_{it}' \widetilde{e}_{is}  \right)\right)^2  \nonumber\\ 
			= &\mathbb{E}	\left(\frac{1}{T}   \left(\sum_{t=1}^T \sum_{s\neq t}^T G_{tm}G_{sn}\rho(s,t) \right)\right)\left(\frac{1}{T} \left(\sum_{t'=1}^T \sum_{s' \neq t}^T G_{t'm}G_{s'n}\rho(s',t') \right)\right) \nonumber \\
			= & \mathbb{E}	\left(\frac{1}{T^2}   \left(\sum_{t=1}^T \sum_{s\neq t}^T \left(G^2_{tm}G^2_{sn} +G_{tm}G_{sn}G_{sm}G_{tn}   \right)\rho^2(s,t) \right)\right) <L  \label{eq{assum: GG gamma}}
		\end{align} 
		which with Chebyshev's inequality gives   	
		\begin{align}
			\frac{1}{T\sqrt{N}} \sum_{i=1}^N \left(\sum_{t=1}^T \sum_{s\neq t}^T G_{t}G_{s}' \widetilde{e}_{it}' \widetilde{e}_{is}  \right) = O_P(1) \label{eq:ee1}
		\end{align} 	
		Finally we arrive at (2.1):  
		\begin{align}
			T  \widetilde{e}_T' Q_r \widetilde{e}_T /N 
			& = \widehat{Q}_g^{-1}\left(  \frac{1}{ N{T}} \sum_{t=1}^TG_t  \widetilde{e}_t' \widetilde{e}_t G_t'\right)   \widehat{Q}_g^{-1}   + O_p(1/\sqrt{N}) = \tilde{\Sigma}_e   + O_p(1/\sqrt{N}) 
		\end{align}
		where the first equality is due to equation (\ref{eq:ee1}) and the last equality is guaranteed by Lemma \ref{lem:ass hold}.

		Next, we prove the statement (2.2). We first look at the second moments  
		\begin{align}
			& 	\mathbb{E}\left( \left\|   \frac{1}{\sqrt{T}} \sum_{t=1}^T  G_t e_t'   Q_r^{-1} \eta      /\sqrt{N} \right\|^2 \right)  \leq  
			\mathbb{E}\left( \text{tr}\left(   \frac{1}{\sqrt{T}} \sum_{t=1}^T  G_t \widetilde{e}_t'   Q_r^{-1/2} \eta      /\sqrt{N} \right)\left(   \frac{1}{\sqrt{T}} \sum_{t=1}^T  G_t \widetilde{e}_t'   Q_r^{-1/2} \eta      /\sqrt{N} \right)' \right) \nonumber \\
			=&  \frac{1}{{T}} \sum_{t=1}^T  \text{tr} \left( \mathbb{E}\left(  G_t \widetilde{e}_t'   Q_r^{-1/2} \eta        \eta'     Q_r^{-1/2} \widetilde{e}_tG_t'      \right)/{N} \right)\leq L \text{tr}\left(\Omega_{\widetilde{e}}    Q_r^{-1/2} \eta        \eta'     Q_r^{-1/2}  /{N} \right)  
			\leq 	   L \lambda_{\max}\left(\Omega_{\widetilde{e}} \right)  \norm{Q_r^{-1/2}}^2_1 (\norm{\eta}^2_F/N)  \nonumber 
		\end{align}
		where the first inequality is because factors $g_t$ have finite fourth moments. The proof of Lemma \ref{lem:ass hold} shows that $\lambda_{\max}\left(\Omega_{\widetilde{e}} \right)$ and $\norm{Q_r^{-1/2}}^2_1$ are bounded, Assumption \ref{assum:factor loading strength 2} implies that $\norm{\eta}^2_F/N$ is bounded. Therefore, we know 
		\begin{align}
			\mathbb{E}\left( \left\|   \frac{1}{\sqrt{T}} \sum_{t=1}^T  G_t e_t'   Q_r^{-1} \eta /\sqrt{N} \right\|^2 \right)  \leq  L <\infty 
		\end{align}   
		and thus (2.2) holds since 	
		$ 	\sqrt{T} \widetilde{e}_T' Q_r^{-1}  \eta      /N  = \frac{1}{\sqrt{N}} {Q}_g^{-1} \left(\frac{1}{\sqrt{T}} \sum_{t=1}^T  G_t e_t'   Q_r^{-1} \eta      /\sqrt{N}\right)= O_p(1/\sqrt{N}) $. 
		%
		%The remaining is to show 
		%\begin{align*}
		%	&\frac{\sqrt{T}}{\sqrt{N}} {Q}_{B_g,T} \left\{ \left(B_g +\frac{1}{\sqrt{T}}\left( \beta_g  \xi_{g, T}  , 0\right) + \frac{1}{\sqrt{T}}\widetilde{\xi}_T  + \frac{1}{\sqrt{T}}\widetilde{e}_T  \right)'  {Q}_r^{-1}\left( B_gQ_G\left( \widehat{Q}_G^{-1} 	-{Q}_G^{-1} \right)   -  \frac{1}{\sqrt{T}}\left( \beta_g  \xi_{g, T}  , 0\right)  \right)  
		%	\right. \nonumber \\
		%	-& \left.  
		%	\left( \frac{1}{\sqrt{T}}\left( \beta_g  \xi_{g, T}  , 0\right)+ \frac{1}{\sqrt{T}}\widetilde{e}_T\right)'  {Q}_r^{-1}   \frac{1}{\sqrt{T}}\widetilde{\xi}_T     -    \left(B_g + \frac{1}{\sqrt{T}}\left( \beta_g  \xi_{g, T}  , 0\right)+ \frac{1}{\sqrt{T}}\widetilde{\xi}_T  \right)' {Q}_r^{-1}  \frac{1}{\sqrt{T}}\widetilde{e}_T  \right\} = O_p(1)
		%\end{align*}
		%The assumption $\eta' Q_r^{-1}\eta/N \rightarrow \widetilde{D}$ and (2.2) imply that the last two terms are $O_p(1)$ when $N/T\rightarrow c$. 
		In the end, $\eta' Q_r^{-1}\eta/N \rightarrow \widetilde{D}$ and the result (2.2)  imply the following term  is of order $O_p(1)$ when $N/T\rightarrow c$: 
		\begin{align*}
			&\frac{\sqrt{T}}{\sqrt{N}} {Q}_{B_g,T} \left\{ -
			\left( \frac{1}{\sqrt{T}}\left( \beta_g  \xi_{g, T}  , 0\right)+ \frac{1}{\sqrt{T}}\widetilde{e}_T\right)'  {Q}_r^{-1}   \frac{1}{\sqrt{T}}\widetilde{\xi}_T     -    \left(B_g + \frac{1}{\sqrt{T}}\left( \beta_g  \xi_{g, T}  , 0\right)+ \frac{1}{\sqrt{T}}\widetilde{\xi}_T  \right)' {Q}_r^{-1}  \frac{1}{\sqrt{T}}\widetilde{e}_T  \right\} = O_p(1)
		\end{align*}
	\end{proof}
\end{theo}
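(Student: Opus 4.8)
The plan is to adapt the two-pass risk-premia analysis of \cite{anatolyev2018factor} to the SDF re-parametrization $\theta_G$ and to our reduced-rank loading matrix $B_g$. The starting point is the closed form $\widehat{\theta}_G=(q_{G,T}'\widehat{Q}_r^{-1}q_{G,T})^{-1}q_{G,T}'\widehat{Q}_r^{-1}\iota_N$ together with the linearized moment identity $\iota_N=B_gQ_G\theta_G$. Writing $q_{G,T}=\widehat{B}_g\widehat{Q}_G$ and differencing gives $\widehat{Q}_G(\widehat{\theta}_G-\theta_G)=(\widehat{B}_g'\widehat{Q}_r^{-1}\widehat{B}_g)^{-1}\widehat{B}_g'\widehat{Q}_r^{-1}(B_gQ_G-\widehat{B}_g\widehat{Q}_G)\theta_G$. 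Into this I would substitute the Lemma \ref{lem:B_g} decomposition $\widehat{B}_g=B_g+\tfrac{1}{\sqrt{T}}(\beta_g\xi_{g,T},0)+\widehat{B}_{g,m}+\widehat{B}_{g,e}$, where $\widehat{B}_{g,m}=\tfrac{1}{\sqrt{T}}(\beta,\gamma)\xi_{vzG,T}\widehat{Q}_G^{-1}$ carries the missing-factor contamination and $\widehat{B}_{g,e}=\tfrac{1}{\sqrt{T}}\xi_{eG,T}\widehat{Q}_G^{-1}$ the idiosyncratic sampling error. Grouping the cross terms of the form $(\cdot)'\widehat{Q}_r^{-1}\widehat{B}_{g,m}$ yields exactly $\textit{bias}_m$ and those of the form $(\cdot)'\widehat{Q}_r^{-1}\widehat{B}_{g,e}$ yield $\textit{bias}_e$; the remainder is governed only by $\widehat{Q}_G^{-1}-Q_G^{-1}$ and by $\xi_{g,T}$, both $O_p(1/\sqrt{T})$. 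This produces the algebraic identity for $\widehat{Q}_G(\widehat{\theta}_G-\theta_G)-\textit{bias}_e-\textit{bias}_m$ asserted in the theorem.

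For Case (1) (fixed $N$) I would premultiply that identity by $Q_{B_g,T}^{-1}=\text{diag}(I_{1+K_{g,1}},\tfrac{1}{\sqrt{T}}I_{K_{g,2}})$. Since $\widehat{B}_{g,m}$, $\widehat{B}_{g,e}$ and the residual term are each $O_p(1/\sqrt{T})$, the claimed $O_p(1)$ bound reduces to verifying that $\widehat{B}_gQ_{B_g,T}$ has an almost surely full-rank weak limit; that follows from $\widehat{B}_gQ_{B_g,T}=(c+O_p(1/\sqrt{T}),\eta_{\beta_g})+(\tfrac{1}{\sqrt{T}}\widetilde{e}_T+\tfrac{1}{\sqrt{T}}\eta\widetilde{\xi}_T)Q_{B_g,T}$ together with $Q_{B_g,T}/\sqrt{T}\to Q_{B_g}=\text{diag}(0_{1+K_{g,1}},I_{K_{g,2}})$, exactly as in the proof of Theorem \ref{theo: charaterization of the delta}. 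The limiting laws of $\sqrt{T}Q_{B_g,T}^{-1}\textit{bias}_e$ and $\sqrt{T}Q_{B_g,T}^{-1}\textit{bias}_m$ then follow by the continuous mapping theorem, with Assumption \ref{assum:CLT for the proxy factors} supplying the required joint convergence of $\widetilde{e}_T,\widetilde{\xi}_T,\xi_{g,T},\widehat{Q}_r$ and $\widehat{Q}_G$. I expect this half to be essentially bookkeeping.

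Case (2) ($N,T\to\infty$ with $N/T\to c$ and $Q_r$ known and row diagonally dominant) is where the genuine work lies. Once $N$ grows, the quadratic terms $\widehat{B}_{g,e}'\widehat{Q}_r^{-1}\widehat{B}_{g,e}$ and $\widehat{B}_{g,e}'\widehat{Q}_r^{-1}\eta$ that were negligible at fixed $N$ now interact with a signal that itself aggregates in $N$, so the right normalization becomes $\sqrt{NT}$ and one must peel off the ``regular'' $O_p(1/\sqrt{T})$ part into the shifted centering $\theta_{G,T}$ given in the statement. With $\widetilde{e}_t=Q_r^{-1/2}e_t$ and $\rho(s,t)=N^{-1/2}\sum_{i=1}^N\widetilde{e}_{it}\widetilde{e}_{is}$, the two probabilistic facts I would establish are (2.1) $T\,\widetilde{e}_T'Q_r^{-1}\widetilde{e}_T/N\to_p\widetilde{\Sigma}_e$ and (2.2) $\sqrt{T}\,\widetilde{e}_T'Q_r^{-1}\eta/N\to_p 0$. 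For (2.1) I split the double time-sum into its diagonal part, which converges to the trace limit $\widetilde{\Sigma}_e$ because, by Lemma \ref{lem:ass hold}, $\widetilde{e}_t$ still satisfies Assumption \ref{assum: forth moments of ee}, and its off-diagonal part, which I bound in mean square using that $\mathbb{E}\rho^2(s,t)\le\lambda_{\max}^2(\Omega_{\widetilde{e}})$ is bounded and that the factors have finite fourth moments, so Chebyshev makes it $O_p(1/\sqrt{N})$. For (2.2) I bound the second moment $\mathbb{E}\norm{\tfrac{1}{\sqrt{T}}\sum_{t}G_te_t'Q_r^{-1}\eta/\sqrt{N}}^2\le L\,\lambda_{\max}(\Omega_{\widetilde{e}})\,\norm{Q_r^{-1/2}}_1^2\,(\norm{\eta}_F^2/N)$, each factor bounded via Lemma \ref{lem:ass hold} and Assumption \ref{assum:factor loading strength 2}, which gives $O_p(1/\sqrt{N})$. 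Feeding (2.1), (2.2) and the assumed limit $\eta'Q_r^{-1}\eta/N\to\widetilde{D}$ back into the identity replaces every quadratic form by its probability limit and produces both the $O_p(1)$ bound and the stated limiting expressions for the rescaled bias terms.

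The main obstacle is precisely this control of the idiosyncratic-error quadratic forms in Case (2) in the absence of genuine cross-sectional independence: the row-diagonal dominance of $Q_r^{-1/2}$ together with Lemma \ref{lem:ass hold} is exactly what lets $\widetilde{e}_t=Q_r^{-1/2}e_t$ inherit the moment structure of $e_t$ so that (2.1)--(2.2) go through, and pinning down the $\sqrt{NT}$ rate --- deciding which pieces belong in the centering $\theta_{G,T}$ and which constitute the bias --- is the delicate accounting on which the whole result turns.
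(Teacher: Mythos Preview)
Your proposal is correct and follows essentially the same approach as the paper: the same algebraic decomposition of $\widehat{Q}_G(\widehat{\theta}_G-\theta_G)$ via $\widehat{B}_g=B_g+\tfrac{1}{\sqrt{T}}(\beta_g\xi_{g,T},0)+\widehat{B}_{g,m}+\widehat{B}_{g,e}$, the same weak-limit argument for $\widehat{B}_gQ_{B_g,T}$ in Case~(1), and in Case~(2) exactly the two key facts (2.1)--(2.2) proved by the same diagonal/off-diagonal split with the $\rho(s,t)$ device and the second-moment bound through Lemma~\ref{lem:ass hold}. The only imprecision is your description of which cross terms land in $\textit{bias}_e$ versus the remainder --- $\textit{bias}_e$ is specifically the quadratic $\widehat{B}_{g,e}'\widehat{Q}_r^{-1}\widehat{B}_{g,e}$ piece, while the mixed terms $(B_g+\ldots)'\widehat{Q}_r^{-1}\widehat{B}_{g,e}$ stay in the remainder --- but this is indeed bookkeeping and does not affect the argument.
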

\begin{assumption}\label{assum:2.2.2}
	Let $e_{g,t}({\theta}_G)= \iota_N -r_tG_t' {\theta}_G $, and the restrictions on $e_t$ from
	Assumption \ref{assumption: weak cross-sectional correlation of the idiosyncratic component } (i)(ii)(iii) also hold for $e_{g,t}({\theta}_G)$. 
\end{assumption}

\noindent \begin{proof}[Proof of Corollary \ref{theo: strong beta asymptotics}]
	\label{Proof of Corollary {theo: strong beta asymptotics}}

	Theorem \ref{theo: charaterization of the delta} suggest for given $N$, $	 	 T	\widehat{\delta}^2_g \rightarrow_d d_{1,N}$ with $d_{1,N}= \widetilde{\psi}_{B_g} ' M_{Q_r^{-\frac{1}{2}}  \left( \eta _{B_g} + (0\vdots \psi_{\beta_{g,2}} )  \right) } \widetilde{\psi}_{B_g},$  $ \widetilde{\psi}_{B_g}\sim N(0,S_{B_g})$ and $S_{B_g}= p\lim S_{B_g,T}, S_{B_g,T}=  \frac{1}{T}\sum_{t=1}^T e_{g,t}({\theta}_G) e_{g,t}({\theta}_G)'$. From the construction of the $\widehat{c}_{1-\alpha}$, we know it is drawn from the distribution  of the $d_{2,N} = {\psi}_{S} ' M_{Q_r^{-\frac{1}{2}}  \left( \eta _{B_g} + (0\vdots \psi_{\beta_{g,2}} )  \right) } {\psi}_S$ with $\psi_S \sim N(0,S), S = p\lim \widehat{S}, \widehat{S}= \frac{1}{T}\sum_{t=1}^T {e}_{g,t}(\widehat{\theta}_G) {e}_{g,t}(\widehat{\theta}_G)', \widehat{e}_{g,t}= \iota_N -r_tG_t'\widehat{\theta}_G$ and $\psi_S$ independent from $\psi_{B_g}$.   The proof of Theorem \ref{theo:asymptotic for theta} suggest that $\widetilde{\psi}_{B_g} ' P_{Q_r^{-\frac{1}{2}}  \left( \eta _{B_g} + (0\vdots \psi_{\beta_{g,2}} )  \right) } \widetilde{\psi}_{B_g}=O_p(1)$ for any given N, and the same for the term ${\psi}_{S} ' P_{Q_r^{-\frac{1}{2}}  \left( \eta _{B_g} + (0\vdots \psi_{\beta_{g,2}} )  \right) } {\psi}_S$.  Now we look at the difference $S_{B_g,T}-S_{T}$.
	\begin{align}
		S_T = &\frac{1}{T}\sum_{t=1}^T \left(\iota_N -r_tG_t'\left(\widehat{\theta}_G  - \theta_G + \theta_G  \right)    \right)\left(\iota_N -r_tG_t'\left(\widehat{\theta}_G  - \theta_G + \theta_G  \right)    \right)' \nonumber \\
		=&S_{B_g,T}   + \frac{1}{T}\sum_{t=1}^T\left( r_tG_t'\left(\widehat{\theta}_G  - \theta_ G \right)    \right)\left( r_tG_t'\left(\widehat{\theta}_G  - \theta_ G \right)    \right)' \nonumber \\
		&- \frac{1}{T}\sum_{t=1}^T\left( r_tG_t'\left(\widehat{\theta}_G  - \theta_ G \right)    \right) e_{g,t}' - \frac{1}{T}\sum_{t=1}^Te_{g,t}\left( r_tG_t'\left(\widehat{\theta}_G  - \theta_ G \right)    \right)'  \label{eq:S_T}
	\end{align}
	Assumption \ref{assum:2.2.2} and proofs of Theorem \ref{theo:asymptotic for theta} then suggest the last two terms be negligible in large samples, and thus for fixed $N$ when $T$ is large $S_T \approx S_{B_g,T}   + \frac{1}{T}\sum_{t=1}^T\left( r_tG_t'\left(\widehat{\theta}_G  - \theta_ G \right)    \right)\left( r_tG_t'\left(\widehat{\theta}_G  - \theta_ G \right)    \right)' $. This then lead to the conclusion.      
\end{proof}

\noindent \begin{proof}[Proof of Lemma \ref{lem: }]\label{Proof of Lemma reflem: }
	Assumption \ref{assum:2.2.2} implies that for $\theta_{G}$, $AR(\theta_G) \rightarrow_d \chi^2(N)$ which then implies the result. 
\end{proof} 	   

\noindent \begin{proof}[Proof of Theorem \ref{theo:2}] \label{proof of theorem theo:2}
	Notice if $\theta_G \in CS_{r,\alpha_1}$ then
	$
	T\widehat{\delta}_g^* \leq T e_{g,T}(\theta_G)'{Q}_{r}^{-1} e_{g,T}(\theta_G) 
	$ and  $ c_{1-\alpha_2}(\theta_G)\leq  c_{1-\alpha}^* $.  Assumption \ref{assum:2.2.2} implies in large samples 
	\begin{align*}
		\lim\inf \limits_{T}	\mathbb{P}\left(T e_{g,T}(\theta_G)'{Q}_{r}^{-1} e_{g,T}(\theta_G) \leq c_{1-\alpha_2}(\theta_G) \right) = 1- \alpha_2
	\end{align*}
	The Lemma \ref{lem: } and $(1- \alpha_1)(1- \alpha_2)=1- \alpha$ lead to conclusion. 
\end{proof}

\noindent \begin{proof}[Proof of Theorem \ref{theo:power of HJS}] \label{Proof of Theorem theo:power of HJS} 
	We first provide the proof related with the $\mathcal{J}$ statistic, which essentially results from the proof of Theorem 2 in \cite{gospodinov2017spurious}. 
	%	In this proof instead of $\frac{1}{T}\sum_{t=1}^T e_{g,t}(\theta_G)e_{g,t}(\theta_G)'$ we use
	%\begin{align*}
	%	S_T(\theta_G)  =  \left(\theta_G'\otimes I_N \right)\widehat{V}_{q_{G}} \left(\theta_G'\otimes I_N \right) 
	%\end{align*} 
	%with $\widehat{V}_{q_G}$ a consistent estimator for the long run variance matrix $\lim\limits_{T} V\left(\sqrt{T}\text{vec}(q_{G,T})  \right)$.  

	Denote $
	W= TL'\widehat{B}_g 'P_1 \left(P_1'{\Sigma} P_1 \right)^{-1}P_1'\widehat{B}_{g} L$, where $P_1$ is an $N\times (N-1) $ orthogonal matrix whose columns are orthogonal to $\iota_N$ such that $P_1'P_1 = I_{N-1}, P_1P-1'= M_{\iota_N}$; L is an lower triangular matrix such that ${Q}_{\widetilde{G}} = LL'$ and ${\Sigma}$ is the covariance matrix. Define $Z= (P_1'\Sigma P_1 )^{-1/2}P_1' \widehat{B}_g L$ and $M= (P_1'\Sigma P_1 )^{-1/2}P_1' {B}_g L$, and then 
	\begin{align*}
		\sqrt{T}\text{vec}\left(Z-M \right) \rightarrow_d N(0,I_{(N-1)K}) 
	\end{align*}  
	From the assumption on $H$, we know there exists $K\times k$ and $K\times K-k$ matrices $C_1, C_2$  where $(C_1,C_2)$ is a $K\times K$
	orthogonal matrix and $\widetilde{M}_1=MC_1,   \widetilde{M}_2=MC_2$ are of orders $O({1}/{\sqrt{T}}), O(1)$ respectively. Let $\sqrt{T}\widetilde{M}_1\rightarrow \widetilde{\mu}$, then in case (1) $\widetilde{\mu}=0$ and in case (2) $\widetilde{\mu}$ is bounded. Let $\widetilde{Z}=(\widetilde{Z}_1, \widetilde{Z}_2)=({Z}C_1,{Z}C_2)$ we would have 
	\begin{align*}
		\sqrt{T} \left(\begin{matrix}
			\text{vec}\left(\widetilde{Z}_1 \right) \\\text{vec}\left(\widetilde{Z}_2 - \widetilde{M}_2 \right) 
		\end{matrix} \right) \rightarrow_d N\left(\left(\begin{matrix}
			\text{vec}\left(\widetilde{\mu} \right) \\ 0 
		\end{matrix} \right),   ~~ I_{(N-1)K} \right) 
	\end{align*}   
	The proof of theorem  2 in \cite{gospodinov2017spurious} shows that: (i) the asymptotic distribution of the $\mathcal{J}$ statistic is the same as the one of the largest eigenvalue, $w_k$, of $\widetilde{W}^{-1}$ with 
	\begin{align}
		\widetilde{W} = T \widetilde{Z}_1' M_{\widetilde{Z}_2}\widetilde{Z}_1, \label{eq:wishart}
	\end{align}
	and thus (ii) in case (1) where $H$ is of reduced rank and $\widetilde{\mu}=0$,  $ \widetilde{W} \rightarrow_d \mathcal{W}_k\left(N-K-1+k, I_r \right) $ and
	\begin{align*}
		\mathbb{P}\left(w_k \leq a \right) \leq 	\mathbb{P}\left(x_k \leq a \right), ~~x_k \sim \chi^2_{N-k}, 
	\end{align*}

	In case (2), where $\widetilde{\mu}\neq 0$, $\widetilde{W}$ follows a non-central Wishart distribution $W_k(N-K-1+k, I_r, \mu)$ with $\mu= \widetilde{\mu}'M_{\widetilde{M}_2}\widetilde{\mu},    \norm{\mu}\leq L<\infty$  asymptotically, which then implies the inconsistency of the $\mathcal{J}$ test. The consistency of the HJS test is obvious since  $\norm{\iota_N -q_{\widetilde{G}} \theta_{\widetilde{G}}}>a > 0, \forall \theta_{\widetilde{G}} \in \Theta $ implies that $\norm{\iota_N -q_{\widetilde{G},T} \theta_{\widetilde{G}}} = O_p(\sqrt{T}), \forall \theta_{\widetilde{G}} \in \Theta$.

	%
	%\begin{align*}\
	% 	\mathcal{J} = T \min_{c'c =1} \frac{ c'\widehat{B}_g' c }{ c'\widehat{Q}_G^{-1} c} +o_p(1)
	%\end{align*}

\end{proof}

\begin{example}\label{example}
	We use an specific example, where we suppose Assumptions   \ref{assum: factor structure in r}, - \ref{assum:factor loading strength} hold with $Q_{\widetilde{G}} = \mathbb{E}({G}_t{G}_t')$, $K\geq K_{g,2}\geq 1$, to show that  over the 
	supreme, $T\sup_{\theta \in 	CS_{r, \alpha_1}} \delta_{g,T}(\theta)$, 
	is not properly bounded by $c_{1-\alpha}^*$ in the sense that there would be $\alpha>0$ such that
	\begin{align*}
		\lim\inf_T			\mathbb{P}\left(\left(T\sup_{\theta \in 	CS_{r, \alpha_1}     } \delta_{g,T}(\theta) \right) \geq  c_{1-\alpha}^* \right) > \alpha. 
	\end{align*}	
	We prove this by discussing elements in the confidence set.  We group $\theta$s in $CS_{r, \alpha_1}$ into two classes: (1) $\theta$s with entries corresponding to strong proxy factors deviating from their true values; (2) $\theta$s with entries corresponding to weak proxy factors deviating from their true values. Notice for any $\theta$s belong to class (1),  $T\delta_{g,T}(\theta)$ is of order $O_p(T)$, while for $\theta$s belong to class (2) we have  $T\delta_{g,T}(\theta)= T\delta_{g,T}(\theta_G) +O_p(1),$ and ${S}_T(\theta)={S}_T(\theta_G) + O_p(1/\sqrt{T})$.

	Now we only need to show that the confidence set in the presence of weak proxy factors contains some $\theta$s in class-(2) with positive probability $\widetilde{\alpha}_1\geq l>0$ in large samples, which then leads to the conclusion. The proof of Theorem 1 in \cite{gospodinov2017spurious} implies that 
	\begin{align*}
		\mathcal{J}(\theta) = CD(\theta) + T \frac{\left( \iota_N 'S_T(\theta)^{-1} \left(\iota_N - q_{G,T}\theta \right)\right)^2  }{\iota_N'S_T(\theta)^{-1} \iota_N } +O_p(\frac{1}{\sqrt{T}})
	\end{align*}
	where $CD(\theta) = T \theta'q_{G,T}' P_1 \left(\left(\theta'\otimes P_1' \right)\widehat{V}\left(\sqrt{T} \text{vec}(q_{G,T}) \right)  \left(\theta\otimes P_1 \right)  \right)P_1'q_{G,T} \theta$. Notice $\inf_\theta CD(\theta)$ is a rank test (\cite{kleibergen2006generalized}), and in the presence of weak factors $CD\left(\theta_G + (0_{K\times 1},1)\right)$ converges to a non-central chi-square distribution which would then implies our claim.

\end{example} 
\clearpage
\newpage

\section{Consistency of the $\theta$ estimator} \label{a2}
Proofs related to section \ref{sec:HJN} relies heavily on the properties of our four-pass estimator, and thus we first discuss our four-pass estimator before we provide proofs.  This section contains the following results: (1) the number of the strong factors can be estimated consistently; (2) the common component can be estimated consistently when $\sqrt{N}/T \rightarrow 0$; (3) finally, we show that our proposed estimator  consistently estimates $\theta$ even in the presence of weak identification issues and we also derive its asymptotic distribution.  
\begin{assumption}\label{ass: factor loadings additional requirement }
	$\norm{(\beta; \gamma )_i} = \norm{c_{\beta\gamma,i}}\leq L$. 	 $Q_{(\beta; \gamma )} = \lim\limits_{N\rightarrow\infty} (\beta; \gamma )'(\beta; \gamma )/N$ with $Q_{(\beta; \gamma )}$ a $K_{vz}\times K_{vz}$ positive definite matrix with $0<l  \leq \lambda_{K_g}\left(Q_{(\beta; \gamma )} \right)\leq  \lambda_{1}\left(Q_{(\beta; \gamma )} \right)<L<\infty $. (Assumption \ref{assum:factor loading strength 2} implies this assumption via the Ostrowski theorem and some extra mild assumptions on $Q_r$.) such that 
	\begin{align*}
		\norm{ N^{-1}c_{\beta \gamma}'c_{\beta \gamma} - Q_{\beta \gamma}  } = o_p(1)
	\end{align*}
\end{assumption}   

\begin{assumption} \label{assum: bar sigma upper bound}
	Let $\gamma_N(t,t')= \mathbb{E}\left( N^{-1} \sum_{i=1}^N e_{it}e_{it'} \right)$, there exists a positive constant $L$ such that 
	\begin{align*}
		& (1)~ T^{-1} \sum_{t=1}^T\sum_{t'=1}^T \lvert \gamma_N(t,t')\rvert \leq L; \max_{t} |\gamma_N(t,t')|\leq L \\
		& (2)~ T^{-2} \sum_{t=1}^T\sum_{t'=1}^T \mathbb{E}\left(\sum_{i=1}^N e_{it}e_{it'} - \mathbb{E}\left(\sum_{i=1}^N e_{it}e_{it'}   \right)   \right)^2 = T^{-2} \sum_{t=1}^T\sum_{t'=1}^T\left( \mathbb{E}\left(\sum_{i=1}^N e_{it}e_{it'} \right)^2 - N^2 \gamma_N(t,t')^2 \right) \leq LN 
	\end{align*}
\end{assumption} 
Assumption \ref{assum: bar sigma upper bound} is implied by  Assumption \ref{assumption: weak cross-sectional correlation of the idiosyncratic component }. 
\begin{assumption}\label{assum:?}
	$\mathbb{E}\left| \frac{1}{\sqrt{N}} \sum_{i=1}^{N}  \left(e_{it}e_{is}-  \mathbb{E}e_{it}e_{is}\right) \right|^4 < L < \infty$
\end{assumption}

\begin{assumption} \label{ass: Weak dependence between proxy factors and Idiosyncratic Errors}This is implied by Assumptions   \ref{ass} and  \ref{assumption: weak cross-sectional correlation of the idiosyncratic component }
	(Weak dependence between proxy factors and Idiosyncratic Errors from \cite{bai2002determining})
	\begin{align*}
		& (i)~~	\mathbb{E}\left( \frac{1}{N}\sum_{i=1}^{N}  \norm{\frac{1}{\sqrt{T}} \sum_{t=1}^{T}  g_t e_{it} }^2 \right) \leq L\\
		& (ii)~~ 	\mathbb{E}\left( \frac{1}{K_{vz}}\sum_{i=1}^{K_{vz}}  \norm{\frac{1}{\sqrt{T}} \sum_{t=1}^{T}  g_t u_{vz, it} }^2 \right) \leq L\\
		&(iii)	
		~~ 	\mathbb{E}\left( \frac{1}{N}\sum_{i=1}^{N}  \norm{\frac{1}{\sqrt{T}} \sum_{t=1}^{T} u_{vz,t} e_{it} }^2 \right) \leq L
	\end{align*}
	
\end{assumption}

\begin{assumption}  \label{assum:  {New}	(additional assumptions on factor loadings and idiosyncratic terms) }
	For all N and T,
	\begin{align}
		&\mathbb{E}\sum_{t=1}^{T}\left( \sum_{i=1}^N c_{\beta \gamma,{ij}} e_{it}  \right)^2 \leq LNT \nonumber \\
		& \mathbb{E}\left( \sum_{t=1}^{T}\sum_{i=1}^N c_{\beta \gamma,{ij}} e_{it}  \right)^2 \leq LNT  
	\end{align}
\end{assumption} 

\begin{assumption} \label{ass:residual innovation factor restriction} 
	\begin{align*}
		\norm{T^{-1} u_{vz}' u_{vz}  - \Sigma_{vz}}_{\max} = O_p(T^{-\frac{1}{2}})
	\end{align*}
	with $\Sigma_{vz}$ a positive definite matrix. 
\end{assumption}

\begin{assumption}  \label{assum:factor norm bounded}
	\begin{align*}
		\sup_t \mathbb{E} \norm{u_{vz,t}}^4  \leq L
	\end{align*}
	This is identical to one imposed on factors (Assumption A) in \cite{bai2002determining} and \cite{anatolyev2018factor} impose a slightly stronger assumption. 
\end{assumption}

\begin{assumption} \label{assum:(New, additional ,  Time and Cross-Section Dependence and Heteroskedasticity) } 
	\begin{align*}
		(1)& \sum_{t}^{T} \left| \gamma_N(s,t)\right|\leq L \\
		(2)& \sum_{i=1}^{N} \left| \tau(i,j)\right| \leq L 
	\end{align*}
	with $\mathbb{E}e_{it}e_{jt}  = \tau_t(i,j),  \left| \tau_t(i,j)\right|\leq \left| \tau(i,j)\right|  $.  
\end{assumption}

\begin{assumption} \label{assum: (New, products of factor and e)}

	(i)	 \begin{align*}
		\frac{1}{NT}	 \mathbb{E}\norm{  \sum_{s=1}^T \sum_{n=1}^N   {u}_{vz,s}\left(e_{is}e_{it}-\mathbb{E}(e_{is}e_{it})   \right)   }^2 \leq L
	\end{align*}
	
	(ii) for each t, as $N\rightarrow \infty$
	\begin{align*}
		\frac{1}{\sqrt{N}}  \sum_{i=1}^{N}  c_{\beta\gamma, i} e_{ti} \rightarrow_d N(0,\Pi_t)
	\end{align*}
	with $\Pi_t = \lim\limits_{N \rightarrow \infty} \frac{1}{N} \sum_{i=1}^N \sum_{j=1}^N   c_{\beta\gamma, i} c_{\beta\gamma, j}'  \mathbb{E}e_{ti} e_{tj} $. And for all $j=1,\cdots,N$
	\begin{align*}
		\frac{1}{\sqrt{NT}}  \sum_{i=1}^{N} \sum_{t=1}^{T}  c_{\beta\gamma, i} \left(e_{ti}e_{tj} - \mathbb{E} e_{ti}e_{tj} \right) = O_p(1)
	\end{align*} 
	
	(iii) 
	\begin{align*}
		\mathbb{E}\norm{\frac{1}{\sqrt{NT}} \sum_{t=1}^{T} \sum_{i=1}^{N} u_{vz,t} c_{\beta\gamma,i}'e_{it}  }^2_F\leq M 
	\end{align*}
	
	Assumption \ref{assum: (New, products of factor and e)}  is identical to Assumption F in \cite{bai2003inferential}. 
\end{assumption}

% 
% \begin{assumption}
% 	The error term $e_{it}$ is stationary and its covariance matrix $V_e$ is sparse. The sparsity is defined in the sense that $s_N= o_p((w_{NT}^{1-h} + N^{-1}+ T^{-1}  )^{-1} )$ with
% 	\begin{align*}
% 	h\in [0,\frac{1}{2}); ~~s_N = \sum_{j=1}^{N}|V_{e,ij}|^h;~~ w_{NT}= \sqrt{(\log N)/T } + 1/\sqrt{N}
% 	\end{align*} 
% 	
% \end{assumption}
% 

% \begin{assumption} \label{assum: GG gamma}
% 	(1) \begin{align*}
% 	\sup_t \sup_{s\neq t} \mathbb{E} \left( \left( 1+ \norm{g_t}^4  \right) \left(1 + \left(\sqrt{N} \gamma_N(s,t) \right)^2 \right)   \right)\leq L
% 	\end{align*}
% 	(2)\begin{align*}
% 	& \frac{1}{T} \sum_{t=1}^{T} \bar{G}_t \bar{G}_t' \gamma_N(t,t) \rightarrow_p \Sigma_{\bar{G}\bar{G}\gamma_N}
% 	\end{align*}
% 	(3)
% \end{assumption} 
% 

\subsection*{\footnotesize Step (1): the number of the strong factors can be estimated consistently}
In this step we prove that we can estimate the number of the strong factors in the $u_{g,t}$ consistently. Here we only provide one way to estimate the number, the estimation approach is not unique. \cite{bai2002determining} propose multiple consistent estimators for the number of strong factors with different penalty functions. Here we use the one employed in \cite{giglio2017inference}. Under Assumption \ref{assum: factor structure in r}, we know
\begin{align}
	r_t = \widetilde{c} + \beta d_g \bar{g}_t + u_{g,t}
\end{align}
with $u_{g,t}= \beta v_t + \gamma z_t +e_t, \widetilde{c}=c+ \beta d_g(\bar{g}-\mu_g)$. $v_t$ is assumed to be of a $K_g\times 1$ vector, but the dimension of $z_t$ is unknown. We estimate the number $K_{vz}$ of the omitted strong factors  by 
\begin{align}
	\widehat{K}_{vz} = \arg \min_{K_g\leq j \leq K_{vz,\max}} \left( N^{-1}T^{-1} \lambda_j\left( \widehat{u}_{g}\widehat{u}_{g}' \right)  + j\phi(N,T)  \right) -1  
\end{align}
where $\widehat{u}_{g}$ is $T\times N$ matrix stacked with the residuals $\widehat{u}_{g,t}$, $K_{vz,\max}$ is an arbitrary upper bound for $K_{vz}$ and $\phi(N,T)$ is a penalty function with the properties $\phi(N,T)\rightarrow 0, \phi(N,T)/(N^{-\frac{1}{2}}+T^{-\frac{1}{2}}) \rightarrow \infty$. 
Now we show this estimator is consistent. 
\begin{theo} \label{theo:consistent factor number}
	Suppose Assumptions \ref{assum: factor structure in r}, \ref{ass: factor loadings additional requirement } - \ref{ass:residual innovation factor restriction} hold, let N,T increase then 
	\begin{align*}
		\widehat{K}_{vz} \rightarrow_p K_{vz}
	\end{align*}
\end{theo}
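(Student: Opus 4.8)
The plan is to run the classical information–criterion argument of \cite{bai2002determining} (Theorem~2), in the eigenvalue form used by \cite{giglio2017inference}, after first reducing the feasible residuals to their infeasible population analogue. Write $\mathrm{IC}(j)=N^{-1}T^{-1}\lambda_j(\widehat{u}_g\widehat{u}_g')+j\,\phi(N,T)$, so $\widehat{K}_{vz}+1$ is the minimizer of $\mathrm{IC}(\cdot)$ over the admissible range. Because the idiosyncratic part $u_{g,t}=(\beta,\gamma)(v_t',z_t')'+e_t$ of $r_t$ carries exactly $K_{vz}$ strong common factors, the scaled eigenvalues $N^{-1}T^{-1}\lambda_j(\widehat{u}_g\widehat{u}_g')$ stay bounded away from zero for $j\le K_{vz}$ and collapse faster than $\phi$ for $j\ge K_{vz}+1$; since $\phi\to 0$ while $\phi/(N^{-1/2}+T^{-1/2})\to\infty$, the penalty $j\phi$ is negligible relative to the first group but dominant within the second. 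Hence $\mathrm{IC}(\cdot)$ is minimized at $j=K_{vz}+1$ with probability tending to one, and the ``$-1$'' in the definition of $\widehat{K}_{vz}$ returns $K_{vz}$.

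To make the two eigenvalue regimes precise I would proceed in three steps. First, pass from $\widehat{u}_g$ to the infeasible $\widetilde{u}_g$ stacked from $\widetilde{u}_{g,t}=(\beta,\gamma)(v_t',z_t')'+e_t$: since $\widehat{u}_g=M_{\mathcal G}\widetilde{u}_g$ with $M_{\mathcal G}=I_T-\mathcal G(\mathcal G'\mathcal G)^{-1}\mathcal G'$ the annihilator of the $T\times(1+K)$ matrix $\mathcal G$ of stacked step-(1) regressors $G_t=(1,\bar g_t')'$, Lemma~\ref{lem:B_g} (so $\widehat{B}_g-B_g=O_p(T^{-1/2})$), Assumption~\ref{assum:CLT for the proxy factors}, and the orthogonality of $g_t$ to $v_t,u_t$ give $\norm{\widehat{u}_g\widehat{u}_g'-\widetilde{u}_g\widetilde{u}_g'}=O_p(NT^{1/2})=o_p\!\big(NT\,\phi(N,T)\big)$ when $N/T\to c$, so by Weyl's inequality all eigenvalue statements may be made for $\widetilde{u}_g\widetilde{u}_g'$. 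Second, decompose $\widetilde{u}_g=xb'+e$ with $x=u_{vz}$ ($T\times K_{vz}$) and $b=c_{\beta\gamma}=(\beta,\gamma)$ ($N\times K_{vz}$): the nonzero eigenvalues of $xb'bx'$ are $NT$ times those of $(N^{-1}b'b)(T^{-1}x'x)$, which by Assumptions~\ref{ass: factor loadings additional requirement } and \ref{ass:residual innovation factor restriction} converge to the positive eigenvalues of $Q_{(\beta;\gamma)}\Sigma_{vz}$, so $N^{-1}T^{-1}\lambda_j(xb'bx')$ is bounded below by a positive constant for $j\le K_{vz}$ and vanishes for $j>K_{vz}$. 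Third, bound the remainder $xb'e'+ebx'+ee'$ in operator norm: $\norm{ee'}=O_p(N+T)$ from the control of $\lambda_{\max}(\Omega_e)$ and of $N^{-1/2}\sum_i(e_{it}e_{is}-\mathbb{E}e_{it}e_{is})$ in Assumptions~\ref{assumption: weak cross-sectional correlation of the idiosyncratic component }, \ref{assum: bar sigma upper bound} and \ref{assum:?}, while $\norm{ebx'}=O_p(T\sqrt{N})$ using $\norm{ec_{\beta\gamma}}_F=O_p(\sqrt{NT})$ (from the additional moment bound on $\sum_t(\sum_i c_{\beta\gamma,ij}e_{it})^2$) and $\norm{x}=O_p(T^{1/2})$. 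Divided by $NT$ these remainders are $O_p(T^{-1/2})=o_p(\phi)$, so a second application of Weyl's inequality yields $N^{-1}T^{-1}\lambda_j(\widetilde{u}_g\widetilde{u}_g')\ge c/2>0$ for $j\le K_{vz}$ and $=o_p(\phi)$ for $j\ge K_{vz}+1$, with probability tending to one.

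Granting the two regimes, the argmin comparison is routine. For any admissible $j\le K_{vz}$, $\mathrm{IC}(j)\ge N^{-1}T^{-1}\lambda_{K_{vz}}(\widehat{u}_g\widehat{u}_g')\ge c/2>0$ with probability tending to one; for $j\ge K_{vz}+1$, $\mathrm{IC}(j)=o_p(\phi)+j\phi$ is minimized over this range at $j=K_{vz}+1$ (the eigenvalue increments between consecutive indices are $o_p(\phi)$ while the penalty increments equal $\phi$) and there equals $(K_{vz}+1)\phi(N,T)(1+o_p(1))\to 0$. Hence $\mathrm{IC}(K_{vz}+1)<\mathrm{IC}(j)$ for every other admissible $j$ with probability tending to one, so $\widehat{K}_{vz}+1\rightarrow_p K_{vz}+1$, that is, $\widehat{K}_{vz}\rightarrow_p K_{vz}$. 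I expect the main obstacle to lie in carrying out the second and third steps jointly: obtaining a rate for the ``noise'' eigenvalue $N^{-1}T^{-1}\lambda_{K_{vz}+1}$ that beats $N^{-1/2}+T^{-1/2}$ while simultaneously absorbing the first-stage OLS error uniformly in $j$ --- precisely where the fourth-moment and weak-dependence conditions on $e_t$, $u_{vz,t}$ and $g_t$ in the hypotheses of Theorem~\ref{theo:consistent factor number} are all invoked --- whereas the lower bound on the signal eigenvalues is immediate from Assumptions~\ref{ass: factor loadings additional requirement } and \ref{ass:residual innovation factor restriction}.
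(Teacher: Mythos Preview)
Your proposal is correct and takes essentially the same approach as the paper: use Weyl's inequality to separate a rank-$K_{vz}$ signal $u_{vz}c_{\beta\gamma}'c_{\beta\gamma}u_{vz}'$ whose scaled eigenvalues converge to those of $Q_{(\beta;\gamma)}^{1/2}\Sigma_{vz}Q_{(\beta;\gamma)}^{1/2}$ from remainders that are $o_p(NT\phi)$, then run the information-criterion comparison. The only cosmetic difference is that you strip off $M_{\bar G}$ first while the paper carries it through the decomposition; one small caution is that your operator-norm claim $\norm{ee'}=O_p(N+T)$ is sharper than what the fourth-moment hypotheses deliver directly---the paper (and your argument) only needs the cruder Frobenius-type bound $\norm{ee'}\le\norm{ee'-N\Gamma_u}_F+N\norm{\Gamma_u}_F=O_p(N^{1/2}T+NT^{1/2})$, which after division by $NT$ is still $O_p(N^{-1/2}+T^{-1/2})=o_p(\phi)$.
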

\noindent \begin{proof}[Proof of Theorem \ref{theo:consistent factor number}]
	We basically follow the steps in \cite{giglio2017inference} with small changes in the middle. 
	
	(1)	 We first prove the claim such that for $1\leq j\leq K_{vz}$
	\begin{align}
		\lvert N^{-1}T^{-1} \lambda_j\left( \widehat{u}_{g}\widehat{u}_{g}' \right) - \lambda_j\left(  (Q_{(\beta; \gamma  )})^{\frac{1}{2}} \Sigma_{(v; z )} (Q_{(\beta; \gamma  )})^{\frac{1}{2}}      \right) \rvert = o_p(1) \label{eq: first Kvz eigenvalue consistency}
	\end{align}
	with $Q_{(\beta; \gamma )} = \lim\limits_{N\rightarrow\infty} (\beta; \gamma )'(\beta; \gamma )/N$.
	
	For convenience, in this proof, denote ${c}_{\beta\gamma} = (\beta; \gamma ), u_{vz}= (v; z )$.

	(1.1) Notice 
	\begin{align*}
		\widehat{u}_{g}\widehat{u}_{g}'-  M_{\bar{G}} u_{vz} c_{\beta \gamma}'c_{\beta \gamma}u_{vz}' M_{\bar{G}} = M_{\bar{G}} u_{vz} c_{\beta \gamma}' e' M_{\bar{G}} + M_{\bar{G}} e c_{\beta \gamma}u_{vz}' M_{\bar{G}} + M_{\bar{G}} ee' M_{\bar{G}}  
	\end{align*}  	 
	with $\widehat{u}_{g}= M_{\bar{G}}r$. We show in the following steps that the three terms on the right is negligible when divided by $NT$.

	For the term $M_{\bar{G}} e e'  M_{\bar{G}}$, we have 
	\begin{align}
		\norm{M_{\bar{G}} ee' M_{\bar{G}} - n\Gamma_u } \leq  \norm{e e' - n\Gamma_u }_F +2\norm{ P_{\bar{G}} ee'}_F+\norm{P_{\bar{G}} ee'P_{\bar{G}} }_F
	\end{align}

	Assumption \ref{assum: bar sigma upper bound}.(1) implies that
	\begin{align}
		\mathbb{E}\norm{e}_F^2 =  \mathbb{E}\sum_{i=1}^N\sum_{t=1}^T e_{it}^2 \leq LNT
	\end{align}
	
	\begin{align}
		\mathbb{E}\norm{\bar{e}}_F^2 =  T^{-2}\mathbb{E}\sum_{i=1}^N\sum_{t=1}^T\sum_{t'=1}^T e_{it} e_{it'}=N T^{-2} \sum_{t=1}^T\sum_{t'=1}^T \lvert \gamma_N(t,t')\rvert \leq L NT^{-1}
	\end{align}

	Assumption \ref{assum: bar sigma upper bound}.(2) implies that
	\begin{align}
		\mathbb{E}\norm{e e' - n\Gamma_u}_F^2 =  \sum_{t=1}^T\sum_{t'=1}^T \mathbb{E}\left(\sum_{i=1}^N e_{it}e_{it'} - \mathbb{E}\left(\sum_{i=1}^N e_{it}e_{it'}   \right)   \right)^2  \leq LNT^2 
	\end{align}
	
	Assumption \ref{ass: Weak dependence between proxy factors and Idiosyncratic Errors} implies that 	
	\begin{align}
		\mathbb{E}\norm{{{G}}' e}_F^2  =  	NT	\mathbb{E}\left( \frac{1}{N}\sum_{i=1}^{N}  \norm{\frac{1}{\sqrt{T}} \sum_{t=1}^{T}  G_t e_{it} }^2_F \right)  \leq LNT
	\end{align}
	
	\begin{align}
		\norm{{{G}}' e}_F   = O_p(N^{\frac{1}{2}}T^{\frac{1}{2}}); 	\norm{\left(0,\iota_T \bar{g}' \right)' e}_F  = O_p(N^{\frac{1}{2}}T^{\frac{1}{2}})
	\end{align}
	
	\begin{align}  
		\norm{{\bar{G}}' e}_F   \leq     	\norm{{{G}}' e}_F   + 	\norm{\left(0,\iota_T \bar{g}' \right)' e}_F  = O_p(N^{\frac{1}{2}}T^{\frac{1}{2}}) \label{eq: norm bound barG e}
	\end{align}

	\begin{align}
		\norm{P_{\bar{G}} e}_F \leq T^{-1} \norm{{\bar{G}}}_F \norm{ \left({\bar{G}}'{\bar{G}}/T \right)^{-1}}_F\norm{{\bar{G}}' e}_F = O_p(N^{\frac{1}{2}})  
	\end{align}
	
	Then the following holds: 
	\begin{align}
		\norm{M_{\bar{G}} ee' M_{\bar{G}} - n\Gamma_u } \leq & \norm{e e' - n\Gamma_u }_F +2\norm{ P_{\bar{G}} ee'}_F+\norm{P_{\bar{G}} ee'P_{\bar{G}} }_F \nonumber \\
		= &O_p(N^{\frac{1}{2}}T)+ O_p(NT^{\frac{1}{2}}) +O_p(N) 
	\end{align}
	
	From assumption \ref{ass: Weak dependence between proxy factors and Idiosyncratic Errors} (1) we would have 
	\begin{align}
		\mathbb{E}\norm{\Gamma_u}_F^2 = 	\mathbb{E}  \sum_{t=1}^T\sum_{t'=1}^T \lvert \gamma_N(t,t')\rvert \leq LT
	\end{align}
	\begin{align*}
		\norm{M_{\bar{G}} ee' M_{\bar{G}}}_F \leq 	\norm{M_{\bar{G}} ee' M_{\bar{G}} - n\Gamma_u }_F + \norm{n\Gamma_u}_F = O_p(N^{\frac{1}{2}}T)+ O_p(NT^{\frac{1}{2}}) 
	\end{align*}

	For the term $M_{\bar{G}} u_{vz} c_{\beta \gamma}' e'  M_{\bar{G}}$, we have 
	\begin{align}
		\norm{M_{\bar{G}} u_{vz} c_{\beta \gamma}' e' M_{\bar{G}}}_F \leq &  \norm{u_{vz} c_{\beta \gamma}' e'}_F +\norm{P_{\bar{G}}u_{vz} c_{\beta \gamma}' e'}_F+\norm{u_{vz} c_{\beta \gamma}' e'P_{\bar{G}}}_F +\norm{P_{\bar{G}}u_{vz} c_{\beta \gamma}' e'P_{\bar{G}}}_F \nonumber  \\
		\leq & \left( 2(K_{vz}+1) \right)^2\norm{u_{vz} c_{\beta \gamma}' e'}_F
	\end{align} 
	
	Assumption \ref{assum:  {New}	(additional assumptions on factor loadings and idiosyncratic terms) } implies that 
	\begin{align}
		\mathbb{E}\norm{ec_{\beta \gamma} }_F^2  = \mathbb{E}\sum_{j=1}^{K_{vz}}\sum_{t=1}^{T}\left( \sum_{i=1}^N c_{\beta \gamma,{ij}} e_{it}  \right)^2 \leq LNT \label{eq: e c bounded}
	\end{align}	
	and 
	\begin{align}
		\norm{u_{vz}}\leq  KT\norm{T^{-1} u_{vz}' u_{vz}}_{\max} =O_p(T^{\frac{1}{2}})
	\end{align}
	\begin{align}
		\norm{u_{vz} c_{\beta \gamma}' e'}  \leq \norm{u_{vz} } \norm{  c_{\beta \gamma}' e'}_F = O_p(N^{\frac{1}{2}}T)  \label{eq: norm bound uce}
	\end{align}
	\begin{align}
		\norm{M_{\bar{G}} u_{vz} c_{\beta \gamma}' e' M_{\bar{G}}}  \leq  
		O_p(N^{\frac{1}{2}}T)
	\end{align}

	Therefore,  
	\begin{align}
		N^{-1}T^{-1}	\lvert\lambda_j(\widehat{u}_{g}\widehat{u}_{g}')-  \lambda_j(M_{\bar{G}} u_{vz} c_{\beta \gamma}'c_{\beta \gamma}u_{vz}' M_{\bar{G}}) \rvert=  o_p(1)
	\end{align}

	(1.2)
	To finish the proof of part (1) we only need to show the following two results:
	\begin{align}
		& (1.2.1)~~	\lvert 	N^{-1}T^{-1} \lambda_j(M_{\bar{G}} u_{vz} c_{\beta \gamma}'c_{\beta \gamma}u_{vz}' M_{\bar{G}})  - T^{-1}\lambda_j(M_{\bar{G}} u_{vz} Q_{\beta \gamma}u_{vz}' M_{\bar{G}})   \rvert=  o_p(1)\label{eq:p consistent step 1 1} \\
		& (1.2.2)~~\lvert  T^{-1}\lambda_j(M_{\bar{G}} u_{vz} Q_{\beta \gamma}u_{vz}' M_{\bar{G}}) -   \lambda_j\left(  (Q_{(\beta; \gamma  )})^{\frac{1}{2}} \Sigma_{(v; z )} (Q_{(\beta; \gamma  )})^{\frac{1}{2}}      \right) \rvert=  o_p(1) \label{eq:p consistent step 1 2}
	\end{align} 
	Equation  (\ref{eq:p consistent step 1 1}) is a direct result of Assumption \ref{ass: factor loadings additional requirement } and Weyl's inequality such that 
	\begin{align}
		& 	\lvert 	N^{-1}T^{-1} \lambda_j(M_{\bar{G}} u_{vz} c_{\beta \gamma}'c_{\beta \gamma}u_{vz}' M_{\bar{G}})  - T^{-1}\lambda_j(M_{\bar{G}} u_{vz} Q_{\beta \gamma}u_{vz}' M_{\bar{G}})   \rvert\nonumber\\ \leq & T^{-1}\norm{ N^{-1}c_{\beta \gamma}'c_{\beta \gamma} - Q_{\beta \gamma}  }\norm{M_{\bar{G}} u_{vz} }^2 =o_p(1)
	\end{align}
	Equation  (\ref{eq:p consistent step 1 1}) is a direct result of Assumption \ref{ass:residual innovation factor restriction} and Weyl's inequality such that 
	\begin{align}
		& 	\lvert   T^{-1}\lambda_j(M_{\bar{G}} u_{vz} Q_{\beta \gamma}u_{vz}' M_{\bar{G}}) -   \lambda_j\left(  (Q_{(\beta; \gamma  )})^{\frac{1}{2}} \Sigma_{vz} (Q_{(\beta; \gamma  )})^{\frac{1}{2}}      \right)   \rvert\nonumber\\ \leq & L \norm{T^{-1}  u_{vz}' u_{vz}  - \Sigma_{vz} }  =O_p(T^{-\frac{1}{2}})
	\end{align}

	Therefore, for $1\leq j\leq K_{vz}$, we have
	\begin{align}
		\lvert N^{-1}T^{-1} \lambda_j\left( \widehat{u}_{g}\widehat{u}_{g}' \right) - \lambda_j\left(  (Q_{(\beta; \gamma  )})^{\frac{1}{2}} \Sigma_{(v; z )} (Q_{(\beta; \gamma  )})^{\frac{1}{2}}      \right) \rvert = o_p(1)
	\end{align}
	(2) In part (2) we finish the proof of the consistency $\widehat{K}_{vz}$ by showing the following statement    
	\begin{align}
		\lim\inf\limits_{T,N\rightarrow \infty}	 \mathbb{P}\left(  \widehat{F}(K_{vz}+1) \leq  \widehat{F}(j), j=1,\cdots,N    \right) = 1
	\end{align}
	with $\widehat{F}({j}) =  N^{-1}T^{-1} \lambda_j\left( \widehat{u}_{g}\widehat{u}_{g}' \right)  + j\phi(N,T)    $.   Notice a direct result from step (1) is that for $1\leq j \leq  K_{vz},$ we can find $l>0$ such that $ \widehat{F}({j}) > l+o_p(1) $, while for  $K_{vz}+1\leq j,$ $ \widehat{F}({j}) =o_p(1)$. Then we only need to show that for $j>K_{vz}+1$, $\widehat{F}({j})>\widehat{F}(K_{vz}+1) $ with probability approaching to one. Notice for $j\geq K_{vz}+1$,
	\begin{align}
		\lambda_j\left( \widehat{u}_{g}\widehat{u}_{g}' \right)  \leq  \norm{ee'} = O_p(TN^{-\frac{1}{2}} + T^{-\frac{1}{2}}N  ) \label{eq: eigenvalue after the first K vz}
	\end{align} 
	and this implies $\widehat{F}({j})>\widehat{F}(K_{vz}+1) $ with probability approaching to one as $ (N^{\frac{1}{2}}T^{\frac{1}{2}} ) (j-K_{vz}-1)\phi(N,T)  >  (N^{-\frac{1}{2}}T^{-\frac{1}{2}} ) \left(	\lambda_{K_{vz}+1}\left( \widehat{u}_{g}\widehat{u}_{g}' \right) - 	\lambda_j\left( \widehat{u}_{g}\widehat{u}_{g}' \right) \right)$ with probability approaching to one.
	
\end{proof}	

\subsection*{\footnotesize Step (2): the common component can be estimated consistently when $\sqrt{N}/T \rightarrow 0$}	 
% \begin{lem}
% 	(i) \begin{align}
% 	a
% 	\end{align}
% \end{lem}
Denote $m_{NT}= \min\{N,T\}$.
\begin{lem} \label{lem: consistency of the estimated factors}
	Suppose Assumptions \ref{assum: factor structure in r} - \ref{assumption: weak cross-sectional correlation of the idiosyncratic component }, \ref{ass: factor loadings additional requirement } - \ref{ass:residual innovation factor restriction} hold, let N,T increase then 
	\begin{align*}
		m_{NT} \frac{1}{T} \sum_{t=1}^{T} \norm{ \widetilde{u}_{vz,t} - H' u_{vz,t}  }_F^2 = O_p(1)
	\end{align*}
	with $H=\left( c_{\beta\gamma}'c_{\beta\gamma}/N  \right)\left(u_{vz}' \widetilde{u}_{vz}/T  \right)V_{NT} $  and $V_{NT}$ being the $K_{vz}\times K_{vz}$ diagonal matrix of $\lambda_{i}\left(\widehat{u}_g\widehat{u}_g' /NT\right), i=1,\cdots, K_{vz}$.\\
	
	\noindent \begin{proof}[Proof of Lemma \ref{lem: consistency of the estimated factors}]
		This proof resembles the proof of Theorem 1 in \cite{bai2002determining}. From the normalization $\widetilde{u}_{vz}' \widetilde{u}_{vz}/T =I_{K_{vz}}$  we know 
		\begin{align}
			\norm{ \widetilde{u}_{vz}}_F^2=    \sum_{t=1}^T\norm{ \widetilde{u}_{vz,t}}_F^2  = O_p(T)  \label{eq: u_vz norm power 2}
		\end{align}	 
		From the proof in section   and the above equation we know
		\begin{align}
			\norm{ H }_F \leq  \norm{  c_{\beta\gamma}'c_{\beta\gamma}/N }_F \norm{u_{vz}'{u}_{vz}/T  }_F^{1/2}\norm{\widetilde{u}_{vz}' \widetilde{u}_{vz}/T  }_F^{1/2}  = O_p(1)
		\end{align} 	 
		From equations (\ref{eq:decompose factor difference}) and (\ref{eq: first Kvz eigenvalue consistency}), we  know 
		\begin{align}
			\norm{ \widetilde{u}_{vz,t} - H' u_{vz,t} }_F^2 \leq L \sum_{i=1}^{6} a_{i,t}^2   
		\end{align}  
		with 
		\begin{align}
			a_{1,t}^2= & T^{-2} \norm{ \sum_{s=1}^{T}  \widetilde{u}_{vz,s}\gamma_N(s,t)     }_F^2   \nonumber \\
			a_{2,t}^2= & T^{-2}\norm{  \sum_{s=1}^{T}  \widetilde{u}_{vz,s} \zeta_{ee, st} }_F^2   \nonumber \\
			a_{3,t}^2= & T^{-2}\norm{ \sum_{s=1}^{T}  \widetilde{u}_{vz,s} \zeta_{ue, st}  }_F^2  \nonumber \\
			a_{4,t}^2= & T^{-2}\norm{  \sum_{s=1}^{T}  \widetilde{u}_{vz,s} \zeta_{eu, st}}_F^2  \nonumber \\
			a_{5,t}^2= &  \norm{\frac{1}{NT} \widetilde{u}_{vz}'P_{\bar{G}}u_g u_{g,t}'    }_F^2 \nonumber \\
			a_{6,t}^2= &  \norm{ \frac{1}{NT}\widetilde{u}_{vz}'M_{\bar{G}} u_g \left( \widehat{u}_{g,t} - u_{g,t} \right)}_F^2  
		\end{align}	 
		From equation (\ref{eq: u_vz norm power 2})  and Assumption \ref{assum: bar sigma upper bound}  (i), 
		\begin{align}
			\sum_{t=1}^T	a_{1,t}^2  \leq  \left( T^{-1}  \sum_{s=1}^T\norm{ \widetilde{u}_{vz,s}}_F^2   \right) \left(T^{-1} \sum_{t=1}^T\sum_{s=1}^T\gamma_N(s,t)^2  \right) = O_p(1)
		\end{align}

		From equation (\ref{eq: u_vz norm power 2}) and Assumption \ref{assum: forth moments of ee},  
		\begin{align}
			\sum_{t=1}^T	a_{2,t}^2  \leq  \left( T^{-1}  \sum_{s=1}^T\norm{ \widetilde{u}_{vz,s}}_F^2   \right) \left(T^{-2}  \sum_{t=1}^T\sum_{s=1}^T \left(\sum_{u=1}^{T}\zeta_{ee, su}\zeta_{ee, ut} \right)^2   \right)^{\frac{1}{2}} = O_p(T/N) 
		\end{align}
		
		and from equation (\ref{eq: e c bounded}) 
		\begin{align}
			\sum_{t=1}^T	a_{3,t}^2 \leq  \left(  \sum_{t=1}^T\norm{ c_{\beta\gamma}' e_t/N}_F^2  \right) \left( T^{-1}  \sum_{s=1}^T\norm{ \widetilde{u}_{vz,s}}_F^2  \right)\left( T^{-1}  \sum_{s=1}^T\norm{  {u}_{vz,s}}_F^2  \right) = O_p(T/N)
		\end{align}  
		and similarly $\sum_{t=1}^T	a_{4,t}^2 \leq O_p(T/N)$.

		From Assumption \ref{assumption: weak cross-sectional correlation of the idiosyncratic component },
		\begin{align}
			\norm{P_{\bar{G}} u_g}_F \leq T^{-1} \norm{{\bar{G}}}_F \norm{ \left({\bar{G}}'{\bar{G}}/T \right)^{-1}}_F\norm{{\bar{G}}' u_g}_F = O_p(N^{\frac{1}{2}})
		\end{align}
		and thus
		\begin{align}
			\sum_{t=1}^T	a_{5,t}^2 \leq  \left( T^{-1}  \sum_{s=1}^T\norm{ \widetilde{u}_{vz,s}}_F^2   \right) \left( N^{-1}   \norm{P_{\bar{G}}  {u}_{g} }_F^2   \right) \left( N^{-1}T^{-1}  \sum_{t=1}^T\norm{ {u}_{g,t}}_F^2   \right) = O_p(1)
		\end{align}    
		
		From equations (\ref{eq: first Kvz eigenvalue consistency}) and (\ref{eq: eigenvalue after the first K vz})
		\begin{align}
			\norm{M_{\bar{G}}  {u}_{g} }_F^2 = O_p(NT)
		\end{align}
		and from $\sqrt{T}\left( \widehat{\beta}_g - \beta_g \right) = O_p(1) $ we know
		\begin{align}
			\sum_{t=1}^T\norm{ \widehat{u}_{g,t} - {u}_{g,t}}_F^2     \leq  \norm{\sqrt{T}\left( \widehat{\beta}_g - \beta_g \right)}_F^2 \norm{  \bar{G}/\sqrt{T} }_F^2 \leq O_p(N)
		\end{align}

		Therefore, 
		\begin{align}
			\sum_{t=1}^T	a_{6,t}^2 \leq   \left( T^{-1}  \sum_{s=1}^T\norm{ \widetilde{u}_{vz,s}}_F^2   \right) \left( N^{-1} T^{-1}  \norm{M_{\bar{G}}  {u}_{g} }_F^2   \right) \left(  N^{-1} \sum_{t=1}^T\norm{ \widehat{u}_{g,t} - {u}_{g,t}}_F^2   \right) \leq O_p(1)
		\end{align}    
		
		Finally, all the above imply the result:
		\begin{align}
			\frac{1}{T} \sum_{t=1}^{T} \norm{ \widetilde{u}_{vz,t} - H' u_{vz,t}  }_F^2 = O_p(1/N) + O_p(1/T)
		\end{align}

	\end{proof}
\end{lem}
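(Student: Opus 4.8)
The plan is to mimic the argument for Theorem~1 of \cite{bai2002determining}, while handling the two complications specific to the present setup: the ``factors'' $u_{vz,t}=(v_t';z_t')'$ are recovered from OLS residuals $\widehat{u}_g=M_{\bar G}r$ rather than from observed data, and the loadings $c_{\beta\gamma}=(\beta,\gamma)$ carry an arbitrary normalization. First I would write $u_{g,t}=c_{\beta\gamma}u_{vz,t}+e_t$, so in stacked form $u_g=u_{vz}c_{\beta\gamma}'+e$ and $\widehat{u}_g=M_{\bar G}(u_{vz}c_{\beta\gamma}'+e)$. Since $\widetilde{u}_{vz}$ is $\sqrt{T}$ times the matrix of eigenvectors attached to the $K_{vz}$ largest eigenvalues of $\widehat{u}_g\widehat{u}_g'/(NT)$, which are collected in the diagonal matrix $V_{NT}$, the eigenvector identity
\[
\widetilde{u}_{vz}\,V_{NT}=\frac{1}{NT}\,\widehat{u}_g\widehat{u}_g'\,\widetilde{u}_{vz}
\]
holds. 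Substituting the expansion of $\widehat{u}_g$, multiplying out, and collecting terms, the $t$-th row obeys $\widetilde{u}_{vz,t}-H'u_{vz,t}=V_{NT}^{-1}\big(A_{1,t}+\cdots+A_{6,t}\big)$ with $\norm{A_{k,t}}_F=a_{k,t}$, where $A_{1,t},\dots,A_{4,t}$ are the usual Bai--Ng pieces built from $\gamma_N(s,t)$, from $\zeta_{ee,st}=e_s'e_t/N-\gamma_N(s,t)$, and from the mixed loading--error averages $u_{vz,s}'c_{\beta\gamma}'e_t/N$ and $e_s'c_{\beta\gamma}u_{vz,t}/N$, while $A_{5,t}$ comes from the projection residual $P_{\bar G}u_g$ and $A_{6,t}$ from the first-stage slope error $\widehat{u}_{g,t}-u_{g,t}$.

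Next I would establish $V_{NT}^{-1}=O_p(1)$: by the eigenvalue-consistency argument already carried out in the proof of Theorem~\ref{theo:consistent factor number} (see \eqref{eq: first Kvz eigenvalue consistency}), the diagonal entries of $V_{NT}$ converge to the eigenvalues of $(Q_{(\beta;\gamma)})^{1/2}\Sigma_{(v;z)}(Q_{(\beta;\gamma)})^{1/2}$, which are bounded and bounded away from zero under Assumptions~\ref{ass: factor loadings additional requirement } and~\ref{ass:residual innovation factor restriction}. Then I would bound $\sum_{t=1}^T a_{k,t}^2$ term by term, always using the normalization $\norm{\widetilde{u}_{vz}}_F^2=O_p(T)$ and the bound $\norm{H}_F=O_p(1)$: (i) Assumption~\ref{assum: bar sigma upper bound} gives $\sum_t a_{1,t}^2=O_p(1)$; (ii) the bounded-fourth-moment condition on $N^{-1/2}\sum_i(e_{it}e_{is}-\mathbb{E}e_{it}e_{is})$ (Assumption~\ref{assum:?}) gives $\sum_t a_{2,t}^2=O_p(T/N)$; (iii) $\mathbb{E}\norm{ec_{\beta\gamma}}_F^2\le LNT$ together with $\norm{u_{vz}}^2=O_p(T)$ (from Assumption~\ref{ass:residual innovation factor restriction}) gives $\sum_t a_{3,t}^2=\sum_t a_{4,t}^2=O_p(T/N)$; (iv) the weak-dependence bound $\norm{\bar G'u_g}_F=O_p(\sqrt{NT})$ (from Assumption~\ref{ass: Weak dependence between proxy factors and Idiosyncratic Errors}) yields $\norm{P_{\bar G}u_g}_F=O_p(\sqrt N)$ and hence $\sum_t a_{5,t}^2=O_p(1)$; (v) the bound $\norm{M_{\bar G}u_g}_F^2=O_p(NT)$ from the proof of Theorem~\ref{theo:consistent factor number}, combined with $\sqrt T(\widehat{\beta}_g-\beta_g)=O_p(1)$, yields $\sum_t\norm{\widehat{u}_{g,t}-u_{g,t}}_F^2=O_p(N)$ and therefore $\sum_t a_{6,t}^2=O_p(1)$. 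Adding these, dividing by $T$, and using $V_{NT}^{-1}=O_p(1)$ gives $T^{-1}\sum_t\norm{\widetilde{u}_{vz,t}-H'u_{vz,t}}_F^2=O_p(1/N)+O_p(1/T)=O_p(1/m_{NT})$, and multiplying by $m_{NT}=\min\{N,T\}$ is the claim.

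The hard part will be the two pieces $a_{5,t}$ and $a_{6,t}$, which do not appear in the classical Bai--Ng analysis and arise precisely because the latent factors are extracted from the regression residuals $\widehat{u}_g=M_{\bar G}r$ rather than from data. Handling $a_{5,t}$ needs the uniform $O_p(\sqrt N)$ bound on $\norm{P_{\bar G}u_g}_F$, which rests on the CLT/weak-dependence control of $\bar G'u_g$ and on the behavior of the proxy-factor second moments under Assumptions~\ref{ass} and~\ref{assum:factor loading strength}; handling $a_{6,t}$ requires pushing the $\sqrt T$-consistency of the first-stage OLS estimator $\widehat{\beta}_g$ through $M_{\bar G}u_g$, which in turn uses the $O_p(NT)$ bound on $\norm{M_{\bar G}u_g}_F^2$ established while proving Theorem~\ref{theo:consistent factor number}. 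Once those two bounds are secured the remaining four terms follow \cite{bai2002determining} essentially verbatim, and no further difficulty is expected.
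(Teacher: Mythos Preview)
Your proposal is correct and follows essentially the same approach as the paper: the same six-term decomposition of $\widetilde{u}_{vz,t}-H'u_{vz,t}$ via the eigenvector identity, the same $O_p$-bounds on $\sum_t a_{k,t}^2$ for $k=1,\dots,6$, and the same assumptions invoked at each step. Your identification of $a_{5,t}$ and $a_{6,t}$ as the genuinely new pieces (arising from the projection $P_{\bar G}u_g$ and the first-stage OLS error $\widehat{u}_{g,t}-u_{g,t}$) matches the paper's treatment exactly, and your explicit handling of $V_{NT}^{-1}=O_p(1)$ via \eqref{eq: first Kvz eigenvalue consistency} is precisely what the paper does implicitly.
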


Given the discussion in the last subsection, it would be safe to assume that we know the value of $K_{vz}$, namely the number of the strong omitted factors are known. In this section, we show the asymptotic properties of the common component estimator.  Essentially, this section is a building block for the consistent $\theta_g$ estimator as introduced in the next section. 

The method of principal components gives the following estimators 
\begin{align}
	\left( \widetilde{c}_{\beta \gamma},  \widetilde{u}_{vz} \right) = \arg \min_{c_i, u_t  ~s.t.~ \sum_{t=1}^{T} u_tu_t'  /T = I_{K_{vz}}  } \sum_{i,t}  \left( \widehat{u}_{g,it} -   c_{i}'  u_{t} \right)^2
\end{align}
The estimator $ \widetilde{u}_{vz} $ is equal to the $\sqrt{T}$ times eigenvector associated with the $K_{vz}$ largest eigenvalues of the matrix $\widehat{u}_g\widehat{u}_g' $, and $\widetilde{c}_{\beta \gamma}= \widetilde{u}_{vz}' \widehat{u}_g/T$ corresponds to the OLS estimator regressing $\widehat{u}_g$ over $\widetilde{u}_vz$.  Specially we are interested in the common component estimator $\widetilde{c} =   \widetilde{u}_{vz}(\widetilde{c}_{\beta \gamma})' $, which would serve as a proxy for the common component in $u_g$ in our proposed estimator.  
\begin{theo}\label{theo: (factor)}
	Suppose Assumptions \ref{assum: factor structure in r} - \ref{assumption: weak cross-sectional correlation of the idiosyncratic component }, \ref{ass: factor loadings additional requirement } - \ref{ass:residual innovation factor restriction}, \ref{assum:factor norm bounded}, \ref{assum: (New, products of factor and e)} hold, and  $\sqrt{N}/T \rightarrow 0$,  
	\begin{align}
		\sqrt{N}\left( \widetilde{u}_{vz,t} -H' {u}_{vz,t}  \right) = V_{NT}^{-1}\frac{1}{T} \sum_{s=1}^{T} \left( \widetilde{u}_{vz,s}{u}_{vz,s}' \right)   \frac{1}{\sqrt{N}} \sum_{i=1}^N c_{\beta\gamma,i} e_{it} + o_p(1)
	\end{align} 
	when  $\lim\inf \sqrt{N}/T \rightarrow \tau > 0$,  
	\begin{align}
		T \left( \widetilde{u}_{vz,t} -H' {u}_{vz,t}  \right) = O_p(1)
	\end{align} 
	
	\noindent \begin{proof}[Proof of Theorem \ref{theo: (factor)}]
		We make use the equation $ \widetilde{u}_{vz} =\frac{1}{NT}M_{\bar{G}} u_g u_g' M_{\bar{G}} \widetilde{u}_{vz} V_{NT}^{-1}  $, and $H=\left( c_{\beta\gamma}'c_{\beta\gamma}/N  \right)\left(u_{vz}' \widetilde{u}_{vz}/T  \right)V_{NT}^{-1} $, and $V_{NT}$ is the $K_{vz}\times K_{vz}$ diagonal matrix of $\lambda_{i}\left(\widehat{u}_g\widehat{u}_g' /NT\right), i=1,\cdots, K_{vz}$. 
		Then 
		\begin{align}
			& \widetilde{u}_{vz,t} - H' u_{vz,t} \nonumber \\=&\frac{1}{NT}V_{NT}^{-1} \widetilde{u}_{vz}'     \left(M_{\bar{G}} u_g \widehat{u}_{g,t}   -    u_{vz} \left( c_{\beta\gamma}'c_{\beta\gamma}  \right) u_{vz,t} \right) \nonumber \\ 
			= & \frac{1}{NT}V_{NT}^{-1} \widetilde{u}_{vz}'     \left(  u_g u_{g,t}  -    u_{vz} \left( c_{\beta\gamma}'c_{\beta\gamma}  \right) u_{vz,t} - P_{\bar{G}}u_g u_{g,t}  +  M_{\bar{G}} u_g \left( \widehat{u}_{g,t} - u_{g,t} \right) \right) \nonumber \\ 
			= & \frac{1}{NT}V_{NT}^{-1} \widetilde{u}_{vz}'     \left(   \left( u_{vz} c_{\beta\gamma}'  +e \right)\left( c_{\beta\gamma}u_{vz,t}  + e_{t}\right) -    u_{vz} \left( c_{\beta\gamma}'c_{\beta\gamma}  \right) u_{vz,t} - P_{\bar{G}}u_g u_{g,t}  +  M_{\bar{G}} u_g \left( \widehat{u}_{g,t} - u_{g,t} \right) \right) \nonumber \\ 
			= & V_{NT}^{-1}\left( \frac{1}{T} \sum_{s=1}^{T}  \widetilde{u}_{vz,s}\gamma_N(s,t) + \frac{1}{T} \sum_{s=1}^{T}  \widetilde{u}_{vz,s} \zeta_{ee, st} + \frac{1}{T} \sum_{s=1}^{T}  \widetilde{u}_{vz,s} \zeta_{ue, st} + \frac{1}{T} \sum_{s=1}^{T}  \widetilde{u}_{vz,s} \zeta_{eu, st}        \right)   \nonumber \\
			&+V_{NT}^{-1}\left(  -\frac{1}{NT} \widetilde{u}_{vz}'P_{\bar{G}}u_g u_{g,t}  +  \frac{1}{NT}\widetilde{u}_{vz}'M_{\bar{G}} u_g \left( \widehat{u}_{g,t} - u_{g,t} \right)      \right)  \label{eq:decompose factor difference}
		\end{align} 
		with 
		\begin{align}
			\zeta_{ee, st}= & e_s' e_t /N -\gamma_N(s,t) \\
			\zeta_{ue, st}= &   u_{vz, s}' c_{\beta\gamma}' e_t/N \\
			\zeta_{eu, st}= & u_{vz, t}' c_{\beta\gamma}' e_s/N
		\end{align}
		Now we analyze the terms on the left hand of equation (\ref{eq:decompose factor difference}).

		(1) Assumption \ref{assum:factor norm bounded} and Assumption \ref{assum: bar sigma upper bound} imply that 
		\begin{align}
			\frac{1}{T} \sum_{s=1}^{T}   {u}_{vz,s}\gamma_N(s,t) = O_p(1/T)
		\end{align}
		Lemma \ref{lem: consistency of the estimated factors} and Assumption \ref{assum: bar sigma upper bound} imply that
		\begin{align}
			\frac{1}{T} \norm{\sum_{s=1}^{T}  \left( \widetilde{u}_{vz,s}-  {u}_{vz,s}\right)\gamma_N(s,t)}_F \leq & \frac{1}{\sqrt{T}} \left(\frac{1}{T}   \sum_{s=1}^{T} \norm{ \widetilde{u}_{vz,s}-  {u}_{vz,s} }_F^2\right)^{1/2}  \left(     \sum_{t}^{T} \left| \gamma_N(s,t)\right|^2\right)^{1/2}  \nonumber \\=& O_p(1/(\sqrt{Tm_{NT}}  ))
		\end{align}
		and thus
		\begin{align}
			\frac{1}{T} \sum_{s=1}^{T}  \widetilde{u}_{vz,s}\gamma_N(s,t)  = &    \frac{1}{T} \sum_{s=1}^{T}   {u}_{vz,s}\gamma_N(s,t)   +   \frac{1}{T} \sum_{s=1}^{T}  \left(\widetilde{u}_{vz,s}-  {u}_{vz,s}\right)\gamma_N(s,t)   \leq O_p(1/(\sqrt{Tm_{NT}}  ))
		\end{align}
		
		(2) 
		
		Assumption \ref{assum: (New, products of factor and e)}  implies that 
		\begin{align}
			\frac{1}{T} \sum_{s=1}^{T}   {u}_{vz,s}\zeta_{ee, st} = O_p(1/\sqrt{NT})
		\end{align}

		Assumption \ref{assum: forth moments of ee} implies that
		\begin{align}
			\frac{1}{ {T}}     \sum_{s=1}^{T} \left| \zeta_{ee, st}\right|^2 = &	\frac{1}{ {T}} \frac{1}{N}    \sum_{s=1}^{T} \left| \frac{1}{\sqrt{N}}  \left( e_{is}e_{it}-\mathbb{E}(e_{is}e_{it})  \right)  \right|^2 = O_p(1/(N)) \label{eq:ee2}
		\end{align}
		Lemma \ref{lem: consistency of the estimated factors} and  equation (\ref{eq:ee2}) imply that
		\begin{align}
			\frac{1}{T} \norm{\sum_{s=1}^{T}  \left( \widetilde{u}_{vz,s}-  {u}_{vz,s}\right)\zeta_{ee, st}} \leq &  \left(\frac{1}{T}   \sum_{s=1}^{T} \norm{ \widetilde{u}_{vz,s}-  {u}_{vz,s} }^2\right)^{1/2}  \left(\frac{1}{ {T}}     \sum_{s=1}^{T} \left| \zeta_{ee, st}\right|^2\right)^{1/2}  \nonumber \\=& O_p(1/(\sqrt{Nm_{NT}}  ))
		\end{align}
		and thus
		\begin{align}
			\frac{1}{T} \sum_{s=1}^{T}  \widetilde{u}_{vz,s}\zeta_{ee, st}  = &    \frac{1}{T} \sum_{s=1}^{T}   {u}_{vz,s}\zeta_{ee, st}   +   \frac{1}{T} \sum_{s=1}^{T}  \left(\widetilde{u}_{vz,s}-  {u}_{vz,s}\right)\zeta_{ee, st}  \leq O_p(1/(\sqrt{Nm_{NT}}  ))
		\end{align}

		(3)
		Assumption \ref{assum: (New, products of factor and e)}   implies that 
		\begin{align}
			\frac{1}{T} \sum_{s=1}^{T}   {u}_{vz,s}\zeta_{ue, st} =  \left(\frac{1}{T} \sum_{s=1}^{T}   {u}_{vz,s}{u}_{vz,s}' \right) \frac{1}{\sqrt{N}} \left(\frac{1}{\sqrt{N}}  \sum_{i=1}^{N}  c_{\beta\gamma, i} e_{ti}  \right)    =O_p(1/\sqrt{N})
		\end{align}
		and
		\begin{align}
			\frac{1}{ {T}}     \sum_{s=1}^{T} \left| \zeta_{ue, st}\right|^2 = &	\frac{1}{ {T}}     \sum_{s=1}^{T}   \left(  u_{vz, s}' c_{\beta\gamma}' e_t/N  \right)^2 \leq \norm{u_{vz} }_F/T  \norm{  c_{\beta \gamma}' e'_t/\sqrt{N}}_F/N  =  O_p(1/ {N})  \label{eq:ue2}
		\end{align}
		Lemma \ref{lem: consistency of the estimated factors} and  equation (\ref{eq:ue2}) imply that
		\begin{align}
			\frac{1}{T} \norm{\sum_{s=1}^{T}  \left( \widetilde{u}_{vz,s}-  {u}_{vz,s}\right)\zeta_{ue, st}} \leq &  \left(\frac{1}{T}   \sum_{s=1}^{T} \norm{ \widetilde{u}_{vz,s}-  {u}_{vz,s} }^2\right)^{1/2}  \left(\frac{1}{ {T}}     \sum_{t}^{T} \left| \zeta_{ue, st}\right|^2\right)^{1/2}  \nonumber \\=& O_p(1/(\sqrt{Nm_{NT}}  ))
		\end{align}
		and thus
		\begin{align}
			\frac{1}{T} \sum_{s=1}^{T}  \widetilde{u}_{vz,s}\zeta_{ue, st}  = &    \frac{1}{T} \sum_{s=1}^{T}   {u}_{vz,s}\zeta_{ue, st}   +   \frac{1}{T} \sum_{s=1}^{T}  \left(\widetilde{u}_{vz,s}-  {u}_{vz,s}\right)\zeta_{ue, st}  = O_p(1/\sqrt{ N})
		\end{align}
		
		(4)  
		Lemma \ref{lem: consistency of the estimated factors} and   Assumption \ref{assum: (New, products of factor and e)}   imply that
		\begin{align}
			\norm{\frac{1}{T} \sum_{s=1}^{T}  \left(\widetilde{u}_{vz,s}- {u}_{vz,s}\right)e_s'c_{\beta\gamma} /N u_{vz, t}}_F \leq & \frac{1}{\sqrt{N}} \left(\frac{1}{T} \sum_{s=1}^T \norm{ \widetilde{u}_{vz,s}- {u}_{vz,s}  }^2_F \right)^{\frac{1}{2}} \left(\frac{1}{T} \sum_{s=1}^T \norm{ e_s'c_{\beta\gamma}/\sqrt{N}}^2_F \right)^{\frac{1}{2}} \norm{u_{vz,t}}_F \nonumber \\
			= & O_p(1/\sqrt{Nm_{NT}}) 
		\end{align}
		Assumption \ref{assum: (New, products of factor and e)} implies
		\begin{align}
			\frac{1}{\sqrt{NT}}\left(\frac{1}{\sqrt{NT}} \sum_{s=1}^{T} {u}_{vz,s}e_s'c_{\beta\gamma}\right) u_{vz, t} = O_p(\frac{1}{\sqrt{NT}})  
		\end{align}
		and thus
		\begin{align}
			\frac{1}{T} \sum_{s=1}^{T}  \widetilde{u}_{vz,s} \zeta_{ue, st} = \frac{1}{T} \sum_{s=1}^{T}  \left(\widetilde{u}_{vz,s}- {u}_{vz,s}\right)e_s'c_{\beta\gamma} /N u_{vz, t} +\frac{1}{T} \sum_{s=1}^{T} {u}_{vz,s}e_s'c_{\beta\gamma} /N u_{vz, t}   \leq O_p(1/\sqrt{Nm_{NT}}) 
		\end{align}

		(5)
		Similar to the derivation of equation (\ref{eq: norm bound barG e}), we have
		\begin{align}
			\norm{{\bar{G}}'u_g}_F = O_p(N^{\frac{1}{2}}T^{\frac{1}{2}});  ~~	\norm{{\bar{G}}'u_{vz}}_F = O_p( T^{\frac{1}{2}})
		\end{align} 
		which imply that
		\begin{align}
			& \frac{1}{NT}\norm{ \left(\widetilde{u}_{vz}- {u}_{vz}\right)'{\bar{G}}\left({\bar{G}}'{\bar{G}}\right)^{-1}{\bar{G}}'u_g u_{g,t}'}_F \nonumber \\
			\leq &  \left(\frac{1}{T} \sum_{t=1}^T \norm{\widetilde{u}_{vz,t}- {u}_{vz,t}}^2_F  \right)^\frac{1}{2} \frac{1}{\sqrt{T}}\left(\norm{{\bar{G}}/\sqrt{T}\left({\bar{G}}'{\bar{G}}/T\right)^{-1}}_F\right) \left(\frac{1}{\sqrt{NT}} \norm{{\bar{G}}'u_g}_F \right) \norm{u_{g,t}/\sqrt{N}}_F \nonumber \\
			= & O_p(1/\sqrt{Tm_{NT}})  	
		\end{align}
		and 
		\begin{align}
			& \frac{1}{NT} \norm{{u}_{vz}'{\bar{G}}\left({\bar{G}}'{\bar{G}}\right)^{-1}{\bar{G}}'u_g u_{g,t}'}_F \nonumber \\
			\leq  &   \left( \frac{1}{\sqrt{T}}\norm{{u}_{vz}'{\bar{G}}}_F  \right)\frac{1}{ {T}}\left(\norm{ \left({\bar{G}}'{\bar{G}}/T\right)^{-1}}_F\right) \left(\frac{1}{\sqrt{NT}} \norm{{\bar{G}}'u_g}_F \right) \norm{u_{g,t}/\sqrt{N}}_F  \nonumber \\
			= & O_p(1/T) \label{eq:81}
		\end{align}
		Therefore,
		\begin{align}
			\frac{1}{NT} \widetilde{u}_{vz}'P_{\bar{G}}u_g u_{g,t}'  = & 
			\frac{1}{NT} \left(\widetilde{u}_{vz}- {u}_{vz}\right)'{\bar{G}}\left({\bar{G}}'{\bar{G}}\right)^{-1}{\bar{G}}'u_g u_{g,t}' + 
			\frac{1}{NT} {u}_{vz}'{\bar{G}}\left({\bar{G}}'{\bar{G}}\right)^{-1}{\bar{G}}'u_g u_{g,t}' \nonumber \\
			\leq  & O_p(1/\sqrt{Tm_{NT}})
		\end{align}

		(6)
		
		\begin{align}
			\frac{1}{NT} \norm{{u}_{vz}'{u}_{vz}c_{\beta\gamma}'  \left( \widehat{u}_{g,t} - u_{g,t} \right)  }_F \leq  &  \frac{1}{\sqrt{NT}} \norm{{u}_{vz}' {u}_{vz}  /T }_F \norm{\frac{\sqrt{T}c_{\beta \gamma}'\left( \widehat{u}_{g,t} - u_{g,t} \right)}{\sqrt{N}}}_F = O_p(\frac{1}{\sqrt{NT}}) 
		\end{align}

		\begin{align}
			\frac{1}{NT}\norm{{u}_{vz}' e  \left( \widehat{u}_{g,t} - u_{g,t} \right) }_F \leq  \frac{1}{ T} \norm{{u}_{vz}' e  /\sqrt{NT} }_F \norm{\frac{\sqrt{T} \left( \widehat{u}_{g,t} - u_{g,t} \right)}{\sqrt{N}}}_F = O_p(\frac{1}{T}) 
		\end{align}
		and thus
		\begin{align}
			\frac{1}{NT}\norm{\widetilde{u}_{vz}'M_{\bar{G}} u_g   }_F\leq  & 	\frac{1}{NT}\norm{\widetilde{u}_{vz}'  u_g \left( \widehat{u}_{g,t} - u_{g,t} \right)}_F +	\frac{1}{NT}\norm{\widetilde{u}_{vz}'P_{\bar{G}} u_g \left( \widehat{u}_{g,t} - u_{g,t} \right)}_F =O_p(\frac{1}{\sqrt{NT}}) + O_p( \frac{1}{T}) \label{eq:85}
		\end{align}

		From the above discussion, when   $\sqrt{N}/T \rightarrow 0$, only the third term in equation (\ref{eq:decompose factor difference}) matters in the asymptotic behavior and thus 
		\begin{align}
			\sqrt{N}\left( \widetilde{u}_{vz,t} -H' {u}_{vz,t}  \right) = V_{NT}^{-1}\frac{1}{T} \sum_{s=1}^{T} \left( \widetilde{u}_{vz,s}{u}_{vz,s}' \right)   \frac{1}{\sqrt{N}} \sum_{i=1}^N c_{\beta\gamma,i} e_{it} + o_p(1)
		\end{align} 
		Furthermore, when $\lim\inf \sqrt{N}/T \rightarrow \tau > 0$, we have
		\begin{align}
			T \left( \widetilde{u}_{vz,t} -H' {u}_{vz,t}  \right) = O_p(1)
		\end{align}

	\end{proof}
\end{theo}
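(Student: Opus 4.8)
The plan is to adapt the argument behind Theorem~1 of \cite{bai2003inferential} to the present setting, in which the OLS residuals $\widehat{u}_{g,t}=M_{\bar{G}}r_t$ take the place of directly observed data and the missing factor structure with loadings $c_{\beta\gamma}=(\beta;\gamma)$ has been extracted by principal components. The starting point is the eigenvector identity $\widetilde{u}_{vz}=\tfrac{1}{NT}M_{\bar{G}}u_g u_g' M_{\bar{G}}\widetilde{u}_{vz}V_{NT}^{-1}$, which holds because $\widetilde{u}_{vz}$ is $\sqrt{T}$ times the leading eigenvectors of $\widehat{u}_g\widehat{u}_g'$ and $V_{NT}$ is the corresponding eigenvalue matrix. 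Substituting $\widehat{u}_g=M_{\bar{G}}r$, the decomposition $u_g=u_{vz}c_{\beta\gamma}'+e$, and the definition $H=(c_{\beta\gamma}'c_{\beta\gamma}/N)(u_{vz}'\widetilde{u}_{vz}/T)V_{NT}^{-1}$, and then collecting terms, yields an expansion of $\widetilde{u}_{vz,t}-H'u_{vz,t}$ as $V_{NT}^{-1}$ times the sum of four ``cross'' terms $\tfrac{1}{T}\sum_s\widetilde{u}_{vz,s}w_{st}$ with weights $w_{st}\in\{\gamma_N(s,t),\ \zeta_{ee,st},\ \zeta_{ue,st},\ \zeta_{eu,st}\}$ (where $\gamma_N(s,t)=\mathbb{E}(N^{-1}e_s'e_t)$, $\zeta_{ee,st}=e_s'e_t/N-\gamma_N(s,t)$, $\zeta_{ue,st}=u_{vz,s}'c_{\beta\gamma}'e_t/N$, $\zeta_{eu,st}=u_{vz,t}'c_{\beta\gamma}'e_s/N$), plus two remainder terms $-\tfrac{1}{NT}\widetilde{u}_{vz}'P_{\bar{G}}u_g u_{g,t}'$ and $\tfrac{1}{NT}\widetilde{u}_{vz}'M_{\bar{G}}u_g(\widehat{u}_{g,t}-u_{g,t})$ which are absent in \cite{bai2003inferential} and arise from partialling out $\bar{G}_t$ and from using estimated residuals.

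Next I would bound each piece. In the three cross terms other than the $\zeta_{ue}$ term, and in the two remainders, I first replace $\widetilde{u}_{vz,s}$ by $u_{vz,s}$ at a cost controlled by Lemma~\ref{lem: consistency of the estimated factors}, which gives $\tfrac{1}{T}\sum_s\norm{\widetilde{u}_{vz,s}-u_{vz,s}}_F^2=O_p(1/m_{NT})$; by Cauchy--Schwarz the replacement error is $O_p(m_{NT}^{-1/2})$ times the $L^2$-norm of the relevant weight sequence. The $u_{vz,s}$-versions are then handled through the moment assumptions: Assumptions~\ref{assum: bar sigma upper bound} and \ref{assum:factor norm bounded} give the $\gamma_N$ term order $O_p(1/\sqrt{T\,m_{NT}})$; Assumptions~\ref{assum: forth moments of ee}, \ref{assum:?} and \ref{assum: (New, products of factor and e)} give $\tfrac{1}{T}\sum_s|\zeta_{ee,st}|^2=O_p(1/N)$ and hence the $\zeta_{ee}$ term order $O_p(1/\sqrt{N\,m_{NT}})$; Assumptions~\ref{ass: Weak dependence between proxy factors and Idiosyncratic Errors} and \ref{assum: (New, products of factor and e)} give the $\zeta_{eu}$ term the same rate; and the two regression-specific remainders are bounded, using $\norm{\bar{G}'u_g}_F=O_p(\sqrt{NT})$ and $\norm{M_{\bar{G}}u_g}_F^2=O_p(NT)$ (from the eigenvalue bounds established in proving Theorem~\ref{theo:consistent factor number}) together with $\sqrt{T}(\widehat{\beta}_g-\beta_g)=O_p(1)$, by $O_p(1/\sqrt{T\,m_{NT}})$ and $O_p(1/\sqrt{NT})$ respectively. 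The remaining cross term is the crucial one: without any manipulation, $V_{NT}^{-1}\tfrac{1}{T}\sum_s\widetilde{u}_{vz,s}\zeta_{ue,st}=V_{NT}^{-1}\big(\tfrac{1}{T}\sum_s\widetilde{u}_{vz,s}u_{vz,s}'\big)\tfrac{1}{N}\sum_i c_{\beta\gamma,i}e_{it}$, which by Assumption~\ref{assum: (New, products of factor and e)}(ii) is of exact order $1/\sqrt{N}$ and carries the asymptotic distribution. Throughout, $V_{NT}^{-1}$ is innocuous because $V_{NT}$ converges to a bounded positive-definite matrix by (\ref{eq: first Kvz eigenvalue consistency}).

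Both claims then follow by scaling. When $\sqrt{N}/T\to0$: multiplying through by $\sqrt{N}$ turns the $\zeta_{ue}$ term into the claimed leading term $V_{NT}^{-1}\big(\tfrac{1}{T}\sum_s\widetilde{u}_{vz,s}u_{vz,s}'\big)\tfrac{1}{\sqrt{N}}\sum_i c_{\beta\gamma,i}e_{it}$, while every other term is $\sqrt{N}$ times something of order $1/\sqrt{N\,m_{NT}}$, $1/\sqrt{NT}$ or $1/\sqrt{T\,m_{NT}}$, each of which tends to $0$ under $\sqrt{N}/T\to0$ (using $m_{NT}\to\infty$); this proves the first display. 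When instead $\liminf\sqrt{N}/T=\tau>0$, so that $N$ grows at least as fast as $T^2$ and hence $m_{NT}=T$ eventually, multiplying through by $T$ leaves every term $O_p(1)$ at worst — for instance $T\cdot O_p(N^{-1/2})=O_p(T/\sqrt{N})=O_p(1)$, $T\cdot O_p((T\,m_{NT})^{-1/2})=O_p(1)$, and $T\cdot O_p((NT)^{-1/2})=O_p(\sqrt{T/N})=o_p(1)$ — which gives $T(\widetilde{u}_{vz,t}-H'u_{vz,t})=O_p(1)$, the second display.

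The step I expect to be the main obstacle is the bookkeeping in the second paragraph, specifically verifying that the two regression-specific remainder terms $\tfrac{1}{NT}\widetilde{u}_{vz}'P_{\bar{G}}u_g u_{g,t}'$ and $\tfrac{1}{NT}\widetilde{u}_{vz}'M_{\bar{G}}u_g(\widehat{u}_{g,t}-u_{g,t})$ --- which have no analog in \cite{bai2003inferential} --- are negligible even after multiplication by $\sqrt{N}$. This requires combining the first-pass rate $\sqrt{T}(\widehat{\beta}_g-\beta_g)=O_p(1)$, the bound $\norm{M_{\bar{G}}u_g}_F^2=O_p(NT)$, and Lemma~\ref{lem: consistency of the estimated factors}, while carefully tracking which factors are $O_p(\sqrt{N})$ and which are $O_p(\sqrt{T})$; the condition $\sqrt{N}/T\to0$ is exactly what makes these terms drop out at the $\sqrt{N}$ scale. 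A minor secondary point is to confirm that $V_{NT}$ stays uniformly bounded away from singularity, which follows from the eigenvalue consistency already used in proving Theorem~\ref{theo:consistent factor number} together with the positive-definiteness imposed in Assumptions~\ref{ass: factor loadings additional requirement} and \ref{ass:residual innovation factor restriction}.
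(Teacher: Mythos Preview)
Your proposal is correct and follows essentially the same route as the paper's proof: the same eigenvector identity, the same six-term decomposition of $\widetilde{u}_{vz,t}-H'u_{vz,t}$ into the four cross terms $\gamma_N,\zeta_{ee},\zeta_{ue},\zeta_{eu}$ plus the two regression-specific remainders, the same use of Lemma~\ref{lem: consistency of the estimated factors} to replace $\widetilde{u}_{vz,s}$ by $H'u_{vz,s}$, and the same identification of the $\zeta_{ue}$ term as the sole $O_p(N^{-1/2})$ contribution. The only discrepancy is that the paper bounds the sixth term by $O_p(1/\sqrt{NT})+O_p(1/T)$ rather than just $O_p(1/\sqrt{NT})$, but the extra $O_p(1/T)$ piece is still killed by the scalings $\sqrt{N}$ (under $\sqrt{N}/T\to0$) and $T$ (under $\liminf\sqrt{N}/T>0$), so your conclusions stand.
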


\begin{theo} \label{theo: (factor loading)}
	Suppose Assumptions \ref{assum: factor structure in r} - \ref{assumption: weak cross-sectional correlation of the idiosyncratic component }, \ref{ass: factor loadings additional requirement }  - \ref{assum: (New, products of factor and e)} hold, and $\sqrt{T}/N\rightarrow 0$ 
	\begin{align*}
		\sqrt{T}\left( 	 \widetilde{c}_{\beta\gamma,i} -H^{-1}{c}_{\beta\gamma,i}  \right) = H' \frac{1}{\sqrt{T} }u_{vz}' e_i + o_p(1)
	\end{align*}	 
	
	\noindent \begin{proof}[Proof of Theorem \ref{theo: (factor loading)}] 		
		\begin{align}
			\widetilde{c}_{\beta\gamma, i} = & \frac{1}{T}\widetilde{u}_{vz}' \widehat{u}_{g,i} \nonumber \\
			=& \frac{1}{T}\widetilde{u}_{vz}'   \left(  \left(\widetilde{u}_{vz}H^{-1} +u_{vz} - \widetilde{u}_{vz}H^{-1} \right) c_{\beta\gamma,i}  + e_i \right)-\frac{1}{T}\widetilde{u}_{vz}'\left(\widehat{u}_{g,i}-u_{g,i} \right)   \nonumber \\
			=& H^{-1} c_{\beta\gamma, i}  + \frac{1}{T} H' u_{vz}' e_i +\frac{1}{T} \widetilde{u}_{vz}'  \left( u_{vz} - \widetilde{u}_{vz}H^{-1} \right) c_{\beta\gamma,i} +  \frac{1}{T} \left(\widetilde{u}_{vz}' -  H' u_{vz}'\right) e_i-\frac{1}{T}\widetilde{u}_{vz}'\left(\widehat{u}_{g,i}-u_{g,i} \right)  
		\end{align}
		
		We show the last three terms are at most $O_p(1/m_{NT})$.

		(1) For the term   $\frac{1}{T}   \left( \widetilde{u}_{vz} -  {u}_{vz}H \right)'  e_{i}$, equation (\ref{eq:decompose factor difference}) implies that 
		\begin{align}
			&\frac{1}{T}   \left( \widetilde{u}_{vz} -  {u}_{vz}H \right)'  e_{i} =   V_{NT}^{-1}\left( \frac{1}{T^2} \sum_{t=1}^{T} \sum_{s=1}^{T}  \widetilde{u}_{vz,s}\gamma_N(s,t)e_{it} + \frac{1}{T^2} \sum_{t=1}^{T}  \sum_{s=1}^{T}  \widetilde{u}_{vz,s} \zeta_{ee, st}e_{it} + \frac{1}{T^2} \sum_{t=1}^{T} \sum_{s=1}^{T}  \widetilde{u}_{vz,s} \zeta_{ue, st} e_{it}       \right)   \nonumber \\
			&+V_{NT}^{-1}   \left(\frac{1}{T^2} \sum_{t=1}^{T} \sum_{s=1}^{T}  \widetilde{u}_{vz,s} \zeta_{eu, st}  e_{it}  -\frac{1}{NT^2}\sum_{t=1}^{T} \widetilde{u}_{vz}'P_{\bar{G}}u_g u_{g,t}e_{it}  +  \frac{1}{NT^2}\sum_{t=1}^{T}\widetilde{u}_{vz}'M_{\bar{G}} u_g \left( \widehat{u}_{g,t} - u_{g,t} \right)e_{it}    \right) \nonumber \\
			& = V_{NT}^{-1}\sum_{j=1}^6 a_j
		\end{align}
		There are six terms on the left hand side of the above equation, and we analyze each of $a_j, j=1,\cdots, 6$ to determine the order of  $\frac{1}{T}   \left( \widetilde{u}_{vz} -  {u}_{vz}H \right)'  e_{i}$.

		Assumption \ref{assumption: weak cross-sectional correlation of the idiosyncratic component } and Lemma 1(i) in  \cite{bai2002determining} imply that  
		\begin{align}
			&	\norm{\frac{1}{T^2} \sum_{t=1}^{T} \sum_{s=1}^{T}  \left(\widetilde{u}_{vz,s} -H' {u}_{vz,s} \right) \gamma_N(s,t)e_{it}}_F \nonumber \\
			\leq &\frac{1}{\sqrt{T}} \left(\frac{1}{T} \sum_{s=1}^{T}  \norm{\widetilde{u}_{vz,s} -H' {u}_{vz,s} }_F^2 \right)^{\frac{1}{2}} \left( \frac{1}{T}   \sum_{t=1}^{T} \sum_{s=1}^{T}   \gamma_N(s,t)^2  \frac{1}{T}\sum_{t=1}^{T}e_{it}^2\right)^{\frac{1}{2}} \leq O_p(\frac{1}{\sqrt{Tm_{NT}}})
		\end{align}
		
		Assumptions \ref{assumption: weak cross-sectional correlation of the idiosyncratic component }, \ref{assum:factor norm bounded} and Lemma 1(i) in  \cite{bai2002determining} imply that  \begin{align}
			\mathbb{E}\norm{\frac{1}{T^2} \sum_{t=1}^{T} \sum_{s=1}^{T}    {u}_{vz,s}   \gamma_N(s,t)e_{it}}_F \leq    \frac{1}{T^2} \sum_{t=1}^{T} \sum_{s=1}^{T}    \left(\mathbb{E}\norm{{u}_{vz,s} }^2_F\right)^{\frac{1}{2}}  \gamma_N(s,t)^2(\mathbb{E}e_{it}^2)^{\frac{1}{2}} =O(\frac{1}{T})
		\end{align}
		
		Therefore,
		\begin{align}
			a_1 = \frac{1}{T^2} \sum_{t=1}^{T} \sum_{s=1}^{T}  \left(\widetilde{u}_{vz,s} -H' {u}_{vz,s} \right) \gamma_N(s,t)e_{it} + \frac{1}{T^2} \sum_{t=1}^{T} \sum_{s=1}^{T}   H' {u}_{vz,s}   \gamma_N(s,t)e_{it} \leq O_p(\frac{1}{\sqrt{Tm_{NT}}})
		\end{align}

		Assumption \ref{assumption: weak cross-sectional correlation of the idiosyncratic component } implies that
		\begin{align}
			\frac{1}{T}\sum_{t=1}^{T}  \zeta_{ee, st}e_{it}  = \frac{1}{\sqrt{N}} \frac{1}{T} \sum_{t=1}^{T} \left( \frac{1}{\sqrt{N}} \sum_{i=1}^N\left(e_{is}e_{it}-\mathbb{E}\left(e_{is}e_{it}\right)  \right)  \right)e_{it} =O_p(\frac{1}{\sqrt{N}}) \label{eq:92}
		\end{align}
		Equation (\ref{eq:92}) and Lemma \ref{lem: consistency of the estimated factors} imply that 
		\begin{align}
			\norm{\frac{1}{T^2} \sum_{t=1}^{T} \sum_{s=1}^{T}  \left(\widetilde{u}_{vz,s} -H' {u}_{vz,s} \right) \zeta_{ee, st}e_{it} }_F \leq &  \left(\frac{1}{T} \sum_{s=1}^{T}  \norm{\widetilde{u}_{vz,s} -H' {u}_{vz,s} }_F^2 \right)^{\frac{1}{2}} \left( \frac{1}{T}    \sum_{s=1}^{T}  \left(    \frac{1}{T}\sum_{t=1}^{T}  \zeta_{ee, st}e_{it}\right)^2\right)^{\frac{1}{2}} \nonumber \\
			\leq & O_p(\frac{1}{\sqrt{Nm_{NT}}})
		\end{align}
		Assumption \ref{assum: (New, products of factor and e)} implies that 
		\begin{align}
			\frac{1}{T^2} \sum_{t=1}^{T} \sum_{s=1}^{T}   {u}_{vz,s}  \zeta_{ee, st}e_{it} = & \frac{1}{\sqrt{NT}} \left(\frac{1}{T} \sum_{t=1}^{T} \left( \frac{1}{\sqrt{NT}}\sum_{s=1}^T \sum_{i=1}^N{u}_{vz,s}\left(e_{is}e_{it}-\mathbb{E}\left(e_{is}e_{it}\right)  \right)  \right)e_{it}\right) \nonumber \\
			\leq & O_p(\frac{1}{\sqrt{NT}})
		\end{align}
		and thus
		\begin{align}
			a_2 = \frac{1}{T^2} \sum_{t=1}^{T} \sum_{s=1}^{T}  \left(\widetilde{u}_{vz,s} -H' {u}_{vz,s} \right) \zeta_{ee, st}e_{it} + \frac{1}{T^2} \sum_{t=1}^{T} \sum_{s=1}^{T}   H' {u}_{vz,s}  \zeta_{ee, st}e_{it} \leq O_p(\frac{1}{\sqrt{Nm_{NT}}})
		\end{align}
		Assumption \ref{assum: (New, products of factor and e)}  implies that
		\begin{align}
			\frac{1}{T}\sum_{t=1}^{T}  \zeta_{ue, st}e_{it} = & \frac{1}{\sqrt{N}} u_{vz,s}'   \left(\frac{1}{T} \sum_{t=1}^{T}  \left(\frac{1}{\sqrt{N}} \sum_{i=1} c_{\beta\gamma,i} e_{it}  \right)   e_{it}\right)  = O_p(\frac{1}{\sqrt{N}})
		\end{align}
		and from Theorem \ref{theo: (factor)}, we know
		\begin{align}
			\norm{	\frac{1}{T^2} \sum_{t=1}^{T} \sum_{s=1}^{T}  \left(\widetilde{u}_{vz,s} -H' {u}_{vz,s} \right) \zeta_{ue, st}e_{it}}_F \leq & \left(\frac{1}{T} \sum_{s=1}^{T}  \norm{\widetilde{u}_{vz,s} -H' {u}_{vz,s} }_F^2 \right)^{\frac{1}{2}} \left( \frac{1}{T}    \sum_{s=1}^{T}  \left(    \frac{1}{T}\sum_{t=1}^{T}  \zeta_{ue, st}e_{it}\right)^2\right)^{\frac{1}{2}} \nonumber \\
			\leq & O_p(\frac{1}{\sqrt{Nm_{NT}}})
		\end{align}
		
		Assumptions \ref{assum:(New, additional ,  Time and Cross-Section Dependence and Heteroskedasticity) } and \ref{ass: factor loadings additional requirement } imply that 
		\begin{align}
			\frac{1}{NT}\sum_{t=1}^{T} \sum_{j=1}^{N}   \norm{ c_{\beta\gamma,i}  \tau_t(i,j)}_F  \leq  & L 	\frac{1}{N} \sum_{j=1}^{N}   \norm{   \tau(i,j)}_F = O_p(\frac{1}{N}) \label{eq:90}
		\end{align}
		Assumption \ref{assum: (New, products of factor and e)} implies that $\frac{1}{NT}\sum_{t=1}^{T} \sum_{j=1}^{N}    c_{\beta\gamma,i}  \left(e_{jt}e_{it}- \tau_t(i,j)\right) = O_p(\frac{1}{\sqrt{NT}})$ and thus from equation  (\ref{eq:90}) we know
		\begin{align}
			&\frac{1}{NT}\sum_{t=1}^{T} \sum_{j=1}^{N}    c_{\beta\gamma,i}  e_{jt}e_{it}  =   	\frac{1}{NT}\sum_{t=1}^{T} \sum_{j=1}^{N}    c_{\beta\gamma,i}  \left(e_{jt}e_{it}- \tau_t(i,j) + \tau_t(i,j) \right) \nonumber \\ 
			\leq & O_p(\frac{1}{\sqrt{NT}}) + O_p(\frac{1}{N})
		\end{align}
		Therefore, 
		\begin{align}
			a_3 = \frac{1}{T^2} \sum_{t=1}^{T} \sum_{s=1}^{T}  \left(\widetilde{u}_{vz,s} -H' {u}_{vz,s} \right) \zeta_{ue, st}e_{it} + \frac{1}{T^2} \sum_{t=1}^{T} \sum_{s=1}^{T}   H' {u}_{vz,s}  \zeta_{ue, st}e_{it} \leq O_p(\frac{1}{{m_{NT}}})
		\end{align}

		Assumption \ref{assum: (New, products of factor and e)}  implies that
		\begin{align}
			\frac{1}{T}\sum_{t=1}^{T}  \zeta_{eu, st}e_{it} = &  \frac{1}{\sqrt{N}}\left( \frac{1}{T} \sum_{t=1}^{T} 	e_{it} u_{vz, t}' \left(\frac{1}{\sqrt{N}}\sum_{j=1}^{N} c_{\beta\gamma,j} e_{js} \right)\right)   = O_p(\frac{1}{\sqrt{N}})
		\end{align}
		and thus from Theorem \ref{theo: (factor)} we know
		\begin{align}
			\norm{	\frac{1}{T^2} \sum_{t=1}^{T} \sum_{s=1}^{T}  \left(\widetilde{u}_{vz,s} -H' {u}_{vz,s} \right) \zeta_{eu, st}e_{it}}_F \leq & \left(\frac{1}{T} \sum_{s=1}^{T}  \norm{\widetilde{u}_{vz,s} -H' {u}_{vz,s} }_F^2 \right)^{\frac{1}{2}} \left( \frac{1}{T}    \sum_{s=1}^{T}  \left(    \frac{1}{T}\sum_{t=1}^{T}  \zeta_{ue, st}e_{it}\right)^2\right)^{\frac{1}{2}} \nonumber \\
			\leq & O_p(\frac{1}{\sqrt{Nm_{NT}}})
		\end{align}
		Assumption \ref{assum: (New, products of factor and e)}  implies that
		\begin{align}
			\frac{1}{T^2} \sum_{t=1}^{T} \sum_{s=1}^{T}   {u}_{vz,s}  \zeta_{eu, st}e_{it} = & \frac{1}{\sqrt{NT}}\left(\frac{1}{\sqrt{NT}}\sum_{s=1}^{T}    {u}_{vz,s}    e_s' c_{\beta\gamma} \right) \left(\frac{1}{T} \sum_{t=1}^{T} u_{vz,t} e_{it} \right)= O_p(\frac{1}{\sqrt{NT}})
		\end{align}
		Therefore,
		\begin{align}
			a_4=    \frac{1}{T^2} \sum_{t=1}^{T} \sum_{s=1}^{T}  \left(\widetilde{u}_{vz,s} -H' {u}_{vz,s} \right) \zeta_{eu, st}e_{it} + \frac{1}{T^2} \sum_{t=1}^{T} \sum_{s=1}^{T}   H' {u}_{vz,s}  \zeta_{eu, st}e_{it} \leq O_p(\frac{1}{{m_{NT}}})
		\end{align}

		Similar to the derivation of equation (\ref{eq:81}), 
		\begin{align}
			\norm{\frac{1}{NT^2}\sum_{t=1}^{T} \widetilde{u}_{vz}'P_{\bar{G}}u_g u_{g,t}e_{it}  }_F \leq  \norm{\frac{1}{NT}\sum_{t=1}^{T} \widetilde{u}_{vz}'P_{\bar{G}}u_g }_F \frac{\sqrt{N}}{\sqrt{T}} \norm{\frac{1}{\sqrt{NT}}\sum_{t=1}^{T}  u_{g,t}e_{it}  }_F = O_p(\frac{1}{T})
		\end{align}
		and thus 
		\begin{align}
			a_5 = -\frac{1}{NT^2}\sum_{t=1}^{T} \widetilde{u}_{vz}'P_{\bar{G}}u_g u_{g,t}e_{it}   \leq  O_p(\frac{1}{T}) \label{eq: a5 1}
		\end{align}
		
		Equation (\ref{eq:85}) implies that 
		\begin{align}
			a_6= \frac{1}{NT^2}\sum_{t=1}^{T}\widetilde{u}_{vz}'M_{\bar{G}} u_g \left( \widehat{u}_{g,t} - u_{g,t} \right)e_{it} \leq O_p(\frac{1}{m_{NT}}) \label{eq:a6 1}
		\end{align}

		(2) Now we analyze the term   $\frac{1}{T} \widetilde{u}_{vz}'  \left( u_{vz} - \widetilde{u}_{vz}H^{-1} \right) c_{\beta\gamma,i}$. Similar to the analysis of the term  $\frac{1}{T}   \left( \widetilde{u}_{vz} -  {u}_{vz}H \right)'  e_{i}$, equation (\ref{eq:decompose factor difference}) implies that
		\begin{align}
			&\frac{1}{T} \sum_{t=1}^{T}  \left( \widetilde{u}_{vz} -  {u}_{vz}H \right)' u_{vz}  \nonumber \\=&   V_{NT}^{-1}\left( \frac{1}{T^2} \sum_{t=1}^{T} \sum_{s=1}^{T}  \widetilde{u}_{vz,s}{u}_{vz,t}' \gamma_N(s,t) + \frac{1}{T^2} \sum_{t=1}^{T}  \sum_{s=1}^{T}  \widetilde{u}_{vz,s} {u}_{vz,t}' \zeta_{ee, st}  + \frac{1}{T^2} \sum_{t=1}^{T} \sum_{s=1}^{T}  \widetilde{u}_{vz,s} {u}_{vz,t}' \zeta_{ue, st}        \right)   \nonumber \\
			&+V_{NT}^{-1}   \left(\frac{1}{T^2} \sum_{t=1}^{T} \sum_{s=1}^{T}  \widetilde{u}_{vz,s}{u}_{vz,t}'  \zeta_{eu, st}     -\frac{1}{NT^2}\sum_{t=1}^{T} \widetilde{u}_{vz}'P_{\bar{G}}u_g u_{g,t}{u}_{vz,t}'   +  \frac{1}{NT^2}\sum_{t=1}^{T}\widetilde{u}_{vz}'M_{\bar{G}} u_g \left( \widehat{u}_{g,t} - u_{g,t} \right){u}_{vz,t}'    \right) \nonumber \\
			& = V_{NT}^{-1}\sum_{j=1}^6 a_j
		\end{align} 
		Following the same proof as the one for $a_1$ in the term $\frac{1}{T}   \left( \widetilde{u}_{vz} -  {u}_{vz}H \right)'  e_{i}$, Assumption  \ref{ass:residual innovation factor restriction} and Lemma 1(i) in  \cite{bai2002determining} imply that
		\begin{align}
			&	\norm{\frac{1}{T^2} \sum_{t=1}^{T} \sum_{s=1}^{T}  \left(\widetilde{u}_{vz,s} -H' {u}_{vz,s} \right) \gamma_N(s,t)u_{vz,t}' }_F \nonumber \\
			\leq &\frac{1}{\sqrt{T}} \left(\frac{1}{T} \sum_{s=1}^{T}  \norm{\widetilde{u}_{vz,s} -H' {u}_{vz,s} }_F^2 \right)^{\frac{1}{2}} \left( \frac{1}{T}   \sum_{t=1}^{T} \sum_{s=1}^{T}   \gamma_N(s,t)^2  \frac{1}{T}\sum_{t=1}^{T}u_{vz,t}'u_{vz,t}\right)^{\frac{1}{2}} \leq O_p(\frac{1}{\sqrt{Tm_{NT}}})
		\end{align}
		and 
		Assumption \ref{assum:factor norm bounded} and Lemma 1(i) in  \cite{bai2002determining} imply that 
		\begin{align}
			\mathbb{E}\norm{\frac{1}{T^2} \sum_{t=1}^{T} \sum_{s=1}^{T}    {u}_{vz,s}   \gamma_N(s,t)u_{vz,t}}_F \leq    \frac{1}{T^2} \sum_{t=1}^{T} \sum_{s=1}^{T}    \left(\mathbb{E}\norm{{u}_{vz,s} }^2_F\right)^{\frac{1}{2}}  \gamma_N(s,t)^2(\mathbb{E} \norm{u_{vz,t}}_F^2   )^{\frac{1}{2}} =O(\frac{1}{T})
		\end{align}
		and thus 
		\begin{align}
			a_1 \leq O_p(\frac{1}{\sqrt{Tm_{NT}}})
		\end{align}

		Assumption \ref{assum: (New, products of factor and e)} and Lemma \ref{lem: consistency of the estimated factors} imply that 
		\begin{align}
			\norm{\frac{1}{T^2} \sum_{t=1}^{T} \sum_{s=1}^{T}  \left(\widetilde{u}_{vz,s} -H' {u}_{vz,s} \right) \zeta_{ee, st}{u}_{vz,t}'}_F \leq &  \left(\frac{1}{T} \sum_{s=1}^{T}  \norm{\widetilde{u}_{vz,s} -H' {u}_{vz,s} }_F^2 \right)^{\frac{1}{2}} \left( \frac{1}{T}    \sum_{s=1}^{T}    \norm{   \frac{1}{T}\sum_{t=1}^{T}  \zeta_{ee, st}{u}_{vz,t}'}_F^2\right)^{\frac{1}{2}} \nonumber \\
			\leq & O_p(\frac{1}{\sqrt{NTm_{NT}}})
		\end{align}
		Assumption \ref{assum: (New, products of factor and e)} implies that
		\begin{align}
			\frac{1}{T^2} \sum_{s=1}^{T}\sum_{t=1}^{T}   {u}_{vz,s} {u}_{vz,t}' \zeta_{ee, st} = \frac{1}{\sqrt{NT}}   \left(\frac{1}{T} \sum_{s=1}^{T} {u}_{vz,s} \left(  \frac{1}{\sqrt{NT}}  \sum_{t=1}^{T}  \sum_{i=1}^{N}   \left( e_{it}e_{is} - \mathbb{E} e_{it}e_{is}  \right) {u}_{vz,t}'
			\right) \right) = O_p(\frac{1}{\sqrt{NT}})
		\end{align}
		and thus
		\begin{align}
			a_2 = \frac{1}{T^2} \sum_{t=1}^{T}  \sum_{s=1}^{T} \left( \widetilde{u}_{vz,s}- H'   {u}_{vz,s}\right) {u}_{vz,t}' \zeta_{ee, st} + \frac{1}{T^2} \sum_{t=1}^{T}  \sum_{s=1}^{T}  H' {u}_{vz,s} {u}_{vz,t}' \zeta_{ee, st} \leq  O_p(\frac{1}{m_{NT}})
		\end{align}

		Assumption \ref{assum: (New, products of factor and e)}  implies that
		\begin{align}
			\frac{1}{T}\sum_{t=1}^{T}  \zeta_{ue, st}u_{vz, t}'  = & \frac{1}{\sqrt{NT}} u_{vz,s}'   \left(\frac{1}{\sqrt{NT}}   \sum_{t=1}^{T}  \sum_{i=1}^N c_{\beta\gamma,i} e_{it} u_{vz, t}'  \right)    = O_p(\frac{1}{\sqrt{NT}})
		\end{align}
		and thus 
		\begin{align}
			\frac{1}{T^2} \sum_{t=1}^{T} \sum_{s=1}^{T}   {u}_{vz,s}  \zeta_{ue, st}u_{vz,t}'  =  \frac{1}{\sqrt{NT}} \frac{1}{T} \sum_{t=1}^{T} \sum_{s=1}^{T}   {u}_{vz,s}u_{vz,s}'   \left(\frac{1}{\sqrt{NT}}   \sum_{t=1}^{T}  \sum_{i=1}^N c_{\beta\gamma,i} e_{it} u_{vz, t}'  \right)    = O_p(\frac{1}{\sqrt{NT}})
		\end{align}
		From Lemma \ref{lem: consistency of the estimated factors} we know
		\begin{align}
			\norm{	\frac{1}{T^2} \sum_{t=1}^{T} \sum_{s=1}^{T}  \left(\widetilde{u}_{vz,s} -H' {u}_{vz,s} \right) \zeta_{ue, st}u_{vz,t}}_F \leq & \left(\frac{1}{T} \sum_{s=1}^{T}  \norm{\widetilde{u}_{vz,s} -H' {u}_{vz,s} }_F^2 \right)^{\frac{1}{2}} \left( \frac{1}{T}    \sum_{s=1}^{T}  \left(    \frac{1}{T}\sum_{t=1}^{T}  \zeta_{ue, st}u_{vz,t}\right)^2\right)^{\frac{1}{2}} \nonumber \\
			\leq & O_p(\frac{1}{\sqrt{NTm_{NT}}})
		\end{align}
		Therefore, 
		\begin{align}
			a_3 = \frac{1}{T^2} \sum_{t=1}^{T} \sum_{s=1}^{T}  \left(\widetilde{u}_{vz,s} -H' {u}_{vz,s} \right)  \zeta_{ue, st} u_{vz,t}' + \frac{1}{T^2} \sum_{t=1}^{T} \sum_{s=1}^{T}   H' {u}_{vz,s}  \zeta_{ue, st}u_{vz,t}' \leq O_p(\frac{1}{{\sqrt{NT}}})
		\end{align}

		Assumption \ref{assum: (New, products of factor and e)}  implies that
		\begin{align}
			\frac{1}{T^2} \sum_{t=1}^{T} \sum_{s=1}^{T}   {u}_{vz,s}  \zeta_{eu, st}u_{vz,t}'  =  \frac{1}{\sqrt{NT}}\left(\frac{1}{\sqrt{NT}}   \sum_{s=1}^{T}  \sum_{i=1}^N {u}_{vz,s}c_{\beta\gamma,i}'  e_{is}    \right) \left( \frac{1}{T} \sum_{t=1}^{T}   {u}_{vz,t}u_{vz,t}'  \right)    = O_p(\frac{1}{\sqrt{NT}})
		\end{align}
		
		and 
		\begin{align}
			\frac{1}{T}\sum_{t=1}^{T}  \zeta_{eu, st}u_{vz,t} = 	   \frac{1}{\sqrt{N}}\left(\frac{1}{\sqrt{N}}    \sum_{i=1}^N  c_{\beta\gamma,i}'  e_{is}    \right) \left( \frac{1}{T} \sum_{t=1}^{T}   {u}_{vz,t}u_{vz,t}'  \right)    = O_p(\frac{1}{\sqrt{N}})
		\end{align}
		Therefore, from Lemma \ref{lem: consistency of the estimated factors} we know
		\begin{align}
			\norm{	\frac{1}{T^2} \sum_{t=1}^{T} \sum_{s=1}^{T}  \left(\widetilde{u}_{vz,s} -H' {u}_{vz,s} \right) \zeta_{ue, st}u_{vz,t}}_F \leq & \left(\frac{1}{T} \sum_{s=1}^{T}  \norm{\widetilde{u}_{vz,s} -H' {u}_{vz,s} }_F^2 \right)^{\frac{1}{2}} \left( \frac{1}{T}    \sum_{s=1}^{T}  \left(    \frac{1}{T}\sum_{t=1}^{T}  \zeta_{ue, st}u_{vz,t}\right)^2\right)^{\frac{1}{2}} \nonumber \\
			\leq & O_p(\frac{1}{\sqrt{Nm_{NT}}})
		\end{align}
		\begin{align}
			a_4=  \frac{1}{T^2} \sum_{t=1}^{T} \sum_{s=1}^{T}  \left(\widetilde{u}_{vz,s} -H' {u}_{vz,s} \right)  \zeta_{eu, st} u_{vz,t}' + \frac{1}{T^2} \sum_{t=1}^{T} \sum_{s=1}^{T}   H' {u}_{vz,s}  \zeta_{eu, st}u_{vz,t}' \leq O_p(\frac{1}{{\sqrt{Nm_{NT}}}})
		\end{align}

		Similar to the derivation of equation (\ref{eq: a5 1}),
		\begin{align}
			a_5= -\frac{1}{NT^2}\sum_{t=1}^{T} \widetilde{u}_{vz}'P_{\bar{G}}u_g u_{g,t}{u}_{vz,t}'     \leq O_p(\frac{1}{T})
		\end{align}
		and similar to the derivation of equation (\ref{eq:a6 1}),
		\begin{align}
			a_6=	\frac{1}{NT^2}\sum_{t=1}^{T}\widetilde{u}_{vz}'M_{\bar{G}} u_g \left( \widehat{u}_{g,t} - u_{g,t} \right){u}_{vz,t}' \leq O_p(\frac{1}{m_{NT}})
		\end{align}
		Therefore,
		\begin{align}
			\frac{1}{T}\left(  \widetilde{u}_{vz}- u_{vz}H  \right)'  \widetilde{u}_{vz} =	\frac{1}{T}\left(  \widetilde{u}_{vz}- u_{vz}H  \right)'  u_{vz}H    + 	\frac{1}{T}\left(  \widetilde{u}_{vz}- u_{vz}H  \right)' \left(  \widetilde{u}_{vz}- u_{vz}H  \right)
			\leq O_p(\frac{1}{m_{NT}}) 
		\end{align}

		(3) We analyze the term   $\frac{1}{T}\widetilde{u}_{vz}'\left(\widehat{u}_{g,i}-u_{g,i} \right)$. Firstly,
		\begin{align}
			\norm{\frac{1}{T}\left(\widetilde{u}_{vz}  -{u}_{vz} H \right)' P_{\bar{G}} u_{g,i} }_F \leq  \left(\frac{1}{T} \sum_{s=1}^{T}  \norm{\widetilde{u}_{vz,s} -H' {u}_{vz,s} }_F^2 \right)^{\frac{1}{2}}  \frac{1}{\sqrt{T}}   \norm{P_{\bar{G}} u_{g,i}}_F  \leq O_p(\frac{1}{\sqrt{Tm_{NT}}})
		\end{align}
		Secondly, similar to the derivation of equation (\ref{eq:81}) we know
		\begin{align}
			\norm{\frac{1}{T}u_{vz}' P_{\bar{G}} u_{g,i} }_F \leq  O_p(\frac{1}{T}),
		\end{align}
		and thus
		\begin{align}
			\frac{1}{T}\widetilde{u}_{vz}'\left(\widehat{u}_{g,i}-u_{g,i} \right) = & \frac{1}{T}\left(\widetilde{u}_{vz}  -{u}_{vz} H \right)' P_{\bar{G}} u_{g,i}   + \frac{1}{T} {u}_{vz}'P_{\bar{G}} u_{g,i}
		\end{align}

		%Then finally by the discussion above we know when $N/T\rightarrow \infty$ 
		%\begin{align}
		%	\sqrt{T}\left( 	 \widetilde{c}_{\beta\gamma,i} -H^{-1}{c}_{\beta\gamma,i}  \right) = H' \frac{1}{\sqrt{T} }u_{vz}' e_i + o_p(1)
		%\end{align}	 
	\end{proof}

\end{theo}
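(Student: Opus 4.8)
The plan is to adapt the standard asymptotic-expansion argument for estimated loadings in large factor models (cf.\ \cite{bai2003inferential}) to our two-stage setting, where the ``data'' fed into the principal-components step are the first-pass OLS residuals $\widehat{u}_{g,t}$ rather than the true errors $u_{g,t}$. I would start from the identity $\widetilde{c}_{\beta\gamma,i} = \frac{1}{T}\widetilde{u}_{vz}'\widehat{u}_{g,i}$, substitute $\widehat{u}_{g,i} = u_{g,i} - (\widehat{u}_{g,i}-u_{g,i})$ and the representation $u_{g,it} = c_{\beta\gamma,i}'u_{vz,t} + e_{it}$ implied by Assumption \ref{assum: factor structure in r}, and use the normalization $\frac{1}{T}\widetilde{u}_{vz}'\widetilde{u}_{vz} = I_{K_{vz}}$ together with the split $u_{vz} = \widetilde{u}_{vz}H^{-1} + (u_{vz}-\widetilde{u}_{vz}H^{-1})$ to obtain
$$\widetilde{c}_{\beta\gamma,i} - H^{-1}c_{\beta\gamma,i} = \frac{1}{T}\widetilde{u}_{vz}'e_i + \frac{1}{T}\widetilde{u}_{vz}'\bigl(u_{vz}-\widetilde{u}_{vz}H^{-1}\bigr)c_{\beta\gamma,i} - \frac{1}{T}\widetilde{u}_{vz}'\bigl(\widehat{u}_{g,i}-u_{g,i}\bigr).$$
Writing $\widetilde{u}_{vz} = u_{vz}H + (\widetilde{u}_{vz}-u_{vz}H)$ in the first term peels off the leading piece $H'\frac{1}{T}u_{vz}'e_i$, whose $\sqrt{T}$-rescaling is exactly the claimed $H'\frac{1}{\sqrt{T}}u_{vz}'e_i$; the theorem then reduces to showing that the three leftover blocks are $o_p(1/\sqrt{T})$.

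The workhorse for the leftovers is Lemma \ref{lem: consistency of the estimated factors}, $\frac{1}{T}\sum_t\norm{\widetilde{u}_{vz,t}-H'u_{vz,t}}_F^2 = O_p(1/m_{NT})$, combined with the explicit six-term decomposition of $\widetilde{u}_{vz,t}-H'u_{vz,t}$ from equation (\ref{eq:decompose factor difference}) established inside the proof of Theorem \ref{theo: (factor)}. For $\frac{1}{T}(\widetilde{u}_{vz}-u_{vz}H)'e_i$ I would plug in that six-term expansion and split each resulting cross-term into a ``feasible'' part (with $\widetilde{u}_{vz,s}$ replaced by $H'u_{vz,s}$), bounded via the weak-dependence Assumptions \ref{assumption: weak cross-sectional correlation of the idiosyncratic component } and \ref{assum:(New, additional ,  Time and Cross-Section Dependence and Heteroskedasticity) } and the higher-moment Assumptions \ref{assum:?} and \ref{assum: (New, products of factor and e)}, plus a ``correction'' part controlled by Cauchy--Schwarz and Lemma \ref{lem: consistency of the estimated factors}; the net rate is $O_p(1/m_{NT})$. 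The second block collapses to $\frac{1}{T}(\widetilde{u}_{vz}-u_{vz}H)'\widetilde{u}_{vz}$, which I would expand in the same manner paired against $u_{vz}$ (using Assumption \ref{ass:residual innovation factor restriction} and the bounded-fourth-moment Assumption \ref{assum:factor norm bounded} for the $\gamma_N(s,t)$ pieces), again $O_p(1/m_{NT})$. The third block uses $\sqrt{T}(\widehat{\beta}_g-\beta_g) = O_p(1)$ together with $\norm{\bar G'e}_F = O_p(\sqrt{NT})$ and $\norm{P_{\bar G}u_g}_F = O_p(\sqrt{N})$ from the proofs of Theorem \ref{theo:consistent factor number} and Lemma \ref{lem: consistency of the estimated factors}, yielding $O_p(1/\sqrt{Tm_{NT}})$. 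Since $\sqrt{T}/N\to 0$ forces $\sqrt{T}/m_{NT}\to 0$ (splitting the cases $N\le T$ and $N>T$), every block vanishes after multiplication by $\sqrt{T}$, leaving precisely $H'\frac{1}{\sqrt{T}}u_{vz}'e_i + o_p(1)$.

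The main obstacle is the bookkeeping in the first block: $\widetilde{u}_{vz}-u_{vz}H$ is an inhomogeneous sum of six pieces — involving $\gamma_N(s,t)$, the centered quadratic $\zeta_{ee,st}$, the two mixed terms $\zeta_{ue,st}$ and $\zeta_{eu,st}$, and the two projection residuals arising from having estimated $c$ and $\beta_g$ — and each, once paired with $e_i$ and with $u_{vz}$, must be pinned down at a rate strictly faster than $1/\sqrt{T}$; several are only $O_p(1/\sqrt{NT})$ or $O_p(1/N)$, which is exactly why the one-sided condition $\sqrt{T}/N\to 0$ is imposed rather than mere joint divergence. A minor preliminary point, needed for $H^{-1}$ to be well defined, is that $H$ has bounded norm and is asymptotically invertible; this follows from Lemma \ref{lem: consistency of the estimated factors}, Assumption \ref{ass: factor loadings additional requirement }, and the eigenvalue-separation argument already used in the proof of Theorem \ref{theo:consistent factor number}.
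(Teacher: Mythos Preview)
Your proposal is correct and follows essentially the same route as the paper's proof: the same starting identity and add--subtract decomposition into the leading piece $H'\frac{1}{T}u_{vz}'e_i$ plus three remainder blocks, the same reliance on the six-term expansion in equation (\ref{eq:decompose factor difference}) combined with Lemma \ref{lem: consistency of the estimated factors}, the same ``feasible/correction'' splitting via Cauchy--Schwarz, and the same $O_p(1/m_{NT})$ bookkeeping that explains the role of $\sqrt{T}/N\to 0$. Your preliminary remark on the invertibility of $H$ is a small addition the paper leaves implicit.
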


\begin{theo} \label{t}
	Suppose Assumptions \ref{assum: factor structure in r} - \ref{assumption: weak cross-sectional correlation of the idiosyncratic component }, \ref{ass: factor loadings additional requirement }  - \ref{assum: (New, products of factor and e)} hold, let N,T increase then 
	\begin{align*}
		{m}^{\frac{1}{2}}_{NT}  \left(\widetilde{c}_{it} - c_{it}\right)   =&   \frac{ {m}^{\frac{1}{2}}_{NT}}{\sqrt{N}} {c}_{\beta\gamma,i}'	  \left( c_{\beta\gamma}'c_{\beta\gamma}/N \right)^{-1}   \frac{1}{\sqrt{N}} \sum_{j=1}^N c_{\beta\gamma,j} e_{jt} \nonumber\\ &  +\frac{ {m}^{\frac{1}{2}}_{NT}}{\sqrt{T}}{u}_{vz,t}' \left(u_{vz}'u_{vz}/T \right)^{-1}  \frac{1}{\sqrt{T} }u_{vz}' e_i+ O_p(\frac{1}{\sqrt{m}_{NT}})  
	\end{align*}
	%	When $N/T\rightarrow \infty$
	%	\begin{align*}
	%		\sqrt{T} \left(\widetilde{c}_{it} - c_{it}\right)   =& {u}_{vz,t}' \left(u_{vz}'u_{vz}/T \right)^{-1}  \frac{1}{\sqrt{T} }u_{vz}' e_i+ O_p(\sqrt{\frac{T}{N}}) + O_p(\frac{1}{\sqrt{T}})
	%	\end{align*}
\end{theo}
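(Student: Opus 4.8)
The plan is to expand $\widetilde c_{it}-c_{it}=\widetilde u_{vz,t}'\widetilde c_{\beta\gamma,i}-u_{vz,t}'c_{\beta\gamma,i}$ around the true common component by inserting the first-order expansions of $\widetilde u_{vz,t}$ and $\widetilde c_{\beta\gamma,i}$ from Theorems \ref{theo: (factor)} and \ref{theo: (factor loading)}, and then to collapse the resulting rotation matrices using only the algebraic definition of $H$ and the normalization $\widetilde u_{vz}'\widetilde u_{vz}/T=I_{K_{vz}}$. Writing $a_t:=\widetilde u_{vz,t}-H'u_{vz,t}$ and $b_i:=\widetilde c_{\beta\gamma,i}-H^{-1}c_{\beta\gamma,i}$, a direct multiplication gives
\begin{align*}
\widetilde c_{it}-c_{it}=u_{vz,t}'Hb_i+a_t'H^{-1}c_{\beta\gamma,i}+a_t'b_i,
\end{align*}
and I would handle the three pieces in turn.

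The cross term $a_t'b_i$ is pure bookkeeping: Theorem \ref{theo: (factor)} gives $a_t=O_p(N^{-1/2})+O_p(T^{-1})$ and Theorem \ref{theo: (factor loading)} (with the analogous argument in the complementary rate regime) gives $b_i=O_p(T^{-1/2})+O_p(N^{-1})$, so a short case split on $\min\{N,T\}$ shows $\|a_t\|\,\|b_i\|=O_p(m_{NT}^{-1})$, whence $m_{NT}^{1/2}a_t'b_i=O_p(m_{NT}^{-1/2})$ and it is absorbed into the remainder. For $a_t'H^{-1}c_{\beta\gamma,i}$ I would substitute $a_t=\tfrac1N V_{NT}^{-1}\bigl(\widetilde u_{vz}'u_{vz}/T\bigr)\sum_{j=1}^N c_{\beta\gamma,j}e_{jt}+o_p(N^{-1/2})$ from Theorem \ref{theo: (factor)}, transpose, and use $H=(c_{\beta\gamma}'c_{\beta\gamma}/N)(u_{vz}'\widetilde u_{vz}/T)V_{NT}^{-1}$ with $V_{NT}$ diagonal to rewrite $(u_{vz}'\widetilde u_{vz}/T)V_{NT}^{-1}=(c_{\beta\gamma}'c_{\beta\gamma}/N)^{-1}H$; the $H$ then cancels against $H^{-1}$ and, by symmetry of $(c_{\beta\gamma}'c_{\beta\gamma}/N)^{-1}$, one lands exactly on $\tfrac1N c_{\beta\gamma,i}'(c_{\beta\gamma}'c_{\beta\gamma}/N)^{-1}\sum_{j}c_{\beta\gamma,j}e_{jt}$ up to $o_p(N^{-1/2})$.

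For $u_{vz,t}'Hb_i$ I would substitute $b_i=\tfrac1T H'u_{vz}'e_i+o_p(T^{-1/2})$ from Theorem \ref{theo: (factor loading)}, giving $\tfrac1T u_{vz,t}'HH'u_{vz}'e_i+o_p(T^{-1/2})$, and then invoke the identity $HH'=(u_{vz}'u_{vz}/T)^{-1}+o_p(1)$. This identity follows because, after discarding the projection onto $\bar G$ and the noise pieces (controlled by the bounds already established in the proofs of Theorems \ref{theo:consistent factor number}--\ref{theo: (factor loading)}), the eigen-equation for $\widetilde u_{vz}$ reads $\widetilde u_{vz}=u_{vz}H$ up to a negligible error, so $\widetilde u_{vz}'\widetilde u_{vz}/T=I$ forces $H'(u_{vz}'u_{vz}/T)H=I+o_p(1)$. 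Collecting the two leading terms, multiplying through by $m_{NT}^{1/2}$, and noting $m_{NT}^{1/2}N^{-1/2}\le1$ and $m_{NT}^{1/2}T^{-1/2}\le1$ yields the stated representation; the $O_p(m_{NT}^{-1/2})$ size of the remainder is obtained by carrying the sharper term-by-term rates ($O_p((Nm_{NT})^{-1/2})$, $O_p((Tm_{NT})^{-1/2})$, and so on) from those proofs rather than only the displayed $o_p$'s.

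The hard part will be the step involving $HH'$: one must verify that $\widetilde u_{vz}=u_{vz}H$ holds with an error small enough that $u_{vz,t}'\bigl(HH'-(u_{vz}'u_{vz}/T)^{-1}\bigr)\tfrac1T u_{vz}'e_i=o_p(T^{-1/2})$, which requires combining the $L^2$-consistency rate for $\widetilde u_{vz}$ from Lemma \ref{lem: consistency of the estimated factors} with the refined decomposition in the proof of Theorem \ref{theo: (factor)}, not merely its first-order statement. Everything else — the expansion, the cancellation of $H$ in the loadings term, and the remainder accounting — is routine.
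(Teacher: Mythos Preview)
Your proposal is correct and mirrors the paper's proof almost line for line: the same three-term decomposition $\widetilde c_{it}-c_{it}=a_t'H^{-1}c_{\beta\gamma,i}+u_{vz,t}'Hb_i+a_t'b_i$, the same substitution of the leading terms from Theorems \ref{theo: (factor)} and \ref{theo: (factor loading)}, the same exact algebraic cancellation of $H$ via its definition for the loadings piece, and the same use of $HH'=(u_{vz}'u_{vz}/T)^{-1}+O_p(m_{NT}^{-1})$ for the factor piece. Your justification of the $HH'$ identity through the normalization $\widetilde u_{vz}'\widetilde u_{vz}/T=I$ and the $O_p(m_{NT}^{-1})$ bound on $\tfrac1T(\widetilde u_{vz}-u_{vz}H)'\widetilde u_{vz}$ (established in part (2) of the proof of Theorem \ref{theo: (factor loading)}) is exactly what the paper relies on, and your remark that the sharp $O_p(m_{NT}^{-1/2})$ remainder requires the term-by-term rates rather than only the displayed $o_p(1)$'s is well taken.
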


\noindent \begin{proof}[Proof of Theorem \ref{t}]  	
	\begin{align}
		\widetilde{c}_{it} - c_{it}  = &\widetilde{u}_{vz,t}' \widetilde{c}_{\beta\gamma,i} - u_{vz,t}' c_{\beta\gamma,i}=  \left( \widetilde{u}_{vz,t} -H' {u}_{vz,t} \right)' H^{-1}{c}_{\beta\gamma,i} + {u}_{vz,t}' H \left( \widetilde{c}_{\beta\gamma,i} -H^{-1}{c}_{\beta\gamma,i} \right) \nonumber \\
		&+ \left( \widetilde{u}_{vz,t} -H' {u}_{vz,t} \right)' \left( \widetilde{c}_{\beta\gamma,i} -H^{-1}{c}_{\beta\gamma,i} \right) \nonumber \\
		=&  \left( \widetilde{u}_{vz,t} -H' {u}_{vz,t} \right)' H^{-1}{c}_{\beta\gamma,i} + {u}_{vz,t}' H \left( \widetilde{c}_{\beta\gamma,i} -H^{-1}{c}_{\beta\gamma,i} \right) + O_p(\frac{1}{m_{NT}}) \label{eq:cc diff 1}
	\end{align}
	
	Theorem \ref{theo: (factor)}  implies that 
	\begin{align}
		{m}^{\frac{1}{2}}_{NT}  {c}_{\beta\gamma,i}'	H^{'-1}\left( \widetilde{u}_{vz,t} -H' {u}_{vz,t} \right)= & \frac{ {m}^{\frac{1}{2}}_{NT}}{\sqrt{N}} {c}_{\beta\gamma,i}'	H^{'-1} V_{NT}^{-1}\frac{1}{T} \sum_{s=1}^{T} \left( \widetilde{u}_{vz,s}{u}_{vz,s}' \right)   \frac{1}{\sqrt{N}} \sum_{i=1}^N c_{\beta\gamma,i} e_{it} + o_p(1) \nonumber\\
		=&  \frac{ {m}^{\frac{1}{2}}_{NT}}{\sqrt{N}} {c}_{\beta\gamma,i}'	  \left( c_{\beta\gamma}'c_{\beta\gamma}/N \right)^{-1}   \frac{1}{\sqrt{N}} \sum_{i=1}^N c_{\beta\gamma,i} e_{it} + o_p(1)
	\end{align}
	and Theorem \ref{theo: (factor loading)} implies that 
	\begin{align}
		{m}^{\frac{1}{2}}_{NT} {u}_{vz,t}' H  \left( \widetilde{c}_{\beta\gamma,i} -H^{-1}{c}_{\beta\gamma,i} \right)  = & \frac{ {m}^{\frac{1}{2}}_{NT}}{\sqrt{T}}{u}_{vz,t}' H H' \frac{1}{\sqrt{T} }u_{vz}' e_i + o_p(1) \nonumber\\
		= & \frac{ {m}^{\frac{1}{2}}_{NT}}{\sqrt{T}}{u}_{vz,t}' \left(u_{vz}'u_{vz}/T \right)^{-1}  \frac{1}{\sqrt{T} }u_{vz}' e_i + o_p(1)  
	\end{align}
	where the last equality is due to
	\begin{align}
		HH' = \left(u_{vz}'u_{vz}/T \right)^{-1} + O_p(\frac{1}{m_{NT}})
	\end{align}
	Finally, 
	%\begin{align}
	%  	HH'= & \left( c_{\beta\gamma}'c_{\beta\gamma}/N  \right)\left(u_{vz}' \widetilde{u}_{vz}/T  \right)V_{NT}^{-1}V_{NT}^{-1}\left(\widetilde{u}_{vz}'u_{vz}/T  \right) \left( c_{\beta\gamma}'c_{\beta\gamma}/N  \right) \nonumber \\
	%  	=&   	
	%\end{align}
	\begin{align}
		{m}^{\frac{1}{2}}_{NT}  \left(\widetilde{c}_{it} - c_{it}\right)  =&   \frac{ {m}^{\frac{1}{2}}_{NT}}{\sqrt{N}} {c}_{\beta\gamma,i}'	  \left( c_{\beta\gamma}'c_{\beta\gamma}/N \right)^{-1}   \frac{1}{\sqrt{N}} \sum_{i=1}^N c_{\beta\gamma,i} e_{it} \nonumber\\ &  +\frac{ {m}^{\frac{1}{2}}_{NT}}{\sqrt{T}}{u}_{vz,t}' \left(u_{vz}'u_{vz}/T \right)^{-1}  \frac{1}{\sqrt{T} }u_{vz}' e_i+ O_p(\frac{1}{m_{NT}}) 
	\end{align}

\end{proof}

\subsection*{\footnotesize Step (3): $\theta$ can be consistently estimated}	 
\noindent \begin{proof}[Proof of Theorem \ref{theo:consistency of theta tilde}]  \label{Proof of Theorem theo:consistency of theta tilde}
	Rewrite moment conditions as
	\begin{align}
		\iota_N   = &  \widetilde{q}_{G,T}	\widetilde{\theta }  	+ \widetilde{\epsilon},  \label{eq:138}
	\end{align}
	with 
	$\widetilde{\theta } =\theta_G +\left( 0,\left(\left(-\widehat{V}_g^{-1} \left( \bar{g} - \mu_g  \right),  \left(  \widehat{V}_g^{-1}V_g- I_K  \right) \right)\theta_G\right) ' \right)'$.  
	
	We consider the following difference:
	\begin{align}
		\widetilde{\theta}_{g,I}^{(1)} - \widetilde{\theta}_g^{(1)}  = \left(\widetilde{q}^{(1)'}_{G,T}  P_{\widetilde{q}^{(2)}_{G,T} } \widetilde{q}^{(1)}_{G,T}  \right)^{-1} \widetilde{q}^{(1)}_{G,T} P_{ \widetilde{q}^{(2)}_{G,T}   } \widetilde{\epsilon}^{(1)}
	\end{align}
	where
	\begin{align}
		\widetilde{q}^{(1)}_{G,T}  = &  \left( c, \beta_g   \right)\widehat{Q}^{(1)}_G  + \left(\widetilde{cc}^{(1)}-cc^{(1)} \right)'\bar{G}^{(1)}/|\mathcal{T}_{(1)}| + e^{(1)'}\bar{G}^{(1)}/|\mathcal{T}_{(1)}| \nonumber \\
		=&\widetilde{X}_{g,c\beta_g}^{(1)} + \widetilde{X}_{g,cc}^{(1)}+ \widetilde{X}_{g,e}^{(1)}  \label{eq} 
	\end{align}
	For the three terms in equation (\ref{eq}), we have	
	\begin{align}
		& 	\widetilde{x}_{g,c\beta_g,i}^{(1)} =  \widehat{Q}^{(1)}_G \left( c_i, \beta_{g,i}'   \right)' \\
		&\widetilde{x}_{g,cc,i}^{(1)} =  \frac{1}{|\mathcal{T}_{(1)}|} \sum_{t\in \mathcal{T}_{(1)}} \left(\widetilde{cc}_{it} - cc_{it}\right) \bar{G}_t   \nonumber \\=&   \frac{1}{|\mathcal{T}_{(1)}|} \sum_{t\in \mathcal{T}_{(1)}} \frac{1}{\sqrt{N}} {c}_{\beta\gamma,i}'	  \left( c_{\beta\gamma}'c_{\beta\gamma}/N \right)^{-1}   \frac{1}{\sqrt{N}} \sum_{j=1}^N c_{\beta\gamma,j} e_{jt}\bar{G}_t  \nonumber\\ &  + \frac{1}{|\mathcal{T}_{(1)}|} \sum_{t\in \mathcal{T}_{(1)}}\frac{ 1}{\sqrt{T}}{u}_{vz,t}' \left(u_{vz}'u_{vz}/T \right)^{-1}  \frac{1}{\sqrt{T} }u_{vz}' e_i \bar{G}_t + O_p(\frac{1}{ {m}_{NT}}) \nonumber \\
		=&    \frac{1}{\sqrt{N|\mathcal{T}_{(1)}|}} {c}_{\beta\gamma,i}'	  \left( c_{\beta\gamma}'c_{\beta\gamma}/N \right)^{-1}   \frac{1}{\sqrt{N|\mathcal{T}_{(1)}|}} \sum_{t\in \mathcal{T}_{(1)}} \sum_{i=1}^N c_{\beta\gamma,i} e_{it}\bar{G}_t  \nonumber\\ &  + \frac{ 1}{\sqrt{T|\mathcal{T}_{(1)}|}}\frac{1}{\sqrt{|\mathcal{T}_{(1)}|}} \sum_{t\in \mathcal{T}_{(1)}} \bar{G}_t {u}_{vz,t}'   \frac{1}{\sqrt{T} } \sum_{s=1}^{T} \left(u_{vz}'u_{vz}/T \right)^{-1} u_{vz,s} e_{is}  + O_p(\frac{1}{ {m}_{NT}}) \nonumber \\ \leq & O_p(\frac{1}{\sqrt{m_{NT}|\mathcal{T}_{(1)}|}})\\
		& 	\widetilde{x}_{g,e,i}^{(1)} =    \frac{1}{\sqrt{|\mathcal{T}_{(1)}|}}\frac{1}{\sqrt{|\mathcal{T}_{(1)}|}} \sum_{t\in \mathcal{T}_{(1)}} e_{it} \bar{G}_t'  
	\end{align}

	In the following, we suppose $|\mathcal{T}_{(1)}|=|\mathcal{T}_{(2)}|=\tau$ without losing generality. Next we discuss the properties of the following three terms: 
	(1)~ $Q_{B_g,T}	\widetilde{q}^{(2)'}_{G,T} \widetilde{q}^{(2)}_{G,T}  Q_{B_g,T} $; ~
	(2)~$ \widetilde{q}^{(1)'}_{G,T} \widetilde{q}^{(2)}_{G,T} $; ~
	(3)~$ \widetilde{q}^{(2)'}_{G,T} \widetilde{\epsilon}^{(1)}$. \\

	\paragraph{\scriptsize Term (1)}   
	\begin{align}
		& \frac{\tau}{N} \sum_{i=1}^{N}	\widetilde{x}_{g,cc,i}^{(2)} \widetilde{x}_{g,cc, i}^{(2)'}  \leq O_p(\frac{1}{\sqrt{\tau}})  
	\end{align}

	\begin{align}
		\frac{1}{N}	Q_{B_g,T}\widetilde{X}_{g,c\beta_g}^{(2); } \widetilde{X}_{g,c\beta_g}^{(2)}Q_{B_g,T}\rightarrow_p &   {Q}_g \eta_{c\beta_g} Q_g   
	\end{align}
	
	Equation (\ref{eq{assum: GG gamma}}) implies that
	\begin{align}
		\frac{1}{N|\mathcal{T}_{(1)}|} \sum_{i=1}^{N}  \sum_{t\in \mathcal{T}_{(1)}} e_{it}^2 \bar{G}_t' \bar{G}_t  \rightarrow_p &\Sigma_{\bar{G}\bar{G}\gamma_N},
	\end{align}
	and
	\begin{align}
		\frac{1}{\sqrt{N}}	\frac{1}{\sqrt{N}|\mathcal{T}_{(1)}|} \sum_{i=1}^{N}  \sum_{t\in \mathcal{T}_{(1)}}\sum_{s\neq t} e_{it} \bar{G}_t'e_{is} \bar{G}_s =O_p(\frac{1}{\sqrt{N}})
	\end{align}
	Then we know 
	\begin{align}
		&\frac{\tau}{N}	\widetilde{X}_{g,e}^{(2)'} \widetilde{X}_{g,e}^{(2)} =   \frac{\tau}{N} \sum_{i=1}^{N} \widetilde{x}_{g,e,i}^{(2)} \widetilde{x}_{g,e,i}^{(2)'} = \frac{1}{N} \sum_{i=1}^{N} \left( \frac{1}{\sqrt{|\mathcal{T}_{(2)}|}} \sum_{t\in \mathcal{T}_{(2)}} e_t \bar{G}_t'\right)\left( \frac{1}{\sqrt{|\mathcal{T}_{(2)}|}} \sum_{t\in \mathcal{T}_{(2)}} e_t \bar{G}_t'\right)' \nonumber \\
		= & \frac{1}{N|\mathcal{T}_{(2)}|} \sum_{i=1}^{N}  \sum_{t\in \mathcal{T}_{(2)}} e_{it} \bar{G}_t' \bar{G}_t e_{it}  +\frac{1}{N|\mathcal{T}_{(2)}|} \sum_{i=1}^{N}  \sum_{t\in \mathcal{T}_{(2)}}\sum_{s\neq t} e_{it} \bar{G}_t' \bar{G}_s e_{is} \rightarrow_p \Sigma_{\bar{G}\bar{G}\gamma_N}
	\end{align}

	\begin{align}
		& \frac{\sqrt{\tau}}{N} \sum_{i=1}^{N} Q_{B_g,T} \widetilde{x}_{g,c\beta_g,i}^{(2)} \widetilde{x}_{g,cc,i}^{(2)'} \nonumber \\
		=	& \frac{1}{\sqrt{N|\mathcal{T}_{(2)}|}}\frac{\sqrt{\tau}}{N} \sum_{i=1}^{N} \left( \widehat{Q}^{(1)}_GQ_{B_g,T} \left( c_i, \beta_{g,i}'   \right)'\right) \left(   {c}_{\beta\gamma,i}'	  \left( c_{\beta\gamma}'c_{\beta\gamma}/N \right)^{-1}   \frac{1}{\sqrt{N|\mathcal{T}_{(2)}|}} \sum_{t\in \mathcal{T}_{(2)}} \sum_{i=1}^N c_{\beta\gamma,i} e_{it}\bar{G}_t    \right)'  \nonumber \\
		& +\frac{ 1}{\sqrt{T|\mathcal{T}_{(2)}|}}\frac{\sqrt{\tau}}{N} \sum_{i=1}^{N} \left( \widehat{Q}^{(2)}_g  Q_{B_g,T}\left( c_i, \beta_{g,i}'   \right)'\right)\left(  \frac{1}{\sqrt{|\mathcal{T}_{(2)}|}} \sum_{t\in \mathcal{T}_{(2)}} \bar{G}_t {u}_{vz,t}'   \frac{1}{\sqrt{T} } \sum_{s=1}^{T} \left(u_{vz}'u_{vz}/T \right)^{-1} u_{vz,s} e_{is} \right)'  \nonumber \\
		&+ O_p(\frac{1}{  {{m}_{NT}}})  \leq   O_p(\frac{1}{ \sqrt{{m}_{NT}}})  \nonumber
	\end{align}

	\begin{align}
		\frac{\sqrt{\tau}}{N}Q_{B_g,T} \sum_{i=1}^{N} \widetilde{x}_{g,c\beta_g,i}^{(2)} \widetilde{x}_{g,e,i}^{(2)'} = \frac{ {\sqrt{\tau}}}{N} \sum_{i=1}^{N} \left( \widehat{Q}^{(1)}_GQ_{B_g,T} \left( c_i, \beta_{g,i}'   \right)'\right) \left(    \frac{1}{ {|\mathcal{T}_{(2)}|}} \sum_{t\in \mathcal{T}_{(2)}}  e_{it}\bar{G}_t    \right)' = O_p(\frac{1}{\sqrt{N}})
	\end{align}

	\begin{align}
		&\frac{\tau}{N} \sum_{i=1}^{N} \widetilde{x}_{g,cc,i}^{(2)} \widetilde{x}_{g,e,i}^{(2)'} \nonumber\\
		=& \frac{\tau}{N} \sum_{i=1}^{N}  \left(  \frac{1}{\sqrt{N|\mathcal{T}_{(2)}|}} {c}_{\beta\gamma,i}'	  \left( c_{\beta\gamma}'c_{\beta\gamma}/N \right)^{-1}   \frac{1}{\sqrt{N|\mathcal{T}_{(2)}|}} \sum_{t\in \mathcal{T}_{(2)}} \sum_{j=1}^N c_{\beta\gamma,j} e_{jt}\bar{G}_t\right)\left(    \frac{1}{ {|\mathcal{T}_{(2)}|}} \sum_{t\in \mathcal{T}_{(2)}}  e_{it}\bar{G}_t    \right)'  \nonumber\\ 
		&  + \frac{\tau}{N} \sum_{i=1}^{N}    \left(\frac{ 1}{\sqrt{T|\mathcal{T}_{(2)}|}}\frac{1}{\sqrt{|\mathcal{T}_{(2)}|}} \sum_{t\in \mathcal{T}_{(2)}} \bar{G}_t {u}_{vz,t}'   \frac{1}{\sqrt{T} } \sum_{s=1}^{T} \left(u_{vz}'u_{vz}/T \right)^{-1} u_{vz,s} e_{is}  + O_p(\frac{1}{ {m}_{NT}})\right)\nonumber \\
		& \times \left(    \frac{1}{ {|\mathcal{T}_{(2)}|}} \sum_{t\in \mathcal{T}_{(2)}}  e_{it}\bar{G}_t    \right)' \nonumber \\
		=& O_p(\frac{1}{\sqrt{\tau}})
	\end{align}

	Therefore,
	\begin{align}
		&\frac{1}{N} Q_{B_g,T} \widetilde{q}^{(2)'}_{G,T} \widetilde{q}^{(2)}_{G,T}  Q_{B_g,T} =  \frac{1}{N} \sum_{i=1}^{N}Q_{B_g,T} \widetilde{x}^{(2) }_{g,i}\widetilde{x}^{(2)'}_{g,i} Q_{B_g,T} \nonumber \\
		=& \frac{1}{N} \sum_{i=1}^{N}Q_{B_g,T} \left(  \widetilde{x}_{g,c\beta_g,i}^{(2)} + \widetilde{x}_{g,cc,i}^{(2)}+ \widetilde{x}_{g,e,i}^{(2)} \right)\left(  \widetilde{x}_{g,c\beta_g,i}^{(2)} + \widetilde{x}_{g,cc,i}^{(2)}+ \widetilde{x}_{g,e,i}^{(2)} \right)' Q_{B_g,T} \nonumber \\
		=& \frac{1}{N} Q_{B_g,T} \left(\sum_{i=1}^{N} \widetilde{x}_{g,c\beta_g,i}^{(2)}\widetilde{x}_{g,c\beta_g,i}^{(2)'}\right)Q_{B_g,T}+  Q_{B_g,T}/\sqrt{\tau} \left(\frac{\tau}{N}\sum_{i=1}^{N} \widetilde{x}_{g,cc,i}^{(2)}\widetilde{x}_{g,cc,i}^{(2)'}\right)Q_{B_g,T}/\sqrt{\tau} \nonumber\\
		& +  Q_{B_g,T}/\sqrt{\tau} \left(\frac{\tau}{N}\sum_{i=1}^{N} \widetilde{x}_{g,e,i}^{(2)}\widetilde{x}_{g,e,i}^{(2)'}\right)Q_{B_g,T}/\sqrt{\tau} + O_p(\frac{1}{\sqrt{\tau}}) \nonumber \\
		= & {Q}_g \eta_{c\beta_g} Q_g+W_x\Sigma_{\bar{G}\bar{G}\gamma_N}W_x  + O_p(\frac{1}{\sqrt{\tau}})  \label{eq,}
	\end{align}
	with ${W}_x= \lim\limits_{\tau \rightarrow \infty} Q_{B_g,T}/\sqrt{\tau}$. 
	
	\paragraph{\scriptsize Term (2)}  
	\begin{align}
		\frac{1}{N}	Q_{B_g,T}\widetilde{X}_{g,c\beta_g}^{(1); } \widetilde{X}_{g,c\beta_g}^{(2)}Q_{B_g,T}\rightarrow_p &   {Q}_g \eta_{c\beta_g} Q_g   
	\end{align}

	\begin{align}
		& \frac{\tau}{N} \sum_{i=1}^{N}	\widetilde{x}_{g,cc,i}^{(1)} \widetilde{x}_{g,cc, i}^{(2)'}  \leq O_p(\frac{1}{\sqrt{\tau}})  
	\end{align}
	
	\begin{align}
		&\frac{\tau}{N} \sum_{i=1}^{N} \widetilde{x}_{g,e,i}^{(1)} \widetilde{x}_{g,e,i}^{(2)'} = \frac{1}{N} \sum_{i=1}^{N} \left( \frac{1}{\sqrt{|\mathcal{T}_{(1)}|}} \sum_{s\in \mathcal{T}_{(1)}} e_s \bar{G}_s'\right)\left( \frac{1}{\sqrt{|\mathcal{T}_{(2)}|}} \sum_{t\in \mathcal{T}_{(2)}} e_t \bar{G}_t'\right)' \nonumber \\
		= &  \frac{1}{N \tau} \sum_{i=1}^{N} \sum_{s\in \mathcal{T}_{(1)}}  \sum_{t\in \mathcal{T}_{(2)}}  e_{it} \bar{G}_t' \bar{G}_s e_{is}  = O_p( \frac{1}{\sqrt{N}}) \label{eq11}
	\end{align}

	\begin{align}
		& \frac{\sqrt{\tau}}{N} \sum_{i=1}^{N} Q_{B_g,T} \widetilde{x}_{g,c\beta_g,i}^{(1)} \widetilde{x}_{g,cc,i}^{(2)'} \nonumber \\
		=	& \frac{1}{\sqrt{N|\mathcal{T}_{(2)}|}}\frac{\sqrt{\tau}}{N} \sum_{i=1}^{N} \left( \widehat{Q}^{(1)}_GQ_{B_g,T} \left( c_i, \beta_{g,i}'   \right)'\right) \left(   {c}_{\beta\gamma,i}'	  \left( c_{\beta\gamma}'c_{\beta\gamma}/N \right)^{-1}   \frac{1}{\sqrt{N|\mathcal{T}_{(2)}|}} \sum_{t\in \mathcal{T}_{(2)}} \sum_{i=1}^N c_{\beta\gamma,i} e_{it}\bar{G}_t    \right)'  \nonumber \\
		& +\frac{ 1}{\sqrt{T|\mathcal{T}_{(2)}|}}\frac{\sqrt{\tau}}{N} \sum_{i=1}^{N} \left( \widehat{Q}^{(1)}_G Q_{B_g,T}\left( c_i, \beta_{g,i}'   \right)'\right)\left(  \frac{1}{\sqrt{|\mathcal{T}_{(2)}|}} \sum_{t\in \mathcal{T}_{(2)}} \bar{G}_t {u}_{vz,t}'   \frac{1}{\sqrt{T} } \sum_{s=1}^{T} \left(u_{vz}'u_{vz}/T \right)^{-1} u_{vz,s} e_{is} \right)'  \nonumber \\
		&+ O_p(\frac{1}{ \sqrt{{m}_{NT}}})   =  O_p(\frac{1}{ \sqrt{{m}_{NT}}})  \nonumber
	\end{align}

	\begin{align}
		\frac{\sqrt{\tau}}{N}Q_{B_g,T} \sum_{i=1}^{N} \widetilde{x}_{g,c\beta_g,i}^{(1)} \widetilde{x}_{g,e,i}^{(2)'} = \frac{ {\sqrt{\tau}}}{N} \sum_{i=1}^{N} \left( \widehat{Q}^{(1)}_GQ_{B_g,T} \left( c_i, \beta_{g,i}'   \right)'\right) \left(    \frac{1}{ {|\mathcal{T}_{(2)}|}} \sum_{t\in \mathcal{T}_{(2)}}  e_{it}\bar{G}_t    \right)' = O_p(\frac{1}{\sqrt{N}}) \label{eq:2}
	\end{align}

	\begin{align}
		&\frac{\tau}{N} \sum_{i=1}^{N} \widetilde{x}_{g,cc,i}^{(1)} \widetilde{x}_{g,e,i}^{(2)'} \nonumber\\
		=& \frac{\tau}{N} \sum_{i=1}^{N}  \left(  \frac{1}{\sqrt{N|\mathcal{T}_{(1)}|}} {c}_{\beta\gamma,i}'	  \left( c_{\beta\gamma}'c_{\beta\gamma}/N \right)^{-1}   \frac{1}{\sqrt{N|\mathcal{T}_{(1)}|}} \sum_{t\in \mathcal{T}_{(1)}} \sum_{j=1}^N c_{\beta\gamma,j} e_{jt}\bar{G}_t\right)\left(    \frac{1}{ {|\mathcal{T}_{(2)}|}} \sum_{t\in \mathcal{T}_{(2)}}  e_{it}\bar{G}_t    \right)'  \nonumber\\ 
		&  + \frac{\tau}{N} \sum_{i=1}^{N}    \left(\frac{ 1}{\sqrt{T|\mathcal{T}_{(1)}|}}\frac{1}{\sqrt{|\mathcal{T}_{(1)}|}} \sum_{t\in \mathcal{T}_{(1)}} \bar{G}_t {u}_{vz,t}'   \frac{1}{\sqrt{T} } \sum_{s=1}^{T} \left(u_{vz}'u_{vz}/T \right)^{-1} u_{vz,s} e_{is}  + O_p(\frac{1}{ {m}_{NT}})\right)\nonumber \\
		& \times \left(    \frac{1}{ {|\mathcal{T}_{(2)}|}} \sum_{t\in \mathcal{T}_{(2)}}  e_{it}\bar{G}_t    \right)' \nonumber \\
		=& O_p(\frac{1}{\sqrt{N}}) + O_p(\frac{1}{\sqrt{\tau}}) \label{eq:3}
	\end{align}

	Therefore,
	\begin{align}
		&\frac{1}{N} Q_{B_g,T} \widetilde{q}^{(1)'}_{G,T} \widetilde{q}^{(2)}_{G,T}  Q_{B_g,T} =  \frac{1}{N} \sum_{i=1}^{N}Q_{B_g,T} \widetilde{x}^{(1) }_{g,i}\widetilde{x}^{(2)'}_{g,i} Q_{B_g,T} \nonumber \\
		=& \frac{1}{N} \sum_{i=1}^{N}Q_{B_g,T} \left(  \widetilde{x}_{g,c\beta_g,i}^{(1)} + \widetilde{x}_{g,cc,i}^{(1)}+ \widetilde{x}_{g,e,i}^{(1)} \right)\left(  \widetilde{x}_{g,c\beta_g,i}^{(2)} + \widetilde{x}_{g,cc,i}^{(2)}+ \widetilde{x}_{g,e,i}^{(2)} \right)' Q_{B_g,T} \nonumber \\
		=& \frac{1}{N} Q_{B_g,T} \left(\sum_{i=1}^{N} \widetilde{x}_{g,c\beta_g,i}^{(1)}\widetilde{x}_{g,c\beta_g,i}^{(2)'}\right)Q_{B_g,T}+    O_p(\frac{1}{\sqrt{\tau}})={Q}_g \eta_{c\beta_g} Q_g  + O_p(\frac{1}{\sqrt{\tau}}) 
	\end{align}

	\paragraph{\scriptsize Term (3)}  
	\begin{align}
		\widetilde{\epsilon}_i^{(1)} = \left(\widetilde{x}_{g,e,i}^{(1)} + \widetilde{x}_{g,cc,i}^{(1)} \right)\widetilde{\theta} \label{ha3}
	\end{align}

	Equations (\ref{eq11})-(\ref{eq:3}) imply that 
	\begin{align}
		&{\frac{\sqrt{\tau}}{\sqrt{N}}}Q_{B_g,T} \sum_{i=1}^N \widetilde{x}^{(2)}_{g,i} \left(\widetilde{x}_{g,e,i}^{(1)} + \widetilde{x}_{g,cc,i}^{(1)} \right)'     \nonumber \\
		=& 	{\frac{\sqrt{\tau}}{\sqrt{N}}}Q_{B_g,T} \sum_{i=1}^N  \left(      \widetilde{x}_{g,c\beta_g,i}^{(2)} + \widetilde{x}_{g,cc,i}^{(2)}+ \widetilde{x}_{g,e,i}^{(2)}  \right) \left(\widetilde{x}_{g,e,i}^{(1)} + \widetilde{x}_{g,cc,i}^{(1)} \right)'  = O_p(1)
	\end{align}
	which provided that $\norm{\theta_G}_F\leq L$ then gives 
	\begin{align}
		\sqrt{\frac{\tau}{N}}Q_{B_g,T} \widetilde{q}^{(2)'}_{G,T} \widetilde{\epsilon}^{(1)}  = O_p(1)
	\end{align}

	Finally, 
	\begin{align}
		&\sqrt{N\tau }Q_{B_g,T}^{-1}\left(\widetilde{\theta}_{g,I}^{(1)} - \widetilde{\theta}_g^{(1)} \right) \nonumber \\
		=& \left(\frac{1}{N} Q_{B_g,T}\widetilde{q}^{(1)'}_{G,T}  \widetilde{q}^{(2)}_{G,T} Q_{B_g,T}  \left(\frac{1}{N} Q_{B_g,T}\widetilde{q}^{(2)'}_{G,T}\widetilde{q}^{(2)}_{G,T}  Q_{B_g,T} \right)^{-1}\frac{1}{N} Q_{B_g,T}\widetilde{q}^{(2)'}_{G,T}  \widetilde{q}^{(1)}_{G,T}  Q_{B_g,T}\right)^{-1} \nonumber \\
		&\times \frac{1}{N} Q_{B_g,T} \widetilde{q}^{(1)}_{G,T} \widetilde{q}^{(2)}_{G,T} Q_{B_g,T}  \left(\frac{1}{N} Q_{B_g,T}\widetilde{q}^{(2)'}_{G,T}\widetilde{q}^{(2)}_{G,T}  Q_{B_g,T} \right)^{-1}\sqrt{\frac{\tau}{N}}Q_{B_g,T}\widetilde{q}^{(2)'}_{G,T}  \widetilde{\epsilon}^{(1)} = O_p(1),
	\end{align}
	which then leads to the final result $
	\sqrt{NT}	Q^{-1}_{B_g,T}\left(\widetilde{\theta}_{G} - \theta_G\right) \rightarrow O_p(1)$.\\
	
	Next we propose the following $\widehat{\Sigma}_{\theta_G}$ to complete the discussion concerning the footnote in Theorem \ref{theo:consistency of theta tilde}:
	\begin{align*}
		\widehat{\Sigma}_{\theta_G} = \frac{1}{2N}\sum_{i=1}^2\widehat{\Sigma}_{IV}^{(i)} + \frac{1}{T} \widehat{\Sigma}_{\widetilde{\theta}}
	\end{align*} 
	where $ \widehat{\Sigma}_{\widetilde{\theta}}$ is a consistent estimator for the variance of $\widetilde{\theta}$ and
	\begin{align*}
		\widehat{\Sigma}_{IV}^{(1)}=  \left(\widetilde{q}^{(1)'}_{G,T}  P_{\widetilde{q}^{(2)}_{G,T}}\widetilde{q}^{(1)}_{G,T} /N    \right)^{-1} 
		\left( \sum_{i=1}^{N}\left( \left(\widetilde{q}^{(1)'}_{G,T}  P_{\widetilde{q}^{(2)}_{G,T}}\right)_i \widehat{\widetilde{\epsilon}}_i^{(1)}\right)\left( \left(\widetilde{q}^{(1)'}_{G,T}  P_{\widetilde{q}^{(2)}_{G,T}}\right)_i \widehat{\widetilde{\epsilon}}_i^{(1)}\right)' \right)
		\left(\widetilde{q}^{(1)'}_{G,T}  P_{\widetilde{q}^{(2)}_{G,T}}\widetilde{q}^{(1)}_{G,T} /N    \right)^{-1} \\
		\widehat{\Sigma}_{IV}^{(2)}=  \left(\widetilde{q}^{(2)'}_{G,T}  P_{\widetilde{q}^{(1)}_{G,T}}\widetilde{q}^{(2)}_{G,T} /N    \right)^{-1} 
		\left( \sum_{i=1}^{N}\left( \left(\widetilde{q}^{(2)'}_{G,T}  P_{\widetilde{q}^{(1)}_{G,T}}\right)_i \widehat{\widetilde{\epsilon}}_i^{(2)}\right)\left( \left(\widetilde{q}^{(2)'}_{G,T}  P_{\widetilde{q}^{(1)}_{G,T}}\right)_i \widehat{\widetilde{\epsilon}}_i^{(2)}\right)' \right)
		\left(\widetilde{q}^{(2)'}_{G,T}  P_{\widetilde{q}^{(1)}_{G,T}}\widetilde{q}^{(2)}_{G,T} /N    \right)^{-1}  	
	\end{align*}
	The validity of our proposed covariance estimator relies on some additional regular assumptions: 
	\begin{assumption}\label{assum:additional}
		We assume the following holds:\\
		(1) 
		\begin{align*}
			\frac{1}{N} \sum_{i=1}^N \left(\begin{matrix}
				\sqrt{\tau}\widetilde{e}_{G,i}^{(j)}  \\ \sqrt{\tau} \widetilde{e}_{G,i}^{(j)}\widetilde{e}_{G,i}^{(j^*)}\\ \widetilde{u}_{G,i}^{(j)} \\ \widetilde{u}_{e,i}^{(j)}
			\end{matrix} \right) = 	\frac{1}{N} \sum_{i=1}^N \left(\begin{matrix}
				\xi_{1,i,T}  \\ \xi_{2,jj^*,i,T}\\ \eta_{1,i,T}\\ \eta_{2,i,T}
			\end{matrix} \right) \rightarrow_d \left(\begin{matrix}
				\xi_{1}  \\ \xi_{2,jj^*}\\ \eta_{1}\\ \eta_{2}
			\end{matrix} \right)
		\end{align*}	
		with $\widetilde{e}_{G,i}^{(j)} =  \frac{1}{ {|\mathcal{T}_{(j)}|}} \sum_{t\in \mathcal{T}_{(j)}}  e_{it}\bar{G}_t', \widetilde{u}_G^{(j)} =  \frac{1}{ {|\mathcal{T}_{(j)}|}} \sum_{t\in \mathcal{T}_{(j)}}  u_{vz,it}\bar{G}_t', \widetilde{u}_e^{(j)} =  \frac{1}{ {|\mathcal{T}_{(j)}|}} \sum_{t\in \mathcal{T}_{(j)}}  u_{vz,it}e_{t}$. \\
		(2) $	\xi_{1}, \xi_{2,jj^*}, $ are independent from $\eta_{1}, \eta_{2}$, and $\frac{1}{N}\sum_{i=1}^N \xi_{i,T}\xi_{i,T}'\rightarrow_p \Sigma_\xi$ with $\xi_i= \left(\xi_{1,i,T}',\xi_{2,jj^*,i,T}' \right)'$.
	\end{assumption} 	 
	
	Assumption \ref{assum:additional} can be relaxed if we assume for example $\sqrt{T}/N \rightarrow 0$, as under such case certain sampling errors would be negligible. Now we briefly discuss the validity of our proposed covariance estimator. From equations (\ref{eq,}) and (\ref{eq:3}), we know 
	\begin{align}
		& \left(\frac{1}{N} Q_{B_g,T}\widetilde{q}^{(1)'}_{G,T}  \widetilde{q}^{(2)}_{G,T} Q_{B_g,T}  \left(\frac{1}{N} Q_{B_g,T}\widetilde{q}^{(2)'}_{G,T}\widetilde{q}^{(2)}_{G,T}  Q_{B_g,T} \right)^{-1}\frac{1}{N} Q_{B_g,T}\widetilde{q}^{(2)'}_{G,T}  \widetilde{q}^{(1)}_{G,T}  Q_{B_g,T}\right)^{-1} \nonumber \\
		&\times \frac{1}{N} Q_{B_g,T} \widetilde{q}^{(1)}_{G,T} \widetilde{q}^{(2)}_{G,T} Q_{B_g,T}  \left(\frac{1}{N} Q_{B_g,T}\widetilde{q}^{(2)'}_{G,T}\widetilde{q}^{(2)}_{G,T}  Q_{B_g,T} \right)^{-1} \rightarrow_p \Theta_{(1)}
	\end{align}
	with $\Theta_{(1)}$ a deterministic positive definite matrix. Then we only need to look at the term $\sqrt{\frac{\tau}{N}}Q_{B_g,T} \widetilde{q}^{(2)'}_{G,T} \widetilde{\epsilon}^{(1)}$. Follow the previous discussion, denote 
	\begin{align}
		\widetilde{x}_{g,c\beta_g,i}^{(2)} =    \widehat{Q}^{(2)}_g  \left( c_i, \beta_{g,i}'   \right)',~~  \widetilde{x}_{g,e,i}^{(2)} =   \frac{1}{\sqrt{|\mathcal{T}_{(2)}|}}\frac{1}{\sqrt{|\mathcal{T}_{(2)}|}} \sum_{t\in \mathcal{T}_{(2)}} e_{it} \bar{G}_t' \label{ha1}
	\end{align}
	Equation (\ref{eq:cc diff 1}) implies that 
	\begin{align}
		\widetilde{c}_{it} - c_{it}  
		=&  \left( \widetilde{u}_{vz,t} -H' {u}_{vz,t} \right)' H^{-1}{c}_{\beta\gamma,i} + {u}_{vz,t}' H \left( \widetilde{c}_{\beta\gamma,i} -H^{-1}{c}_{\beta\gamma,i} \right)\nonumber\\
		&+\left( \frac{1}{\sqrt{N}} V_{NT}^{-1}\frac{1}{T} \sum_{s=1}^{T} \left( \widetilde{u}_{vz,s}{u}_{vz,s}' \right)   \frac{1}{\sqrt{N}} \sum_{i=1}^N c_{\beta\gamma,i} e_{it}  \right)'  \left( \frac{1}{\sqrt{T}} H' \frac{1}{\sqrt{T} }u_{vz}' e_i  \right) + o_p(\frac{1}{m_{NT}}) \nonumber \\
		= & \left( \widetilde{u}_{vz,t} -H' {u}_{vz,t} \right)' H^{-1}{c}_{\beta\gamma,i} + {u}_{vz,t}' H \left( \widetilde{c}_{\beta\gamma,i} -H^{-1}{c}_{\beta\gamma,i} \right)\nonumber\\
		&+\left(   \frac{1}{{T} }u_{vz}' e_i  \right)'\left( \frac{1}{\sqrt{N}} HH'  \left( c_{\beta\gamma}'c_{\beta\gamma}/N \right)^{-1}    \frac{1}{\sqrt{N}} \sum_{j=1}^N c_{\beta\gamma,j} e_{jt}  \right)   + o_p(\frac{1}{m_{NT}}) \nonumber \\
		=& \left( \widetilde{u}_{vz,t} -H' {u}_{vz,t} \right)' H^{-1}{c}_{\beta\gamma,i} + {u}_{vz,t}' H \left( \widetilde{c}_{\beta\gamma,i} -H^{-1}{c}_{\beta\gamma,i} \right)\nonumber\\
		&+\left(   \frac{1}{{T} }u_{vz}' e_i  \right)'\left( \frac{1}{\sqrt{N}}  \left(u_{vz}'u_{vz}/T \right)^{-1}  \left( c_{\beta\gamma}'c_{\beta\gamma}/N \right)^{-1}    \frac{1}{\sqrt{N}} \sum_{j=1}^N c_{\beta\gamma,j} e_{jt}  \right)   + o_p(\frac{1}{m_{NT}}) \nonumber \\
		= &  \frac{ 1}{\sqrt{N}} {c}_{\beta\gamma,i}'	  \left( c_{\beta\gamma}'c_{\beta\gamma}/N \right)^{-1}   \frac{1}{\sqrt{N}} \sum_{j=1}^N c_{\beta\gamma,j} e_{jt}   +\frac{1}{\sqrt{T}}{u}_{vz,t}' \left(u_{vz}'u_{vz}/T \right)^{-1}  \frac{1}{\sqrt{T} }u_{vz}' e_i \nonumber\\
		&+\left(   \frac{1}{{T} }u_{vz}' e_i  \right)'\left( \frac{1}{\sqrt{N}}  \left(u_{vz}'u_{vz}/T \right)^{-1}  \left( c_{\beta\gamma}'c_{\beta\gamma}/N \right)^{-1}    \frac{1}{\sqrt{N}} \sum_{j=1}^N c_{\beta\gamma,j} e_{jt}  \right)   + o_p(\frac{1}{m_{NT}}), 
	\end{align}	   
	and thus
	\begin{align}
		\widetilde{x}_{g,cc,i}^{(2)} = &\frac{1}{|\mathcal{T}_{(2)}|}\sum_{t\in \mathcal{T}_{(2)}}  \frac{ 1}{\sqrt{N}} {c}_{\beta\gamma,i}'	  \left( c_{\beta\gamma}'c_{\beta\gamma}/N \right)^{-1}   \frac{1}{\sqrt{N}} \sum_{j=1}^N c_{\beta\gamma,j} e_{jt}\bar{G}_t  \nonumber \\
		& + \frac{1}{|\mathcal{T}_{(2)}|}\sum_{t\in \mathcal{T}_{(2)}} \frac{1}{\sqrt{T}}{u}_{vz,t}' \left(u_{vz}'u_{vz}/T \right)^{-1}  \frac{1}{\sqrt{T} }u_{vz}' e_i\bar{G}_t \nonumber\\
		&+\frac{1}{|\mathcal{T}_{(2)}|}\sum_{t\in \mathcal{T}_{(2)}}\left(\frac{1}{{T} }u_{vz}' e_i  \right)'  \frac{1}{\sqrt{N}}  \left(u_{vz}'u_{vz}/T \right)^{-1}  \left( c_{\beta\gamma}'c_{\beta\gamma}/N \right)^{-1}    \frac{1}{\sqrt{N}} \sum_{j=1}^N c_{\beta\gamma,j} e_{jt} \bar{G}_t  + o_p(\frac{1}{m_{NT}})\nonumber \\
		= &  \frac{ 1}{\sqrt{N{ {|\mathcal{T}_{(2)}|}}}}     \left( \frac{1}{\sqrt{N|\mathcal{T}_{(2)}|}} \sum_{j=1}^N    \sum_{t\in \mathcal{T}_{(2)}} {c}_{\beta\gamma,i}'	  \left( c_{\beta\gamma}'c_{\beta\gamma}/N \right)^{-1}c_{\beta\gamma,j}e_{jt}\bar{G}_t \right) \nonumber \\
		& +\frac{1}{\sqrt{T|\mathcal{T}_{(2)}|}}  \left(\frac{1}{\sqrt{|\mathcal{T}_{(2)}|}}\sum_{t\in \mathcal{T}_{(2)}} \bar{G}_t {u}_{vz,t}'  \right) \left(u_{vz}'u_{vz}/T \right)^{-1}\left(\frac{1}{\sqrt{T} }  u_{vz}'e_i \right) \nonumber\\
		&+\frac{1}{\sqrt{|\mathcal{T}_{(2)}|}} \left(\frac{1}{{T} }u_{vz}' e_i  \right)'  \frac{1}{\sqrt{N}}  \left(u_{vz}'u_{vz}/T \right)^{-1}  \left( c_{\beta\gamma}'c_{\beta\gamma}/N \right)^{-1}    \frac{1}{\sqrt{N}} \sum_{j=1}^N c_{\beta\gamma,j} \left(\frac{1}{\sqrt{|\mathcal{T}_{(2)}|}}\sum_{t\in \mathcal{T}_{(2)}} e_{jt} \bar{G}_t\right) \nonumber \\
		& + o_p(\frac{1}{m_{NT}}) = O_p(\frac{1}{\sqrt{NT}}) + O_p(\frac{1}{T})+ O_p(\frac{1}{\sqrt{N}T})+ o_p(\frac{1}{m_{NT}}) \label{ha2}
	\end{align}
	which then implies that $	\widetilde{x}_{g,cc,i}^{(2)}$ is of higher order. From equations (\ref{ha3}), (\ref{ha1}) and (\ref{ha2}), it is clear that 	
	\begin{align*}
		\sqrt{\frac{\tau}{N}}Q_{B_g,T} \widetilde{q}^{(2)'}_{G,T} \widetilde{\epsilon}^{(1)} \rightarrow_d H(\eta)\xi 
	\end{align*}	
	with $H(.)$ a deterministic function (the limiting distribution is a mixed Gaussian distribution). The rests follow Assumption \ref{assum:additional} and the proof of Theorem 3 in \cite{anatolyev2018factor}.  	
	% 	\begin{align}
	% 	\widetilde{x}_{g,e,i}^{(2)} = & \frac{1}{\sqrt{|\mathcal{T}_{(2)}|}}\frac{1}{\sqrt{|\mathcal{T}_{(2)}|}} \sum_{t\in \mathcal{T}_{(2)}} e_{it} \bar{G}_t' 
	% 	\end{align}
	% 	
	% 	
	%% 	We need to show that   .
	% 	\begin{align}
	% 	\frac{T}{N} \sum_{i=1}^N Q_{B_g,T} \widetilde{x}_{g,i}^{(2)} \widehat{\epsilon}_{i}^{(1)'}\widehat{\epsilon}_{i}^{(1)} \widetilde{x}_{g,i}^{(2)'} Q_{B_g,T} \rightarrow_p \widetilde{\Sigma} 
	% 	\end{align}

	%% 	(case 1) If we consider the case that $N/T\rightarrow 0$, then the above results is straightforward due to the independence of the $e_t$ series. 
	%% 	
	%% 	
	%% 	
	%% 	
	%% 	(case 2) If we consider the case that $N/T\rightarrow c$, then we need to deal with a bit more  
	% 	
	% 	
	% 	
	% 	
	
\end{proof}

\noindent \begin{proof}[Proof of Corollary \ref{coro:novel test 1}] \label{proof coro:novel test 1}
	By the construction of the HJN statistic, $\sqrt{T} q_{g,T,R} \left(\widetilde{\theta}_G- \theta_G \right)$ would be of the order $O_p(\frac{1}{N^{1/2}})$ when $N/T\rightarrow c$, 
	and thus these sampling errors would be negligible. Therefore, the asymptotic distribution of the HJN statistic is determined by  the distribution of the sample pricing errors ${e}_{T,R}(\theta_G)$. The consistency of $\widetilde{S}= \frac{1}{T}\sum_{t=1}^T e_{g,t,R} (\widetilde{\theta}_G)e_{g,t,R} (\widetilde{\theta}_G)'$ is implied by Theorem \ref{theo:consistency of theta tilde} and  Assumptions \ref{assum: factor structure in r} - \ref{assum:factor loading strength 2}. 
\end{proof}

\end{document}